\documentclass[11pt, a4paper]{article}
\usepackage[margin=1in, centering]{geometry}
\usepackage{mathtools}
\usepackage{amsthm}
\usepackage{thmtools}
\usepackage{dsfont}
\usepackage{amssymb}
\usepackage{biolinum}
\usepackage{mathpazo, tgpagella}
\usepackage[table,dvipsnames]{xcolor}
\usepackage{subcaption}
\usepackage[normalem]{ulem}
\definecolor{HANADA}{RGB}{0, 98, 132}
\definecolor{KURENAI}{RGB}{203, 27, 69}
\usepackage[
    pdfstartview=FitH,
    pdfpagemode=UseNone,
    colorlinks=true,
    citecolor=KURENAI,
    linkcolor=HANADA,
    backref=page,
    linktocpage=true
]{hyperref}
\usepackage[capitalise, nameinlink]{cleveref}
\usepackage[nottoc]{tocbibind}
\usepackage{appendix}
\usepackage{array}
\usepackage{braket}
\usepackage{ytableau}
\usepackage{graphicx}
\usepackage{enumitem}
\usepackage{thm-restate}
\usepackage{tikz}
\usepackage{tikz-cd}
\usetikzlibrary{calc}
\usepackage[framemethod=TikZ]{mdframed} 
\usepackage{setspace}
\usepackage{authblk}

\renewcommand\Affilfont{\small}
\makeatletter
\renewcommand\AB@affilsepx{\protect\\[0.2em]\protect\Affilfont}
\makeatother

\newtheorem{theorem}{Theorem}[section]
\newtheorem{proposition}[theorem]{Proposition}
\newtheorem{fact}[theorem]{Fact}
\newtheorem{lemma}[theorem]{Lemma}
\newtheorem{corollary}[theorem]{Corollary}

\newtheorem{definition}[theorem]{Definition}

\newtheoremstyle{obsstyle}
  {6pt}
  {6pt}
  {\normalfont}
  {}
  {\bfseries}
  {.}
  {.5em}
  {}

\theoremstyle{obsstyle}
\newtheorem{observation}[theorem]{Observation}
\newtheorem{remark}[theorem]{Remark}



\newcommand{\fig}[1]{\hyperref[fig:#1]{Figure~\ref*{fig:#1}}}
\newcommand{\eq}[1]{\hyperref[eq:#1]{(\ref*{eq:#1})}}
\newcommand{\lem}[1]{\hyperref[lem:#1]{Lemma~\ref*{lem:#1}}}
\newcommand{\thm}[1]{\hyperref[thm:#1]{Theorem~\ref*{thm:#1}}}
\newcommand{\defi}[1]{\hyperref[def:#1]{Definition~\ref*{def:#1}}}
\newcommand{\app}[1]{\hyperref[app:#1]{Appendix~\ref*{app:#1}}}
\newcommand{\fct}[1]{\hyperref[fact:#1]{Fact~\ref*{fact:#1}}}
\newcommand{\clr}[1]{\hyperref[clr:#1]{Corollary~\ref*{clr:#1}}}
\newcommand{\sct}[1]{\hyperref[sec:#1]{Section~\ref*{sec:#1}}}
\newcommand{\subsec}[1]{\hyperref[subsec:#1]{Subsection~\ref*{subsec:#1}}}
\newcommand{\itm}[2]{\hyperref[itm:#1]{#2}}
\newcommand{\clm}[1]{\hyperref[clm:#1]{Claim~\ref*{clm:#1}}}
\newcommand{\rmk}[1]{\hyperref[rmk:#1]{Remark~\ref*{rmk:#1}}}

\definecolor{lightcyan}{RGB}{0.88,1,1}
\definecolor{darkgreen}{RGB}{0, 128, 0}
\definecolor{darkblue}{RGB}{0, 0, 128}

\newcommand{\N}{\mathbb{N}}  %
\newcommand{\R}{\mathbb{R}} %
\newcommand{\C}{\mathbb{C}} %
\newcommand{\U}{\mathbb{U}} %
\DeclareMathOperator*{\E}{\mathbb{E}}  %
\newcommand{\so}{\mathrm{SO}} %
\newcommand{\su}{\mathrm{SU}} %
\renewcommand{\i}{\mathrm{i}} %
\newcommand{\e}{\mathrm{e}} %
\newcommand{\A}{\mathcal{A}}  %
\renewcommand{\S}{\mathcal{S}} %
\newcommand{\PP}{\mathcal{P}} %
\newcommand{\expect}[2]{\E_{\substack{#1}}\!\Br{#2}} %
\newcommand{\prob}[2]{\underset{#1}{\mathrm{Pr}}\!\Br{#2}} %
\newcommand{\br}[1]{\left(#1\right)} %
\newcommand{\Br}[1]{\left[#1\right]} %
\newcommand{\st}[1]{\left\{#1\right\}} %
\newcommand{\tr}[1]{\mathrm{Tr}\!\Br{#1}} %
\newcommand{\abs}[1]{\left|#1 \right|} %
\newcommand{\norm}[1]{\left\lVert #1 \right\rVert} %
\newcommand{\poly}[1]{\mathrm{poly}\!\br{#1}} %
\newcommand{\negl}[1]{\mathrm{negl}\!\br{#1}} %
\newcommand{\ketbratwo}[2]{\ket{#1}\hspace{-0.4em}\bra{#2}} %
\newcommand{\ketbra}[1]{\ketbratwo{#1}{#1}} %
\newcommand{\id}{\ensuremath{\mathds{1}}} %
\newcommand{\ugroup}[1]{\mathrm{U}\!\br{#1}} %
\newcommand{\secpar}{\zeta} %
\newcommand{\cprim}[1]{\textnormal{\textup{\textsf{#1}}}} %
\newcommand{\prf}{\cprim{PRF}} %
\newcommand{\prp}{\cprim{PRP}} %
\newcommand{\pru}{\cprim{PRU}} %
\newcommand{\prss}{\cprim{PRSS}} %
\usepackage[normalem]{ulem}

\newcommand{\kac}{\ensuremath{\mathsf{Kac}}} %

\newcommand{\Tr}{\mathrm{Tr}} %

\renewcommand{\set}[1]{\ensuremath{\left\{#1\right\}}}

\newcommand{\btdist}[1]{\mathrm{Beta}(#1)}

\newcommand{\asym}{\ensuremath{\cprim{Asym}}}
\newcommand{\sym}{\ensuremath{\cprim{Sym}}}

\newcommand{\partitions}[1]{\ensuremath{\mathbb{Y}_{#1}}}

\newcommand{\Unif}{\mathrm{Unif}}

\newif\ifnotes
\notestrue          

\newcommand{\symsubspace}[2]{\vee^{#1}\mathbb{C}^{#2}}
\newcommand{\symsubspaceprojfull}[2]{\Pi_{\sym}^{#1,#2}}
\newcommand{\symsubspaceproj}{\Pi_{\sym}}
\newcommand{\typesymsubspace}[3]{\vee_{#1}^{#2}\mathbb{C}^{#3}}

\newcommand{\repappkacchannel}[2]{\widetilde{\Phi}_{\kac}^{#1,#2}}
\newcommand{\repkacchannel}[2]{\Phi_{\kac}^{#1,#2}}
\newcommand{\kacchannel}[1]{\Phi_{\kac}^{#1}}
\newcommand{\appkacchannel}[1]{\widetilde{\Phi}_{\kac}^{#1}}
\newcommand{\haarchannel}[1]{\Phi_{\mathsf{Haar}}^{#1}}
\newcommand{\blockhaarchannel}[1]{\Phi_{\mathsf{BlockHaar}}^{#1}}
\newcommand{\randpermchannel}[1]{\Phi_{\mathsf{Perm}}^{#1}}
\newcommand{\kacdist}{\ensuremath{\nu_{\kac}}} 
\newcommand{\haardist}{\nu_{\mathsf{Haar}}}

\newcommand{\lkmpchain}{M_{\scriptscriptstyle \mathsf{LKMP}}}
\newcommand{\lkmpdist}{\pi_{\scriptscriptstyle \mathsf{LKMP}}}
\newcommand{\kmpchain}{M_{\scriptscriptstyle \mathsf{KMP}}}
\newcommand{\kmpdist}{\pi_{\scriptscriptstyle \mathsf{KMP}}}
\newcommand{\liftedchain}{M_{\scriptscriptstyle \mathsf{PKMP}}}
\newcommand{\labeledchain}{M_{\scriptscriptstyle \mathsf{LPKMP}}}
\newcommand{\lifteddist}{\pi_{\scriptscriptstyle \mathsf{PKMP}}}
\newcommand{\labeleddist}{\pi_{\scriptscriptstyle \mathsf{LPKMP}}}
\newcommand{\contchain}{N}

\newcommand{\darrow}{\scalebox{0.6}{$\downarrow$}}
\newcommand{\stab}{\mathrm{Stab}}
\newcommand{\partition}[2]{\mathbb{Y}_{#1,#2}}
\newcommand{\match}{\mathcal{M}}
\newcommand{\kernel}{\mathcal{K}}

\newcommand{\Dirichlet}{\mathrm{Dirichlet}}

\newcommand{\glap}{\mathbf{L}}
\begin{document}

\newcommand{\titletxt}{A Tale of Two Walks: Kipnis, Marchioro and Presutti Meet Kac in a Quantum World}
\title{{\fontsize{16pt}{18pt}\selectfont \titletxt}}

\author[1]{Qian Chen}
\author[2]{Jingcheng Liu}
\author[3]{Minglong Qin}
\author[4]{Leonard Schulman}
\author[5]{Fang Song}
\author[2,6]{Penghui Yao}
\author[2]{Mingnan Zhao}

\makeatletter
\@namedef{@sep6}{\\[0.4em]}
\makeatother

\affil[1]{Mathematics Research Center, School of Science and Engineering, The Chinese University of Hong Kong, Shenzhen, China}
\affil[ ]{\texttt{chenqian.phys@gmail.com}}
\affil[2]{State Key Laboratory for Novel Software Technology, Nanjing University, Nanjing 210023, China}
\affil[ ]{\texttt{liu@nju.edu.cn, phyao1985@gmail.com, mingnanzh@gmail.com}}
\affil[3]{Centre for Quantum Technologies, National University of Singapore, Singapore}
\affil[ ]{\texttt{mlqin6@gmail.com}}
\affil[4]{California Institute of Technology, Pasadena CA 91125, USA}
\affil[ ]{\texttt{schulman@caltech.edu}}
\affil[5]{Computer Science Department, Portland State University, USA}
\affil[ ]{\texttt{fsong@pdx.edu}}
\affil[6]{Hefei National Laboratory, Hefei 230088, China}

\date{}
\maketitle
\thispagestyle{empty}
\setcounter{tocdepth}{2}

\allowdisplaybreaks

\begingroup
\setstretch{1.1}
\begin{abstract}
We reveal an unexpected connection between two stochastic processes: the parallel Kac's walk and the Kipnis–Marchioro–Presutti (KMP) process. We show that the twirling channel induced by the parallel Kac's walk on the symmetric subspace can be exactly encoded by a classical Markov chain on partitions, which naturally lifts to a parallel KMP process on complete graphs. This correspondence reduces the analysis of the parallel Kac's walk to the mixing behavior of the parallel KMP process. 
We prove that \(O(\log d+\log(1/\varepsilon))\)
repetitions suffice to approximate Haar twirling on the symmetric subspace of
\((\mathbb C^d)^{\otimes t}\) to error \(\varepsilon\), uniformly in the number of copies \(t\).
We further investigate the standard KMP process on general graphs and prove a mixing time analogue of
Aldous’s conjecture for general graphs: at fixed accuracy, the mixing time of a $t$-particle KMP process is at most a constant times the single-particle mixing time multiplied by the logarithm of the number of vertices. This bound is uniform in $t$. 
As an application, we improve the total variation
mixing-time bound for coordinate hit-and-run on the $n$-dimensional
standard simplex from \(\widetilde O(n^3)\)~\cite{KV26}
to \(\widetilde O(n)\), while removing the dependence
on the initial distribution.

Our main technical contribution is a new structural property of KMP-type dynamics: conditional product structure, which holds for both parallel KMP processes on complete graphs and standard KMP processes on general graphs. Conditioned on suitable auxiliary randomness, the evolution of the labeled KMP process factorizes into independent single-particle updates. Combining this structure with an exact coupling removes the dependence on the number of particles from the mixing bounds for both unlabeled KMP models. It also gives a twirling bound uniform in the number of copies. Using this framework, we prove sharp mixing bounds for both parallel KMP processes on complete graphs and standard KMP processes on general graphs up to logarithmic factors, which in turn yield rapid convergence of the parallel Kac twirling channel on the symmetric subspace.

\end{abstract}

\newpage
\thispagestyle{empty}
\tableofcontents
\setcounter{page}{0}
\endgroup

\newpage
\begingroup
\setstretch{1.1}

\section{Introduction}
\label{sec:introduction}
The study of interacting particle systems, with motivations from statistical mechanics, has driven advances both in probability theory and applied domains like tumor growth~\cite{bramson1981williams}, spread of infection~\cite{harris1974contact}, social network analysis~\cite{aldous2013interacting} and cryptography. A notable example is  Kac’s walk~\cite{Kac56}, first proposed in 1953 by Mark Kac to model stochastically colliding particles in kinetic theory. The common reformulation due to Hastings~\cite{hasting:1970} considers the special orthogonal group $\so(d)$ where a random rotation in a randomly chosen coordinate plane is applied in each step. This also induces a random walk on the unit sphere $S^{d-1}$ and readily extends to the unitary counterparts $\ugroup{d}$ and $\su(d)$~\cite{Diaconis2000,PS17}. Since its inception, establishing rapid mixing of the Kac's walk has been a fundamental and challenging endeavor. After decades of effort, Kac's walk on the sphere is proven to mix in time $\Theta(d\log d)$~\cite{PS17}, whereas on $\mathrm{SO}(d)$, the mixing time is known to sit between $\Omega(d^2)$ and $O(d^2\log d)$~\cite{Jan2001,Maslen2003,Jan03,Carlen2003,Diaconis2000,Oliveira09,PS18,PS26}. The clean formulation of Kac's walk also sets a testbed for high-dimensional Markov chains with a global constraint and local updates, advancing the analytical tools such as spectral-gap estimation~\cite{Jan2001,Carlen2003,Caputo2008,Sasada2013}, entropy decay~\cite{CCE2018,CCL+10}, and coupling methods~\cite{Oliveira09,PS18}.

In recent years, the theory of Kac's walk has found use in quantum information. Brand{\~a}o, Harrow and Horodecki~\cite{BHH16} appear to be the first to adapt Oliveira's path-recording analysis for Kac's walk~\cite{Oliveira09} as a critical component to prove that their local random circuit construction forms a polynomial-design. This seminal work has inspired many subsequent improvements~\cite{PhysRevA.104.022417,Haferkamp2022randomquantum, OSRP:2023,HMH+23,CHHLMT25}.

In~\cite{LQSY+24}, the authors further bridged Kac's walk with quantum information, where they use Kac's walk directly in their \emph{construction} rather than borrowing analytical techniques alone. Their construction starts from a simple observation that Kac's walk on the sphere ``scrambles'' an input unit vector (i.e., quantum state) into a Haar random state with the tight mixing time $\Theta(2^n n)$ on an $n$-qubit system~\cite{PS17}. To address this exponential overhead, a \emph{parallel} variant of Kac's original walk was proposed in~\cite{LQSY+24}, where each step randomly matches all the coordinates and performs an independent random unitary on the space spanned by each pair of coordinates. They proved that the parallel Kac's walk indeed mixes exponentially faster in $O(n)$ steps, and each step can be implemented approximately by a poly-size quantum circuit (when ideal classical randomness is supplied). As a consequence, when the ideal randomness is replaced by efficient counterparts (e.g., pseudorandom functions), they obtain an efficient quantum unitary circuit, termed a pseudorandom state scrambler (\prss), that maps an input state into a pseudorandom output state. Later in~\cite{LQSY+25} the same construction is shown to form a pseudorandom unitary (\pru) family, which is computationally indistinguishable from a unitary operator drawn from the Haar measure on $\ugroup{d}$, supplementing the $\cprim{PFC}$ family of $\pru$s~\cite{MPSY24} established in the breakthrough work~\cite{MH25} with many follow-up variations~\cite{science.adv8590,SMLBH25,MLSH25,CSMHB25}.

An acute reader would wonder whether the computational indistinguishability of parallel Kac's walk from Haar random unitary can be strengthened to a rapid mixing to the Haar measure on $\ugroup{d}$ (similar to the prior result on the sphere). However, the state-of-the-art of standard Kac's walk on $\ugroup{d}$ suggests that knocking off the polynomial dependence on dimension $d$ remains beyond reach. Given such success and limitation, we are keen to find out:
\begin{center}
   \emph{Question 1: What other ``randomizing'' properties arise from the parallel Kac's walk or its efficient variants?}
\end{center}

Another equally notable interacting particle system is the Kipnis-Marchioro-Presutti (KMP) model~\cite{KMP82}, characterizing how \emph{macroscopic} laws like Fourier’s law of heat conduction arise from \emph{microscopic} dynamics. In its \emph{continuous} formulation, each vertex of a graph holds a real number indicating the energy. An edge is then drawn independently at random, and the energies of the two endpoints are redistributed uniformly at random. In the \emph{discrete} KMP model, a finite number of discrete particles are distributed over the graphs, and when an edge is drawn uniformly at random, the particles sitting on the two endpoints are randomly re-distributed. Beyond statistical mechanics, the KMP model is also known as the uniform reshuffling model for modeling wealth distribution and how income inequality arises~\cite{dragulescu2000statistical,angle1986surplus}.

While the \emph{non-equilibrium} steady state can be challenging to find, the steady state in the equilibrium KMP model is much simpler. 
However, analyzing the \emph{equilibrium} dynamics remains highly intricate. The discrete KMP with a single particle is essentially a random walk on a graph with only one edge activated at a time. The stationary distribution is the uniform distribution over all vertices of a graph, regardless of the graph geometry. For discrete KMP with multiple particles, if we label each particle and view them in isolation, they are also identically distributed as a single-particle random walk; but jointly, these random walks tend to attract each other. This stands in stark contrast to processes where particles repel each other, such as the interchange process. The celebrated Aldous's spectral gap conjecture asserts that the spectral gap for the interchange process is identical to the spectral gap for the single-particle random walk, and it was remarkably resolved by Caputo, Liggett, and Richthammer~\cite{caputo2010proof}.
For the KMP model, while the exact form of Aldous's spectral gap conjecture is false (e.g., the spectral gap of KMP is strictly worse than that of random walk on the complete graph), an approximate version for the continuous KMP is proved in~\cite{kim2025spectral}.

These elegant spectral gap relations, however, do not fare well when it comes to \emph{mixing times} of these processes: the standard translation from spectral gap to mixing times suffers a loss of $\log (\pi_{\min}^{-1})$ factor, which is often prohibitively large for these interacting particle systems. A celebrated result of Oliveira~\cite{oliveira2013mixing} showed that the mixing time of the \emph{symmetric exclusion process} is at most $\log n$ times slower than that of the single-particle random walk. 
Motivated by this phenomenon, we ask:
\begin{center}
    \emph{Question 2: How does the mixing time of KMP relate to that of the single-particle random walk? Can we hope for a mixing time analogue of Aldous's conjecture for KMP?}
\end{center}
In the Boolean domain and spin systems, recent advances in the mixing time analysis of Markov chains, such as log-concave polynomials~\cite{anari2018log,anari2019log,anari2024log,anari2021log,branden2020lorentzian}, spectral independence~\cite{anari2021spectral,chen2023rapid,chen2024spectral} and high-dimensional expansion~\cite{alev2020improved,cryan2019modified}, have enabled proving optimal mixing times of many important classes of Markov chains~\cite{anari2022entropic,anari2021log,chen2021optimal,feng2022rapid,chen2022localization,chen2022optimal,chen2024rapid}.  It is unclear how to extend these exciting developments to interacting particle systems. In particular, as the KMP can be seen as an axis-aligned random walk inside a simplex, it resembles the coordinate hit-and-run random walk inside a convex body~\cite{narayanan2022mixing,laddha2023convergence,narayanan2023sampling} in many ways, where optimal mixing times have been notoriously challenging.

In this work, we reveal an unexpected interconnection between the two motivating questions, which enables us to obtain intriguing results and insights in both directions.

\subsection{Main Results}

Our first result shows that parallel Kac's walk can efficiently converge to the Haar random twirling channel $\haarchannel{t}$ in the symmetric subspace, offering a positive example in regard to Question 1. Specifically,
let
\(
	K=P\cdot W\
\)
be the $n$-qubit random unitary realizing one step of the parallel Kac's walk,
where \(P\) is a random permutation matrix and \(W\) is a block-diagonal unitary
with independent \(2\times 2\) Haar random unitaries in each block.
Let $\repkacchannel{t}{k}$ denote the quantum channel obtained by applying
\( (K_k \cdots K_1)^{\otimes t} \) on an input state, where \(K_1,\cdots,K_k\) are $k$ independent steps of the parallel Kac's walk.

\begin{theorem}[Informal version of \cref{thm:twirling}]
For any \(\sigma\) supported on the symmetric subspace \(\symsubspace{t}{d}\),
\(\repkacchannel{t}{k+1}(\sigma)\) and \(\haarchannel{t}(\sigma)\)
are \(\varepsilon\)-close in trace norm whenever
\(k\geq C(\log d+\log(1/\varepsilon))\), for a sufficiently large
absolute constant \(C\). This holds uniformly over all positive integers \(t\).
\end{theorem}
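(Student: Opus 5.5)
We plan to reduce the twirling statement to a classical mixing problem, following the correspondence announced in the abstract. Write the symmetric subspace $\symsubspace{t}{d}$ in the type basis $\ket{\lambda}$. Here $\lambda=(\lambda_1,\dots,\lambda_d)$ ranges over compositions of $t$ into $d$ parts (occupation numbers of the $d$ basis vectors).

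\textbf{Step 1: diagonal reduction.} The key observation is that after one step the state becomes block-diagonal in a useful way. Let $W$ be block diagonal with independent Haar $2\times 2$ unitaries on the blocks. Twirling by $W^{\otimes t}$, and in particular by its diagonal phase subgroup, kills every off-diagonal term $\ketbratwo{\lambda}{\mu}$ in which $\lambda$ and $\mu$ differ in the total occupation of some pair block. So after the first step, which explains the "$k+1$", we may assume the input is block diagonal with respect to the pair-occupations. The remaining coherences live inside blocks and are handled by the $\mathrm{U}(2)$ twirls.

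\textbf{Step 2: the induced Markov chain.} For a single pair of coordinates carrying $m$ particles, Haar twirling by $\mathrm{U}(2)$ on $\symsubspace{m}{2}$ maps any state $\rho$ to $\tr{\rho}\cdot\symsubspaceproj/(m+1)$. The reason is that $\symsubspace{m}{2}$ is an irreducible representation of $\mathrm{U}(2)$. Consequently, the channel $\rho\mapsto\E[(PW)^{\otimes t}\rho((PW)^{\otimes t})^\dagger]$, restricted to the relevant inputs, acts on diagonal weights $p(\lambda)$ as a classical Markov chain. In each step this chain applies a random perfect matching and, on each matched pair, resamples the split $(a,m-a)$ uniformly. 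This is exactly the parallel KMP chain on the complete graph $K_d$, with $t$ indistinguishable particles. Its stationary law is uniform over compositions, which corresponds to $\haarchannel{t}(\sigma)=\tr{\sigma}\,\symsubspaceproj/\binom{d+t-1}{t}$.

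Combining Steps 1 and 2, we obtain
\[
\norm{\repkacchannel{t}{k+1}(\sigma)-\haarchannel{t}(\sigma)}_1\le \max_{\lambda}\norm{M^k(\lambda,\cdot)-\pi}_{TV}.
\]
This needs a little care: for general $\sigma$ we decompose along the blocks, and the diagonal-weight vector of the output after twirling in each step is again uniform within blocks.

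\textbf{Step 3: mixing of parallel KMP, uniformly in $t$.} This is the main obstacle, and the step where the conditional product structure enters. We label the particles. Conditioned on the matchings and on uniform $[0,1]$ "split fractions" $U_e$ drawn for each matched edge, each particle moves independently: a particle at an endpoint of edge $e$ goes left with probability $U_e$. Thus, given the auxiliary randomness, the labeled configuration is a product of $t$ single-particle walks driven by a common random environment.

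For one particle, the occupation vector $x_k\in\Delta_{d-1}$ evolves as a random (parallel Kac-type) averaging process. We would show that $\E\norm{x_k-\mathbf{1}/d}_2^2$ contracts by a constant factor each step, giving $\norm{x_k-\mathbf{1}/d}_\infty\le\varepsilon/d$ with high probability after $O(\log d+\log(1/\varepsilon))$ steps.

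Then, conditioned on a good environment, the $t$ labeled particles are i.i.d. with a law close to uniform. The unlabeled count is then multinomial$(t,x_k)$. The stationary law $\pi$ is Dirichlet-multinomial, which equals the mixture of multinomial$(t,x)$ over $x\sim\Dirichlet(1,\dots,1)$.

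To remove the dependence on $t$, we would couple exactly. We compare the law of $x_k$ itself to $\Dirichlet(\mathbf 1)$, arguing that the environment-driven measure converges in total variation, since $x_k$ is a product of random uniform splits. The counts are a function of $x_k$ plus independent sampling, so their total variation distance is bounded by that of $x_k$, independently of $t$. Proving this TV convergence of the continuous parallel KMP to $\Dirichlet(\mathbf 1)$ in $O(\log d+\log(1/\varepsilon))$ steps is where we expect most of the work. We would try a coupling that first contracts the $\ell_1$ distance between two copies, then uses a final step with uniform splits to glue them exactly.
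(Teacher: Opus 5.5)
Your Steps 1--2 are sound and match the paper's reduction. Phase averaging kills coherences between distinct compositions. Your appeal to Schur's lemma on the pair-occupation sector $\bigotimes_j \vee^{m_j}\mathbb{C}^2$ is a legitimate shortcut for the paper's explicit local-Dicke computation. Working with compositions rather than the paper's partitions is harmless, since the stationary law is permutation invariant.

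\textbf{The gap is in Step 3, starting with the single-particle claim.} Conditioned on the environment, $x_k=e_iB_k$ does not approach $\mathbf 1/d$. The vector $\mathbf 1/d$ is fixed by $\mathbb{E}[A]$, but not by the random matrices $A_s$ themselves. For $x$ in the simplex one computes $\mathbb{E}\|xA\|_2^2=\frac23\bigl(\frac{d-2}{d-1}\|x\|_2^2+\frac1{d-1}\bigr)$. Its fixed point is $\frac{2}{d+1}$, which is the second moment of $\mathrm{Dirichlet}(1,\dots,1)$. Hence $\mathbb{E}\|x_k-\mathbf 1/d\|_2^2\to\frac{d-1}{d(d+1)}$, and the claim ``given a good environment the particles are i.i.d.\ close to uniform'' is false. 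What does contract is $(e_i-u)B_k$, where $u\sim\mathrm{Dirichlet}(1,\dots,1)$ is independent but driven by the \emph{same} environment. The companion fact you need is that the labeled stationary law equals $\mathbb{E}[(uB_k)^{\otimes t}]$. This is the paper's conditional product structure, and on its own it only yields the $t$-dependent bound $\frac{t\sqrt d}{2}(2/3)^{k/2}$.

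\textbf{Your route to uniformity in $t$ also breaks down.} The identity ``counts $=\mathrm{Mult}(t,x_k)$ given the environment'' holds only when all particles start at a single vertex. You must handle a general composition $r$, because after the first Kac step the state is an arbitrary mixture of the $\rho_\lambda$. For general $r$ the conditional law is the convolution $\ast_i\,\mathrm{Mult}(r_i,e_iB_k)$, which depends on all rows of $B_k$. A point mass $\delta_r$ is a mixture of multinomials only when $r=te_i$. So no data-processing reduction to the law of a single continuous vector is available.

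\textbf{The remaining coupling step is also unsupported.} The TV convergence of the continuous chain, which you defer as the main work, is left unproven. Moreover, ``a final step with uniform splits to glue them exactly'' cannot work. A matching step can only equalize pairs whose sums already agree. Exact agreement on all $d$ coordinates therefore needs a connected edge sequence ($\Theta(\log d)$ rounds of matchings), processed in a spanning-tree order.

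\textbf{How the paper handles this.} It works directly on integer counts. For $t\le d^{1000}$ the labeled bound suffices. For $t>d^{1000}$ the paper proceeds in stages:
\begin{enumerate}
\item It shares matchings and split parameters for $O(\log d)$ rounds and uses Hoeffding's inequality to get $\|R-\widetilde R\|_1\le td^{-19}$ with high probability.
\item It constructs vertex partitions backward from singletons along $O(\log d)$ further matchings. It couples each split: quantile coupling on non-splitting edges, and maximal coupling of shifted uniforms on splitting edges, using that each endpoint holds at least $td^{-12}$ particles. This forces exact equality of the two count vectors.
\item It finishes by sub-multiplicativity.
\end{enumerate}
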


This information-theoretical approximation of a Haar random unitary by parallel Kac's walk complements the existing result in the computational setting, which gives a pseudorandom unitary. As an immediate application (details in~\cref{sec:non-adap-pru}), the efficient quantum circuit implementation strengthens the pseudorandom unitary to have \emph{scalable} security (albeit restricted to the non-adaptive setting and the symmetric subspace), where security loss can be tuned by a dedicated parameter independent of the system size.

More intriguingly, the proof opens up a new avenue connecting parallel Kac's walk to KMP. Surprisingly, we are able to identify the action of the \emph{quantum} channel \(\repkacchannel{t}{k}\)
on the symmetric subspace with the evolution of a simple \emph{classical} Markov chain, which we call a \emph{partition resampling chain}. This chain can be lifted first to a parallelized and then to a labeled version of
the KMP chain \cite{KMP82}. A crucial structural property that we observe (\emph{conditional product structure} below) enables us to prove a strong mixing time bound and the convergence of \(\repkacchannel{t}{k}\) follows.

\begin{center}
\begin{minipage}{0.8\textwidth} 
    \begin{mdframed}[
        hidealllines=true,
        backgroundcolor=gray!10,
        roundcorner=5pt,
        innerleftmargin=15pt,
        innerrightmargin=15pt,
        innertopmargin=10pt,
        innerbottommargin=10pt,
        frametitle={{Conditional Product Structure} (Informal. See \cref{sec:labeled-chain-conditional-structure})},
        frametitlealignment=\centering,
        frametitleaboveskip=12pt,
        frametitlebelowskip=0pt
    ]
    Conditioned on the auxiliary randomness, the global update decomposes into
    independent updates of the labeled particles.
    \end{mdframed}
\end{minipage}
\end{center}

The excitement does not end here. The conditional product structure turns out to hold for KMP on general graphs. Bootstrapping this property, we show that \emph{a mixing time analogue of Aldous's conjecture is true on general graphs}, answering Question 2 in the affirmative.

\begin{theorem}[Informal version of \cref{thm:kmp-mixing-time-general-graph}]
Let \(G=([d],E)\) be a connected graph on \(d\) vertices.
The mixing time of the \(t\)-particle KMP chain
is at most \(O(\log d)\) times the single-particle mixing time,
uniformly over all \(t\).
\end{theorem}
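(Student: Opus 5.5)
We bound the mixing time of the unlabeled $t$-particle KMP chain through its labeled lift. In the labeled chain each of the $t$ particles carries an identity, and the configuration is a vector $x=(x_1,\dots,x_t)\in[d]^t$. One step proceeds as follows: an edge $\{u,v\}\in E$ is chosen at random. Then every particle currently on $u$ or $v$ is independently reassigned to $u$ or $v$ with probability $1/2$ each.

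Projecting the labeled chain to occupation counts gives exactly the discrete KMP chain. The reason is that, conditioned on the edge, the total number $N$ of particles on $\{u,v\}$ is redistributed. Under independent fair coins the count on $u$ is $\mathrm{Bin}(N,1/2)$, whereas KMP uses the uniform distribution on $\{0,\dots,N\}$. So I would instead realize the labeled lift with a shared uniform auxiliary variable $p\sim\mathrm{Unif}[0,1]$ per step: each particle on the edge independently goes to $u$ with probability $p$. Mixing $\mathrm{Bin}(N,p)$ over $p$ gives the uniform law on $\{0,\dots,N\}$. This is the conditional product structure. Given the auxiliary randomness $\omega=(e_s,p_s)_{s\le k}$, the $t$ labeled particles move independently. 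Each moves by the time-inhomogeneous kernel $Q^\omega_s$, where $Q^\omega_s$ is the identity off $e_s$ and the matrix $\begin{pmatrix}p_s&1-p_s\\p_s&1-p_s\end{pmatrix}$ on $e_s$. Averaged over $\omega$, a single particle follows the one-particle walk $P$ with stationary distribution uniform on $[d]$.

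Next I would use an exact coupling to remove the dependence on $t$. Because the projection to counts commutes with the dynamics, the TV distance of the unlabeled chain is at most that of the labeled chain, after symmetrizing over labels. The stationary law of the labeled chain is exchangeable, with uniform Dirichlet-type weights. I would couple two copies of the labeled chain using the same $\omega$ and couple the particles individually under $\omega$. Two particles sharing a vertex stay together forever under shared randomness. So once every particle in copy one has met its counterpart in copy two, the copies agree.

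It is not sufficient to bound the probability that all $t$ particles coalesce by a union bound over $t$, since that reintroduces $t$. Instead I would note that under fixed $\omega$ the product kernel $\prod_s Q^\omega_s$ has the following property: if some vertex $v$ has been "hit" in the sense that the accumulated composition has rank one, then all particles are mapped to a common random location. More precisely, I would show that $\prod_{s\le k}Q^\omega_s$ becomes a rank-one stochastic matrix $\mathbf 1\mu_\omega^\top$ with high probability. Its rows are all equal exactly when every starting vertex's trajectory has merged, and this is a statement about $d$ starting points, not $t$ particles. On that event the labeled configuration is i.i.d. from $\mu_\omega$ regardless of the initial state, and hence equals the stationary mixture in law. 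The bound is then
\[
d_{TV}(k)\le \Pr_\omega\!\left[\textstyle\prod_{s\le k}Q^\omega_s \text{ not rank one}\right]\le \sum_{v\ne v_0}\Pr[\text{walks from } v, v_0 \text{ not merged by } k].
\]
Each term is controlled by the two-particle coalescence time under shared randomness, and a union bound over the $d$ vertices gives the $\log d$ factor.

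The main obstacle is relating this pairwise merging time to the single-particle mixing time $t_{\mathrm{mix}}(P)$. Two walkers driven by the same $(e_s,p_s)$ merge only when they sit on the two endpoints of the chosen edge, or when one jumps onto the other's vertex and both follow the same coin. I would try to show that within $O(t_{\mathrm{mix}})$ steps they merge with probability at least a constant $c>0$. The approach is to compare the difference process to a single random walk via a Doob-type argument. By a Markov property argument, failures over successive blocks of length $O(t_{\mathrm{mix}})$ then compound geometrically, so $O(t_{\mathrm{mix}}\log(d/\varepsilon))$ steps suffice. If the direct comparison fails, a fallback is to use the $L^2$ contraction of $\mathbb E_\omega[Q^\omega\otimes Q^\omega]$ on the antisymmetric part, together with the uniform stationary distribution. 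That route would yield the same $\log d$ loss.
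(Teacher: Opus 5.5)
\paragraph{Gap: the rank-one step.} The step meant to remove the dependence on $t$ fails. The quenched kernel $B_k=\prod_{s\le k}Q^\omega_s$ does not become rank one with high probability; typically it never does. Its rows $e_iB_k$ are the \emph{conditional laws} of single particles given $\omega$, and each particle still flips its own coins. Two rows therefore coincide only if these laws are exactly equal. Merging of two sample paths under some coupling of the coins gives only $\|e_iB_k-e_jB_k\|_{\mathrm{TV}}\le\Pr[\text{not merged}\mid\omega]$, never equality.

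Concretely, on the path $1$--$2$--$3$, couple the walks from $1$ and $3$ with common coins. The update is monotone, so $X_k\le Y_k$. Equal laws would then force $X_k=Y_k$ almost surely. Yet with positive conditional probability (a product of factors $p_s$ or $1-p_s$) the walkers sit at $1$ and $3$ throughout. Hence $e_1B_k\ne e_3B_k$ for every $k$, and $\Pr_\omega[B_k\text{ not rank one}]=1$.

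The rows only contract toward one another geometrically; this is exactly what the estimate on $\E\|(e_i-u)B_k\|_2^2$ quantifies. So your bound $d_{\mathrm{TV}}(k)\le\Pr_\omega[B_k\text{ not rank one}]$ is vacuous, and the union bound over pairs conflates path coalescence with exact equality of kernels. An approximate version does not help either. For fixed $\omega$ you compare $\bigotimes_\ell e_{x_\ell}B_k$ with $(uB_k)^{\otimes t}$, and the distance between such products tends to $1$ as $t\to\infty$ unless the factors coincide. Any argument that compares the conditional products $\omega$ by $\omega$ therefore carries a factor $t$.

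\paragraph{What the paper does instead.} Uniformity in $t$ has to be obtained at the level of count vectors, where large $t$ helps. The paper accepts the factor $t$ when $t\le d^{1000}$, which costs only $O(\log d)$ extra steps. For $t>d^{1000}$ it runs a two-stage coupling of the unlabeled chains.
\begin{itemize}
\item First stage: run $k_0=O(\gamma^{-1}\log d)$ steps with shared $(e_s,p_s)$. The conditional means $rB_{k_0}/t$ and $uB_{k_0}$ are then close in $L^2$. Hoeffding's inequality concentrates $R^{(k_0)}/t$ and $\widetilde R^{(k_0)}/t$ around them, so $\|R^{(k_0)}-\widetilde R^{(k_0)}\|_1\le td^{-19}$ with high probability.
\item Second stage: along $O(\gamma^{-1}\log d)$ further random edges, which form a connected graph with high probability, the count vectors are coupled \emph{exactly}. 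This uses a backward-built sequence of vertex partitions and a maximal coupling of the uniform splits at each split edge. Each split fails with probability $O(h/L)$, since discrepancies are at most $td^{-18}$ while stationary site masses are at least $td^{-12}$ with high probability.
\end{itemize}

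\paragraph{Second open point.} You also leave open the comparison with the single-particle chain. The paper does it spectrally, through the two-particle labeled chain $\E[A\otimes A]$. Its symmetric block is the two-particle KMP chain, whose gap is at least $\frac{2(d+1)}{3d}$ times the single-particle gap (via Kim--Quattropani--Sau and moment duality). Its antisymmetric block has top eigenvalue equal to $\lambda_2$ of the single-particle walk. Your fallback mentions only the antisymmetric part. The second moments $(e_i-u)\otimes(e_i-u)$, however, live in the symmetric part, and that is where the nontrivial input is needed.
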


As we mentioned earlier, this relation for mixing time is not possible by
a black-box conversion from a spectral gap relation,
since a factor proportional to the log of the support size occurs.
To get around this, we first use a conditional independence to reduce to a single-particle evolution.
The conditional single-particle evolution alone does not give step-wise contraction under the worst-case conditioning;
our idea is to study the evolution of the covariance matrix of the single-particle evolution.
The expected evolution of the covariance matrix follows a two-particle labeled KMP process.
By showing that the two-particle labeled KMP has the same order of spectral gap as the single-particle random walk, we are able to establish a mixing time bound for the $t$-particle processes by only paying an extra factor of $O(\log d + \log t)$.
This gives the desired bound when \(t\) is polynomially bounded in \(d\). For larger \(t\), we use an exact coupling to remove the dependence on \(t\), obtaining the desired mixing time bound uniformly over all particle numbers.


Our result also gives an application to coordinate
hit-and-run on the \(n\)-dimensional standard simplex~\cite{KV26}.
We prove a total variation mixing-time bound of
\(O(n(\log n+\log(1/\varepsilon)))\) from any initial
distribution (see \cref{cor:char-simplex-mixing}).
This improves the bound
obtained from Kook and Vempala~\cite{KV26} from
\(\widetilde O(n^3)\) to \(\widetilde O(n)\),
and removes their assumption on the initial distribution.
The connection is simple: adding the coordinate
\(x_0=1-\sum_{i=1}^n x_i\) identifies coordinate
hit-and-run on the simplex with the continuous KMP
chain on the star graph \(K_{1,n}\).
\subsection{Proof Overview}

\begin{figure}[!t]
    \centering
    \resizebox{\linewidth}{!}{
    \begin{tikzpicture}[remember picture]
    \node (roadmap) {
    $
    \begin{tikzcd}[ampersand replacement=\&, column sep=13em, row sep=7em,
        cells={
            nodes={
                draw,
                align=center,
                rounded corners,
                inner sep=8pt,
                outer sep=4pt
            }
        }]
    {\shortstack{Action of $\repkacchannel{t}{k}$ on $\symsubspace{t}{d}$}}
    \arrow[d, dashed, <-> ,"{\shortstack{\small Equivalent dynamics\\ \small (\cref{prop:markov-chain-transition-matrix-correspondence})}}" description]
    \&
    {\shortstack{Convergence of $\repkacchannel{t}{k}$ to Haar twirling
        \\ (\cref{thm:main})}}
    \\
    {\shortstack{Partition resampling chain \\ (\cref{sec:markov-chain-interpretation})}}
    \arrow[d, "\text{\small Lifting}" description]
    \&
    {\shortstack{Mixing time of the partition resampling chain \\ (\cref{thm:partition-resampling-mixing})}}
    \arrow[u, dashed, "{\shortstack{\small From chain mixing to channel mixing\\ \small (\cref{prop:spectral-gap-implies-channel-mixing})}}" description]
    \\
    {\shortstack{Parallel KMP chain \\ (\cref{sec:lifted-chain})}}
    \arrow[d, "\text{\small Lifting}" description]
    \&
    {\shortstack{Mixing time of the parallel KMP chain \\ (\cref{thm:partition-resampling-mixing})}}
    \arrow[u, "{\shortstack{\small Projection does not increase the mixing time\\ \small (\cref{lem:partition-dominates-by-lifted})}}" description]
    \\
    {\shortstack{Labeled parallel KMP chain \\ (\cref{sec:labeled-chain})}}
    \arrow[r,
    "{\text{\small Conditional product structure}}",
    "{\text{\small(\cref{sec:labeled-chain-conditional-structure})}}"'
    ]
    \&
    {\shortstack{Mixing time of the labeled KMP chain\\ (\cref{thm:labeled-chain-mixing})}}
    \arrow[u, "{\shortstack{\small Projection, concentration and exact coupling\\ \small (\cref{lem:labeled-dominates-lifted} and \cref{lem:parallel-kmp-coalescence})}}" description]
    \end{tikzcd}
    $
    };
    \draw[dashed, ->]
      ($(roadmap.west)+(0.292\linewidth, 0.255\linewidth)$)
      --
      ($(roadmap.west)+(0.732\linewidth, 0.255\linewidth)$);

    \draw[dashed, ->]
      ($(roadmap.west)+(0.22\linewidth, -0.005\linewidth)$)
      --
      ($(roadmap.west)+(0.702\linewidth, -0.005\linewidth)$);

    \draw[dashed, ->]
      ($(roadmap.west)+(0.22\linewidth, -0.258\linewidth)$)
      --
      ($(roadmap.west)+(0.702\linewidth, -0.258\linewidth)$);
    \end{tikzpicture}
    }

    \caption{Proof outline.
    The left column shows the relevant quantum and classical dynamics, and the right column shows the corresponding quantitative bounds.}
    \label{fig:proof-roadmap-partition-chain}
\end{figure}

We now give a proof overview of our main results and highlight the technical contribution in this paper.

\subsection*{From parallel Kac's walk to parallel KMP chain}

To upper bound the mixing time of the parallel Kac's walk, we introduce a {\em parallel KMP chain}, where the algorithm uniformly samples a perfect matching and resamples the particles in each pair of vertices in the matching.
The reduction from the parallel Kac's walk to parallel KMP is shown in \cref{fig:proof-roadmap-partition-chain}.
As stated earlier, the problem of how quickly $\repkacchannel{t}{k}$
converges to Haar twirling on the symmetric subspace,
reduces to the mixing of the partition resampling chain.
We elaborate on this connection in more detail below.

We start by analyzing how one step of the Kac twirling channel
\(
    \kacchannel{t} \coloneq \repkacchannel{t}{1}
\)
acts on the symmetric subspace.
By definition, \(\kacchannel{t}\) is the composition of
two simpler channels:
it first applies the block-Haar twirling channel \(\blockhaarchannel{t}\),
and then the random permutation twirling channel \(\randpermchannel{t}\).
That is,
\(
    \kacchannel{t} = \randpermchannel{t} \circ \blockhaarchannel{t}.
\)
To describe the action of these channels,
we study them on the basis of the symmetric subspace,
namely, the basis of \emph{generalized Dicke states}
(see \cref{eq:generalized-Dicke} for the definition).
Roughly, these states are indexed by multiplicity vectors
\(
r=(r_1,\dots,r_d)\in\N^d
\)
with
\(
\sum_{i=1}^d r_i=t
\),
where \(r_i\) records how many times the basis vector \(\ket{i}\) appears,
and a generalized Dicke state \(\ket{D_r}\) is the normalized uniform superposition over all basis strings with multiplicity vector \(r\).

A key fact is that the block-Haar channel \(\blockhaarchannel{t}\)
kills all off-diagonal terms in the generalized Dicke basis,
and the random permutation channel \(\randpermchannel{t}\) then averages
$\ketbra{D_r}$ to the maximally mixed state of the subspace spanned by all
\(\ket{D_{r'}}\) such that \(r'\) is a permutation of \(r\).
Thus, after permutation twirling, the only information that remains
is the orbit of \(r\) under permutations of its coordinates.
We refer to this orbit as the \emph{type} of \(r\).
Intuitively, the type records only the pattern of \(r\), forgetting which coordinate carries which multiplicity number.
Equivalently, the type of a multiplicity vector \(r\) is
the partition \(\lambda\) of \(t\) with length at most \(d\),
obtained by sorting the entries of \(r\) in non-increasing order.
Let $\typesymsubspace{\lambda}{t}{d}$ denote the subspace spanned by all
\(\ket{D_{r'}}\) such that \(r'\) has type \(\lambda\).

Therefore, if we decompose the symmetric subspace into a direct sum of subspaces $\typesymsubspace{\lambda}{t}{d}$,
then for a state $\sigma$ in $\symsubspace{t}{d}$,
$\kacchannel{t}(\sigma)$ can be expressed as a convex combination of
the maximally mixed states $\rho_\lambda$ on the type subspaces $\typesymsubspace{\lambda}{t}{d}$.
Repeated application of \(\kacchannel{t}\) then induces discrete-time dynamics on these coefficients.

A main contribution of our work is to show
that this induced coefficient update is exactly captured by a {\em Markov chain}.
More specifically,
we identify the action of \(\kacchannel{t}\) with
a stochastic matrix $M$,
where the entry \(M(\lambda,\lambda')\) represents the probability of transitioning from $\rho_\lambda$ to $\rho_{\lambda'}$ under $\kacchannel{t}$.
Somewhat surprisingly,
we discover a random walk on partitions, which we call the
\emph{partition resampling chain},
that exactly matches this transition matrix \(M\).
Starting from a partition \(\lambda\),
this chain randomly pairs the \(d\) coordinates,
redistributes the total multiplicity number on each matched pair uniformly,
and then sorts the resulting multiplicity vector
to obtain a new partition \(\lambda'\).

An intuitive explanation
comes from the locality of the block-Haar action.
Indeed, a generalized Dicke state \(\ket{D_r}\)
is obtained by symmetrizing
\(
    \ket{1}^{\otimes r_1}\otimes \ket{2}^{\otimes r_2}\otimes \cdots \otimes \ket{d}^{\otimes r_d}.
\)
Now consider one block \(j\), whose basis states are \(\{2j-1,2j\}\).
Since the block-Haar channel acts independently on different blocks, it suffices to analyze its action on
\(
    \ket{2j-1}^{\otimes r_{2j-1}} \otimes \ket{2j}^{\otimes r_{2j}}
\)
for each block separately.
Notice that the block-Haar operation does not move a state
outside this block.
Thus,
it only redistributes the total multiplicity number of
\(r_{2j-1} + r_{2j}\) between the basis states \(\ket{2j-1}\) and \(\ket{2j}\).
As a result, on each block, \(\kacchannel{t}\) performs a local resampling
of how many copies are assigned to the two states,
while preserving their sum.
This coincides exactly with the mechanism of the partition resampling chain.

This connection allows us to translate the convergence of \(\repkacchannel{t}{k}\) into a mixing problem for the partition resampling chain.
The analysis then uses lifting, conditional product structure,
and a final exact coupling of multiplicity vectors to give a mixing time bound.

\vspace{-0.75em}
\paragraph{Step I: Lifting to the labeled parallel KMP chain.}
The partition resampling chain is not the most convenient object to study directly,
because sorting suppresses much of the information of the update.
We therefore lift it first to the \emph{parallel KMP chain},
whose state space consists of full multiplicity vectors
\(
r\in\Delta=\{r\in\N^d:\sum_{i=1}^d r_i=t\},
\)
rather than only their sorted types.
Instead of recording only the partition \(\lambda\),
we remember exactly how much mass sits at each coordinate.
The update rule is as follows:
we choose a random perfect matching on the \(d\) coordinates,
and for each matched pair \(\{i,j\}\),
uniformly redistribute the total mass \(r_i+r_j\) between the two coordinates.
Projecting this multiplicity vector back to its type recovers the original partition resampling chain. 

However, it is still unclear how to analyze the mixing time of a parallel KMP chain. We then lift the parallel KMP chain further to
the \emph{labeled parallel KMP chain}.
To do so, we view a multiplicity vector \(r=(r_1,\dots,r_d)\in\Delta\) as describing a configuration of \(t\) distinct labeled particles, with \(r_i\) particles placed at site \(i\).
This leads to a new chain on the state space
\(
\Omega \coloneqq [d]^t,
\)
where for \(x=(x_1,\dots,x_t)\in\Omega\), the coordinate \(x_\ell\in[d]\) records the location of particle \(\ell\).
This second lifting keeps track not only of the number of particles at each site but also of which particles are present.
The update rule is to uniformly redistribute the labeled particles in each matched pair \(\{i,j\}\) to \(i\) and \(j\).
If we forget the labels and retain only the number of particles at each site, we recover the parallel KMP chain.

Therefore, any mixing bound for the labeled parallel KMP chain also applies to the unlabeled parallel KMP chain and the partition resampling chain, since projection cannot increase total variation distance.

\vspace{-0.75em}
\paragraph{Step II: From conditional products to exact coupling.}
A natural first attempt is to bound the spectral gap of the labeled parallel KMP chain.
However, an estimate on the spectral gap alone only yields
a total variation bound with a \(d^t\) dependence,
since the state space has size \(d^t\).
To obtain a sharper bound, we observe and exploit an essential structural feature of the labeled chain, which we call the \emph{conditional product structure} and will elaborate on later.
Conditional product structure enables us to reduce the analysis of the global labeled chain to the convergence of a single particle under its induced local random dynamics.
This reduction allows us to obtain a tighter mixing bound
with logarithmic dependence on \(t\) for the labeled chain. 
By projection, this gives the desired bound for the unlabeled chain
when \(t\) is polynomially bounded in \(d\).
For larger \(t\), we use concentration and an exact coupling
of multiplicity vectors to obtain a mixing bound independent of \(t\)
directly for the unlabeled chain, as explained below.

Finally, we project the resulting uniform mixing bound for the
unlabeled parallel KMP chain to the partition resampling chain,
and then translate this classical mixing estimate into trace-norm convergence
of repeated Kac twirling to Haar twirling on the symmetric subspace.

\subsection*{Conditional product structure and Removal of Dependence on $t$}

As described above, the size of the state space is exponential in $t$. Thus, the spectral gap alone only yields a mixing time polynomially in $t$. To reduce the dependence on $t$, we introduce a conditional product structure, which is an equivalent description of the redistribution step in the labeled chain.
Instead of viewing the update on a matched pair \(\{i,j\}\) as a correlated resampling step,
we may equivalently first sample a uniformly random parameter \(p_{i,j}\in[0,1]\), and then let each particle currently at sites $i$ and $j$ independently choose site \(i\) with probability \(p_{i,j}\) and site \(j\) with probability \(1-p_{i,j}\) (see \cref{obs:equiv-rule}).
Therefore,
conditioned on the random matching and on all the sampled parameters \(\{p_{i,j}\}\), the particles evolve independently, and the joint distribution of the chain admits a product structure (see \cref{obs:conditional-product-structure}).
Another crucial observation is that the stationary distribution
admits a similar conditional product structure (see \cref{lem:labeled-stationary-product-structure}).
Concretely, once we fix the entire sequence of random matchings and \(\{p_{i,j}\}\), each labeled particle independently evolves as a single-particle Markov chain with transition matrices \(A_1,\ldots,A_k\). As a result, after
$k$ steps, the distribution of the full $t$-particle configuration decomposes as a product of the corresponding single-particle marginals. The key point is that the stationary distribution can also be described in a similar way. Starting from a random vector \(u\) drawn uniformly from the probability simplex, we evolve it using the same sequence of random matrices to obtain \(w^{(k)} = u B_k\). Averaging \((w^{(k)})^{\otimes t}\) over the same randomness recovers exactly the stationary distribution. This lets us represent both distributions using the same randomness.
As a result, their total variation distance can be reduced to a sum of single-particle total variation distances.
Together with the single-particle estimate, this yields a mixing
bound logarithmic in \(t\) for the labeled chain, and hence also
for the unlabeled chain.


When \(t\) is polynomially bounded in \(d\),
this already gives the desired \(O(\log d)\) mixing bound for the unlabeled chain.
To remove the remaining dependence on \(t\) for larger \(t\),
we use a coupling argument, comparing two copies of the
unlabeled parallel KMP chain, one started from an arbitrary
state and the other from stationarity.
Using the same single-particle estimate and concentration,
we show that their multiplicity vectors are close relative
to their total mass.
We then construct a coupling that makes the two copies
have exactly the same number of particles at every vertex.
To do so, we sample a further sequence of matchings
in advance and view each matching as a sequence of
disjoint edge updates.
We construct partitions of the vertices backward
from the singleton partition.
During the forward evolution, we couple the updates
so that the particle totals agree on each newly
separated part with high probability.
The sampled edges connect the graph with high probability,
so agreement starts with the total number of particles
and ends with agreement at every vertex.
This gives a mixing bound independent of \(t\) for the
unlabeled parallel chain.


The proof for general graphs follows the same route.
Using the conditional product structure, we reduce
the analysis of the \(t\)-particle labeled chain
to that of the two-particle labeled chain.
We then compare its spectral gap with that of the
single-particle chain to obtain the required
contraction estimate.
Combining this estimate with concentration and
an exact coupling of multiplicity vectors gives a
mixing bound for the unlabeled chain that is
uniform in \(t\).
\subsection{Discussion and Future Work}

Our results uncover a surprising connection between the parallel Kac's walk and the KMP models. To analyze the mixing time of KMP chains, we introduce a novel structural property called the conditional product structure, which yields near-tight bounds for the mixing times of both standard and parallel KMP models. By further exploiting a reduction from the parallel Kac's walk to the parallel KMP models, we obtain a scalable, non-adaptive construction of pseudorandom unitaries over symmetric subspaces. Our work also raises several intriguing open problems for further research.

\begin{enumerate}

    \item An interesting feature of our mixing time analysis of KMP is that we reduce the mixing time analysis of a many-particle system to a $2$-particle system. Similar phenomena also occur in previous interacting particle systems like the symmetric simple exclusion process~\cite{morris2006mixing,oliveira2013mixing}, for seemingly very different reasons. To what extent is this true for general interacting particle systems?

    \item
Pillai and Smith~\cite{PS26} recently proved an
\(O(n^2\log n)\) total variation mixing-time bound
for Kac's walk on \(\mathrm{SO}(n)\).
Together with the known \(\Omega(n^2)\) lower bound,
this determines the mixing time up to a logarithmic
factor.
Can the connection between Kac's walk and KMP models,
together with the conditional product structure,
help close this remaining gap?

    \item Our result shows that the parallel Kac's walk as a twirling operation closely approximates a Haar random unitary in the symmetric subspace. Does it extend to the whole space? Recall that the output state of the twirling decomposes nicely into a linear combination of a set of orthogonal states governed by the different types, and the evolution of the coefficients induces a neat classical Markov chain. Both properties face a more complex situation in the antisymmetric subspace. The analysis requires a closer look into the Schur-Weyl picture there. We view this as a worthwhile next step, as a positive answer will immediately give rise to the first \emph{scalable} pseudorandom unitary construction which can be useful in quantum cryptography and many-body physics.

\end{enumerate}

\paragraph{Organization.}
The rest of the paper is organized as follows.
\cref{sec:preliminary} introduces the notation and background needed in the paper.
In \cref{sec:construction}, we define the twirling channel
\(\repkacchannel{t}{k}\)
associated with the parallel Kac's walk
and state the main theorem.
\cref{sec:classical-interpretation} analyzes the action of \(\repkacchannel{t}{k}\) on the symmetric subspace
and gives a classical interpretation via the partition resampling chain.
In \cref{sec:convergence}, we lift the partition resampling chain to
the parallel and labeled KMP chains,
and then combine conditional product structure
with an exact coupling to obtain a bound uniform in the number of particles.
We also prove the main result on the convergence of \(\repkacchannel{t}{k}\) in \cref{sec:convergence}.
In \cref{sec:kmp-general-graph}, we combine conditional product structure
with an exact coupling to obtain a mixing bound for the unlabeled KMP
chain on general graphs.
\cref{sec:non-adap-pru} introduces the notions of scalable \pru s,
and derives the scalable non-adaptive \pru\ construction on the symmetric subspace.

\paragraph{Acknowledgments.}
Q.C. was supported in part by the National Natural Science Foundation of China (Grants Nos. 12271460 and 12341101), Guangdong provincial grants (Grant Nos. 2024A1515011456 and GDZX2403006), and the Shenzhen Fundamental Research Program (Grant No. JCYJ20241202124023031).
J.L. is supported by National Natural Science Foundation of China Grant No. 62472212. Some of the KMP analysis was completed while J.L. was a postdoc at Caltech.
P.Y. and M.Z. were supported by the National Natural Science Foundation of China (Grant Nos. 62332009 and 12347104), the Quantum Science and Technology-National Science and Technology Major Project (Grant No. 2021ZD0302901), the NSFC/RGC Joint Research Scheme (Grant No. 12461160276), the Natural Science Foundation of Jiangsu Province (No. BK20243060), the Fundamental and Interdisciplinary Disciplines Breakthrough Plan of the Ministry of Education of China (No. JYB2025XDXM118), the ``111 Center'' (No. B26023), and the Fundamental Research Funds for the Central Universities (Grant No. 2026300376).

\paragraph{AI Disclosure.}
OpenAI's GPT-5.6 and GPT-6 Astra provided valuable help in verifying
our ideas and improving the writing. The authors verified and refined the proofs,
revised the text, and take full responsibility
for the content of the final manuscript.
\section{Preliminaries}
\label{sec:preliminary}

\paragraph{Basic notations.}
We write $\N=\{0,1,2,\dots\}$ for the set of natural numbers.
For $d\in\N$, let $\Br{d}=\{1,\dots,d\}$.
For a length-$t$ string $x \in [d]^t$ and an element $a \in [d]$,
we denote by $\#_a(x)$ the number of occurrences of $a$ in $x$,
namely,
\begin{equation}\label{eq:counting-function}
    \#_a(x) = |\{i \in [t] : x_i = a\}|\enspace.
\end{equation}
For pairwise disjoint sets $A_1$ and $A_2$, we write
\(
    A_1 \sqcup A_2
\)
for their disjoint union. In particular,
\(
    A_1 \sqcup \cdots \sqcup A_k = [t]
\)
means that the sets $A_1,\dots,A_k$ are pairwise disjoint and their union is $[t]$.
For a set \(A\subseteq\Omega\), we write \(A^{\mathsf c}=\Omega\setminus A\)
for its complement.

For $d\in\N$, let $\S_d$ denote the permutation group on $d$ elements,
$\su(d)$ the special unitary group on $d\times d$ complex matrices,
and $\ugroup{d}$ the unitary group of $d\times d$ complex matrices.
The Haar measure $\haardist$ on $\ugroup{d}$
is the unique probability measure that is
left- and right-invariant under the group action.
For a matrix $A$ and a positive integer $p$,
we define $\|A\|_p = (\mathrm{Tr}[(A^\dagger A)^{p/2}])^{1/p}$
to be the Schatten $p$-norm of $A$.
Moreover, $\|A\|_\infty$ is the operator norm of $A$.
For a vector $v \in \mathbb{C}^d$, define
$
    \norm{v}_1 = \sum_{i=1}^d |v_i|
$
and
$
    \norm{v}_2 = \sqrt{\sum_{i=1}^d |v_i|^2}
$.

\paragraph{Probability theory.}
For a finite set or a bounded interval \(S\), we write
\(\Unif(S)\) for the uniform distribution on \(S\). For two probability distributions $\mu$ and $\nu$ over a finite set $\Omega$,
the total variation distance between $\mu$ and $\nu$ is defined as
\[
    \|\mu - \nu\|_{\mathrm{TV}} = \frac{1}{2} \sum_{x \in \Omega} |\mu(x) - \nu(x)| \enspace.
\]

Let $\mu$ be a probability distribution over a finite set $\Omega_1$,
and let $f : \Omega_1 \to \Omega_2$ be a function,
where $\Omega_2$ is another finite set.
Then, we use $\mu f^{-1}$ to denote the distribution over $\Omega_2$ defined by
\[
    \mu f^{-1}(y) = \sum_{x\in\Omega_1:f(x)=y} \mu(x) \qquad
    \forall y \in \Omega_2 \enspace.
\]
We have the following facts about total variation distance:
\begin{fact}\cite[Section~2.1]{Meckes2019}
\label{fact:tv-continuous-functions}
Let \(K\subseteq\mathbb R^d\) be compact, and let
\(\mu,\nu\) be probability distributions on \(K\).
Then
\[
    \|\mu-\nu\|_{\mathrm{TV}}
    =
    \sup_{\substack{f:K\to[0,1]\\ f\text{ continuous}}}
    \left|
        \int_K f\,\mathrm d\mu
        -\int_K f\,\mathrm d\nu
    \right|.
\]
\end{fact}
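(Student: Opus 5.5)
The claim is the variational characterization of total variation for Borel probability measures on a compact $K\subseteq\mathbb R^d$. Write $\|\mu-\nu\|_{\mathrm{TV}}=\sup_{A}|\mu(A)-\nu(A)|$ over Borel sets $A$; this is the natural extension of the finite-set definition $\tfrac12\sum_x|\mu(x)-\nu(x)|$. I will prove the two inequalities separately.

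\emph{Upper bound.} Fix a continuous $f:K\to[0,1]$. Take a Hahn decomposition $K=P\sqcup P^{\mathsf c}$ for the signed measure $\mu-\nu$. Since $0\le f\le 1$,
\[
\int f\,\mathrm d(\mu-\nu)\le (\mu-\nu)(P)\le \|\mu-\nu\|_{\mathrm{TV}}.
\]
Symmetrically, $\int f\,\mathrm d(\mu-\nu)\ge (\mu-\nu)(P^{\mathsf c})\ge -\|\mu-\nu\|_{\mathrm{TV}}$. So the supremum over continuous $f$ is at most the total variation distance. This direction needs no continuity at all.

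\emph{Lower bound.} Here I approximate $\mathbf 1_P$ by continuous $[0,1]$-valued functions. Let $\lambda=\mu+\nu$, a finite Borel measure on the compact metric space $K$, hence regular. Given $\eta>0$, choose a compact $C\subseteq P$ and a relatively open $U\supseteq P$ with $\lambda(U\setminus C)<\eta$. Urysohn's lemma (explicitly, $f(x)=\operatorname{dist}(x,K\setminus U)/(\operatorname{dist}(x,K\setminus U)+\operatorname{dist}(x,C))$) gives a continuous $f:K\to[0,1]$ with $f=1$ on $C$ and $f=0$ off $U$. Then
\[
\left|\int f\,\mathrm d(\mu-\nu)-(\mu-\nu)(P)\right|\le \int|f-\mathbf 1_P|\,\mathrm d\lambda\le\lambda(U\setminus C)<\eta.
\]
Since $(\mu-\nu)(P)=\|\mu-\nu\|_{\mathrm{TV}}$, letting $\eta\to0$ gives the reverse inequality. (If $K\setminus U$ or $C$ is empty, take $f\equiv 1$ or $f\equiv 0$ respectively.)

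\emph{Main obstacle.} The only nontrivial step is the regularity of finite Borel measures on metric spaces, i.e.\ inner approximation by closed sets and outer approximation by open sets. Closed subsets of $K$ are compact, so this is standard. It can be proved by showing that the sets approximable in this way form a $\sigma$-algebra containing the closed sets.
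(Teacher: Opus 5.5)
Your proof is correct and complete. The paper gives no proof of this fact; it only cites Meckes, so there is no in-paper route to compare with. What you supply is the standard self-contained argument. The Hahn decomposition gives the easy inequality, which, as you note, needs no continuity. For the other inequality, regularity of the finite Borel measure $\mu+\nu$ on the compact metric space $K$, together with your explicit Urysohn function, approximates $\mathbf 1_P$ in $L^1(\mu+\nu)$ by continuous $[0,1]$-valued functions.

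Two small remarks.

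\emph{The identity for $P$.} The equality $(\mu-\nu)(P)=\sup_A|\mu(A)-\nu(A)|$ relies on $\mu(K)=\nu(K)$, which gives $(\mu-\nu)(P^{\mathsf c})=-(\mu-\nu)(P)$. That line should be stated explicitly.

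\emph{Normalization.} Your choice $\sup_A|\mu(A)-\nu(A)|$ is the one consistent with the paper's finite-set definition $\tfrac12\sum_x|\mu(x)-\nu(x)|$. References that write the dual formula as $\tfrac12\sup_{\|f\|_\infty\le 1}$ over $[-1,1]$-valued functions give the same quantity for probability measures, via $f\mapsto(1+f)/2$.

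It is also worth noting which direction the paper actually uses. In the proof of \cref{lem:continuous-kmp-mixing}, it converts bounds on all continuous test functions into a total variation bound. That is exactly your nontrivial half (supremum over continuous $f$ is at least the total variation distance), so your regularity/Urysohn step is the part that carries weight there.
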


\begin{fact}\label{fact:tv-tensor}
For distributions $\mu_1,\nu_1$ on a finite set $\Omega_1$,
and $\mu_2,\nu_2$ on a finite set $\Omega_2$,
we have
\[
\|\mu_1 \otimes \mu_2 - \nu_1 \otimes \nu_2\|_{\mathrm{TV}}
\le
\|\mu_1 - \nu_1\|_{\mathrm{TV}} + \|\mu_2 - \nu_2\|_{\mathrm{TV}} \enspace.
\]
\end{fact}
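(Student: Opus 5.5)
The plan is to compare $\mu_1\otimes\mu_2$ with $\nu_1\otimes\nu_2$ through the hybrid distribution $\nu_1\otimes\mu_2$ and then apply the triangle inequality for total variation:
\[
\|\mu_1 \otimes \mu_2 - \nu_1 \otimes \nu_2\|_{\mathrm{TV}}
\le
\|\mu_1 \otimes \mu_2 - \nu_1 \otimes \mu_2\|_{\mathrm{TV}}
+
\|\nu_1 \otimes \mu_2 - \nu_1 \otimes \nu_2\|_{\mathrm{TV}} .
\]
This inequality holds because $\|\cdot\|_{\mathrm{TV}}$ is half an $\ell_1$ norm on the finite set $\Omega_1\times\Omega_2$, hence a norm.

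Next I will evaluate each term directly from the definition. For the first term,
\[
\tfrac12\sum_{(x,y)\in\Omega_1\times\Omega_2}|\mu_1(x)-\nu_1(x)|\,\mu_2(y)
=
\tfrac12\sum_{x\in\Omega_1}|\mu_1(x)-\nu_1(x)|\sum_{y\in\Omega_2}\mu_2(y)
=
\|\mu_1-\nu_1\|_{\mathrm{TV}},
\]
since $\mu_2$ is nonnegative and sums to $1$. By symmetry, the second term equals $\|\mu_2-\nu_2\|_{\mathrm{TV}}$, using that $\nu_1$ sums to $1$. Adding the two equalities gives the claimed bound.

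An alternative route is through couplings. Take optimal couplings of $(\mu_1,\nu_1)$ and of $(\mu_2,\nu_2)$, and form their independent product; this is a coupling of the two product distributions. A union bound on the event that the pairs disagree in either coordinate then yields the same inequality. The direct computation above is simpler, however, so I will use it.

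There is no real obstacle here. The only point to state explicitly is that marginalizing out the second coordinate uses the fact that $\mu_2$ and $\nu_1$ are probability distributions.
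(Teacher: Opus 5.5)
Your proof is correct and complete. The hybrid $\nu_1\otimes\mu_2$, the triangle inequality for the (half-)$\ell_1$ norm, and the exact factorization of each term using only that $\mu_2$ and $\nu_1$ are nonnegative with unit mass together form the standard argument. The paper states this as a standard fact without giving a proof, so there is no competing route to compare against.

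The paper later applies the bound to $t$-fold products, for example in the proof of \cref{thm:labeled-chain-mixing}. That version follows from your two-factor bound by induction, or equivalently by a telescoping chain of $t$ hybrids.
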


\begin{fact}\cite[Lemma 7.10]{LPW09}\label{fact:tv-distance-function}
    Let $\mu$ and $\nu$ be two probability distributions over a finite set $\Omega_1$,
    and let $f : \Omega_1 \to \Omega_2$ be a function, where $\Omega_2$ is another finite set.
    Then, we have
    \[
    	\|\mu f^{-1} - \nu f^{-1}\|_{\mathrm{TV}} \leq \|\mu - \nu\|_{\mathrm{TV}} \enspace.
    \]
\end{fact}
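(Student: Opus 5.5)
The plan is a single data-processing argument: view $\mu f^{-1}$ and $\nu f^{-1}$ as differences of sums over the fibres of $f$, and apply the triangle inequality fibre by fibre. This is \cref{fact:tv-distance-function}, the statement just made.

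First, since $\Omega_1$ is finite, the preimages $f^{-1}(y)$ for $y\in\Omega_2$ partition $\Omega_1$, with some possibly empty. By the definition of the pushforward,
\[
\|\mu f^{-1}-\nu f^{-1}\|_{\mathrm{TV}}=\frac12\sum_{y\in\Omega_2}\Bigl|\sum_{x:f(x)=y}\bigl(\mu(x)-\nu(x)\bigr)\Bigr| .
\]

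Second, for each $y$ we use the triangle inequality:
\[
\Bigl|\sum_{x\in f^{-1}(y)}(\mu(x)-\nu(x))\Bigr|\le\sum_{x\in f^{-1}(y)}|\mu(x)-\nu(x)| .
\]

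Third, we sum over $y$. Because the fibres are disjoint and cover $\Omega_1$, the right-hand sides add up to $\sum_{x\in\Omega_1}|\mu(x)-\nu(x)|$. Multiplying by $\tfrac12$ gives exactly $\|\mu-\nu\|_{\mathrm{TV}}$, which proves the bound.

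No step is a real obstacle. The only point that needs care is the third step: each $x$ must be counted exactly once, and this holds because $f$ is a function, so its fibres really do partition $\Omega_1$.
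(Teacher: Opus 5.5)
Your proof is correct: bounding each fibre sum $\bigl|\sum_{x\in f^{-1}(y)}(\mu(x)-\nu(x))\bigr|$ by the triangle inequality and then summing over the disjoint fibres is the standard argument. The paper does not prove this fact itself; it cites \cite[Lemma 7.10]{LPW09}, and the proof given there is this same fibre-by-fibre argument.
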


We will use several facts about
the Dirichlet distribution \cite[Chapter 27]{BN04}.
The distribution $\mathrm{Dirichlet}(1,\dots,1)$ is the uniform distribution over the simplex
\(
\left\{ u \in \mathbb{R}_{\ge 0}^d : \sum_{i=1}^d u_i = 1 \right\}.
\)
It can be generated from i.i.d.\ $\Gamma(1,1)$ random variables:
if $G_1,\dots,G_d$ are independently drawn from \ $\Gamma(1,1)$, then
\[
\left(\frac{G_1}{\sum_{j=1}^d G_j},\dots,\frac{G_d}{\sum_{j=1}^d G_j}\right)
\sim \mathrm{Dirichlet}(1,\dots,1) \enspace.
\]
In particular, each marginal is distributed as $\mathrm{Beta}(1,d-1)$.
Let $u = (u_1,\dots,u_d)$ be drawn from $\mathrm{Dirichlet}(1,\dots,1)$.
Its product moment of order $(\alpha_1,\dots,\alpha_d) \in \N^d$ is
given by
\begin{equation} \label{eq:moment-diri}
    \E_u \Br{ \prod_{i=1}^d u_i^{\alpha_i} }
    =
    \frac{ (d-1)! \cdot \prod_{i=1}^d \alpha_i ! }{ ( d + \alpha_1 + \cdots + \alpha_d - 1)! } \enspace.
\end{equation}

\paragraph{Markov chain.}
A Markov chain on a finite state space $\Omega$ is specified by a transition
matrix $M \in \mathbb{R}^{\Omega \times \Omega}$ such that
$M(x,y) \ge 0$ for all $x,y \in \Omega$
and $\sum_{y \in \Omega} M(x,y)=1$ for every $x \in \Omega$.
For $k \in \mathbb{N}$, we write $M^k$ for the $k$-step transition matrix,
so that $M^k(x,\cdot)$ denotes the distribution of the chain
after $k$ steps when started from $x$.

A probability distribution $\pi$ on $\Omega$ is \emph{stationary} if
\(
\pi \cdot M = \pi
\),
or equivalently,
\[
\pi(y)=\sum_{x \in \Omega} \pi(x) \cdot M(x,y)
\qquad \forall y \in \Omega \enspace.
\]
The chain is \emph{irreducible} if for every $x,y \in \Omega$,
there exists some $k \in \mathbb{N}$ such that $M^k(x,y)>0$.
It is \emph{aperiodic} if for every $x \in \Omega$,
\[
\gcd\{k \ge 1 : M^k(x,x)>0\}=1 \enspace.
\]
Every irreducible and aperiodic Markov chain on a finite state space
has a unique stationary distribution.
The chain is \emph{reversible} with respect to a distribution $\pi$
if the detailed balance condition holds:
\[
\pi(x) \cdot M(x,y) = \pi(y) \cdot M(y,x)
\qquad \forall x,y \in \Omega \enspace.
\]
This detailed balance condition also implies that $\pi$ is stationary.
For $\varepsilon \in (0,1)$, the total-variation mixing time of the chain is defined as
\begin{equation*}
T^{\mathrm{mix}}(\varepsilon)
:=
\min\Bigl\{
k \in \mathbb{N} :
\max_{x \in \Omega} \|M^k(x,\cdot)-\pi\|_{\mathrm{TV}} \le \varepsilon
\Bigr\},
\end{equation*}
where $\pi$ is the stationary distribution. We also define 
\begin{equation*}
    \delta(k)
    :=\max_{x\in\Omega}
      \|M^k(x,\cdot)-\pi\|_{\mathrm{TV}}.
\end{equation*}
The following fact captures the exponential decay of the total variation distance,
which is a standard consequence of the eigenvalue analysis of reversible Markov chains; see the proof of \cite[Theorem 12.4]{LPW09}.

\begin{fact}\label{fact:mixing_time}
    Let $M$ be the transition matrix of a reversible, irreducible Markov chain with state space $\Omega$, and let $\pi_{\min}=\min_{x\in\Omega}\pi(x).$ Then
    \[ \delta(k) \le\frac{\lambda_2(M)^k}{2\sqrt{\pi_{\min}}} \enspace,\]
    where $\lambda_2(M)$ is the second largest eigenvalue in absolute value of $M$.
\end{fact}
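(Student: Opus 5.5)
The statement just above is \cref{fact:mixing_time}: $\delta(k)\le \lambda_2(M)^k/(2\sqrt{\pi_{\min}})$ for a reversible irreducible chain. I will use the spectral decomposition of $M$ in $L^2(\pi)$ and then pass from an $L^2$ bound to total variation by Cauchy--Schwarz.

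First, reversibility means the matrix $A=D^{1/2}MD^{-1/2}$, with $D=\mathrm{diag}(\pi)$, is symmetric. So $M$ is self-adjoint on $L^2(\pi)$ with inner product $\langle f,g\rangle_\pi=\sum_x\pi(x)f(x)g(x)$. It therefore has real eigenvalues $1=\lambda_1\ge\lambda_2\ge\dots\ge\lambda_{|\Omega|}\ge -1$ and a $\pi$-orthonormal eigenbasis $f_1\equiv 1,f_2,\dots$. Here I take $\lambda_2(M)$ to mean $\max_{j\ge2}|\lambda_j|$, as in the statement. This gives the expansion
\[
\frac{M^k(x,y)}{\pi(y)}=\sum_{j}\lambda_j^k f_j(x)f_j(y).
\]

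Second, I apply Cauchy--Schwarz to pass to an $L^2$ bound:
\[
2\|M^k(x,\cdot)-\pi\|_{\mathrm{TV}}=\sum_y \pi(y)\left|\frac{M^k(x,y)}{\pi(y)}-1\right|\le \left(\sum_y\pi(y)\left(\frac{M^k(x,y)}{\pi(y)}-1\right)^2\right)^{1/2}.
\]
The right-hand side equals $\bigl(\sum_{j\ge2}\lambda_j^{2k}f_j(x)^2\bigr)^{1/2}$, which is at most $\lambda_2(M)^k\bigl(\sum_{j\ge 1}f_j(x)^2\bigr)^{1/2}$.

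Third, I compute $\sum_j f_j(x)^2$. Orthonormality and completeness of $\{f_j\}$ give $\sum_j f_j(x)f_j(y)=\delta_{xy}/\pi(y)$; this is the expansion above at $k=0$. At $y=x$ this yields $\sum_j f_j(x)^2=1/\pi(x)\le 1/\pi_{\min}$. Combining the three steps gives $2\delta(k)\le \lambda_2(M)^k/\sqrt{\pi_{\min}}$, which is the claim.

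The main obstacle is only bookkeeping. One must check that the eigenvalue $1$ has eigenfunction $f_1\equiv1$, so removing the $j=1$ term is the same as subtracting $\pi$. Irreducibility is used only for uniqueness of $\pi$ and positivity of $\pi_{\min}$. No aperiodicity is needed, since the statement uses the absolute-value second eigenvalue; if $\lambda_2(M)=1$, the bound is trivial.
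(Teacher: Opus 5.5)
Your proof is correct and is essentially the argument the paper defers to (the proof of \cite[Theorem 12.4]{LPW09}): spectral decomposition of $M$ in $L^2(\pi)$, Cauchy--Schwarz to pass from total variation to the $L^2(\pi)$ distance, and the identity $\sum_j f_j(x)^2 = 1/\pi(x)$, which together give the constant $1/(2\sqrt{\pi_{\min}})$. Your closing remark about the case $\lambda_2(M)=1$ is harmless but unnecessary, since your three steps already cover that case.
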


We will use the following fact to extend a total variation bound to later times.
\begin{fact}[{\cite[Lemmas 4.10 and 4.11]{LPW09}}]
\label{fact:tv-block-decay}
Let \(M\) be the transition matrix of a Markov chain on a
finite state space \(\Omega\), with stationary distribution
\(\pi\). If \(\delta(\ell)\leq\varepsilon\) for some positive integer
\(\ell\) and \(0<\varepsilon<1/2\), then, for every integer
\(k\geq0\),
\(
    \delta(k)\leq
    (2\varepsilon)^{\lfloor k/\ell\rfloor}.
\)
\end{fact}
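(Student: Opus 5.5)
This is the standard submultiplicativity argument for the worst-case distance. Write \(\bar d(k):=\max_{x,y\in\Omega}\|M^k(x,\cdot)-M^k(y,\cdot)\|_{\mathrm{TV}}\). I would establish two comparisons and one submultiplicativity property.

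\textbf{Step 1: compare \(\delta\) and \(\bar d\).} The first comparison is \(\delta(k)\le \bar d(k)\le 2\delta(k)\).
\begin{itemize}
\item For the upper bound, apply the triangle inequality through \(\pi\).
\item For the lower bound, use stationarity to write \(\pi=\sum_y\pi(y)M^k(y,\cdot)\). Convexity of the TV norm then gives \(\|M^k(x,\cdot)-\pi\|_{\mathrm{TV}}\le\sum_y\pi(y)\|M^k(x,\cdot)-M^k(y,\cdot)\|_{\mathrm{TV}}\le\bar d(k)\).
\end{itemize}

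\textbf{Step 2: submultiplicativity.} Next I would show \(\bar d(k+m)\le\bar d(k)\,\bar d(m)\). Fix \(x,y\) and take an optimal coupling \((X,Y)\) of \(M^k(x,\cdot)\) and \(M^k(y,\cdot)\), so that \(\Pr[X\ne Y]=\|M^k(x,\cdot)-M^k(y,\cdot)\|_{\mathrm{TV}}\le\bar d(k)\). Then
\[
M^{k+m}(x,\cdot)-M^{k+m}(y,\cdot)=\E\bigl[M^m(X,\cdot)-M^m(Y,\cdot)\bigr].
\]
The integrand vanishes on \(\{X=Y\}\). Elsewhere its TV norm is at most \(\bar d(m)\), so convexity yields the claim. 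This coupling step is the only nontrivial point, and it is standard.

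\textbf{Step 3: conclude.} Write \(q=\lfloor k/\ell\rfloor\). The map \(k\mapsto\bar d(k)\) is nonincreasing: apply Step 2 with \(\bar d(1)\le 1\), or note directly that one more transition step is a contraction in TV. Therefore
\[
\delta(k)\le\bar d(k)\le\bar d(q\ell)\le\bar d(\ell)^q\le(2\delta(\ell))^q\le(2\varepsilon)^{q}.
\]

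\textbf{Edge cases.} When \(q=0\) the bound reads \(\delta(k)\le1\), which is trivial. The hypothesis \(\varepsilon<1/2\) only ensures the bound is meaningful; it is not used in the argument itself.
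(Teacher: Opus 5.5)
Your proposal is correct and is essentially the intended argument: the paper gives no proof of its own but cites \cite[Lemmas 4.10 and 4.11]{LPW09}, which are exactly your comparison \(\delta(k)\le\bar d(k)\le 2\delta(k)\) and the submultiplicativity \(\bar d(k+m)\le\bar d(k)\,\bar d(m)\) via an optimal coupling. Your monotonicity step, the \(q=0\) case, and your remark that \(\varepsilon<1/2\) is not actually needed are all fine.
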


We also recall the following coupling lemma.
\begin{lemma}[Coupling lemma, Theorem 5.4 in \cite{LPW09}]
\label{lem:coupling_lemma}
Let \(M\) be the transition matrix of a Markov chain on a state space \(\Omega\),
with unique stationary distribution \(\pi\).
Let \((X_t)_{t\geq0}\) and \((Y_t)_{t\geq0}\) be two Markov chains with transition matrix \(M\),
started at \(X_0=x\in\Omega\) and \(Y_0\sim\pi\), respectively.
Then, for any coupling of the two chains and every integer \(t\geq0\),
\(\norm{M^t(x,\cdot)-\pi}_{\mathrm{TV}}
\leq\Pr[X_t\neq Y_t]\).
\end{lemma}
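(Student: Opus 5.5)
The statement is the coupling lemma: for any coupling of $(X_t)$ started at $x$ and $(Y_t)$ started from $\pi$, we have $\|M^t(x,\cdot)-\pi\|_{\mathrm{TV}}\le\Pr[X_t\neq Y_t]$. I would prove it in two steps. The first step identifies the time-$t$ marginals of the two chains. The second is the elementary fact that total variation distance is at most the disagreement probability of any coupling.

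First, since $X$ is a Markov chain with transition matrix $M$ and $X_0=x$, the law of $X_t$ is $M^t(x,\cdot)$. Since $Y_0\sim\pi$ and $\pi M=\pi$, induction gives $\pi M^t=\pi$. So $Y_t$ has law $\pi$ for every $t$. Any coupling of the two chains therefore restricts at time $t$ to a coupling $(X_t,Y_t)$ of $\mu:=M^t(x,\cdot)$ and $\pi$.

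Second, I would use the equivalent form $\|\mu-\pi\|_{\mathrm{TV}}=\max_{A\subseteq\Omega}(\mu(A)-\pi(A))$. This follows by taking $A=\{y:\mu(y)>\pi(y)\}$ in the definition $\frac12\sum|\mu-\pi|$, using that both measures have total mass $1$. For any $A$,
\[
\mu(A)-\pi(A)=\Pr[X_t\in A]-\Pr[Y_t\in A]\le \Pr[X_t\in A,\,Y_t\notin A]\le\Pr[X_t\neq Y_t].
\]
Maximizing over $A$ gives the claim.

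The argument is standard, and I do not expect a real obstacle. The only point needing care is stationarity of the marginal of $Y_t$, which needs only $\pi M=\pi$. Reversibility and aperiodicity are not used. If the state space is only measurable rather than finite, the same argument goes through with measurable sets $A$ in place of subsets.
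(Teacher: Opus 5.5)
Your proof is correct and is the standard argument behind the cited result (Theorem~5.4 of \cite{LPW09} together with the coupling characterization of total variation distance). It identifies the time-$t$ marginals as $M^t(x,\cdot)$ and $\pi M^t=\pi$, then bounds $\mu(A)-\pi(A)\le\Pr[X_t\in A,\,Y_t\notin A]\le\Pr[X_t\neq Y_t]$. The paper gives no proof of its own, and your use of $\pi M^t=\pi$ is exactly the adaptation needed for the stationary start $Y_0\sim\pi$, since the cited theorem starts both chains at points.
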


\paragraph{Partitions.}
A partition of a positive integer $t$ is a non-increasing sequence
$\lambda=(\lambda_1,\lambda_2,\dots,\lambda_k)$
of positive integers
$\lambda_1 \geq \lambda_2 \geq \cdots \geq \lambda_k > 0$
such that $\sum_{i=1}^k \lambda_i = t$.
We write $\lambda \vdash t$ to
indicate that $\lambda$ is a partition of $t$.
The length of a partition $\lambda$
is the number of positive integers in $\lambda$.
We denote by $\partition{t}{d}$ the set of
all partitions of $t$ of length at most $d$.
When $d$ is clear from context, we simply write $\partitions{t}$.

\paragraph{Symmetric subspace.}
For positive integers $t$ and $d$, let
$
    \symsubspace{t}{d} =
    \mathrm{span}\{\ket{u}^{\otimes t} : \ket{u} \in \C^d\}
$
denote the symmetric subspace of
$(\C^d)^{\otimes t}$,
and let $\symsubspaceprojfull{t}{d}$ denote the projector onto
$\symsubspace{t}{d}$.
Let
\begin{equation}\label{eq:maximally-mixed-state-symmetric-subspace}
    \rho_{\sym}^{t,d} =\frac{\symsubspaceprojfull{t}{d}}{\dim(\vee^t\C^d)}
\end{equation}
be the maximally mixed state on the symmetric subspace $\vee^t \mathbb{C}^d$.
When $t$ and $d$ are clear from context,
we simply write $\symsubspaceproj$ and $\rho_{\sym}$.

An orthonormal basis for $\symsubspace{t}{d}$
is given by the \emph{generalized Dicke states}.
For each \emph{multiplicity vector} $r \in \mathbb{N}^d$
satisfying $\sum_{j=1}^d r_j = t$,
the corresponding generalized Dicke state is defined as
\begin{equation}\label{eq:generalized-Dicke}
    \ket{D_r} \coloneqq \frac{1}{\sqrt{|\Lambda_r|}} \sum_{x \in \Lambda_r} \ket{x} \enspace,
\end{equation}
where $\Lambda_r$ is the set of all length-$t$ strings in $[d]^{t}$
that have multiplicity vector $r$, that is,
\begin{equation}\label{eq:omega-r}
    \Lambda_r \coloneqq \{x \in [d]^t : \#_{a}(x) = r_a \text{ for all } a \in [d]\} \enspace.
\end{equation}
The size of each set $\Lambda_r$ is given by the multinomial coefficient
\[
    |\Lambda_r| = \frac{t!}{r_1! r_2! \cdots r_d!} \enspace.
\]
The set of all generalized Dicke states
$\{\ket{D_r} : r \in \mathbb{N}^d, \sum_{j=1}^d r_j = t\}$
forms an orthonormal basis for the symmetric subspace $\symsubspace{t}{d}$.
In particular, the dimension of $\symsubspace{t}{d}$
equals the number of distinct multiplicity vectors $r$,
and hence is given by the following formula
\[
    \dim(\symsubspace{t}{d}) = \binom{t + d - 1}{d - 1} \enspace.
\]
The following fact shows that Haar averaging acts like
a completely depolarizing channel on the symmetric subspace
$\symsubspace{t}{d}$.

\begin{fact}\label{eq:haar-averaging-symmetric-subspace}
    Let $U$ be a Haar random unitary over $\ugroup{d}$.
    For any two generalized Dicke states
    $\ket{D_r}$ and $\ket{D_{r'}}$,
    we have
    \[
        \E_{U\gets\haardist}
        \Br{
            U^{\otimes t} \ketbratwo{D_r}{D_{r'}} (U^{\dagger})^{\otimes t}
        }
        = \delta_{r,r'} \cdot \rho_{\sym}^{t,d} \enspace,
    \]
    where $\delta_{r,r'}$ is $1$ if $r = r'$ and $0$ otherwise.
\end{fact}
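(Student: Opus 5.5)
The plan is to derive the identity from the Schur-type structure of the symmetric subspace. First I will show that the averaged operator
\[
    X_{r,r'} \coloneqq \E_{U\gets\haardist}\Br{U^{\otimes t}\ketbratwo{D_r}{D_{r'}}(U^\dagger)^{\otimes t}}
\]
commutes with \(V^{\otimes t}\) for every \(V\in\ugroup{d}\). This follows from left-invariance of the Haar measure: substituting \(U\mapsto VU\) gives \(V^{\otimes t}X_{r,r'}(V^{\dagger})^{\otimes t}=X_{r,r'}\). Moreover, \(X_{r,r'}\) is supported on \(\symsubspace{t}{d}\) on both sides, since \(U^{\otimes t}\) preserves the symmetric subspace and \(\ket{D_r},\ket{D_{r'}}\in\symsubspace{t}{d}\). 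Hence \(X_{r,r'}=\symsubspaceproj X_{r,r'}\symsubspaceproj\), and it is an intertwiner of the representation \(U\mapsto U^{\otimes t}|_{\symsubspace{t}{d}}\).

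The second step is to use that this representation is irreducible. This is standard: \(\symsubspace{t}{d}\) is the irreducible \(\ugroup{d}\)-module of highest weight \((t,0,\dots,0)\). Alternatively, the span of \(\{U^{\otimes t}\}\) restricted to the symmetric subspace equals the span of \(\{A^{\otimes t}\}\) over all \(A\), and this span is all of \(\mathrm{End}(\symsubspace{t}{d})\) by polarization. By Schur's lemma, \(X_{r,r'}=c_{r,r'}\,\symsubspaceproj\) for some scalar \(c_{r,r'}\).

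The third step fixes the scalar by taking traces. The map is trace-preserving, since conjugation by a unitary preserves trace, so \(\Tr X_{r,r'}=\langle D_{r'}|D_r\rangle=\delta_{r,r'}\) by orthonormality of the generalized Dicke basis. Therefore
\[
    c_{r,r'}\dim(\symsubspace{t}{d})=\delta_{r,r'},
\]
and so \(X_{r,r'}=\delta_{r,r'}\,\rho_{\sym}^{t,d}\) by \eqref{eq:maximally-mixed-state-symmetric-subspace}.

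The only nontrivial ingredient is the irreducibility in the second step. To keep the proof self-contained, I would argue via polarization. The operators \(A^{\otimes t}\) with \(A\) ranging over all matrices span the symmetric part of \(\mathrm{End}(\C^d)^{\otimes t}\), and this restricts onto \(\mathrm{End}(\symsubspace{t}{d})\). Because \(A\mapsto \langle\phi|A^{\otimes t}|\psi\rangle\) is a polynomial, its vanishing on \(\ugroup{d}\) forces its vanishing on \(\mathrm{GL}_d\) (by Zariski density), and hence on all matrices. So \(U^{\otimes t}\) spans the full algebra \(\mathrm{End}(\symsubspace{t}{d})\), and any operator commuting with all \(U^{\otimes t}\) is a scalar on \(\symsubspace{t}{d}\).
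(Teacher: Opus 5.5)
Your proof is correct. The paper gives no proof of this statement: it is recorded as a standard Fact and used as a black box, in \cref{prop:markov-chain-transition-matrix-correspondence} with $d=2$ and in the proof of \cref{thm:main}. So what you supply is the textbook derivation rather than an alternative to anything in the paper. Haar invariance makes $X_{r,r'}$ an intertwiner supported on $\symsubspace{t}{d}$. Irreducibility and Schur's lemma make it a multiple of $\symsubspaceproj$. Finally, $\Tr X_{r,r'}=\langle D_{r'}|D_r\rangle=\delta_{r,r'}$ fixes the constant.

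The one external input is the density claim in your last paragraph. It is true for holomorphic polynomials (Weyl's unitarian trick) and takes one line to justify: $X\mapsto p(\e^{X})$ is entire on $\mathfrak{gl}_d(\C)=\mathfrak{u}(d)\oplus\i\,\mathfrak{u}(d)$ and vanishes on the totally real subspace $\mathfrak{u}(d)$, hence vanishes identically, and $\exp$ maps onto $\mathrm{GL}_d(\C)$. Also, to conclude that the restrictions of $U^{\otimes t}$ span $\mathrm{End}(\symsubspace{t}{d})$, run the polynomial argument for an arbitrary functional $Y\mapsto\Tr(LY)$, not only for single matrix elements. Alternatively, run it directly on the matrix elements of $[X,A^{\otimes t}]$.

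If you want to avoid this input, there is an elementary route in the spirit of \cref{lem:block-haar-kills-off-diagonal}:
\begin{enumerate}
\item Commuting with all diagonal phase unitaries forces an intertwiner to be diagonal in the Dicke basis, since distinct $r$ carry distinct characters $\e^{\i\sum_j r_j\theta_j}$.
\item Differentiate commutation with $\exp(\i\theta(\ketbratwo{a}{b}+\ketbratwo{b}{a}))^{\otimes t}$ at $\theta=0$. The generator sends $\ket{D_r}$ to a vector with coefficient $\sqrt{(r_a+1)r_b}$ on $\ket{D_{r+e_a-e_b}}$, so the diagonal entries at $r$ and $r+e_a-e_b$ must coincide whenever $r_b>0$.
\item These moves connect all multiplicity vectors of total $t$, so the intertwiner is a scalar.
\end{enumerate}
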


We say that a multiplicity vector $r$ has type $\lambda$
if $\lambda = r^{\darrow}$, where $r^{\darrow}$ denotes the vector
obtained by rearranging the entries of $r$ in non-increasing order
and removing all zero entries.
For example, the multiplicity vector $r = (1, 2, 0)$ has type $\lambda = (2, 1)$.
Given a partition $\lambda \vdash t$ with length at most $d$,
we denote the set of all multiplicity vectors with type $\lambda$
by $\Xi_{\lambda, d}$, i.e.,
\begin{equation}\label{eq:xi-lambda}
    \Xi_{\lambda, d} = \{r \in \mathbb{N}^d : r^{\darrow} = \lambda\}\enspace.
\end{equation}
We denote by $\typesymsubspace{\lambda}{t}{d}$
the subspace of $\symsubspace{t}{d}$ spanned by
the generalized Dicke states of type $\lambda$, that is,
\[
    \typesymsubspace{\lambda}{t}{d} = \mathrm{span}\{\ket{D_r} : r \in \Xi_{\lambda, d}\} \enspace.
\]
The projector onto $\typesymsubspace{\lambda}{t}{d}$
and the maximally mixed state on $\typesymsubspace{\lambda}{t}{d}$
are defined by
\begin{equation}\label{eq:projector-and-state-lambda}
    \Pi_{\lambda,d} = \sum_{r \in \Xi_{\lambda, d}} \ketbra{D_r} \enspace, \enspace\enspace
    \rho_{\lambda,d} = \frac{\Pi_{\lambda,d}}{|\Xi_{\lambda, d}|}\enspace.
\end{equation}
We may write $\Xi_{\lambda}$, $\Pi_{\lambda}$ and $\rho_{\lambda}$,
when $d$ is clear from the context.
We have the following facts.

\begin{fact}\label{fact:orthogonality}
    For two partitions $\lambda, \mu \in \partition{t}{d}$,
    $\tr{\Pi_{\mu,d}\cdot\rho_{\lambda,d}}$ equals $1$ if $\mu = \lambda$ and equals $0$ otherwise.
\end{fact}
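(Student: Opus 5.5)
The statement to prove is \cref{fact:orthogonality}: $\tr{\Pi_{\mu,d}\cdot\rho_{\lambda,d}}$ equals $1$ if $\mu=\lambda$ and $0$ otherwise. The plan is to reduce it to the orthonormality of the generalized Dicke basis together with the fact that distinct partitions index disjoint sets of multiplicity vectors.

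\emph{Step 1: disjointness of type classes.} Each multiplicity vector $r$ has exactly one type, namely $r^{\darrow}$. Hence for partitions $\lambda\neq\mu$ the sets $\Xi_{\lambda,d}$ and $\Xi_{\mu,d}$ defined in \eqref{eq:xi-lambda} are disjoint.

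\emph{Step 2: expand the trace.} By \eqref{eq:projector-and-state-lambda},
\[
\tr{\Pi_{\mu,d}\,\rho_{\lambda,d}}
=\frac{1}{|\Xi_{\lambda,d}|}\sum_{s\in\Xi_{\mu,d}}\sum_{r\in\Xi_{\lambda,d}}\abs{\braket{D_s|D_r}}^2 .
\]
Since the generalized Dicke states form an orthonormal basis of $\symsubspace{t}{d}$, we have $\abs{\braket{D_s|D_r}}^2=\delta_{s,r}$. This orthonormality can also be checked directly: strings with different multiplicity vectors lie in disjoint sets $\Lambda_r$, and the normalization $1/\sqrt{|\Lambda_r|}$ gives unit norm. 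The double sum therefore equals $|\Xi_{\mu,d}\cap\Xi_{\lambda,d}|$.

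\emph{Step 3: conclude.} If $\mu=\lambda$, the intersection is all of $\Xi_{\lambda,d}$, so the trace is $|\Xi_{\lambda,d}|/|\Xi_{\lambda,d}|=1$. If $\mu\neq\lambda$, Step 1 makes the intersection empty, so the trace is $0$.

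The only point needing care is that $\Xi_{\lambda,d}$ is nonempty, so the normalization is well defined. This holds because $\lambda\in\partition{t}{d}$ has length at most $d$, so padding $\lambda$ with zeros to length $d$ gives an element of $\Xi_{\lambda,d}$. There is no real obstacle.
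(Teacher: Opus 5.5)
Your proof is correct and is essentially the argument the paper relies on: the paper states this as a Fact without proof, and it follows immediately from the orthonormality of the generalized Dicke basis together with the disjointness of the type classes $\Xi_{\lambda,d}$ for distinct partitions. Your check that $\Xi_{\lambda,d}$ is nonempty, so that $\rho_{\lambda,d}$ is well defined, is a detail the paper leaves implicit.
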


\begin{fact}\label{fact:decomposition-maximally-mixed-state}
    For integers $t, d \in \N$, $ \rho_{\sym}^{t,d} = \sum_{\lambda \in \partition{t}{d}}
    \frac{ |\Xi_{\lambda, d}| }{ \dim(\symsubspace{t}{d}) } \cdot \rho_{\lambda,d} $.
\end{fact}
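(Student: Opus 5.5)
The target is \cref{fact:decomposition-maximally-mixed-state}, i.e.
\[
\rho_{\sym}^{t,d}=\sum_{\lambda\in\partition{t}{d}}\frac{|\Xi_{\lambda,d}|}{\dim(\symsubspace{t}{d})}\,\rho_{\lambda,d}.
\]
The plan is to write the projector $\symsubspaceprojfull{t}{d}$ in the generalized Dicke basis and then regroup the basis vectors by type.

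First, the generalized Dicke states $\{\ket{D_r}\}$ form an orthonormal basis of $\symsubspace{t}{d}$. Hence
\[
\symsubspaceprojfull{t}{d}=\sum_{r\in\N^d,\ \sum_j r_j=t}\ketbra{D_r}.
\]

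Next, partition the set of multiplicity vectors according to their type. Each $r$ with $\sum_j r_j=t$ has exactly one type $r^{\darrow}$. This type is a partition of $t$ with at most $d$ nonzero parts, so it lies in $\partition{t}{d}$. Conversely, every $\lambda\in\partition{t}{d}$ satisfies $\Xi_{\lambda,d}\neq\emptyset$: pad $\lambda$ with zeros to length $d$. The index set is therefore the disjoint union $\bigsqcup_{\lambda\in\partition{t}{d}}\Xi_{\lambda,d}$. This gives
\[
\symsubspaceprojfull{t}{d}=\sum_{\lambda\in\partition{t}{d}}\ \sum_{r\in\Xi_{\lambda,d}}\ketbra{D_r}=\sum_{\lambda\in\partition{t}{d}}\Pi_{\lambda,d}.
\]

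Finally, substitute $\Pi_{\lambda,d}=|\Xi_{\lambda,d}|\,\rho_{\lambda,d}$ from \eqref{eq:projector-and-state-lambda}. Dividing by $\dim(\symsubspace{t}{d})$, as in \eqref{eq:maximally-mixed-state-symmetric-subspace}, yields the claimed identity.

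There is no real obstacle here. The only point that needs care is checking that the map $r\mapsto r^{\darrow}$ sends the multiplicity vectors onto $\partition{t}{d}$ and that the fibers are exactly the sets $\Xi_{\lambda,d}$. This ensures that no $\rho_{\lambda,d}$ appearing in the sum is ill-defined.
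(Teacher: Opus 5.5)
Your proposal is correct and is the natural argument. The paper states this as a fact without proof. Its later use of $\sum_{\mu}\Pi_\mu=\symsubspaceproj$ in the proof of \cref{lem:element-in-transition-matrix} is exactly your step of expanding the projector in the generalized Dicke basis and grouping the basis vectors by type, after which substituting $\Pi_{\lambda,d}=|\Xi_{\lambda,d}|\,\rho_{\lambda,d}$ and dividing by $\dim(\symsubspace{t}{d})$ finishes it.
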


For the quantum results, we adopt the following convention:
$n$ denotes the number of qubits,
$d = 2^n$ the dimension of the corresponding Hilbert space,
and $t$ the number of copies of this Hilbert space.
For the parallel classical chains, $d$ may be any even integer at least $2$.
In \cref{sec:kmp-general-graph} and its supporting appendix, $d$ may be
any integer at least $2$. 

\section{Parallel Kac Converges to Haar in the Symmetric Subspace}
\label{sec:construction}
We investigate the convergence of the \emph{parallel Kac’s walk} introduced in~\cite{LQSY+24},
which is a Markov chain on $\U(d)$, with the distribution denoted by $\kacdist$.
More specifically, the random unitary $K$ is the product of a random permutation $P_\tau$
and a random block-diagonal unitary $W$:
\begin{align}\label{eq:kac-random-unitary}
    K \coloneqq P_\tau \cdot W \enspace.
\end{align}
where
\(
    W = \sum_{i=1}^{d/2} \ketbra{i}\otimes W_i
\)
is a block-diagonal unitary, whose blocks $W_i$ are independent Haar random unitaries in $\ugroup{2}$, and
\(
    P_\tau = \sum_{x=1}^{d} \ketbratwo{\tau(x)}{x}
\)
is the permutation unitary induced by a uniformly random permutation $\tau \in \S_d$.
Thus, each step consists of first applying independent Haar-random unitaries within each two-dimensional block, and then applying a uniformly random permutation of the computational basis states.

For each positive integer $t$, we define $\kacchannel{t}$ to be the
$t$-fold twirling channel associated with the distribution $\kacdist$.
That is, $\kacchannel{t}$ describes the average action of the random unitary
$K^{\otimes t}$ on an input operator $\sigma$:
\[
    \kacchannel{t}(\sigma)
    \coloneqq
    \expect{\kac\gets\kacdist}{K^{\otimes t}\cdot \sigma \cdot K^{\otimes t,\dagger}} \enspace.
\]
It will be convenient to separately consider the two sources of randomness
appearing in the definition of $K$, namely the block-diagonal Haar random unitary $W$
and the random permutation $P_\tau$.
To this end, we introduce the corresponding $t$-fold twirling channels
\begin{equation}\label{eq:rand-perm-block-haar-channels}
    \blockhaarchannel{t}(\sigma) \coloneqq \expect{W_1,\cdots,W_{d/2}}{W^{\otimes t} \cdot \sigma \cdot W^{\otimes t,\dagger}} \enspace, \enspace\enspace
        \randpermchannel{t}(\sigma) \coloneqq \expect{\tau}{P_\tau^{\otimes t} \cdot \sigma \cdot P_\tau^{\otimes t,\dagger}} \enspace.
\end{equation}
Since $K = P_\tau W$, the channel $\kacchannel{t}$ can therefore be written as
\[
    \kacchannel{t} = \randpermchannel{t} \circ \blockhaarchannel{t} \enspace.
\]
In other words, one first averages over the block-diagonal Haar randomness
and then over the random permutation.

Finally, because our construction is obtained by iterating this random transformation,
we write $\repkacchannel{t}{k}$ for the $k$-time repetition of $\kacchannel{t}$:
\begin{equation}\label{eq:kac-k-fold-channel}
    \repkacchannel{t}{k}(\sigma) \coloneqq
    \underbrace{\kacchannel{t}\circ\cdots\circ\kacchannel{t}}_{k\text{ times}}(\sigma) \enspace.
\end{equation}
Equivalently, $\repkacchannel{t}{k}$ represents the average action of applying
$k$ independent steps of the parallel Kac's walk on the input.

Our goal is to show that, after sufficiently many independent repetitions,
the channel $\repkacchannel{t}{k}$ becomes close to the Haar twirling channel
$\haarchannel{t}$ on the symmetric subspace.
The channel $\haarchannel{t}$ is the $t$-fold twirling channel associated
with the Haar distribution $\haardist$ over $\U(d)$:
\begin{equation}\label{eq:haar-twirl-channel}
    \haarchannel{t}(\sigma) = \expect{U\gets\haardist}{U^{\otimes t} \cdot \sigma \cdot U^{\otimes t,\dagger}} \enspace.
\end{equation}
The following theorem gives our main quantitative bound.

\begin{restatable}{theorem}{maintheorem}\label{thm:main}
There is an absolute constant \(c>0\) such that the following holds.
Let $n,t$ be positive integers, let $k\in\mathbb N$, and let $d=2^n$.
Let $\repkacchannel{t}{k}$ and $\haarchannel{t}$ be the quantum channels
defined in \cref{eq:kac-k-fold-channel} and \cref{eq:haar-twirl-channel}, respectively.
Let $\symsubspaceproj$ be the projection onto the symmetric subspace $\symsubspace{t}{d}$.
For any state $\sigma$ such that $\tr{(\symsubspaceproj\otimes\id)\cdot\sigma} = 1$,
we have
\[
    \norm{(\repkacchannel{t}{k+1}\otimes \id)(\sigma)
    -(\haarchannel{t}\otimes \id)(\sigma)}_1
    \leq2d^2\e^{-ck}.
\]
\end{restatable}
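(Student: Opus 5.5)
The plan is to follow the roadmap of \cref{fig:proof-roadmap-partition-chain}: reduce the channel statement to a total variation bound for a classical chain on partitions, and then bound that chain's mixing time by lifting to labeled KMP.

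\textbf{Step 1: one Kac step sends a symmetric input into the span of the $\rho_\lambda$.} Take an input $\sigma$ with $\tr{(\symsubspaceproj\otimes\id)\sigma}=1$. Then $\sigma=(\symsubspaceproj\otimes\id)\sigma(\symsubspaceproj\otimes\id)$, so we may expand it as $\sigma=\sum_{r,r'}\ketbratwo{D_r}{D_{r'}}\otimes\sigma_{r,r'}$. Two facts drive the computation.
\begin{enumerate}
\item The block-Haar twirl kills off-diagonal Dicke terms. Each $U(2)$ block has a phase freedom that separates distinct block occupation pairs. Within a block, Schur--Weyl on $\vee^{m}\C^2$ (\cref{eq:haar-averaging-symmetric-subspace} with $d=2$) spreads $\ketbra{D_{(a,m-a)}}$ uniformly over the $m+1$ splits.
\item The permutation twirl then averages over coordinate orderings, leaving only the type.
\end{enumerate}
Together these give $(\kacchannel{t}\otimes\id)(\sigma)=\sum_\lambda \rho_\lambda\otimes\tau_\lambda$, where $\tau_\lambda=\sum_{r}q_{\lambda}(r)\,\sigma_{r,r}$ is positive. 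Further steps then act as $\rho_\lambda\mapsto\sum_{\lambda'}M(\lambda,\lambda')\rho_{\lambda'}$, with $M$ the partition resampling chain (\cref{prop:markov-chain-transition-matrix-correspondence}). The Haar twirl has the same structure: by \cref{eq:haar-averaging-symmetric-subspace} and \cref{fact:decomposition-maximally-mixed-state}, $\haarchannel{t}$ sends each $\rho_\lambda$ to $\rho_{\sym}=\sum_{\lambda'}\pi(\lambda')\rho_{\lambda'}$, where $\pi(\lambda')=|\Xi_{\lambda'}|/\dim\symsubspace{t}{d}$ is the stationary law of $M$. Haar invariance also gives $\haarchannel{t}\circ\kacchannel{t}=\haarchannel{t}$.

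\textbf{Step 2: reduce the trace distance to total variation.} Combining Step 1, and using that the $\rho_\lambda$ have orthogonal supports (\cref{fact:orthogonality}),
\[
\norm{(\repkacchannel{t}{k+1}\otimes\id)(\sigma)-(\haarchannel{t}\otimes\id)(\sigma)}_1\le\sum_\lambda\tr{\tau_\lambda}\cdot 2\norm{M^k(\lambda,\cdot)-\pi}_{\mathrm{TV}}\le 2\max_\lambda\norm{M^k(\lambda,\cdot)-\pi}_{\mathrm{TV}}.
\]
It therefore suffices to show $\delta_M(k)\le d^2\e^{-ck}$ uniformly in $t$.

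\textbf{Step 3: lift to the labeled parallel KMP chain.} By \cref{fact:tv-distance-function}, the projections from labeled parallel KMP on $[d]^t$ to parallel KMP, and from parallel KMP to the partition chain, do not increase total variation distance. On the labeled chain I use the conditional product structure: draw $p_{ij}\sim\Unif[0,1]$ per matched pair and move particles independently. The stationary law is the mixture of $w^{\otimes t}$ with $w\sim\Dirichlet(1,\dots,1)$. Coupling the evolved $w^{(k)}=uB_k$ with the same randomness and using \cref{fact:tv-tensor} reduces the problem to a single-particle estimate. Specifically, I need $\E\norm{e_x B_k- uB_k}_1$ to be small, which I control by the second moment $\E\norm{e_xB_k-\mathbf{1}/d}_2^2$. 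That second moment evolves as a two-particle labeled chain; I would compute its one-step contraction factor, aiming at a constant such as $3/4$, on the space orthogonal to the fixed vectors. This gives per-particle error $\sqrt d\,\e^{-ck}$, hence a labeled-chain bound of order $t\sqrt d\,\e^{-ck}$, which is sufficient when $t\le \poly(d)$.

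\textbf{Step 4 (main obstacle): uniformity in $t$.} For large $t$ the labeled bound is useless, so I work directly with the unlabeled parallel KMP chain. Run one copy from $r$ and one from stationarity for $k$ steps. By concentration, each site's mass fraction is within $O(\sqrt d\,\e^{-ck})$ of the conditional $w^{(k)}$, up to binomial fluctuations of order $\sqrt t$. Then couple the two copies over roughly $\log d$ further matchings, processed backward from the singleton partition. On each newly merged pair, the uniform resplit lets us match the two copies' masses exactly, with failure probability equal to the relative discrepancy of the pair totals. The union of these failure probabilities is $O(d^2\e^{-ck})$. Finally, \cref{lem:coupling_lemma} and \cref{fact:tv-block-decay} absorb the extra steps into the constant $c$. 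Making this exact coupling rigorous, especially the ordering of merges and the total discrepancy bound independent of $t$, is where I expect the real work to be.
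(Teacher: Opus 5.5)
Your Steps 1--2 are correct and agree with the paper. You regroup by $\lambda$ after one Kac step using $\haarchannel{t}\circ\kacchannel{t}=\haarchannel{t}$; the paper regroups by $r$ and applies \cref{prop:spectral-gap-implies-channel-mixing}. The two are equivalent.

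The genuine gap is the single-particle estimate in Step 3, and Step 4 relies on it too. The quantity $\E\norm{e_xB_k-\mathbf 1/d}_2^2$ does not tend to zero. By \cref{lem:labeled-stationary-product-structure}, $uB_k\sim\Dirichlet(1,\dots,1)$ for every $k$. The rows of $B_k$ coalesce onto this common random Dirichlet vector, not onto $\mathbf 1/d$. Equivalently, the fixed vector of the two-particle labeled chain is $\pi^{(2)}$. It puts mass $2/(d+1)$ on the diagonal rather than $1/d$, because KMP particles attract. Hence $\E\norm{e_xB_k-\mathbf 1/d}_2^2\to\frac{d-1}{d(d+1)}$, and Cauchy--Schwarz then only gives $\E\norm{e_xB_k-uB_k}_1=O(1)$.

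The fix is to track $v=e_x-u$ itself. It has zero sum, and since $A$ is row-stochastic, $vA$ has zero sum too. Averaging $p_{i,j}^2+(1-p_{i,j})^2$ and then the matching gives the exact identity $\E_A\norm{vA}_2^2=\frac{2(d-2)}{3(d-1)}\norm{v}_2^2$ (\cref{lem:contraction-under-A}). Iterating, with $\E\norm{e_x-u}_2^2\le 1$, gives $\E\norm{(e_x-u)B_k}_2^2\le(2/3)^k$ and the labeled bound $\frac{t\sqrt d}{2}(2/3)^{k/2}$. No two-particle spectral computation is needed in the parallel setting. The two-particle route does work, as in \cref{thm:lkmp-mixing-time-general-graph}, but only after expanding $(e_x-u)\otimes(e_x-u)$ and subtracting $\pi^{(2)}$ from each of the four terms.

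Step 4 has the right architecture but, as you acknowledge, it is not yet a proof. Four specific ingredients are missing.

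(a) Off split times you need a coupling that does not increase discrepancy. The quantile coupling $Z=\lfloor(B+1)\xi\rfloor$, $\widetilde Z=\lfloor(D+1)\xi\rfloor$ makes the two-coordinate $\ell_1$ discrepancy exactly $|B-D|$. An induction then shows that, on the event that part totals agree, every part of $\mathcal Q_s$ carries discrepancy at most $\norm{X^{(0)}-\widetilde X^{(0)}}_1$ (\cref{lem:kmp-integer-split-couplings,lem:kmp-scheduled-coalescence}).

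(b) At a split, a part $S$ separates into $S_i\ni i$ and $S_j\ni j$. You cannot make the pair values agree, since the pair totals generally differ. What you can and must match is the total on the newly separated part: $A_s+Z_s=C_s+\widetilde Z_s$, where $A_s,C_s$ are the masses on $S_i\setminus\{i\}$. A maximal coupling of these shifted uniforms fails with probability at most $3h/(L+1)$, not probability ``equal to the relative discrepancy''. This needs a lower bound $L$ on the pair total. The paper gets it from the stationary copy, which stays uniform on $\Delta$ under every fixed-edge update. So each endpoint carries at least $td^{-12}$ particles except with probability $2d^{-11}$, once $t>d^{1000}$.

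(c) The $O(\log d)$ matchings must connect $[d]$ except with probability $1/(2d^2)$, so that the backward partitions start at $\{[d]\}$.

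(d) The threshold on $t$ must make the Hoeffding fluctuations negligible at the scale $td^{-19}$ of the first-stage discrepancy. The case $t\le d^{1000}$ is then handled by the corrected labeled bound.

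With these in place, one block of length $O(\log d)$ has total variation at most $4/d^2$, and \cref{fact:tv-block-decay} yields $d^2\e^{-ck}$.
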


By choosing an appropriate $k$, we immediately obtain the following corollary.

\begin{corollary}\label{thm:twirling}
Let \(n,t\) be positive integers, let \(d=2^n\), and let
\(\repkacchannel{t}{k}\), \(\haarchannel{t}\), and \(\sigma\)
be as in \cref{thm:main}.
Then for every $\varepsilon\in(0,1)$,  we have
 \[
    \norm{(\repkacchannel{t}{k+1}\otimes \id)(\sigma) - (\haarchannel{t}\otimes \id)(\sigma)}_1 \leq \varepsilon
\]
whenever \(k\geq c^{-1}(2\log d+\log(2/\varepsilon))\).
In particular, \(O(\log d+\log(1/\varepsilon))\) repetitions suffice,
uniformly over \(t\).
\end{corollary}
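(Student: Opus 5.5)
This corollary follows directly from \cref{thm:main}, so the plan is to choose \(k\) large enough that the exponential bound falls below \(\varepsilon\).

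By \cref{thm:main}, for every \(k\in\mathbb N\) and every state \(\sigma\) with \(\tr{(\symsubspaceproj\otimes\id)\cdot\sigma}=1\),
\[
\norm{(\repkacchannel{t}{k+1}\otimes \id)(\sigma)-(\haarchannel{t}\otimes \id)(\sigma)}_1\le 2d^2\e^{-ck},
\]
where \(c>0\) is an absolute constant that does not depend on \(n\), \(t\), \(k\) or \(\sigma\). It therefore suffices to have \(2d^2\e^{-ck}\le\varepsilon\). Taking logarithms, this is equivalent to
\[
k\ge c^{-1}\bigl(2\log d+\log(2/\varepsilon)\bigr),
\]
which is exactly the stated threshold, with natural logarithms so that the computation matches the \(\e^{-ck}\) form. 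Because the condition is monotone in \(k\), every larger \(k\) also works.

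For the asymptotic statement, \(c^{-1}(2\log d+\log 2+\log(1/\varepsilon))=O(\log d+\log(1/\varepsilon))\), since \(c\) is absolute. Uniformity in \(t\) holds because neither \(c\) nor the prefactor \(2d^2\) in \cref{thm:main} depends on \(t\).

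There is no real obstacle here: all of the substance lies in \cref{thm:main}, and this step is pure arithmetic. The only care needed is to keep the logarithm base consistent with the exponential, and, if \(\log\) is read in another base, to absorb the resulting change-of-base factor into \(c\).
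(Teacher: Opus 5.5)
Your proposal is correct and is exactly the paper's argument: the paper obtains the corollary from \cref{thm:main} by solving $2d^2\e^{-ck}\le\varepsilon$ for $k$, and uniformity in $t$ comes from $c$ and the prefactor $2d^2$ not depending on $t$. On the logarithm base: if $\log$ is read in base $2$, you do not even need to absorb anything into $c$, because $\log_2 x\ge \ln x$ for $x\ge 1$ only makes the threshold larger.
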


The proof of~\cref{thm:main} will be given in~\cref{sec:proof-of-main}. Before that, we develop several ingredients needed for the proof.

\section{A Classical Interpretation of \texorpdfstring{$\kacchannel{t}$}{Parallel Kac Channel}}
\label{sec:classical-interpretation}
In this section, we study the action of \(\kacchannel{t}\) on the symmetric subspace by translating it into a classical Markov Chain on partitions of \(t\).
In \cref{sec:one-step},
we first show that, on the symmetric subspace, the output of \(\kacchannel{t}\) can always be written as a linear combination of the states \(\{\rho_\lambda\}_{\lambda\vdash t}\).
This leads to a transition matrix indexed by partitions,
which is explained in \cref{sec:matrix-interpretation}.
In \cref{sec:markov-chain-interpretation}, we show that this matrix
is exactly the transition matrix of a classical Markov chain on $\partition{t}{d}$,
which we call the \emph{partition resampling chain}.
This viewpoint allows us to study the convergence of $\repkacchannel{t}{k}$
via the mixing time of the corresponding classical chain,
as stated in \cref{sec:channel-mixing}.

\subsection{Decomposition of Symmetric States under \texorpdfstring{$\kacchannel{t}$}{Kac Twirling Channel}}
\label{sec:one-step}

Since we are interested in the action of $\kacchannel{t}$
on the symmetric subspace, we begin by analyzing its two components,
namely $\blockhaarchannel{t}$ and $\randpermchannel{t}$
in \cref{eq:rand-perm-block-haar-channels},
when restricted to this subspace.

We first study $\randpermchannel{t}$ on generalized Dicke states.
The key point is that it forgets the specific multiplicity vector
and depends only on the type.
As a result, all generalized Dicke states of the same type
are mixed into the maximally mixed state on the corresponding type subspace.

\begin{lemma}\label{lem:rand-perm-channel}
    For any multiplicity vector $r \in \mathbb{N}^d$ with type
    $\lambda \in \partition{t}{d}$, we have
    \[
        \randpermchannel{t}(\ketbra{D_r}) = \rho_{\lambda} \enspace,	
    \]
    where $\rho_{\lambda}$ is defined in \cref{eq:projector-and-state-lambda}.
\end{lemma}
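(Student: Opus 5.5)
The plan is to compute directly how $P_\tau^{\otimes t}$ acts on a generalized Dicke state, and then average over $\tau$.

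\textbf{Step 1: action on one Dicke state.} Since $P_\tau\ket{a}=\ket{\tau(a)}$, for a string $x\in[d]^t$ we have $P_\tau^{\otimes t}\ket{x}=\ket{\tau(x_1)\cdots\tau(x_t)}$. If $x$ has multiplicity vector $r$, then the image string has multiplicity vector $r^\tau$, where $r^\tau_{\tau(a)}=r_a$. The map $x\mapsto \tau\circ x$ is a bijection from $\Lambda_r$ onto $\Lambda_{r^\tau}$, and $|\Lambda_r|=|\Lambda_{r^\tau}|$ because the multinomial coefficient is invariant under permuting the entries of $r$. Therefore
\[
P_\tau^{\otimes t}\ket{D_r}=\ket{D_{r^\tau}},
\qquad
P_\tau^{\otimes t}\ketbra{D_r}P_\tau^{\otimes t,\dagger}=\ketbra{D_{r^\tau}}.
\]

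\textbf{Step 2: averaging over $\tau$.} We get
\[
\randpermchannel{t}(\ketbra{D_r})=\frac{1}{d!}\sum_{\tau\in\S_d}\ketbra{D_{r^\tau}}.
\]
The map $\tau\mapsto r^\tau$ is the natural action of $\S_d$ on $\N^d$ by permuting coordinates, and its orbit through $r$ is exactly $\Xi_{\lambda,d}$. Indeed, two vectors with the same sorted nonzero entries and the same length $d$ also have the same number of zeros, so they are permutations of each other.

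By the orbit--stabilizer theorem, each $r'\in\Xi_{\lambda,d}$ is hit by exactly $|\stab(r)|=d!/|\Xi_{\lambda,d}|$ permutations $\tau$. Hence the average equals
\[
\frac{1}{|\Xi_{\lambda,d}|}\sum_{r'\in\Xi_{\lambda,d}}\ketbra{D_{r'}}=\rho_\lambda,
\]
using \cref{eq:projector-and-state-lambda}.

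\textbf{Main obstacle.} There is no real obstacle; this is a bookkeeping argument. The only points needing care are the indexing convention in $r^\tau$ (whether it is $r\circ\tau^{-1}$ or $r\circ\tau$, which does not matter since both range over the whole orbit uniformly) and the check that the orbit of $r$ is exactly $\Xi_{\lambda,d}$, including the zero entries.
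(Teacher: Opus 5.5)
Your proposal is correct and follows essentially the same argument as the paper: you show $P_\tau^{\otimes t}\ket{D_r}=\ket{D_{\tau\cdot r}}$, average over $\S_d$, identify the orbit of $r$ with $\Xi_{\lambda,d}$, and regroup the sum using orbit--stabilizer. Your extra checks, namely the bijection $\Lambda_r\to\Lambda_{r^\tau}$ and the fact that the zero entries are accounted for in the orbit, are details the paper leaves implicit, and they are correct.
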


\begin{proof}
    Suppose that the multiplicity vector \(r\) of \(\ket{D_r}\)
    has type \(\lambda = (\lambda_1,\dots,\lambda_s)\).
    We first describe the action of \(P_\tau^{\otimes t}\)
    on generalized Dicke states.
    Since \(P_\tau\) permutes the computational basis,
    applying \(P_\tau^{\otimes t}\) to a basis vector
    \(\ket{x_1,\ldots,x_t}\) replaces each symbol \(x_i\) by \(\tau(x_i)\).
    Consequently, the number of occurrences of the symbol \(a\) now becomes the number of occurrences of \(\tau(a)\).
    In terms of multiplicity vectors,
    this corresponds to the natural action of \(S_d\) on \(\mathbb{N}^d\),
    \[
        (\tau\cdot r)_j := r_{\tau^{-1}(j)} .
    \]
    It follows that
    for every permutation \(\tau\in S_d\),
    \[
        P_\tau^{\otimes t}\ket{D_r}=\ket{D_{\tau\cdot r}} \enspace.
    \]
    Applying the channel to \(\ketbra{D_r}\) therefore gives
    \[
    \randpermchannel{t}(\ketbra{D_r})
    =
    \frac{1}{d!}\sum_{\tau\in S_d}\ketbra{D_{\tau\cdot r}} .
    \]

    Now observe that the orbit of \(r\) under the action of \(S_d\) consists
    precisely of all multiplicity vectors whose type is \(\lambda\), namely
    \[
    \{\tau\cdot r : \tau\in S_d\} = \Xi_{\lambda} \enspace,
    \]
    where $\Xi_{\lambda}$ is defined in \cref{eq:xi-lambda}.
    Denote the stabilizer of \(r\) by
    \[
    \stab(r) := \{\tau\in S_d : \tau\cdot r = r\} \enspace.
    \]
    By the orbit-stabilizer theorem,
    \[
        |\Xi_{\lambda}| = \frac{d!}{|\stab(r)|} \enspace.
    \]
    Moreover, for each \(s\in\Xi_{\lambda}\), the number of permutations
    \(\tau\in S_d\) such that \(\tau\cdot r = s\) is exactly
    \(|\stab(r)|\). Hence we can regroup the above sum as
    \[
        \randpermchannel{t}(\ketbra{D_r})
        =
        \frac{|\stab(r)|}{d!}
        \sum_{s\in\Xi_{\lambda}} \ketbra{D_s}
        =
        \frac{1}{|\Xi_{\lambda}|}
        \sum_{s\in\Xi_{\lambda}} \ketbra{D_s}
        =
        \rho_{\lambda} \enspace.
    \]
    This completes the proof.
\end{proof}

We next turn to the action of $\blockhaarchannel{t}$ and show that
$\blockhaarchannel{t}$ kills all off-diagonal terms
between generalized Dicke states.
Intuitively, this happens because independent random phases
are introduced on each basis inside every block,
and averaging over these phases eliminates
every cross term between distinct multiplicity vectors.

\begin{lemma}\label{lem:block-haar-kills-off-diagonal}
    For any state $\sigma$ in the symmetric subspace $\symsubspace{t}{d}$,
    the state $\blockhaarchannel{t}(\sigma)$ is diagonal
    in the generalized Dicke basis $\{\ket{D_r}\}_r$.
    In particular, it admits the decomposition
    \[
        \blockhaarchannel{t}(\sigma)
        =
        \sum_r c_\sigma(r)\cdot\ketbra{D_r},
    \]
    for some coefficients $c_\sigma(r)$ depending on $\sigma$.
\end{lemma}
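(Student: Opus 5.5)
The plan is to exploit the fact that the block-Haar distribution is invariant under multiplication by diagonal phase unitaries. Expanding $\sigma$ in the generalized Dicke basis gives $\sigma=\sum_{r,r'}\sigma_{r,r'}\ketbratwo{D_r}{D_{r'}}$. By linearity it suffices to show
\[
\blockhaarchannel{t}(\ketbratwo{D_r}{D_{r'}})=0 \quad\text{for } r\neq r',
\]
and that $\blockhaarchannel{t}(\ketbra{D_r})$ is itself diagonal in the Dicke basis. Both will follow from a single invariance argument.

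The key step is to let $\Theta=\mathrm{diag}(\e^{\i\theta_1},\dots,\e^{\i\theta_d})$ with arbitrary phases. Since $\Theta$ is block-diagonal with respect to the blocks $\{2j-1,2j\}$, and each block $W_j\Theta_j$ is again Haar on $\ugroup{2}$ by right invariance (independently across blocks), $W\Theta$ has the same law as $W$. Hence
\[
\blockhaarchannel{t}(X)=\blockhaarchannel{t}\bigl(\Theta^{\otimes t}X\Theta^{\otimes t,\dagger}\bigr).
\]
A basis string $x$ with multiplicity vector $r$ satisfies $\Theta^{\otimes t}\ket{x}=\e^{\i\langle\theta,r\rangle}\ket{x}$, so
\[
\Theta^{\otimes t}\ket{D_r}=\e^{\i\langle\theta,r\rangle}\ket{D_r}.
\]
Averaging over uniform $\theta\in[0,2\pi)^d$ yields $\blockhaarchannel{t}(\ketbratwo{D_r}{D_{r'}})=\E_\theta[\e^{\i\langle\theta,r-r'\rangle}]\,\blockhaarchannel{t}(\ketbratwo{D_r}{D_{r'}})$. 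The expectation vanishes for $r\neq r'$, so these cross terms are killed.

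For the diagonality of $\blockhaarchannel{t}(\ketbra{D_r})$, I use left invariance in the same way: $\Theta W$ has the same law as $W$, so $\tau:=\blockhaarchannel{t}(\ketbra{D_r})$ satisfies $\tau=\Theta^{\otimes t}\tau\Theta^{\otimes t,\dagger}$ for all $\theta$. One must first note that $\tau$ lies in the symmetric subspace, since $W^{\otimes t}$ commutes with permutations of tensor factors and hence preserves $\symsubspace{t}{d}$. Writing $\tau=\sum_{s,s'}\tau_{s,s'}\ketbratwo{D_s}{D_{s'}}$, each coefficient acquires the phase $\e^{\i\langle\theta,s-s'\rangle}$; invariance for all $\theta$ forces $\tau_{s,s'}=0$ when $s\neq s'$. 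Summing over $r$ gives the claimed decomposition with $c_\sigma(r)=\sum_{r''}\sigma_{r'',r''}\bra{D_r}\blockhaarchannel{t}(\ketbra{D_{r''}})\ket{D_r}$.

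I expect no step to be a genuine obstacle. The only care needed is in justifying that the product of a block-diagonal Haar $W$ with a diagonal phase matrix has the same law as $W$ (blockwise Haar invariance with independence across blocks), and in confirming that the output remains supported in the symmetric subspace so that the Dicke expansion of $\tau$ is legitimate.
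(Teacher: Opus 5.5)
Your proof is correct and uses the same idea as the paper: the block-Haar law is invariant under diagonal phase unitaries, and Dicke states are eigenvectors of $\Theta^{\otimes t}$, so phase averaging kills the off-diagonal Dicke terms. The paper needs only your second (left-invariance) step, applied directly to $\blockhaarchannel{t}(\sigma)$, so your right-invariance step is redundant for this lemma. That step does, however, establish the stronger fact $\blockhaarchannel{t}(\ketbratwo{D_r}{D_{r'}})=0$ for $r\neq r'$, which the paper later invokes in the proof of \cref{thm:main}.
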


\begin{proof}
    Recall the definition of $\blockhaarchannel{t}$:
    \[
    	\blockhaarchannel{t}(\sigma) = \expect{W_1,\cdots,W_{d/2}}{W^{\otimes t} \cdot \sigma \cdot W^{\otimes t,\dagger}} \enspace,
    \]
    where $W$ is a block-diagonal unitary, whose blocks $W_i$ are independent Haar random unitaries in $\ugroup{2}$.
    First, note that the distribution of $W$ is identical to
    that of $F \cdot W$,
    where $F$ is the random phase unitary
    \[
        F=\sum_{l=1}^{d/2}\ketbra{l}\otimes
        \begin{bmatrix}
            \e^{\i\theta_{2l-1}} & 0\\
            0 & \e^{\i\theta_{2l}}
        \end{bmatrix}
        \enspace,
    \]
    and $\theta_1,\dots,\theta_d$ are independent random variables
    uniformly distributed over $[0, 2\pi)$.
    Therefore,
    \begin{align*}
        \blockhaarchannel{t}(\sigma)
        &= \expect{W}{W^{\otimes t} \cdot \sigma \cdot W^{\otimes t,\dagger}} \\
        &= \expect{F,W}{(F\cdot W)^{\otimes t} \cdot \sigma \cdot (F\cdot W)^{\otimes t,\dagger}} \\
        &= \expect{F}{F^{\otimes t} \cdot \blockhaarchannel{t}(\sigma) \cdot F^{\otimes t,\dagger}} \enspace.
    \end{align*}  
    Since $\blockhaarchannel{t}(\sigma)$ remains
    supported on the symmetric subspace,
    it can be written as a linear combination of
    generalized Dicke states:
    \[
        \blockhaarchannel{t}(\sigma) =
        \sum_{r,r'} c_\sigma(r,r') \cdot \ketbratwo{D_r}{D_{r'}} \enspace.
    \]
    Let \(\ket{D_r}\) and \(\ket{D_{r'}}\) be two distinct generalized Dicke states appearing in the above decomposition with multiplicity vectors
    \(r=(r_1,\ldots,r_d)\) and \(r'=(r'_1,\ldots,r'_d)\), respectively.
    Since \(r\neq r'\),
    there exists some \(j\in[d]\) such that \(r_j\neq r'_j\).
    When \(F^{\otimes t}\) acts on \(\ket{D_r}\), each occurrence of
    \(\ket{j}\) contributes a phase factor \(\e^{\i\theta_{j}}\).
    Thus, $F^{\otimes t}$ contributes a phase factor of
    $\e^{\i \theta_{j} (r_j - r'_j)}$ to
    $\ketbratwo{D_r}{D_{r'}}$.
    Since $\theta_{j}$ is uniformly distributed over $[0, 2\pi)$,
    the expectation of this phase factor is zero, and thus
    \[
        \expect{F}{F^{\otimes t} \cdot \ketbratwo{D_r}{D_{r'}} \cdot F^{\otimes t,\dagger}} = 0 \enspace.
    \]
    Hence all off-diagonal terms vanish after averaging over $F$, and
    we conclude that
    \[
        \blockhaarchannel{t}(\sigma) =
        \sum_r c_\sigma(r) \cdot \ketbra{D_r} \enspace,
    \]
    where \(c_\sigma(r) := c_\sigma(r,r)\).
\end{proof}

Combining the previous two lemmas,
we obtain a structural description of the action of $\kacchannel{t}$
on the symmetric subspace.
Namely, after $\blockhaarchannel{t}$ diagonalizes
the state in the generalized Dicke basis,
$\randpermchannel{t}$ merges all basis states of the same type
into the corresponding maximally mixed state $\rho_\lambda$.
Therefore, the output state of $\kacchannel{t}$
on a symmetric state can be expressed as a linear combination
of the maximally mixed states $\st{\rho_\lambda}_{\lambda\vdash t}$.

\begin{proposition}\label{prop:decomp-after-kac-channel}
    For any state \(\sigma\) in the symmetric subspace $\symsubspace{t}{d}$,
    we have
    \[
        \kacchannel{t}(\sigma) =
        \sum_{\lambda\in\partition{t}{d}} p_\sigma(\lambda) \cdot \rho_\lambda \enspace,
    \]
    for some coefficients \(p_\sigma(\lambda)\) depending on $\sigma$, where $\st{\rho_\lambda}_{\lambda\vdash t}$ is defined in \cref{eq:projector-and-state-lambda}.
\end{proposition}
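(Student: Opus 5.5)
The argument combines \cref{lem:block-haar-kills-off-diagonal} and \cref{lem:rand-perm-channel} through the factorization $\kacchannel{t} = \randpermchannel{t}\circ\blockhaarchannel{t}$. Let $\sigma$ be a state supported on $\symsubspace{t}{d}$. The first step applies \cref{lem:block-haar-kills-off-diagonal}. It gives coefficients $c_\sigma(r)$, indexed by multiplicity vectors $r\in\N^d$ with $\sum_j r_j=t$, such that
\[
    \blockhaarchannel{t}(\sigma)=\sum_r c_\sigma(r)\cdot\ketbra{D_r}.
\]
We only need that this is a finite linear combination of the projectors $\ketbra{D_r}$. This is where the hypothesis that $\sigma$ is symmetric is used. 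It also needs $W^{\otimes t}$ to commute with the symmetrizer, so that the output of $\blockhaarchannel{t}$ stays in the symmetric subspace; the proof of that lemma already relies on this fact.

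The second step applies $\randpermchannel{t}$ term by term, using linearity of the channel. By \cref{lem:rand-perm-channel}, each term satisfies $\randpermchannel{t}(\ketbra{D_r})=\rho_{r^{\darrow}}$. Grouping the vectors $r$ by their type $\lambda=r^{\darrow}\in\partition{t}{d}$ (equivalently, by the sets $\Xi_{\lambda}$) then gives
\[
    \kacchannel{t}(\sigma)=\sum_{\lambda\in\partition{t}{d}}\Bigl(\sum_{r\in\Xi_{\lambda}}c_\sigma(r)\Bigr)\rho_\lambda .
\]
So we may set $p_\sigma(\lambda)\coloneqq\sum_{r\in\Xi_\lambda}c_\sigma(r)$.

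None of these steps is a real obstacle, since all the substantive work sits in the two lemmas. The one point to check is that every multiplicity vector $r$ with $\sum_j r_j=t$ has a type lying in $\partition{t}{d}$. This holds because $r$ has at most $d$ nonzero entries, so the grouping covers all terms.

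As an optional addition, we can note what the coefficients are. Set $c_\sigma(r)=\bra{D_r}\blockhaarchannel{t}(\sigma)\ket{D_r}\ge 0$. By \cref{fact:orthogonality}, $p_\sigma(\lambda)=\tr{\Pi_\lambda\kacchannel{t}(\sigma)}$, so the $p_\sigma(\lambda)$ form a probability vector. This is not needed for the statement.
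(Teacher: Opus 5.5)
Your proposal is correct and follows the paper's proof: apply \cref{lem:block-haar-kills-off-diagonal} to diagonalize $\blockhaarchannel{t}(\sigma)$ in the generalized Dicke basis, then apply \cref{lem:rand-perm-channel} term by term, and regroup by type with $p_\sigma(\lambda)=\sum_{r\in\Xi_\lambda}c_\sigma(r)$. Your optional remark that the $p_\sigma(\lambda)=\tr{\Pi_\lambda\,\kacchannel{t}(\sigma)}$ form a probability vector is also correct, though the statement does not need it.
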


\begin{proof}
    Applying \(\kacchannel{t} = \randpermchannel{t} \circ \blockhaarchannel{t}\) to \(\sigma\) gives
    \[
        \kacchannel{t}(\sigma) =
        \randpermchannel{t}\left(\blockhaarchannel{t}(\sigma)\right) \enspace.
    \]
    By \cref{lem:block-haar-kills-off-diagonal},
    \[
        \blockhaarchannel{t}(\sigma) =
        \sum_{r} c_\sigma(r)\cdot \ketbra{D_r} \enspace,
    \]
    for some coefficients \(c_\sigma(r)\).
    Applying \cref{lem:rand-perm-channel} to each term \(\ketbra{D_r}\) gives
    \[
        \randpermchannel{t}\left(\blockhaarchannel{t}(\sigma)\right) =
        \sum_{r} c_\sigma(r) \cdot \rho_{r^{\darrow}} \enspace.
    \]
    Finally, regrouping the above sum according to the types of the multiplicity vectors gives
    \[     
        \kacchannel{t}(\sigma) =
        \sum_{\lambda \in\partition{t}{d} } p_\sigma(\lambda) \cdot \rho_\lambda \enspace,
    \]
    where \(p_\sigma(\lambda) := \sum_{r : r^{\darrow} = \lambda} c_\sigma(r)\).  
\end{proof}

\subsection{Transition Matrix Interpretation}
\label{sec:matrix-interpretation}
From \cref{prop:decomp-after-kac-channel}, we know that
any output state of $\kacchannel{t}$ can be written as
a linear combination of states $\{\rho_\lambda\}_{\lambda \vdash t}$
defined in \cref{eq:projector-and-state-lambda}.
Therefore, the action of $\kacchannel{t}$ can be viewed as
updating the coefficients in this decomposition.
This observation naturally leads to a transition matrix indexed by partitions.

\begin{definition}[Transition matrix corresponding to $\kacchannel{t}$]
\label{def:transition-matrix-kac-channel}
We define the transition matrix $M^{(t,d)} \in \mathbb{R}^{\partition{t}{d} \times \partition{t}{d}}$ corresponding to $\kacchannel{t}$
to be the matrix satisfying
\[
\kacchannel{t}(\rho_\lambda)
=
\sum_{\mu \in\partition{t}{d}}
M^{(t,d)}(\lambda,\mu) \cdot \rho_\mu
\qquad \forall\ \lambda\in\partition{t}{d} \enspace.
\]
We often omit the superscripts and write $M$
for the transition matrix when they are clear from the context.
\end{definition}

To make it explicit, we provide a formula for the
entries of $M$.

\begin{lemma}\label{lem:element-in-transition-matrix}
The entries of the transition matrix $M^{(t,d)}$ are given by
\[
	M^{(t,d)}(\lambda,\mu) = \tr{ \Pi_\mu \cdot \blockhaarchannel{t}(\rho_\lambda) }
    \qquad \forall\ \lambda,\mu \in\partition{t}{d} \enspace,
\]
where $\Pi_\mu$ is defined in \cref{eq:projector-and-state-lambda}.
Moreover,
$M$ is a stochastic matrix:
    all entries are non-negative, and each row sums to $1$.
\end{lemma}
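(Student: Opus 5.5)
The plan is to compute $\kacchannel{t}(\rho_\lambda)$ explicitly and then read off its coefficients by pairing with the projectors $\Pi_\mu$.

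\textbf{Step 1: explicit expansion.} Since $\rho_\lambda$ is supported on $\symsubspace{t}{d}$, \cref{lem:block-haar-kills-off-diagonal} gives
\[
\blockhaarchannel{t}(\rho_\lambda)=\sum_r c(r)\ketbra{D_r}, \qquad c(r)=\bra{D_r}\blockhaarchannel{t}(\rho_\lambda)\ket{D_r}.
\]
The sum runs over all multiplicity vectors $r$.

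\textbf{Step 2: apply the permutation channel.} By \cref{lem:rand-perm-channel},
\[
\kacchannel{t}(\rho_\lambda)=\sum_\mu \Big(\sum_{r\in\Xi_\mu} c(r)\Big)\rho_\mu .
\]

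\textbf{Step 3: uniqueness of coefficients.} The states $\rho_\mu$ have pairwise orthogonal supports (\cref{fact:orthogonality}), so they are linearly independent. Hence the coefficients in \cref{def:transition-matrix-kac-channel} are uniquely determined, and
\[
M(\lambda,\mu)=\sum_{r\in\Xi_\mu} c(r)=\tr{\Pi_\mu\,\blockhaarchannel{t}(\rho_\lambda)}.
\]
The last equality follows from the definition of $\Pi_\mu$ in \cref{eq:projector-and-state-lambda}. Equivalently, one can apply $\tr{\Pi_\mu\,\cdot}$ to both sides of the definition and use that $\tr{\Pi_\mu\rho_{\mu'}}=\delta_{\mu\mu'}$, together with the fact that $\Pi_\mu$ is invariant under conjugation by $P_\tau^{\otimes t}$. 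That invariance holds because $P_\tau^{\otimes t}$ permutes the Dicke states within each type, so the permutation twirl can be dropped inside the trace.

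\textbf{Step 4: stochasticity.} Each entry is nonnegative because $\Pi_\mu\succeq 0$ and $\blockhaarchannel{t}(\rho_\lambda)\succeq 0$, the latter being the image of a state under a channel. For the row sums, note that $\sum_\mu \Pi_\mu=\symsubspaceproj$. The state $\blockhaarchannel{t}(\rho_\lambda)$ remains supported on the symmetric subspace, since $W^{\otimes t}$ commutes with permutations of the tensor factors. Since the channel is trace preserving, it follows that
\[
\sum_\mu M(\lambda,\mu)=\tr{\symsubspaceproj\,\blockhaarchannel{t}(\rho_\lambda)}=1.
\]

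There is no serious obstacle. The only point needing care is the justification in Step 3 that the coefficients are well defined, which is exactly what the orthogonality in \cref{fact:orthogonality} provides.
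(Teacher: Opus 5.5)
Your proof is correct; it differs from the paper's only in how the coefficient formula is derived.

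The paper never expands $\kacchannel{t}(\rho_\lambda)$. It pairs the defining identity with $\Pi_\mu$ to get $M(\lambda,\mu)=\tr{\Pi_\mu\,\kacchannel{t}(\rho_\lambda)}$. It then moves the permutation twirl onto $\Pi_\mu$, using that $\randpermchannel{t}$ is self-adjoint for the trace pairing and that $\randpermchannel{t}(\Pi_\mu)=\Pi_\mu$ by \cref{lem:rand-perm-channel}. This is exactly the argument you list as the ``equivalent'' alternative in Step~3. It takes existence of the expansion for granted from \cref{prop:decomp-after-kac-channel}, and it never needs the diagonal form of $\blockhaarchannel{t}(\rho_\lambda)$ at this point.

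Your main route instead uses \cref{lem:block-haar-kills-off-diagonal} to write the output explicitly. You then read off the coefficients from the linear independence of the $\rho_\mu$. This costs one extra lemma, but it makes the well-definedness of $M$ (existence and uniqueness of the coefficients) fully explicit. It also gives the concrete formula $M(\lambda,\mu)=\sum_{r\in\Xi_\mu}\bra{D_r}\blockhaarchannel{t}(\rho_\lambda)\ket{D_r}$ directly.

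Your stochasticity argument matches the paper's. In your setup the row sum also follows straight from Step~1, since $\sum_\mu M(\lambda,\mu)=\sum_r c(r)=\tr{\blockhaarchannel{t}(\rho_\lambda)}=1$. So the separate argument that the output stays in the symmetric subspace is not actually needed.
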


\begin{proof}
    From \cref{def:transition-matrix-kac-channel} we have
    \[
    \kacchannel{t}(\rho_\lambda)
    =
    \sum_{\mu' \in\partition{t}{d}} M(\lambda,\mu')\,\rho_{\mu'} \enspace.
    \]
    Taking the trace with $\Pi_\mu$ and using
    \cref{fact:orthogonality}, we obtain
    \[
    	\tr{ \Pi_\mu \cdot \kacchannel{t}(\rho_\lambda) }
        = \sum_{\mu' \in\partition{t}{d}} M(\lambda,\mu') \cdot \tr{ \Pi_\mu \cdot \rho_{\mu'} } = M(\lambda,\mu)\enspace.
    \]
    On the other hand,
    \begin{align*}
        \tr{ \Pi_\mu \cdot \kacchannel{t}(\rho_\lambda) } 
        &= \tr{ \Pi_\mu \cdot \randpermchannel{t}(\blockhaarchannel{t}(\rho_\lambda) ) } \\
        &=  \tr{ \randpermchannel{t} ( \Pi_\mu ) \cdot
                \blockhaarchannel{t}(\rho_\lambda) } \\
        &= \tr{ \Pi_\mu \cdot \blockhaarchannel{t}(\rho_\lambda) } \enspace,
    \end{align*}
    where the last step follows from \cref{lem:rand-perm-channel}, since
    \[
    	\randpermchannel{t} ( \Pi_\mu )
        = \sum_{r\in \Xi_{\mu}} \randpermchannel{t} (\ketbra{D_r})
        = |\Xi_\mu| \cdot \rho_\mu
        = \Pi_\mu
        \enspace.
    \]
    Combining the two expressions proves the formula.

    Finally, the formula for the matrix entries implies that
    $M^{(t,d)}(\lambda,\mu) \geq 0$,
    since both $\Pi_\mu$ and
    $\blockhaarchannel{t}(\rho_\lambda)$ are positive semidefinite.
    Moreover,
    \[
        \sum_{\mu \in \partition{t}{d}}
        M^{(t,d)}(\lambda,\mu)
        =
        \tr{
            \left(
                \sum_{\mu \in \partition{t}{d}} \Pi_\mu
            \right)
            \cdot \blockhaarchannel{t}(\rho_\lambda)
        }
        =
        \tr{
            \symsubspaceproj \cdot \blockhaarchannel{t}(\rho_\lambda)
        }
        =
        1
        \enspace ,
    \]
    where we used
    $\sum_{\mu \in \partition{t}{d}} \Pi_\mu = \symsubspaceproj$
    on the symmetric subspace.
    Hence $M^{(t,d)}$ is stochastic.
\end{proof}

Having defined the transition matrix $M$
corresponding to $\kacchannel{t}$,
we can now view the action of $\kacchannel{t}$
as applying $M$ to the coefficients in the decomposition of the input state.

\begin{lemma}\label{lem:transition-matrix-description-kac-channel}
    Let $\rho = \sum_{\lambda\in\partition{t}{d}} p(\lambda) \cdot \rho_\lambda$
    be any input state in the symmetric subspace $\symsubspace{t}{d}$.
    Then the state $\kacchannel{t}(\rho)$ can be written as
    \[
        \kacchannel{t}(\rho) = \sum_{\mu \in\partition{t}{d}} q(\mu)\cdot \rho_\mu \enspace,
    \]
    where the new coefficients $q(\mu)$ are given by
    \[
        q(\mu)
        =
        \sum_{\lambda\in\partition{t}{d}} p(\lambda)\cdot M^{(t,d)}(\lambda,\mu) .
    \]
\end{lemma}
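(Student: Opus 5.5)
The plan is to use linearity of \(\kacchannel{t}\) together with \cref{def:transition-matrix-kac-channel}. Since \(\kacchannel{t}\) is a quantum channel, it is linear on operators. Given \(\rho = \sum_{\lambda} p(\lambda)\rho_\lambda\), we get
\[
\kacchannel{t}(\rho) = \sum_{\lambda\in\partition{t}{d}} p(\lambda)\,\kacchannel{t}(\rho_\lambda).
\]
We then substitute \(\kacchannel{t}(\rho_\lambda) = \sum_{\mu} M^{(t,d)}(\lambda,\mu)\rho_\mu\) term by term. Both sums are over the finite set \(\partition{t}{d}\), so exchanging them gives
\[
\kacchannel{t}(\rho) = \sum_{\mu\in\partition{t}{d}} \Bigl(\sum_{\lambda\in\partition{t}{d}} p(\lambda) M^{(t,d)}(\lambda,\mu)\Bigr)\rho_\mu,
\]
and the inner sum is \(q(\mu)\) by definition.

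Two small points need care. First, \(M^{(t,d)}\) must be well defined, meaning the expansion of \(\kacchannel{t}(\rho_\lambda)\) in the \(\rho_\mu\) basis exists and is unique. Existence follows from \cref{prop:decomp-after-kac-channel} applied to \(\sigma=\rho_\lambda\), which lies in the symmetric subspace. Uniqueness follows from \cref{fact:orthogonality}: the \(\rho_\mu\) have mutually orthogonal supports, so each coefficient is recovered as \(\tr{\Pi_\mu\,\cdot\,}\).

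Second, for completeness, the coefficients \(q(\mu)\) can be identified independently. Take the trace of \(\kacchannel{t}(\rho)\) against \(\Pi_\mu\) and use \cref{fact:orthogonality} again; this shows the \(q(\mu)\) are the unique coefficients in the decomposition. No step is a genuine obstacle; the only subtlety is making sure the expansion coefficients are uniquely determined, which orthogonality settles.
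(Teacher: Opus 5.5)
Your proposal is correct and follows essentially the same route as the paper: linearity of \(\kacchannel{t}\), substituting the defining expansion of \(\kacchannel{t}(\rho_\lambda)\) from \cref{def:transition-matrix-kac-channel}, and exchanging the two finite sums. Your remarks on well-definedness are sound. Existence follows from \cref{prop:decomp-after-kac-channel}, and uniqueness from \cref{fact:orthogonality}. These points are left implicit in the paper's definition of \(M^{(t,d)}\).
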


\begin{proof}
    This follows directly from the definition:
    \begin{align*}
        \kacchannel{t}(\rho) =
        \sum_{\lambda\in\partition{t}{d}} p(\lambda) \cdot \kacchannel{t}(\rho_\lambda) 
        &=
        \sum_{\lambda\in\partition{t}{d}} p(\lambda) \cdot \left( \sum_{\mu \in\partition{t}{d}} M(\lambda,\mu) \cdot \rho_\mu \right) \\
        &=
        \sum_{\mu \in\partition{t}{d}} \br{ \sum_{\lambda \in\partition{t}{d}} p(\lambda)\cdot M^{(t,d)}(\lambda,\mu)} \cdot \rho_\mu \enspace. \qedhere
    \end{align*}
\end{proof}

\subsection{Markov Chain Interpretation}
\label{sec:markov-chain-interpretation}

The matrix $M$ defined in \cref{def:transition-matrix-kac-channel}
can be regarded as the transition matrix of
a classical Markov chain on the set $\partition{t}{d}$ containing all partitions of $t$ with length at most $d$.
At a high level, the chain lifts a partition to a multiplicity vector of that type,
randomly mixes its coordinates in pairs while preserving pairwise sums,
and then projects back to a partition by taking the type.

\begin{definition}[Partition resampling chain]
\label{def:partition-resampling-chain}
The partition resampling chain is a Markov chain
on the state space $\partition{t}{d}$.
The chain updates a current partition
$\lambda = (\lambda_1,\cdots,\lambda_s)$ in the following manner:
\begin{enumerate}
    \item Let $ R \in \N^d$ be the vector obtained by concatenating $\lambda$ with $d-s$ zeros if $s<d$, i.e.,
    \[
    	R = (\lambda_1,\dots,\lambda_s, \underbrace{0,\dots,0}_{d-s \text { zeros}} ) \enspace.
    \]
    Otherwise, if $s=d$, we simply let $R=\lambda$.
    \item Sample a uniformly random perfect matching of \([d]\), denoted by
        \[
            \mathcal{M} = \{\{i_1,j_1\},\dots,\{i_m,j_m\}\}, \qquad m = d/2\enspace.
        \]
    \item For each pair $\{i,j\} \in \match$, let $m_{i,j} = R_i + R_j$
    and independently and uniformly resample $(R_i,R_j)$ from the set
    \[
        \{ (x,y) \in \N^2 : x+y = m_{i,j} \}	\enspace.
    \]
    Let $R'$ denote the resulting vector.
    \item The updated partition $\lambda'$ is the type of $R'$, namely $\lambda' = {R'}^{\darrow}$.
\end{enumerate}
\end{definition}

\begin{proposition}\label{prop:markov-chain-transition-matrix-correspondence}
    The transition matrix of the above Markov chain is $M^{(t,d)}$.
\end{proposition}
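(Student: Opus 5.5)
By \cref{lem:element-in-transition-matrix}, $M^{(t,d)}(\lambda,\mu)=\tr{\Pi_\mu\cdot\blockhaarchannel{t}(\rho_\lambda)}$. The plan is to compute $\blockhaarchannel{t}(\ketbra{D_r})$ explicitly for a single multiplicity vector $r$ and show that its diagonal in the Dicke basis is exactly the output law of one ``resampling on the fixed matching $\{\{2l-1,2l\}\}_l$'' step.

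The key local computation concerns a single block $l$, with $a=r_{2l-1}$ and $b=r_{2l}$. The strings in $\Lambda_r$ factor as interleavings of the block-local strings. After symmetrization, the restriction of $\ket{D_r}$ to block $l$ is a Dicke state $\ket{D_{(a,b)}}$ of $m=a+b$ qubits-worth of copies of $\C^2$. Since the blocks are independent, $\ket{D_r}$ corresponds, up to the symmetrizing isometry, to the tensor product of the block Dicke states, and $W^{\otimes t}$ acts blockwise. Hence it suffices to know $\E_{W_l}\big[W_l^{\otimes m}\ketbra{D_{(a,b)}}W_l^{\dagger\otimes m}\big]$. By \cref{eq:haar-averaging-symmetric-subspace} with $d=2$, this equals $\rho_{\sym}^{m,2}$. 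That state is uniform over the $m+1$ Dicke states $\ket{D_{(x,m-x)}}$, which is exactly the uniform resampling of $(R_{2l-1},R_{2l})$ with sum fixed. To avoid re-deriving the symmetrization bookkeeping, I would phrase the argument via overlaps. For $r'$ with the same block sums, $\bra{D_{r'}}W^{\otimes t}\ket{D_r}$ factorizes, after writing both Dicke states as sums over interleavings, into a product of block amplitudes $\bra{D_{(r'_{2l-1},r'_{2l})}}W_l^{\otimes m_l}\ket{D_{(r_{2l-1},r_{2l})}}$. This holds because both states are normalized uniform superpositions and the multinomial factors $t!/\prod m_l!$ cancel. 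Independence of the $W_l$ then gives
\[
\tr{\ketbra{D_{r'}}\blockhaarchannel{t}(\ketbra{D_r})}=\prod_{l}\frac{1}{m_l+1}.
\]
If the block sums differ, the overlap is $0$, because $W$ preserves each block's span.

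Summing over $\mu$ then completes the argument. Average over $r\in\Xi_\lambda$ and sum over $r'\in\Xi_\mu$. This shows that $M(\lambda,\mu)$ is the probability that, starting from a uniformly random $r\in\Xi_\lambda$ and resampling on the fixed pairs $\{2l-1,2l\}$, the result has type $\mu$. A uniformly random arrangement $r$ of $\lambda$ paired with a fixed matching has the same law, up to relabeling coordinates, as the fixed vector $R$ paired with a uniformly random perfect matching $\match$. Concretely, apply a uniform random permutation $\tau$, under which $\tau\cdot R$ is uniform on $\Xi_\lambda$ and the pulled-back matching is uniform. Types are invariant under relabeling, so this matches steps 1--4 of \cref{def:partition-resampling-chain}.

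The main obstacle is the factorization of the overlap $\bra{D_{r'}}W^{\otimes t}\ket{D_r}$ into block amplitudes with the correct normalization. I would handle it by writing $\ket{D_r}=\sqrt{\prod_l m_l!/t!}\,\sum_{\text{interleavings}}\bigotimes_l\ket{D_{r|_l}}$ placed on the chosen positions, with $\sqrt{\prod_l\binom{m_l}{r_{2l-1}}}$-type constants absorbed. I would check that the interleaving sets for $r$ and $r'$ coincide when the block sums agree, so that the cross terms between different interleavings vanish by orthogonality of block supports.
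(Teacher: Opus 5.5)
Your proposal is correct and follows the paper's proof essentially step for step. Both arguments reduce to a uniformly random arrangement $r\in\Xi_\lambda$ resampled on the fixed pairs $\{2l-1,2l\}$, decompose $\ket{D_r}$ as a sum over interleavings $(I_1,\dots,I_b)$ of local Dicke states, note that mismatched block sums and cross terms between different interleavings vanish, and then apply the block-wise Haar average of \cref{eq:haar-averaging-symmetric-subspace} with $d=2$ to get $\prod_l 1/(m_l+1)$. Your remark that $\bra{D_{r'}}W^{\otimes t}\ket{D_r}$ equals exactly the product of block amplitudes, since all $t!/\prod_l m_l!$ interleavings contribute the same term, is a slightly tidier version of the paper's double-sum expansion.
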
 

\begin{proof}
    Let $b = d/2$.
    We consider an equivalent version of the above chain.
    Given the current partition $\lambda\vdash t$
    with length at most $d$,
    we can equivalently describe the update process as follows:
    \begin{enumerate}
        \item Sample a vector $R \in \Xi_{\lambda,d}$ uniformly at random,
        where $\Xi_{\lambda,d}$ is defined in \cref{eq:xi-lambda}.
        \item Generate a new vector as follows: for each $j\in[b]$,
        let $m_j(R) = R_{2j-1}+R_{2j}$ and independently and uniformly 
        resample $(R_{2j-1},R_{2j})$ from the set
        \[
        \{ (x,y) \in \N^2 : x+y = m_j(R) \}	\enspace.
        \]
        Let $R'$ denote the new vector.
        \item Set the updated partition $\lambda'$ to be the type of $R'$,
        namely $\lambda' = {R'}^{\darrow}$.
    \end{enumerate}
    The two descriptions are equivalent,
    since randomly matching fixed coordinates and then resampling is
    distributionally identical to first uniformly shuffling the coordinates,
    followed by resampling on fixed pairs.
    The transition probability from $\lambda$ to $\lambda'$ is
    \begin{align*}
        \Pr\Br{\lambda\to\lambda'}
        =~& \frac{1}{\abs{\Xi_{\lambda}}} \sum_{r\in\Xi_{\lambda}} \Pr[ R'\in\Xi_{\lambda'} | R = r] \\
        =~& \frac{1}{\abs{\Xi_{\lambda}}} \sum_{r\in\Xi_{\lambda}} \sum_{r'\in\Xi_{\lambda'}} \Pr[ R' = r' | R = r] \enspace.
    \end{align*}
    Note that from \cref{lem:element-in-transition-matrix}
    and \cref{eq:projector-and-state-lambda}, we have
    \[
    	M(\lambda,\lambda')
        = \tr{ \Pi_{\lambda'} \cdot \blockhaarchannel{t}(\rho_\lambda) }
        = \frac{1}{|\Xi_\lambda|} \sum_{r\in\Xi_\lambda} \sum_{r'\in\Xi_{\lambda'}}
        \tr{ \ketbra{D_{r'}} \cdot \blockhaarchannel{t}\br{\ketbra{D_r}} } \enspace.
    \]
    Therefore, it suffices to show that for any $r\in\Xi_\lambda$ and $r'\in\Xi_{\lambda'}$,
    \[
        \Pr[ R' = r' | R = r ] =
        \tr{ \ketbra{D_{r'}} \cdot \blockhaarchannel{t}\br{\ketbra{D_r}} } \enspace.
    \]

    Notice that the probability $\Pr[ R' = r' | R = r ]$ is nonzero
    only if $m_j(r) = m_j(r')$ for all $j\in[b]$. In this case, we have
    \[
        \Pr[ R' = r' | R = r ] = \prod_{j=1}^b \frac{1}{m_j(r)+1} \enspace.
    \]
    On the other hand, by the definition of $\blockhaarchannel{t}$
    in \cref{eq:rand-perm-block-haar-channels}, we have
    \begin{align*}
        (\star)\coloneqq
        \tr{ \ketbra{D_{r'}} \cdot \blockhaarchannel{t}\br{\ketbra{D_r}} }
        =~& 
        \tr{ 
            \ketbra{D_{r'}} \cdot
            \expect{W}{ W^{\otimes t} \ketbra{D_r} W^{\otimes t,\dagger} }
         }\\
         =~& \expect{W}{ \abs{\bra{D_{r'}} W^{\otimes t} \ket{D_r}}^2  } \enspace.
    \end{align*}
    Since $W$ is block-diagonal with blocks indexed by $j\in[b]$,
    its matrix element $W_{y,x}$ is zeros
    whenever $x$ and $y$ are not in the same block.
    Consequently, if $m_j(r) \neq m_j(r')$ for some $j\in[b]$,
    then $\bra{D_{r'}} W^{\otimes t} \ket{D_r}=0$ for all such $W$,
    and hence the above expectation equals $0$.
    
    Now suppose that $m_j(r) = m_j(r')$ for all $j\in[b]$.    
    Let $m_j = m_j(r) = m_j(r')$.
    For $j\in[b]$ and some set $I_j\subseteq [t]$ with size $m_j = r_{2j-1}+r_{2j}$,
    define the \emph{local} Dicke state as
    \[
    	\ket{L_{r_{2j-1},r_{2j}}}_{I_j} \coloneq \frac{1}{\sqrt{\binom{m_j}{r_{2j}}}}
        \sum_{\substack{S\subseteq I_j \\ |S|=r_{2j}}}
        \ket{2j-1}^{\otimes r_{2j-1}}_{I_j\setminus S} \otimes \ket{2j}^{\otimes r_{2j}}_{S} \enspace.
    \]
    The state $\ket{L_{r_{2j-1},r_{2j}}}_{I_j}$
    is the uniform superposition
    over all ways of placing $\ket{2j}$ on exactly
    $r_{2j}$ sites of $I_j$ and $\ket{2j-1}$ on the
    remaining $r_{2j-1}$ sites.
    Notice that we can rewrite $\ket{D_r}$ using the local Dicke states as
    \[
    	\ket{D_r } = \sqrt{\frac{\prod_{j=1}^{b} m_j! }{t!}}
        \sum_{
            \substack{(I_1,\dots,I_b):\\I_1\sqcup\cdots\sqcup I_b=[t] \\
            \forall j\in[b],\ |I_j|=m_j}
        }
        \bigotimes_{j=1}^b \ket{L_{r_{2j-1},r_{2j}}}_{I_j} \enspace,
    \]
    where $\sqcup$ denotes the disjoint union.
    The intuition is that choosing a basis state
    of multiplicity vector $r$ is equivalent to
    first choosing, for each $j\in[b]$, the set $I_j$ of 
    sites occupied by the pair $\{2j-1,2j\}$,
    and then, within each $I_j$,
    choosing which $r_{2j}$ sites contain $2j$
    and which $r_{2j-1}$ sites contain $2j-1$.
    
    Therefore, we have
    \begin{align*}
        (\star)
        =~ & \br{\frac{\prod_{j=1}^{b} m_j!}{t!}}^2
        \expect{W}{
            \abs{
                \sum_{\substack{(I_1,\ldots,I_b)\\(I'_1,\ldots,I'_b)}}
                \br{\bigotimes_{j=1}^b \bra{L_{r'_{2j-1},r'_{2j}}}_{I'_j}}
                W^{\otimes t}
                \br{\bigotimes_{j=1}^b\ket{L_{r_{2j-1},r_{2j}}}_{I_j}}
            }^2 
        } \enspace.
    \end{align*}
    It is evident that the expectation is nonzero
    only if $I_j = I'_j$ for all $j\in[b]$,
    since $W$ is block-diagonal.
    In this case, we have
    \begin{align*}
        (\star)
        =~ &\br{\frac{\prod_{j=1}^{b} m_j!}{t!}}^2
        \underbrace{\expect{W}{
            \abs{
                \sum_{(I_1,\ldots,I_b)} \prod_{j=1}^b
                \bra{L_{r'_{2j-1},r'_{2j}}}_{I_j} W^{\otimes m_j} \ket{L_{r_{2j-1},r_{2j}}}_{I_j}
            }^2
        }}_{(\triangle)} \enspace.
    \end{align*}
If we identify $\ket{2j-1}$ and $\ket{2j}$ with
    the qubit basis states $\ket{0}$ and $\ket{1}$,
    then $\ket{L_{r_{2j-1},r_{2j}}}_{I_j}$ can be viewed as
    the following standard Dicke state $\ket{D^{(2)}_{m_j, r_{2j}}}$ of Hamming weight $r_{2j}$ on
    $m_j$ qubits:
    \[
    	\ket{D^{(2)}_{m_j, r_{2j}}} \coloneq \frac{1}{\sqrt{\binom{m_j}{r_{2j}}}}
        \sum_{\substack{x\in\{0,1\}^{m_j} \\ |x|=r_{2j}}} \ket{x}
    \]
    supported on the sites belonging to $I_j$.
    Recall that \(
    W = \sum_{j=1}^{b} \ketbra{j}\otimes W_j
\)
is a block-diagonal unitary, where $W_j$ are independent Haar random unitaries in $\ugroup{2}$. For each $j\in[b]$, using standard Dicke states we can write
\[\bra{L_{r'_{2j-1},r'_{2j}}}_{I_j} W^{\otimes m_j} \ket{L_{r_{2j-1},r_{2j}}}_{I_j}=\bra{D^{(2)}_{m_j, r'_{2j}}} W_j^{\otimes m_j} \ket{D^{(2)}_{m_j, r_{2j}}}\enspace.\]
    Then we expand $(\triangle)$ as
    \begin{align*}
            (\triangle) =~ & \sum_{\substack{(I_1,\ldots,I_b) \\ (I'_1,\ldots,I'_b)}}
        \expect{W}{
            \prod_{j=1}^b
            \bra{L_{r'_{2j-1},r'_{2j}}}_{I_j} W^{\otimes m_j} \ket{L_{r_{2j-1},r_{2j}}}_{I_j}
            \cdot
            \overline{\bra{L_{r'_{2j-1},r'_{2j}}}_{I'_j} W^{\otimes m_j} \ket{L_{r_{2j-1},r_{2j}}}_{I'_j}}
        } \\
        =~ & \sum_{\substack{(I_1,\ldots,I_b) \\ (I'_1,\ldots,I'_b)}}
        \prod_{j=1}^b
        \expect{W_j\sim\ugroup{2}}{
            \bra{D^{(2)}_{m_j, r'_{2j}}} W_j^{\otimes m_j} \ket{D^{(2)}_{m_j, r_{2j}}}
            \cdot
            \overline{\bra{D^{(2)}_{m_j, r'_{2j}}} W_j^{\otimes m_j} \ket{D^{(2)}_{m_j, r_{2j}}}}
        } \\
        =~& \sum_{\substack{(I_1,\ldots,I_b) \\ (I'_1,\ldots,I'_b)}}
        \prod_{j=1}^b
        \Tr \br{ \ketbra{D^{(2)}_{m_j, r'_{2j}}} \cdot \expect{W_j\sim\ugroup{2}}{ W_j^{\otimes m_j} \ketbra{D^{(2)}_{m_j, r_{2j}}} W_j^{\otimes m_j,\dagger} } } \\
        =~& \sum_{\substack{(I_1,\ldots,I_b) \\ (I'_1,\ldots,I'_b)}}
        \prod_{j=1}^b
        \Tr \br{ \ketbra{D^{(2)}_{m_j, r'_{2j}}} \cdot\rho_{\sym}^{m_j,2}}\enspace,
    \end{align*}
    where in the second equality $W_j\sim \ugroup{2}$ means
    that $W_j$ is drawn according to
    the Haar measure on $\ugroup{2}$ independently, and the last equality follows from \cref{eq:haar-averaging-symmetric-subspace}.
    Since
    \[
    	\{  \ket{D^{(2)}_{m_j, l}}: l\in\{0,1,\dots,m_j\}\} 
    \]
    is an orthonormal basis in $\vee^{m_j}\mathbb C^2$, we have
    \[
    \rho_{\sym}^{m_j,2}=
    \frac{1}{m_j+1}\sum_{l=0}^{m_j} \ketbra{D^{(2)}_{m_j, l}} \enspace.
    \]
    Thus, 
    \begin{align*}
        (\triangle)=~ & \sum_{\substack{(I_1,\ldots,I_b) \\ (I'_1,\ldots,I'_b)}}
        \prod_{j=1}^b \frac{1}{m_j+1} 
        = \br{\frac{t!}{\prod_{j=1}^b m_j!}}^2 \prod_{j=1}^b \frac{1}{m_j+1} \enspace.
    \end{align*}
    Therefore, we conclude that
    \[
    	(\star) = \prod_{j=1}^b \frac{1}{m_j+1} \enspace,
    \]
    which matches $\Pr[ R' = r' | R = r ]$.
\end{proof}

We summarize some basic properties of the partition resampling chain below.

\begin{lemma}\label{lem:property-partition-resampling-chain}
    The partition resampling chain is irreducible, aperiodic and reversible, with a unique stationary distribution $\pi$ given by
    \[
    \pi(\lambda) = \binom{t + d - 1}{d - 1}^{-1} \cdot |\Xi_{\lambda,d}| \qquad \forall\ \lambda \in \partition{t}{d} \enspace.
    \]
\end{lemma}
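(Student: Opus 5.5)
Irreducibility and aperiodicity come from direct inspection of one step of the chain in \cref{def:partition-resampling-chain}. Reversibility and the formula for $\pi$ come from the quantum description in \cref{lem:element-in-transition-matrix}.

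\emph{Aperiodicity.} For every $\lambda$ we have $M(\lambda,\lambda)>0$. Take any matching; with positive probability each pair resamples $(R_i,R_j)$ to its current value, which has probability $\prod 1/(m_{i,j}+1)>0$. Then $R'=R$ and $\lambda'=\lambda$.

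\emph{Irreducibility.} It suffices to show that every $\lambda$ reaches $(t)$ and that $(t)$ reaches every $\mu$.
\begin{itemize}
\item To go from $\lambda$ to $(t)$: in one step, a pair $\{i,j\}$ can move all of $R_i+R_j$ onto one coordinate. This merges two nonzero parts, or leaves the vector unchanged if one is zero, so the number of nonzero parts strictly decreases. Since all $d$ coordinates are matched, we can pair two nonzero coordinates whenever at least two exist. After finitely many steps we reach $(t)$.
\item To go from $(t)$ to $\mu=(\mu_1,\dots,\mu_s)$: split off one part at a time. From $(t-\mu_1-\dots-\mu_{k-1},\mu_1,\dots,\mu_{k-1},0,\dots)$, pair the large coordinate with a zero coordinate, and pair the rest so that the already placed parts keep their values (each pair retains its values with positive probability). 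Then split the large coordinate into $\mu_k$ and the remainder. This needs $s\le d$ coordinates, which holds.
\end{itemize}

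\emph{Reversibility and stationarity.} Set $w(\lambda)=|\Xi_\lambda|$. By \cref{lem:element-in-transition-matrix} and \cref{eq:projector-and-state-lambda},
\[
|\Xi_\lambda|\,M(\lambda,\mu)=\tr{\Pi_\mu\cdot\blockhaarchannel{t}(\Pi_\lambda)}=\mathbb E_W\,\tr{\Pi_\mu W^{\otimes t}\Pi_\lambda W^{\otimes t,\dagger}} .
\]
The block-Haar distribution is invariant under $W\mapsto W^\dagger$, since each block is Haar on $\ugroup{2}$. Using cyclicity of the trace,
\[
\mathbb E_W\,\tr{\Pi_\mu W^{\otimes t}\Pi_\lambda W^{\otimes t,\dagger}}=\mathbb E_W\,\tr{\Pi_\lambda W^{\otimes t,\dagger}\Pi_\mu W^{\otimes t}}=|\Xi_\mu|\,M(\mu,\lambda).
\]
So detailed balance holds with weights $|\Xi_\lambda|$.

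Alternatively, at the level of multiplicity vectors one can check the symmetry $\Pr[R'=r'\mid R=r]=\Pr[R'=r\mid R=r']$, which holds because both equal $\prod_j 1/(m_j+1)$ when the pair sums agree. Summing over $r\in\Xi_\lambda$ and $r'\in\Xi_\mu$ gives the same identity.

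\emph{Normalization.} By \cref{fact:decomposition-maximally-mixed-state}, or directly from counting multiplicity vectors, $\sum_\lambda|\Xi_{\lambda,d}|=\binom{t+d-1}{d-1}$. Hence $\pi$ is a probability distribution. Detailed balance makes it stationary, and irreducibility makes it unique.

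The only step requiring care is the constructive path in the irreducibility argument; the rest is routine.
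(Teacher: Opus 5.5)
Your proof is correct. Aperiodicity, irreducibility and normalization follow the paper's argument. Your explicit splitting path from $(t)$ to $\mu$ spells out the paper's ``reversing this procedure'' step, including why $s\le d$ suffices.

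The real difference is in detailed balance. The paper works with multiplicity vectors. It writes $M(\lambda,\lambda')=|\Xi_\lambda|^{-1}\sum_{r\in\Xi_\lambda}\sum_{r'\in\Xi_{\lambda'}}\Pr[R'=r'\mid R=r]$, using the equivalent description from the proof of \cref{prop:markov-chain-transition-matrix-correspondence}. It then checks $\Pr[R'=r'\mid R=r]=\Pr[R'=r\mid R=r']$, which is exactly your ``alternative''.

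Your primary argument is different. You note that $|\Xi_\lambda|\,M(\lambda,\mu)=\mathrm{Tr}[\Pi_\mu\,\blockhaarchannel{t}(\Pi_\lambda)]$. You then use that the block-Haar twirl is self-adjoint for the Hilbert--Schmidt inner product, because the law of $W$ is invariant under $W\mapsto W^\dagger$. This is slicker, and it explains why the stationary weights are $\mathrm{Tr}\,\Pi_\lambda=|\Xi_\lambda|$.

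However, your argument proves reversibility of the matrix $M^{(t,d)}$ defined through the channel. Transferring it to the partition resampling chain relies on \cref{prop:markov-chain-transition-matrix-correspondence}. That result is available at this point, but you should cite it explicitly. The paper's classical route needs only the chain's own kernel. It is also the same symmetry the paper reuses for the lifted chain in \cref{lem:discrete-parallel-kmp-stationary}.
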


\begin{proof}
    Starting from any \(\lambda\in\partition{t}{d}\),
    there exists a finite sequence of updates,
    each occurring with positive probability,
    that moves all mass to a single coordinate
    and hence reaches the partition \((t)\).
    Reversing this procedure,
    one sees that \((t)\) can reach any
    \(\lambda'\in\partition{t}{d}\) with positive probability.
    Hence the chain is irreducible.
    Moreover, every state has a positive self-loop probability
    (by resampling each matched pair to its original values),
    so the chain is aperiodic.

    To prove reversibility, it suffices to verify the detailed balance condition
    \[
    \pi(\lambda)\cdot M(\lambda,\lambda')
    =\pi(\lambda') \cdot M(\lambda',\lambda)
    \qquad \forall \lambda,\lambda'\in\partition{t}{d} \enspace.
    \]
    By the equivalent description of the chain in the proof of \cref{prop:markov-chain-transition-matrix-correspondence},
    \[
    M(\lambda,\lambda')
    =\frac{1}{|\Xi_\lambda|}
    \sum_{r\in\Xi_\lambda}\sum_{r'\in\Xi_{\lambda'}}
    \Pr[R'=r'\mid R=r] \enspace.
    \]
    Hence,
    \[
    \pi(\lambda)\cdot M(\lambda,\lambda')
    =
    \binom{t+d-1}{d-1}^{-1}
    \sum_{r\in\Xi_\lambda}\sum_{r'\in\Xi_{\lambda'}}
    \Pr[R'=r'\mid R=r] \enspace.
    \]
    Now fix \(r,r'\in\mathbb N^d\), and write \(b=d/2\).
    If there exists \(j\in[b]\) such that
    \(m_j(r)\neq m_j(r')\), where \(m_j(r)=r_{2j-1}+r_{2j}\), then
    \[
    \Pr[R'=r'\mid R=r]=\Pr[R'=r\mid R=r']=0 \enspace.
    \]
    Otherwise \(m_j(r)=m_j(r')\) for all \(j\in[b]\), and therefore
    \[
    \Pr[R'=r'\mid R=r]
    =
    \prod_{j=1}^b \frac{1}{m_j(r)+1}
    =
    \prod_{j=1}^b \frac{1}{m_j(r')+1}
    =
    \Pr[R'=r\mid R=r'] \enspace.
    \]
    Thus
    \[
    \pi(\lambda)\cdot M(\lambda,\lambda')
    =
    \binom{t+d-1}{d-1}^{-1}
    \sum_{r'\in\Xi_{\lambda'}}\sum_{r\in\Xi_\lambda}
    \Pr[R'=r\mid R=r']
    =
    \pi(\lambda')\cdot M(\lambda',\lambda) \enspace,
    \]
    which proves that the chain is reversible w.r.t \(\pi\).
    Finally, $\pi$ is a probability distribution since
    \[
        \sum_{\lambda\in\partition{t}{d}} |\Xi_\lambda|
        =
        \binom{t+d-1}{d-1} \enspace.
    \]
    By the detailed balance condition, $\pi$ is a stationary distribution of the chain. Since the chain is irreducible and aperiodic on a finite state space, the stationary distribution is unique. 
\end{proof}

\subsection{From Chain Mixing to Channel Mixing}
\label{sec:channel-mixing}

We now explain how the mixing time of the partition resampling chain
governs the convergence of $\repkacchannel{t}{k}$
to the maximally mixed state on the symmetric subspace.

\begin{proposition}\label{prop:spectral-gap-implies-channel-mixing}
Let $M$ be the transition matrix of the partition resampling chain
defined in \cref{def:partition-resampling-chain},
and let $\pi$ be its stationary distribution given in \cref{lem:property-partition-resampling-chain}.
Suppose that for some integer $k\in\N$, we have that
there exists $\epsilon > 0$ such that 
for any $\lambda\in\partition{t}{d}$,
\[
    \norm{M^k(\lambda,\cdot)-\pi}_{\mathrm{TV}}
    \le
    \epsilon \enspace.
\]
Then for any input state $\sigma$ in the symmetric subspace $\symsubspace{t}{d}$,
\[
    \norm{\repkacchannel{t}{k+1}(\sigma) - \rho_{\sym}}_1 \le 2\epsilon \enspace,
\]
where
\(
\rho_{\sym}
\)
is the maximally mixed state on the symmetric subspace $\vee^t \mathbb{C}^d$.
\end{proposition}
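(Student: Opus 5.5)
The plan is to use the first application of $\kacchannel{t}$ to bring $\sigma$ into the span of the $\rho_\lambda$, and then track only the classical coefficients under the remaining $k$ steps.

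\emph{Step 1 (first application).} By \cref{prop:decomp-after-kac-channel}, $\kacchannel{t}(\sigma)=\sum_{\lambda} p(\lambda)\rho_\lambda$. The coefficients form a probability vector. Nonnegativity follows from $p(\lambda)=\tr{\Pi_\lambda\cdot \kacchannel{t}(\sigma)}\ge 0$, using \cref{fact:orthogonality} and positivity of the channel output. The normalization $\sum_\lambda p(\lambda)=1$ holds because the channel is trace preserving and each $\rho_\lambda$ has unit trace.

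\emph{Step 2 (remaining $k$ steps).} Applying \cref{lem:transition-matrix-description-kac-channel} $k$ times gives
\[
\repkacchannel{t}{k+1}(\sigma)=\sum_\mu (pM^k)(\mu)\,\rho_\mu .
\]
Next, \cref{fact:decomposition-maximally-mixed-state} together with \cref{lem:property-partition-resampling-chain} gives $\rho_{\sym}=\sum_\mu \pi(\mu)\rho_\mu$.

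\emph{Step 3 (trace-norm bound).} The difference is $\sum_\mu \bigl((pM^k)(\mu)-\pi(\mu)\bigr)\rho_\mu$. The states $\rho_\mu$ have mutually orthogonal supports (the type subspaces are orthogonal), so the trace norm is additive over them:
\[
\norm{\repkacchannel{t}{k+1}(\sigma)-\rho_{\sym}}_1=\sum_\mu |(pM^k)(\mu)-\pi(\mu)| = 2\norm{pM^k-\pi}_{\mathrm{TV}} .
\]
Since $p$ is a convex combination of point masses and $\pi=\pi M^k$, convexity of the TV distance gives
\[
\norm{pM^k-\pi}_{\mathrm{TV}}\le \sum_\lambda p(\lambda)\norm{M^k(\lambda,\cdot)-\pi}_{\mathrm{TV}}\le\epsilon ,
\]
which yields the claimed $2\epsilon$.

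The only step needing care is Step 1, namely checking that $p$ is a genuine probability distribution, since the statement of \cref{prop:decomp-after-kac-channel} gives only unspecified coefficients. This is handled by the trace identity and the positivity argument above. Everything else is bookkeeping.
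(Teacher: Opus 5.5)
Your proof is correct and matches the paper's argument step for step: decompose after the first application of $\kacchannel{t}$, propagate the coefficients by $M^k$, write $\rho_{\sym}=\sum_\mu \pi(\mu)\rho_\mu$, and convert the trace norm into twice the total variation distance, which convexity bounds by $\epsilon$. Your explicit checks that $p$ is a probability vector and that the trace norm is additive over the orthogonally supported $\rho_\mu$ fill in steps the paper uses without comment.
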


\begin{proof}
    Let $\sigma$ be any state in the symmetric subspace $\symsubspace{t}{d}$.
    By \cref{prop:decomp-after-kac-channel}, we can write $\kacchannel{t}(\sigma)$ as
    \[
        \kacchannel{t}(\sigma) =
        \sum_{\lambda\in\partition{t}{d}} p_\sigma(\lambda) \cdot \rho_\lambda \enspace.
    \]
    Applying $\kacchannel{t}$ for $k$ times, by
    \cref{lem:transition-matrix-description-kac-channel},
    we have
    \[
        \repkacchannel{t}{k+1}(\sigma) =
        \sum_{\mu\in\partition{t}{d}} q_\sigma(\mu) \cdot \rho_\mu \enspace,
    \]
    where
    \[
        q_\sigma(\mu) = \sum_{\lambda\in\partition{t}{d}} p_\sigma(\lambda) \cdot M^k(\lambda,\mu) \enspace.
    \]
    On the other hand, by \cref{fact:decomposition-maximally-mixed-state},
    $\rho_\sym$ can be written as
    \[
        \rho_{\sym} = \sum_{\mu\in\partition{t}{d}} \pi(\mu) \cdot \rho_\mu \enspace.
    \]
    Therefore,
    \begin{align*}
        \norm{\repkacchannel{t}{k+1}(\sigma) - \rho_{\sym}}_1
        =~ 2\cdot\norm{q_\sigma - \pi}_{\mathrm{TV}}
        =~ 2\cdot\norm{p_\sigma \cdot M^{k}  - \pi}_{\mathrm{TV}}
        \le
        2\epsilon \enspace,
    \end{align*}
    where the last step follows from the assumption that $\norm{M^k(\lambda,\cdot)-\pi}_{\mathrm{TV}} \le \epsilon$ for all $\lambda\in\partition{t}{d}$ and the triangle inequality.
\end{proof} 
 
\section{From Partition Resampling Chain to Parallel KMP Chain}
\label{sec:convergence}
In this section, we study the convergence of the partition resampling chain
from \cref{sec:markov-chain-interpretation}.
An outline of the argument is given in \cref{fig:proof-roadmap-partition-chain}.
We first lift the partition resampling chain to the
\emph{parallel KMP chain}
defined in \cref{sec:lifted-chain},
and then further lift it to the
\emph{labeled} parallel KMP chain
introduced in \cref{sec:labeled-chain}.
The advantage of lifting to the labeled chain is that
its dynamics admit a \emph{conditional product structure},
which will be discussed in \cref{sec:labeled-chain-conditional-structure}.
This additional structure allows us to prove a quantitative mixing bound
for the labeled chain in \cref{sec:mixing-time-labeled-chain}.
The labeled bound handles small particle numbers.
For larger particle numbers, we use
the coupling method developed in \cref{sec:parallel-exact-coupling}. 
This yields a mixing-time bound independent of \(t\) for the
unlabeled parallel KMP chain and the partition
resampling chain in \cref{sec:parallel-uniform-mixing}.
Finally, in \cref{sec:proof-of-main}, we prove our
main result on the convergence of
\(\kacchannel{t}\) by relating it to the partition resampling chain.

\subsection{The Lifted Chain: Parallel KMP Process}
\label{sec:lifted-chain}
The partition resampling chain 
admits a natural lifting to a new Markov chain,
which corresponds to a parallel version of the KMP process
\cite{KMP82}.
The lifted chain records the full multiplicity vector rather than
only its type.
Working with the lifted chain will be useful for two reasons:
first, its dynamics are more explicit and symmetric;
second, the original chain can be recovered from it by lumping states
with the same type.
We now formalize this chain and state its basic properties.
We refer to this lifted chain as the \emph{parallel KMP chain}.

\begin{definition}[Parallel KMP chain]
\label{def:discrete-parallel-kmp}
    The parallel KMP chain is a Markov chain on the state space $\Delta$ defined as
    \[
    	\Delta = \st{ R \in \N^d : \sum_{j=1}^d R_j = t } \enspace.
    \]
    Its update rule follows the same procedure as the partition resampling chain, except that we operate directly on the vector $R$ rather than its induced partition. More precisely, the chain evolves as follows:
    \begin{enumerate}
        \item Sample a uniformly random perfect matching of \([d]\), denoted by
        \[
            \mathcal{M} = \{\{i_1,j_1\},\dots,\{i_m,j_m\}\}, \qquad m = d/2\enspace.
        \]
        \item For each pair $\{i,j\} \in \match$, let $m_{i,j} = R_i + R_j$
        and independently resample $(R_i, R_j)$ uniformly from the set
        \[
            \{ (x,y) \in \N^2 : x+y = m_{i,j} \}	\enspace.
        \]
        Let $R'$ denote the resulting vector.
        \item Set the next state of the chain to be $R'$.
    \end{enumerate}
    We denote the transition matrix of this lifted chain by $\liftedchain^{(t,d)}$. We may omit the superscripts when they are clear from the context.
\end{definition}

Intuitively, the parallel KMP process repeatedly redistributes
mass across randomly matched pairs of $d$ coordinates,
while preserving the total mass $t$.
Our first observation is that its transition matrix is symmetric,
and that the uniform distribution is its stationary distribution.

\begin{lemma}
    \label{lem:discrete-parallel-kmp-stationary}
    For every positive integer $t$ and even integer $d\geq2$,
    the transition matrix $\liftedchain^{(t,d)}$ is symmetric.
    In particular, the parallel chain is irreducible, aperiodic and reversible with respect to
    the uniform distribution on $\Delta$, denoted by \( \lifteddist \),
    which is its unique stationary distribution.
\end{lemma}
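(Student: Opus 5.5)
The plan is to compute the transition probabilities of the parallel KMP chain explicitly and read off symmetry directly. Fix $R,R'\in\Delta$. Conditioned on a perfect matching $\match$, the pairs are resampled independently. So $\Pr[R'\mid R,\match]$ is nonzero only if $R_i+R_j=R'_i+R'_j$ for every $\{i,j\}\in\match$. In that case it equals $\prod_{\{i,j\}\in\match}\frac{1}{R_i+R_j+1}$, the same computation as in the proof of \cref{prop:markov-chain-transition-matrix-correspondence}. Both the support condition and the product value are symmetric under exchanging $R$ and $R'$, because the pair sums agree whenever the probability is nonzero. Averaging over the uniformly random matching $\match$, which is independent of the state, gives
\[
\liftedchain(R,R')=\E_{\match}\Br{\Pr[R'\mid R,\match]}=\E_{\match}\Br{\Pr[R\mid R',\match]}=\liftedchain(R',R).
\]

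A symmetric stochastic matrix is doubly stochastic, so the uniform distribution $\lifteddist$ satisfies detailed balance trivially. Hence the chain is reversible with respect to $\lifteddist$, and $\lifteddist$ is stationary.

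For aperiodicity, note that for any matching, resampling each pair back to its current values has probability $\prod(R_i+R_j+1)^{-1}>0$, so every state has a positive self-loop.

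For irreducibility, I would show that every $R$ can reach the concentrated state $t e_1$ with positive probability; symmetry of the matrix then gives the reverse paths, and hence any two states communicate. To move mass from a site $j\neq 1$ to site $1$, pick a matching containing $\{1,j\}$ (any fixed pair lies in some perfect matching, since $d$ is even). Then resample that pair to $(R_1+R_j,0)$ and every other pair to its current values, all with positive probability. Repeating this for each $j$ concentrates all mass at site $1$.

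Uniqueness of the stationary distribution then follows from irreducibility and aperiodicity on a finite state space. The only step requiring care is the symmetry computation: one must note that the matching distribution does not depend on the state, and that the support condition is symmetric. Neither is a real obstacle.
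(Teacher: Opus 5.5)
Your proposal is correct and follows essentially the same route as the paper. You condition on the state-independent matching, note that the conditional probability is either $0$ or $\prod_{\{i,j\}\in\mathcal M}(m_{i,j}+1)^{-1}$ with symmetric support, average, and then read off reversibility, aperiodicity via self-loops, and irreducibility via concentration at one site. Your use of the symmetry of $\liftedchain^k$ to get the return paths from $te_1$ is just a more explicit version of the paper's ``then split to any $R'$'' step.
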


\begin{proof}
    Let $b=d/2$. We first show that the transition matrix
    $\liftedchain$ is symmetric.
    Fix $r,r'\in\Delta$.
    The transition probability from $r$ to $r'$ can be written as
    \[
    \liftedchain(r,r')
    =
    \expect{\mathcal{M}}{
    \Pr[R'=r' \mid R=r,\ \mathcal{M}]
    } \enspace.
    \]
    For a fixed perfect matching $\mathcal{M}$
    and a pair $\{i,j\}\in\mathcal{M}$, 
    define
    \[
    	m_{i,j}(r) = r_i + r_j \enspace.
    \]
    If there exists a pair $\{i,j\}\in\mathcal{M}$ such that
    $m_{i,j}(r)\neq m_{i,j}(r')$, then
    \[
    \Pr[R'=r' \mid R=r,\ \mathcal{M}]=0 \enspace.
    \]
    Otherwise, for every $\{i,j\}\in\mathcal{M}$, the pair
    \(
    (r'_{i},r'_{j})
    \)
    is one of exactly $m_{i,j}(r)+1$ possible outcomes, and hence
    \[
    \Pr[R'=r' \mid R=r,\ \mathcal{M}]
    =
    \prod_{\{i,j\}\in\mathcal{M}} \frac{1}{m_{i,j}(r)+1} \enspace.
    \]
    And therefore
    \[
    \Pr[R'=r' \mid R=r,\ \mathcal{M}]
    =
    \prod_{\{i,j\}\in\mathcal{M}} \frac{1}{m_{i,j}(r)+1}
    =
    \prod_{\{i,j\}\in\mathcal{M}} \frac{1}{m_{i,j}(r')+1}
    =
    \Pr[R'=r \mid R=r',\ \mathcal{M}] \enspace.
    \]
    Averaging over $\mathcal{M}$, we obtain
    \(
    \liftedchain(r,r')=\liftedchain(r',r)
    \).

    Since $\Delta$ is finite,
    symmetry implies that the uniform distribution on $\Delta$
    satisfies the detailed balance condition,
    and hence the chain is reversible and the uniform distribution is stationary.
    The chain is irreducible since, from any $R\in\Delta$,
    one can move all mass to $(t,0,\dots,0)$ and
    then split to any $R'\in\Delta$ with positive probability.
    It is aperiodic since every state has a positive self-loop probability.
    Since the chain is irreducible on a finite state space,
    the uniform distribution is the unique stationary distribution.
\end{proof}

We now make precise the relation between the lifted chain and
the original partition resampling chain.
Recall that the partition resampling chain 
in \cref{sec:markov-chain-interpretation} keeps track only of the sorted type,
whereas the parallel KMP chain records the full multiplicity vector in \(\Delta\).
Thus, the former is obtained from the latter by lumping all vectors
that share the same type.
As a result, the partition resampling chain mixes
at least as fast as the parallel KMP chain.
Recall that the partition resampling chain has the transition matrix \(M\) and the stationary distribution \(\pi\), as defined in \cref{def:partition-resampling-chain}
and \cref{lem:property-partition-resampling-chain}.

\begin{lemma}\label{lem:partition-dominates-by-lifted}
Let $\lambda\in \partition{t}{d}$, and let
$r_\lambda\in\Delta$ be obtained from $\lambda$ by padding zeros if necessary,
as in \cref{def:partition-resampling-chain}.
Then for every \(k\ge 0\),
\[
    \bigl\|
        M^k(\lambda,\cdot)-\pi
    \bigr\|_{\mathrm{TV}}
    \le
    \bigl\|
        \liftedchain^k(r_\lambda,\cdot)-\lifteddist
    \bigr\|_{\mathrm{TV}} \enspace.
\]
\end{lemma}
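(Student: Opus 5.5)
The plan is to realize the partition resampling chain as the image of the parallel KMP chain under the type map $f:\Delta\to\partition{t}{d}$, $f(R)=R^{\darrow}$, and then apply \cref{fact:tv-distance-function}. Concretely, we will establish two identities for every $k\ge 0$:
\[
    M^k(\lambda,\cdot) = \liftedchain^k(r_\lambda,\cdot)\, f^{-1}
    \qquad\text{and}\qquad
    \pi = \lifteddist\, f^{-1}.
\]
Given these, \cref{fact:tv-distance-function} gives
\[
\|M^k(\lambda,\cdot)-\pi\|_{\mathrm{TV}}
=\|\liftedchain^k(r_\lambda,\cdot)f^{-1}-\lifteddist f^{-1}\|_{\mathrm{TV}}
\le \|\liftedchain^k(r_\lambda,\cdot)-\lifteddist\|_{\mathrm{TV}},
\]
which is the claim.

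The stationary identity is immediate. $\lifteddist$ is uniform on $\Delta$ by \cref{lem:discrete-parallel-kmp-stationary}, and $|\Delta|=\binom{t+d-1}{d-1}$. Hence $\lifteddist f^{-1}(\mu)=|\Xi_{\mu,d}|/\binom{t+d-1}{d-1}$, which equals $\pi(\mu)$ by \cref{lem:property-partition-resampling-chain}.

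For the $k$-step identity we will use Dynkin's lumpability criterion. The claim is that for every $r\in\Delta$ and every $\mu\in\partition{t}{d}$,
\[
\sum_{r'\in\Xi_\mu}\liftedchain(r,r') = M(r^{\darrow},\mu).
\]
The argument rests on the permutation invariance of the parallel KMP step. If $\tau\in\S_d$ and $\tau\cdot r$ denotes coordinate permutation, then $\liftedchain(\tau\cdot r,\tau\cdot r')=\liftedchain(r,r')$, because a uniformly random perfect matching is invariant in law under relabeling of $[d]$. Since $\Xi_\mu$ is permutation-closed, $\sum_{r'\in\Xi_\mu}\liftedchain(r,r')$ depends only on the type $r^{\darrow}$. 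Evaluating it at the padded representative $R=r_{r^{\darrow}}$, it is exactly the probability that one step of \cref{def:partition-resampling-chain} started from $r^{\darrow}$ lands in $\mu$, because steps 1–3 of that definition are literally one parallel KMP step from the padded vector. This gives $M(r^{\darrow},\mu)$.

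Finally we induct on $k$. Assuming the $k$-step identity, we write
\[
\sum_{r'\in\Xi_\mu}\liftedchain^{k+1}(r_\lambda,r')
=\sum_{s}\liftedchain^k(r_\lambda,s)\,M(s^{\darrow},\mu)
=\sum_{\nu}M^k(\lambda,\nu)M(\nu,\mu),
\]
by grouping $s$ according to type. The base case $k=0$ holds since $r_\lambda^{\darrow}=\lambda$.

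The only genuinely delicate step is the permutation invariance of $\liftedchain$. It requires checking that the resampling rule commutes with relabeling coordinates, which holds because the rule depends only on the matched pairs and their sums. Everything else is bookkeeping.
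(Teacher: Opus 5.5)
Your proposal is correct and follows essentially the same route as the paper. You push forward under the type map, establish the one-step lumping identity from the permutation invariance $\liftedchain(\tau\cdot u,\tau\cdot v)=\liftedchain(u,v)$, induct on $k$, check that $\lifteddist$ pushes forward to $\pi$, and conclude with \cref{fact:tv-distance-function}. The only cosmetic difference is that you start the induction at $k=0$ with the lumping identity proved for arbitrary $r$, whereas the paper proves the $k=1$ case at $r_\lambda$ first and invokes permutation invariance inside the inductive step.
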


\begin{proof}
    Let \(g:\Delta\to\partition{t}{d}\) be the map
    \[
        g(r)\coloneqq r^{\downarrow} \enspace,
    \]
    where $r^{\darrow}$ is the vector
    obtained by rearranging the entries of $r$ in non-increasing order and removing all zeros.
    For a distribution \(\mu\) on \(\Delta\), write \(\mu g^{-1}\) for the induced distribution on \(\partition{t}{d}\), namely
    \[
        \mu g^{-1}(\lambda)
        =
        \sum_{r\in\Delta:\,g(r)=\lambda}\mu(r)
        =
        \sum_{r\in \Xi_\lambda} \mu(r)
        \qquad \text{for every } \lambda\in\partition{t}{d} \enspace,
    \]
    where \(\Xi_\lambda\) is defined in \cref{eq:xi-lambda}.
    
    We first claim that the partition resampling chain is the lumping of the parallel KMP chain under the map
    \(
        g(r)=r^\downarrow
    \).
    More precisely, for every \(\lambda\in\partition{t}{d}\),
    \[
        \liftedchain(r_\lambda,\cdot)\,g^{-1}
        =
        M(\lambda,\cdot) \enspace.
    \]
    Fix \(r\in\Delta\), let \(\lambda=r^\downarrow\), and let \(\lambda'\in\partition{t}{d}\).
    By \cref{prop:markov-chain-transition-matrix-correspondence} and 
    the definition of the partition resampling chain,
    \[
    	M(\lambda,\lambda') =
        \Pr\Br{ R' \in \Xi_{\lambda'} | R = r_\lambda } =
        \sum_{r' \in \Xi_{\lambda'}} \Pr\Br{ R' = r' | R = r_\lambda} \enspace.
    \]
    Therefore, we have
    \begin{align*}
        M(\lambda,\lambda')
        =~ \sum_{r'\in \Xi_{\lambda'}}
        \liftedchain(r_\lambda,r') 
        =~ \bigl(\liftedchain(r_\lambda,\cdot)\,g^{-1}\bigr)(\lambda') \enspace.
    \end{align*}
    Since this holds for every \(\lambda'\in\partition{t}{d}\), we conclude that
    \[
        \liftedchain(r_\lambda,\cdot)\,g^{-1}
        =
        M(\lambda,\cdot) \enspace.
    \]

    We next extend this identity to \(k\) steps. We prove by induction on \(k\) that
    \[
        \liftedchain^k(r_\lambda,\cdot)\,g^{-1}
        =
        M^k(\lambda,\cdot)
        \qquad \forall\,k\ge 1 \enspace.
    \]
    The case \(k=1\) is established above.
    Now assume that the identity holds for some \(k\ge1\).
    Fix any \(\lambda'\in\partition{t}{d}\). Then
    \begin{align}\label{eq:induction-step}
        \bigl(\liftedchain^{k+1}(r_\lambda,\cdot)\,g^{-1}\bigr)(\lambda')
        =~ \sum_{r'\in \Xi_{\lambda'}} \liftedchain^{k+1}(r_\lambda,r') \nonumber
        =~& \sum_{r'\in \Xi_{\lambda'}} \sum_{r\in\Delta}
        \liftedchain^k(r_\lambda,r) \cdot \liftedchain(r,r')\nonumber\\
        =~& \sum_{r\in\Delta}\liftedchain^k(r_\lambda,r)
        \sum_{r'\in \Xi_{\lambda'}} \liftedchain(r,r') \enspace.
    \end{align}
    Now fix \(r\in\Delta\), and let \(\mu=g(r)=r^\downarrow\).
    Let \(\tau\in\S_d\) be any permutation such that
    \[
        r=\tau\cdot r_\mu \enspace,
    \]
    where $r_\mu$ is the vector obtained from $\mu$ by padding zeros if necessary.
    By the definition of the lifted chain,
    \[
        \liftedchain(\tau\cdot u,\tau\cdot v)=\liftedchain(u,v)
        \qquad \forall\,u,v\in\Delta \enspace.
    \]
    Therefore,
    \begin{align*}
        \sum_{r'\in \Xi_{\lambda'}} \liftedchain(r,r')
        = \sum_{r'\in \Xi_{\lambda'}} \liftedchain(\tau\cdot r_\mu,r') 
        = \sum_{r'\in \Xi_{\lambda'}} \liftedchain(r_\mu,\tau^{-1}\cdot r') 
        = \sum_{r'\in \Xi_{\lambda'}} \liftedchain(r_\mu,r') \enspace.
    \end{align*}
    By the one-step identity proved above, we have
    \[
        \sum_{r'\in \Xi_{\lambda'}} \liftedchain(r,r')
        =
        M(\mu,\lambda') \enspace.
    \]
    Therefore, inserting this into \cref{eq:induction-step}, we obtain
    \begin{align*}
        \bigl(\liftedchain^{k+1}(r_\lambda,\cdot)\,g^{-1}\bigr)(\lambda')
        =~& \sum_{r\in\Delta}\liftedchain^k(r_\lambda,r)\cdot M(g(r),\lambda') \\
        =~& \sum_{\mu\in\partition{t}{d}}
        \sum_{r\in\Xi_{\mu}}
        \liftedchain^k(r_\lambda,r)\cdot M(\mu,\lambda') \\
        =~& \sum_{\mu\in\partition{t}{d}}
        \bigl(\liftedchain^k(r_\lambda,\cdot)\,g^{-1}\bigr)(\mu)\cdot M(\mu,\lambda') \\
        =~& \sum_{\mu\in\partition{t}{d}}
        M^k(\lambda,\mu)\cdot M(\mu,\lambda') \\
        =~& M^{k+1}(\lambda,\lambda') \enspace,
    \end{align*}
    where in the fourth equality we used the induction hypothesis.
    Since this holds for every \(\lambda'\in\partition{t}{d}\), we conclude that
    \[
        \liftedchain^{k+1}(r_\lambda,\cdot)\,g^{-1}
        =
        M^{k+1}(\lambda,\cdot) \enspace.
    \]
    This completes the induction.

    Moreover, for every \(\lambda\in\partition{t}{d}\),
    \[
        \bigl(\lifteddist\,g^{-1}\bigr)(\lambda)
        =
        \sum_{r\in\Xi_\lambda}\lifteddist(r)
        =
        \sum_{r\in\Xi_\lambda}\frac{1}{|\Delta|}
        =
        \frac{|\Xi_{\lambda}|}{\binom{t+d-1}{d-1}}
        =
        \pi(\lambda) \enspace.
    \]
    Hence \(\lifteddist\,g^{-1}=\pi\).
    Therefore, by \cref{fact:tv-distance-function},
    \[
        \bigl\|
            M^k(\lambda,\cdot)-\pi
        \bigr\|_{\mathrm{TV}}
        =
        \bigl\|
            \liftedchain^k(r_\lambda,\cdot)\,g^{-1}-\lifteddist\,g^{-1}
        \bigr\|_{\mathrm{TV}}
        \le
        \bigl\|
            \liftedchain^k(r_\lambda,\cdot)-\lifteddist
        \bigr\|_{\mathrm{TV}} \enspace.
    \]
\end{proof}

\subsection{The Labeled Parallel KMP Chain}
\label{sec:labeled-chain}

To analyze the mixing time of the parallel KMP chain,
it is convenient to pass to a
labeled version of the process.
For a vector \(r=(r_1,\dots,r_d)\in\Delta\),
we think of the total mass \(t\) as \(t\) distinct labeled particles,
with \(r_i\) particles
initially placed at site \(i\).
At each step, after sampling a perfect matching
\(\mathcal M\) of \([d]\), each matched pair redistributes
all particles they have according to a uniform split of their total mass.

\begin{definition}[Labeled parallel KMP chain]
\label{def:labeled-parallel-kmp}
Fix a positive integer \(t\) and an even integer \(d\geq2\), and let
\(
    \Omega \coloneqq [d]^t
\)
be the space of labeled configurations of \(t\) particles on \(d\) sites.
For \(x=(x_1,\dots,x_t)\in\Omega\),
the coordinate \(x_\ell\in[d]\) records the location of particle \(\ell\).

The labeled parallel KMP process is the Markov chain
\(\{x^{(k)}\}_{k\ge 0}\) on \(\Omega\) defined as follows.
Given the current configuration \(x^{(k)}\), the next configuration
\(x^{(k+1)}\) is obtained by the procedure below.

\begin{enumerate}
    \item Sample a uniformly random perfect matching of \([d]\), denoted by
        \[
            \mathcal{M} = \{\{i_1,j_1\},\dots,\{i_m,j_m\}\}\ , \qquad m = d/2\enspace.
        \]  

    \item For each matched pair \(\{i,j\}\in\mathcal M\), let
    \[
        m_{i,j} := \#_i(x^{(k)}) + \#_j(x^{(k)}) \enspace ,
    \]
    where \(\#_i(x^{(k)})\) denotes the number of particles at site \(i\) in configuration \(x^{(k)}\) as defined in \cref{eq:counting-function}.
    Independently for each \(\{i,j\}\in\mathcal M\), choose an integer
    \[
        a_{i,j} \in \set{0,1,\dots,m_{i,j}}
    \]
    uniformly at random.
    Then, among all particles currently located at sites \(i\)
    and \(j\), choose exactly \(a_{i,j}\) particles uniformly at random and place them
    at site \(i\); place all remaining particles at site \(j\).
\end{enumerate}
We denote the transition matrix of this labeled chain by $\labeledchain^{(t,d)}$ and omit the superscripts when they are clear from the context.
\end{definition}

The stationary distribution of the labeled chain is no longer uniform on \(\Omega\).
In fact, a labeled configuration is weighted
according to the occurrence numbers it induces,
and all labeled configurations with
the same multiplicity vector contribute
equally to
the uniform stationary distribution of the lifted chain.
Recall that
$\#_i(x)$ denotes the number of $i$ in vector $x$ in
\cref{eq:counting-function}.

\begin{lemma}
\label{lem:labeled-chain-stationary}
The labeled parallel KMP chain is irreducible and aperiodic.
Moreover, it is reversible with respect to the probability distribution
\(\labeleddist\) on \(\Omega\) defined by
\[
    \labeleddist(x)
    \coloneqq
    \frac{\prod_{i=1}^d \#_i(x)!}{t!\cdot\binom{t+d-1}{d-1}}
    \qquad \text{for every } x\in\Omega \enspace.
\]
In particular, \(\labeleddist\) is the unique stationary
distribution of \(\labeledchain\).
\end{lemma}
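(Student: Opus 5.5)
The plan is to verify detailed balance directly and to read off irreducibility, aperiodicity and normalization from the lifted chain. First, the normalization. For each \(r\in\Delta\), the set \(\Lambda_r\) of labeled configurations with multiplicity vector \(r\) has size \(t!/\prod_i r_i!\), and \(\labeleddist\) is constant on \(\Lambda_r\). Hence
\[
\labeleddist(\Lambda_r)=\frac{t!}{\prod_i r_i!}\cdot\frac{\prod_i r_i!}{t!\binom{t+d-1}{d-1}}=\binom{t+d-1}{d-1}^{-1}.
\]
Summing over the \(\binom{t+d-1}{d-1}\) vectors \(r\in\Delta\) gives total mass \(1\). This also shows that pushing \(\labeleddist\) forward under \(x\mapsto(\#_1(x),\dots,\#_d(x))\) gives the uniform distribution \(\lifteddist\), which is a useful consistency check.

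Second, detailed balance. I will fix a perfect matching \(\mathcal M\) and prove \(\labeleddist(x)P_{\mathcal M}(x,y)=\labeleddist(y)P_{\mathcal M}(y,x)\) for the conditional one-step kernel \(P_{\mathcal M}\). Averaging over \(\mathcal M\) then gives reversibility of \(\labeledchain\).

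Given \(\mathcal M\), the transition \(x\to y\) has positive probability only if each particle stays within its matched pair, i.e.\ \(\{x_\ell,y_\ell\}\subseteq\{i,j\}\) for the pair containing \(x_\ell\). In particular \(m_{i,j}=\#_i(x)+\#_j(x)=\#_i(y)+\#_j(y)\). In that case the pairs act independently. For pair \(\{i,j\}\), the probability of landing in the exact labeled arrangement of \(y\) is
\[
\frac{1}{m_{i,j}+1}\binom{m_{i,j}}{\#_i(y)}^{-1}=\frac{\#_i(y)!\,\#_j(y)!}{(m_{i,j}+1)!}.
\]
This is because \(a_{i,j}=\#_i(y)\) is chosen with probability \(1/(m_{i,j}+1)\), and then a uniform subset of that size is chosen. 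Therefore
\[
P_{\mathcal M}(x,y)=\prod_{\{i,j\}\in\mathcal M}\frac{\#_i(y)!\,\#_j(y)!}{(m_{i,j}+1)!}=\frac{\prod_{a}\#_a(y)!}{\prod_{\{i,j\}\in\mathcal M}(m_{i,j}+1)!}.
\]
The denominator is symmetric in \(x\) and \(y\). Hence
\[
\labeleddist(x)P_{\mathcal M}(x,y)\propto\prod_a\#_a(x)!\,\prod_a\#_a(y)!\,\Big/\prod_{\{i,j\}\in\mathcal M}(m_{i,j}+1)!,
\]
which is manifestly symmetric in \(x\) and \(y\). The support condition is also symmetric, so detailed balance holds, and with it stationarity.

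Third, irreducibility and aperiodicity. Aperiodicity follows because \(P_{\mathcal M}(x,x)>0\) for every \(\mathcal M\). For irreducibility, I will show that any \(x\) can reach the configuration \((1,\dots,1)\) with all particles at site \(1\), and that this configuration can reach any \(y\).

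For the first half, note that there are matchings in which site \(1\) is paired with any chosen site \(j\). Choosing \(a_{1,j}=m_{1,j}\) moves all particles at \(j\) to site \(1\). Doing this for each \(j\) in turn gathers everything at site \(1\).

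For the second half, I will reverse this using reversibility, since positive transition probabilities are symmetric by detailed balance. Alternatively, I can argue directly: pair \(1\) with \(j\) and choose \(a_{1,j}\) together with the exact subset of particles that should leave for \(j\).

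Uniqueness then follows from the standard fact for irreducible finite chains. The only mildly delicate step is the explicit computation of \(P_{\mathcal M}(x,y)\) for labeled particles, in particular the factor \(\binom{m}{a}^{-1}\) coming from the uniformly chosen subset. Once that is in place, the symmetry is immediate.
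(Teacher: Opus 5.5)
Your proof is correct and follows essentially the same route as the paper. Both arguments normalize by counting the configurations in each \(\Lambda_r\). Both check detailed balance for a fixed matching \(\mathcal M\) via the explicit conditional kernel \(\prod_{\{i,j\}\in\mathcal M}\frac{1}{m_{i,j}+1}\binom{m_{i,j}}{\#_i(y)}^{-1}\), then average over \(\mathcal M\); you rewrite this kernel in a manifestly symmetric factorial form instead of invoking the paper's factorial identity. The only other difference is cosmetic: you get the ``splitting'' half of irreducibility from the symmetry of the kernel's support (using \(\labeleddist>0\)), whereas the paper asserts that half directly.
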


\begin{proof}
    We first check that \(\labeleddist\) is a probability distribution.
    For each \(r=(r_1,\dots,r_d)\in\Delta\), the number of labeled configurations
    \(x\in\Omega\) satisfying
    \(
        \bigl(\#_1(x),\dots,\#_d(x)\bigr)=r
    \)
    is exactly
    \(
        \frac{t!}{\prod_{i=1}^d r_i!}.
    \)
    Hence, we have
    \[
        \sum_{x\in\Omega} \prod_{i=1}^d \#_i(x)!
        =
        \sum_{r\in\Delta}
        \frac{t!}{\prod_{i=1}^d r_i!} \cdot \prod_{i=1}^d r_i!
        =
        t!\cdot|\Delta|
        =
        t!\cdot\binom{t+d-1}{d-1} \enspace.
    \]

    The chain is irreducible since, from any configuration \(x\in\Omega\),
    one can merge all particles to the first site and then split to any configuration \(x'\in\Omega\) with positive probability.
    It is aperiodic since every configuration has a positive self-loop probability.
    It remains to verify detailed balance. Fix \(x,x'\in\Omega\). We will show that
    \[
        \labeleddist(x)\cdot\labeledchain(x,x')
        =
        \labeleddist(x')\cdot\labeledchain(x',x) \enspace.
    \]
    Fix a perfect matching
    $\mathcal M $.
    For each matched pair \(\{i,j\}\in\mathcal M\), define
    \[
        S_{i,j}(x):=\set{\ell\in[t]:x_\ell\in\set{i,j}}
    \]
    to be the set of particles located at sites \(i\) and \(j\) in configuration \(x\),
    and similarly \(S_{i,j}(x')\).

    If there exists \(\{i,j\}\in\mathcal M\)
    such that \(S_{i,j}(x)\neq S_{i,j}(x')\), then
    \[
        \Pr[x^{(k+1)}=x' \mid x^{(k)}=x,\mathcal M]=0
    \]
    and likewise
    \[
        \Pr[x^{(k+1)}=x \mid x^{(k)}=x',\mathcal M]=0 \enspace.
    \]
    Now suppose that
    \(
        S_{i,j}(x)=S_{i,j}(x')
    \)
    for every \(\{i,j\}\in\mathcal M\).
    Fix such a pair \(\{i,j\}\) and let $m_{i,j} = |S_{i,j}(x)|$.
    Therefore
    \[
        \Pr[x^{(k+1)}=x' \mid x^{(k)}=x,\mathcal M]
        =
        \prod_{\{i,j\}\in\mathcal M}
        \frac{1}{m_{i,j}+1}\cdot \frac{1}{\binom{m_{i,j}}{\#_i(x')}} \enspace.
    \]
    since for each pair \(\{i,j\}\) we must choose \(a_{i,j}=\#_i(x')\), and then
    choose exactly the particles occupying site \(i\) in \(x'\).
    Similarly,
    \[
        \Pr[x^{(k+1)}=x \mid x^{(k)}=x',\mathcal M]
        =
        \prod_{\{i,j\}\in\mathcal M}
        \frac{1}{m_{i,j}+1}\cdot \frac{1}{\binom{m_{i,j}}{\#_i(x)}} \enspace.
    \]
    On the other hand,
    we can rewrite the factors in \(\labeleddist\) as
    \[
        \prod_{i=1}^d \#_i(x)!
        =
        \prod_{\{i,j\}\in\mathcal M}
        \#_i(x)!\cdot \#_j(x)!
        \quad \text{and}\quad
        \prod_{i=1}^d \#_i(x')!
        =
        \prod_{\{i,j\}\in\mathcal M}
        \#_i(x')!\cdot\#_j(x')! \enspace.
    \]
    Moreover, since
    \(
        \#_i(x)+\#_j(x)=m_{i,j}=\#_i(x')+\#_j(x'),
    \)
    we have
    \[
        \#_i(x)!\cdot\#_j(x)!\cdot \frac{1}{\binom{m_{i,j}}{\#_i(x')}}
        =
        \#_i(x')!\cdot\#_j(x')!\cdot \frac{1}{\binom{m_{i,j}}{\#_i(x)}}
        =
        \frac{\#_i(x)!\cdot\#_j(x)!\cdot\#_i(x')!\cdot\#_j(x')!}{m_{i,j}!} \enspace.
    \]
    Multiplying over all matched pairs yields
    \[
        \labeleddist(x)\cdot\Pr[x^{(k+1)}=x' \mid x^{(k)}=x,\mathcal M]
        =
        \labeleddist(x')\cdot\Pr[x^{(k+1)}=x \mid x^{(k)}=x',\mathcal M] \enspace.
    \]
    Finally, averaging over the uniformly random perfect matching \(\mathcal M\)
    yields
    \[
        \labeleddist(x)\cdot\labeledchain(x,x')
        =
        \labeleddist(x')\cdot\labeledchain(x',x) \enspace.
    \]
    Thus \(\labeledchain\) satisfies the detailed balance condition with respect to
    \(\labeleddist\), and hence is reversible with respect to \(\labeleddist\).
    Since the chain is irreducible and aperiodic,
    this stationary distribution is unique.
\end{proof}

The labeled chain contains strictly more information than
the lifted chain, as it records the exact locations of
individual particles rather than only the resulting multiplicity vector.
Accordingly, the unlabeled chain is obtained by forgetting the labels,
and its convergence can only be faster.
Recall that the parallel KMP chain has the transition matrix
$\liftedchain$ and the stationary distribution $\lifteddist$,
as defined in \cref{def:discrete-parallel-kmp}
and \cref{lem:discrete-parallel-kmp-stationary}.

\begin{lemma}
\label{lem:labeled-dominates-lifted}
Let \(x\in\Omega\), and let
\(
    r=(\#_1(x),\dots,\#_d(x))\in\Delta.
\)
Then for every \(k\ge 0\),
\[
    \Vert \liftedchain^k(r,\cdot)-\lifteddist \Vert_{\mathrm{TV}}
    \le
    \Vert \labeledchain^k(x,\cdot)-\labeleddist \Vert_{\mathrm{TV}} \enspace.
\]
\end{lemma}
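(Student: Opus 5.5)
This is a lumping argument, parallel to the proof of \cref{lem:partition-dominates-by-lifted}. Let \(h:\Omega\to\Delta\) be the counting map \(h(x)=(\#_1(x),\dots,\#_d(x))\). The plan is to show that the parallel KMP chain is exactly the projection of the labeled chain under \(h\), for the dynamics and for the stationary laws. The inequality then follows from \cref{fact:tv-distance-function}.

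\emph{Step 1 (one-step lumpability).} For every \(x\in\Omega\) and \(r'\in\Delta\) we check that
\[
\sum_{x':h(x')=r'}\labeledchain(x,x')=\liftedchain(h(x),r').
\]
Condition on the perfect matching \(\mathcal M\), which has the same law in both chains. In the labeled chain, for each pair \(\{i,j\}\in\mathcal M\), the number \(a_{i,j}\) of particles placed at site \(i\) is uniform on \(\{0,\dots,m_{i,j}\}\), with \(m_{i,j}=\#_i(x)+\#_j(x)\), independently over pairs. The subsequent choice of which particles go to \(i\) does not affect the counts. So, given \(\mathcal M\), the updated count vector \(h(x^{(k+1)})\) has the law given by the resampling rule of \cref{def:discrete-parallel-kmp} started from \(h(x)\). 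Averaging over \(\mathcal M\) gives the identity. In particular, the left-hand side depends on \(x\) only through \(h(x)\), which is exactly Dynkin's lumpability criterion.

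\emph{Step 2 (\(k\) steps).} We prove by induction that \(\labeledchain^k(x,\cdot)\,h^{-1}=\liftedchain^k(h(x),\cdot)\). Expanding \(\labeledchain^{k+1}=\labeledchain^k\labeledchain\) and grouping the intermediate states \(y\) by \(h(y)\), Step 1 replaces \(\sum_{x'\in h^{-1}(r')}\labeledchain(y,x')\) with \(\liftedchain(h(y),r')\). The induction hypothesis then finishes the step. This is the same computation as \cref{eq:induction-step}, with \(h\) in place of \(g\).

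\emph{Step 3 (stationary laws).} By \cref{lem:labeled-chain-stationary}, for each \(r\in\Delta\),
\[
(\labeleddist h^{-1})(r)=\frac{t!}{\prod_i r_i!}\cdot\frac{\prod_i r_i!}{t!\binom{t+d-1}{d-1}}=\frac{1}{|\Delta|}=\lifteddist(r).
\]
Combining Steps 2 and 3 with \cref{fact:tv-distance-function} gives
\[
\|\liftedchain^k(r,\cdot)-\lifteddist\|_{\mathrm{TV}}=\|\labeledchain^k(x,\cdot)h^{-1}-\labeleddist h^{-1}\|_{\mathrm{TV}}\le\|\labeledchain^k(x,\cdot)-\labeleddist\|_{\mathrm{TV}}.
\]

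There is no real obstacle. The only point requiring care is Step 1: one must verify that the labeled update induces \emph{exactly} the uniform split on counts, independently across pairs, regardless of which labeled configuration \(x\) with counts \(r\) one starts from.
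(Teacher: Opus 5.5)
Your proposal is correct and follows essentially the same route as the paper: project by the counting map, verify that the stationary law pushes forward to the uniform distribution on \(\Delta\), and conclude with the fact that total variation distance does not increase under a deterministic map. The only difference is level of detail: the paper asserts the \(k\)-step identity \(\labeledchain^k(x,\cdot)\,f^{-1}=\liftedchain^k(r,\cdot)\) as following ``by the definitions of the two chains,'' whereas you justify it explicitly through one-step lumpability and induction.
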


\begin{proof}
    Let $f:\Omega\to\Delta$ be defined as
    \[
        f(x):= (\#_1(x),\dots,\#_d(x) ) \enspace.
    \]
    For a distribution $\mu$ on $\Omega$,
    recall that $\mu f^{-1}$ denotes the distribution on $\Delta$ defined by
    \[
        \mu f^{-1}(r) = \sum_{x\in\Omega:f(x)=r} \mu(x)
        = \sum_{x\in \Lambda_r} \mu(x)
        \qquad \text{for every } r\in\Delta \enspace,
    \]
    where $\Lambda_r$ is defined in \cref{eq:omega-r}.
    By the definitions of the two chains, we have
    \[
        \labeledchain^k(x,\cdot)\,f^{-1}
        =
        \liftedchain^k(r,\cdot) \enspace.
    \]
    For each \(r\in\Delta\), since all labeled configurations with the same multiplicity vector \(r\) together contribute total mass, we have
    \[
       \labeleddist\,f^{-1} (r) 
       =
       \sum_{x\in \Lambda_r} \labeleddist(x)
       =
       \frac{t!}{\prod_{i=1}^d r_i!}\cdot
        \frac{\prod_{i=1}^d r_i!}{t!\binom{t+d-1}{d-1}}
        =
        \frac{1}{\binom{t+d-1}{d-1}} 
        = 
        \lifteddist(r) \enspace,
    \]
    Thus, we have
    \[
        \labeleddist\,f^{-1}=\lifteddist \enspace.
    \]
    Hence, by \cref{fact:tv-distance-function},
    \[
        \Vert \liftedchain^k(r,\cdot)-\lifteddist \Vert_{\mathrm{TV}}
        =
        \Vert \labeledchain^k(x,\cdot)\,f^{-1}-\labeleddist\,f^{-1} \Vert_{\mathrm{TV}}
        \le
        \Vert \labeledchain^k(x,\cdot)-\labeleddist \Vert_{\mathrm{TV}} \enspace.
    \]
\end{proof}

\subsection{Conditional Product Structure of the Labeled Parallel KMP Chain}
\label{sec:labeled-chain-conditional-structure}

Before proving the mixing time of the labeled chain,
we describe several structural observations
that will be used in the argument.
The first observation gives a convenient reformulation
of the update rule using auxiliary randomness.
The second shows that, conditional on the auxiliary randomness,
the evolution of the labeled chain admits a product structure,
so that the labeled particles evolve independently.
Finally, we show that the stationary distribution admits
an analogous product structure.

    Our first observation is that, 
    for each matched pair \(\{i,j\}\), there is an equivalent way to re-distribute the label particles.
\begin{observation}\label{obs:equiv-rule}
The following two sampling experiments are equivalent.
\begin{enumerate}
    \item Sample a uniformly random real number
    \(p_{i,j} \sim \mathrm{Unif}[0,1]\).
    Then, each of the $m_{i,j}$ particles
    independently decides to go to site $i$ with probability $p_{i,j}$, and site $j$ with $1-p_{i,j}$.
    \item Sample a uniformly random integer \(a_{i,j} \in \set{0,1,\dots,m_{i,j}}
    \).
    Then, among all particles currently located at sites \(i\)
    and \(j\), choose exactly \(a_{i,j}\) particles uniformly at random and place them
    at site \(i\); place all remaining particles at site \(j\).
\end{enumerate}
\end{observation}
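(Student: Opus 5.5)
We compare the two experiments by computing, for a fixed set $S$ of $m=m_{i,j}$ labeled particles and a fixed subset $A\subseteq S$, the probability that exactly the particles of $A$ end up at site $i$ (and those of $S\setminus A$ at site $j$). Since the outcome of either experiment on the pair $\{i,j\}$ is completely described by which subset of $S$ is sent to $i$, showing that these probabilities agree for every $A$ proves that the two experiments induce the same distribution.

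For the second experiment, write $a=|A|$. The event requires $a_{i,j}=a$, which has probability $\frac{1}{m+1}$. Conditioned on this, the chosen $a$-subset must be exactly $A$, which has probability $\binom{m}{a}^{-1}$. Hence the probability is $\frac{1}{m+1}\binom{m}{a}^{-1}=\frac{a!\,(m-a)!}{(m+1)!}$.

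For the first experiment, condition on $p_{i,j}=p$. The particles move independently, so the probability of the outcome $A$ is $p^{a}(1-p)^{m-a}$. Averaging over $p\sim\mathrm{Unif}[0,1]$ gives the Beta integral
\[
\int_0^1 p^{a}(1-p)^{m-a}\,\mathrm{d}p=\frac{a!\,(m-a)!}{(m+1)!}.
\]
This can also be read off from \cref{eq:moment-diri} with $d=2$ and $(\alpha_1,\alpha_2)=(a,m-a)$, since $(p,1-p)\sim\Dirichlet(1,1)$. The two expressions coincide, which proves the equivalence.

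The only step needing care is the Beta integral, and it is standard. If a more conceptual justification is wanted, the identity follows from exchangeability (a P\'olya urn / de Finetti argument): under the first experiment the number $a_{i,j}$ of particles going to $i$ is uniform on $\{0,\dots,m\}$, and given this count the chosen set is a uniform $a$-subset by symmetry.
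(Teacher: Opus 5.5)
Your proof is correct and is essentially the paper's argument. The paper uses the same Beta integral to show that the number of particles sent to site $i$ is uniform on $\{0,\dots,m_{i,j}\}$, then notes that the chosen subset is uniform given its size. Your computation of the probability $\frac{a!\,(m-a)!}{(m+1)!}$ for each specific subset verifies both facts at once.
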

\begin{proof}
        To see this,
    suppose that \(m_{i,j}\) labeled particles
    are currently on the sites \(i\) and \(j\),
    and let \(Q\) be the number of particles sent to \(i\).
    For every
    \(0\le a\le m_{i,j}\), we have
    \[
        \Pr[Q=a]
        =
        \int_0^1 \binom{m_{i,j}}{a}\cdot  p_{i,j}^a \cdot (1-p_{i,j})^{m_{i,j}-a}\,\mathrm{d}p_{i,j}
        =
        \frac{1}{m_{i,j}+1} \enspace.
    \]
    Moreover, conditional on \(Q=a\),
    every \(a\)-subset of the \(m_{i,j}\) particles
    is equally likely to be chosen for site \(i\).
    This construction gives
    exactly the update rule in the labeled parallel KMP process in \cref{def:labeled-parallel-kmp}.
\end{proof}
Thus, we can equivalently describe the labeled parallel KMP chain as follows.

\begin{definition}[Labeled parallel KMP chain with auxiliary randomness]
\label{def:labeled-parallel-kmp-anc}

The labeled parallel KMP process is the Markov chain
\(\{x^{(k)}\}_{k\ge 0}\) on \(\Omega\) defined as follows.
Given the current configuration \(x^{(k)}\), the next configuration
\(x^{(k+1)}\) is obtained by the procedure below.

\begin{enumerate}
    \item Sample a uniformly random perfect matching of \([d]\), denoted by
        \[
            \mathcal{M} = \{\{i_1,j_1\},\dots,\{i_m,j_m\}\}\ , \qquad m = d/2\enspace.
        \]  

    \item For each matched pair \(\{i,j\}\in\mathcal M\), independently sample a uniform random variable
        \[
        	p_{i,j} \sim \mathrm{Unif}[0,1] \enspace.
        \]
    Then, among all particles currently located at sites \(i\)
    and \(j\), choose each particle independently with probability \(p_{i,j}\) and place it at site \(i\); place all remaining particles at site \(j\).
\end{enumerate}
\end{definition}

The advantage of this formulation is that,
conditional on the entire sequence of perfect matchings and
the associated sequence of uniform random variables throughout the process,
the distribution of the labeled configuration at every time step has a product structure.

  Let us fix
    an initial state \(x^{(0)}\in\Omega\),
    the perfect matchings
    \(\{\mathcal M_s\}_{1\le s\le k}\)
    and the uniform random variables
    \(\{p_{i,j}^{(s)}\}_{1\le s\le k,\ \{i,j\}\in\mathcal M_s}\)
    used in \(k\) steps of the whole process.
    Then, for each time $s$,
    define a random row-stochastic matrix
    \(A_s\in\R^{d\times d}\) :
    for each pair \(\{i,j\}\in\mathcal M_s\),
    set
    \begin{equation} \label{eq:def-A}
        A_s(i,i) = A_s(j,i) =  p_{i,j}^{(s)} \enspace,
        \qquad
        A_s(i,j) = A_s(j,j) = 1 - p_{i,j}^{(s)}\enspace ,
    \end{equation}
    and all other entries equal to zero.
    Thus \(A_s\) is the one-step transition matrix for
    a single labeled particle at time \(s\),
    conditional on \(\{\mathcal M_s\}\)
    and \(\{p_{i,j}^{(s)}\}\).
    Moreover, we define 
    \begin{equation}\label{eq:def-B}
        B_s := A_1 \cdot A_2 \cdots A_s \enspace.
    \end{equation}

\begin{remark}\label{rmk:transition-matrix-labeled-chain}
    Let $A$ be the random row-stochastic matrix defined as in \cref{eq:def-A} using independent $\Unif[0,1]$ variables $\{p_{i,j}\}_{\{i,j\}\in\mathcal M}$ and a random perfect matching $\mathcal M$.
    The transition matrix of the labeled chain can be expressed as
    \[
        \labeledchain
        =
        \E_{\mathcal M, \{p_{i,j}\}}[A^{\otimes t}] \enspace.
    \]
    Indeed, conditioned on \(\mathcal M\) and \(\{p_{i,j}\}\), the labeled particles evolve independently according to the transition matrix \(A\),
    so the conditional transition matrix of the labeled chain is \(A^{\otimes t}\).
    Averaging over the randomness of \(\mathcal M\) and \(\{p_{i,j}\}\) yields the above expression.
\end{remark}

\begin{observation}\label{obs:conditional-product-structure}  
    Conditional on \(\{\mathcal M_s\}_{1\le s\le k}\)
    and \(\{p_{i,j}^{(s)}\}_{1\le s\le k,\ \{i,j\}\in\mathcal M_s}\),
    since the labeled particles evolve independently,
    a particle initially at site \(i\) has the following distribution after \(k\) steps:
    \begin{equation*}
        q_i^{(k)} = e_i\cdot B_k \enspace,
    \end{equation*}
    where \(e_i\) is the row vector with a \(1\)
    in coordinate \(i\) and \(0\) elsewhere.
    Therefore,
    the conditional distribution of the labeled configuration
    at time \(k\) admits a product structure:
    \begin{equation*}
        \mu_k
        \coloneq 
        \bigotimes_{\ell=1}^t q_{x_\ell^{(0)}}^{(k)} \enspace,
    \end{equation*}
    where $\#_i(x^{(0)})$ is the number of particles initially at site \(i\).
    In particular, the unconditional distribution after \(k\) steps of
    the labeled chain can be obtained by averaging \(\mu_k\)
    over $\{\mathcal M_s\}_{1\leq s\leq k}$ and
    \(\{p_{i,j}^{(s)}\}_{1\le s\le k,\ \{i,j\}\in\mathcal M_s}\).
    That is, for every \(y\in\Omega\),
    \begin{equation*}
        \labeledchain^k(x^{(0)},y)
        =
        \E_{
            \substack{\{\mathcal M_s\},\,\{p_{i,j}^{(s)}\}}}
        [\mu_k(y)] \enspace.
    \end{equation*}
\end{observation}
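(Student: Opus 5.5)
The plan is to establish the two assertions of \cref{obs:conditional-product-structure} in order. First comes the conditional product form at time \(k\); then we average over the auxiliary randomness. The starting point is \cref{def:labeled-parallel-kmp-anc}, which by \cref{obs:equiv-rule} generates the same one-step law as \cref{def:labeled-parallel-kmp}. Fix the whole sequence \(\{\mathcal M_s\}_{1\le s\le k}\) and \(\{p^{(s)}_{i,j}\}\), and write \(\mathcal F\) for this \(\sigma\)-algebra. Conditional on \(\mathcal F\), the step-\(s\) update works as follows. A particle at site \(i\), with \(\{i,j\}\in\mathcal M_s\), goes to \(i\) with probability \(p^{(s)}_{i,j}\) and to \(j\) otherwise. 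Every particle uses its own coin, and the coins are independent across particles and across steps. The first thing to check is that this kernel is exactly row \(i\) of \(A_s\) from \cref{eq:def-A}. Indeed, \(A_s(i,i)=p^{(s)}_{i,j}\) and \(A_s(i,j)=1-p^{(s)}_{i,j}\), and symmetrically for a particle starting at \(j\). So, given \(\mathcal F\), the per-particle trajectories are independent Markov chains with time-inhomogeneous kernels \(A_1,\dots,A_k\).

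The second step is an induction on \(k\) for the conditional law. Let \(\mu_s\) be the conditional law of \(x^{(s)}\) given \(\mathcal F\). I claim \(\mu_s=\bigotimes_{\ell}e_{x^{(0)}_\ell}B_s\). The base case \(s=0\) is the point mass at \(x^{(0)}\), which equals \(\bigotimes_\ell e_{x^{(0)}_\ell}\). For the step, conditional on \(\mathcal F\) and on \(x^{(s)}=y\), the new configuration has law \(\bigotimes_\ell e_{y_\ell}A_{s+1}\), because the coins are independent. Hence
\[
\mu_{s+1}=\mu_s\,A_{s+1}^{\otimes t}=\Bigl(\bigotimes_\ell e_{x^{(0)}_\ell}B_s\Bigr)A_{s+1}^{\otimes t}=\bigotimes_\ell e_{x^{(0)}_\ell}B_sA_{s+1},
\]
and \(B_sA_{s+1}=B_{s+1}\) by \cref{eq:def-B}. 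In particular, the marginal of a particle started at \(i\) is \(q_i^{(k)}=e_iB_k\).

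The last step is the tower property. We have \(\labeledchain^k(x^{(0)},y)=\Pr[x^{(k)}=y]=\E[\Pr[x^{(k)}=y\mid\mathcal F]]=\E[\mu_k(y)]\). This uses that the randomness at different steps is drawn independently of the current state, so conditioning on all of \(\mathcal F\) does not distort the law of the process. Equivalently, by \cref{rmk:transition-matrix-labeled-chain}, \(\labeledchain^k=\bigl(\E[A^{\otimes t}]\bigr)^k=\E\bigl[(A_1\cdots A_k)^{\otimes t}\bigr]\), using independence of the \(A_s\) and multiplicativity of tensor powers.

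The only delicate point is that the auxiliary variables at step \(s\) are sampled independently of \(x^{(s-1)}\) and of the past, and that the \(p^{(s)}_{i,j}\) are sampled for all matched pairs, including empty ones. This is what justifies conditioning on the whole sequence up front. Once that is noted, everything else is bookkeeping.
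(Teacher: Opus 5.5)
Your argument is correct and is essentially the paper's own reasoning, written out in more detail. Conditional on the matchings and split parameters, each particle moves independently with kernel $A_s$, so the conditional law is $\bigotimes_\ell e_{x^{(0)}_\ell}B_k$, and averaging gives the unconditional statement; equivalently $\labeledchain^k=(\E[A^{\otimes t}])^k=\E[B_k^{\otimes t}]$ via \cref{rmk:transition-matrix-labeled-chain}. Your point that the auxiliary randomness is sampled independently of the current state, for every matched pair, is exactly what justifies conditioning on the whole sequence up front.
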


Our third observation is that the stationary distribution
\(\labeleddist\) of the labeled chain also admits a similar conditional
product structure.
Recall that the distribution $\mathrm{Dirichlet}(1,\dots,1)$
is the uniform distribution over
$\{ u \in \mathbb{R}_{\ge 0}^d : \sum_{i=1}^d u_i = 1 \}$,
and admits the
representation $u_i = G_i / \sum_j G_j$ for i.i.d.\ $G_i \sim \Gamma(1,1)$.

\begin{lemma}
\label{lem:labeled-stationary-product-structure}
Let $k\in\N$ be any integer, and \(u\) be a random vector drawn from the distribution
\(\Dirichlet(1,\dots,1)\).
For a sequence of random matchings \(\{\mathcal M_s\}_{1\leq s\leq k}\) and a sequence of random variables \(\{p_{i,j}^{(s)}\}_{1\le s\le k,\ \{i,j\}\in\mathcal M_s}\) independent of \(u\),
define \(A_s\) and \(B_s\) as in \cref{eq:def-A,eq:def-B}, and let \(w^{(k)} = u\cdot B_k\).
Then, for every \(y\in\Omega\),
\[
    \labeleddist(y)
    =
    \E_{
        \substack{\{\mathcal M_s\},\,\{p_{i,j}^{(s)}\},\,u}
    }\Br{
        \prod_{i=1}^t w^{(k)}(y_i)
    } \enspace,
\]
or equivalently,
\[
    \labeleddist
    =
    \E_{
        \substack{\{\mathcal M_s\},\,\{p_{i,j}^{(s)}\},\,u}
    }\Br{ \br{w^{(k)}}^{\otimes t}} \enspace.
\]
\end{lemma}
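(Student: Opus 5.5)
Induction on \(k\). The base case \(k=0\) (where \(B_0=I\), so \(w^{(0)}=u\)) is a direct moment computation. The inductive step uses stationarity of \(\labeleddist\) together with \cref{rmk:transition-matrix-labeled-chain}; a small independence argument lets us avoid tracking the law of \(w^{(k)}\) directly.

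\textbf{Base case.} We must show \(\labeleddist(y)=\E_u\bigl[\prod_{\ell=1}^t u_{y_\ell}\bigr]=\E_u\bigl[\prod_{i=1}^d u_i^{\#_i(y)}\bigr]\). By the Dirichlet moment formula \cref{eq:moment-diri} with \(\alpha_i=\#_i(y)\) and \(\sum_i\alpha_i=t\), this equals
\[
\frac{(d-1)!\prod_i \#_i(y)!}{(t+d-1)!}.
\]
This is exactly \(\prod_i\#_i(y)!\big/\bigl(t!\binom{t+d-1}{d-1}\bigr)\), since \(t!\binom{t+d-1}{d-1}=(t+d-1)!/(d-1)!\). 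So \(\labeleddist=\E_u[u^{\otimes t}]\).

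\textbf{Inductive step.} Set \(\nu_k:=\E[(w^{(k)})^{\otimes t}]\) and suppose \(\nu_k=\labeleddist\). We have \(w^{(k+1)}=u B_k A_{k+1}=w^{(k)}A_{k+1}\), and \(A_{k+1}\) is built from \(\mathcal M_{k+1}\) and \(\{p^{(k+1)}_{i,j}\}\). These are independent of \((u,\mathcal M_1,\dots,\mathcal M_k,\{p^{(s)}_{i,j}\}_{s\le k})\), hence independent of \(w^{(k)}\). Using \((vA)^{\otimes t}=v^{\otimes t}A^{\otimes t}\), condition on \(w^{(k)}\) and average over \(A_{k+1}\) first:
\[
\nu_{k+1}=\E\bigl[(w^{(k)})^{\otimes t}\bigr]\,\E\bigl[A_{k+1}^{\otimes t}\bigr]=\nu_k\,\labeledchain,
\]
where the last equality is \cref{rmk:transition-matrix-labeled-chain}. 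By the induction hypothesis and stationarity of \(\labeleddist\) (\cref{lem:labeled-chain-stationary}), \(\nu_{k+1}=\labeleddist\,\labeledchain=\labeleddist\). Entrywise, this is the claimed identity for \(\prod_{\ell}w^{(k)}(y_\ell)\).

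The step needing the most care is the factorization of the expectation. One must confirm that \(A_{k+1}\) is independent of \(w^{(k)}\) and that multiplication on the right by \(A_{k+1}\) matches the order in \(B_{k+1}=B_kA_{k+1}\); both follow from the definition \cref{eq:def-B} and the independence of the sampled randomness across steps. Everything else is the Dirichlet moment identity and bookkeeping.
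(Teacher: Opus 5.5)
Your proof is correct, but it reaches the identity by a different route than the paper.

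The paper proves the stronger distributional fact that $w^{(k)}=uB_k$ is itself $\Dirichlet(1,\dots,1)$ for every $k$. For a fixed matching, write $u_i=G_i/\sum_j G_j$ with i.i.d.\ $\Gamma(1,1)$ variables. Then the pair $\bigl(p_{i,j}(G_i+G_j),\,(1-p_{i,j})(G_i+G_j)\bigr)$ is again i.i.d.\ $\Gamma(1,1)$, so $uA$ has the same law as $u$. The identity then follows at every $k$ directly from the moment formula \cref{eq:moment-diri}, without using $\labeledchain$ or its stationarity.

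You instead apply the moment formula only at $k=0$. You then propagate it using $\E[A^{\otimes t}]=\labeledchain$ (\cref{rmk:transition-matrix-labeled-chain}) and the detailed-balance result \cref{lem:labeled-chain-stationary}. This tracks only the $t$-th moment tensor $\E[(w^{(k)})^{\otimes t}]$ rather than the law of $w^{(k)}$. It sidesteps the beta--gamma algebra that the paper leaves as ``one can check,'' and it reuses a lemma already established, with no circularity.

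The paper's route buys a $t$-independent statement: $w^{(k)}$ is uniform on the simplex, conditionally on any fixed matching sequence, and this does not depend on a separate stationarity proof. That is convenient when the same structure is reused for general graphs in \cref{sec:kmp-general-graph}, where stationarity of $\lkmpdist$ is only asserted.

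Your argument, as written, needs $\mathcal M_{k+1}$ to be uniform so that $\E[A_{k+1}^{\otimes t}]$ is exactly $\labeledchain$. It does extend to fixed matchings, because the detailed-balance computation in \cref{lem:labeled-chain-stationary} is carried out for each fixed $\mathcal M$ before averaging.
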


\begin{proof}
    We first show that the unconditional distribution of \(w^{(k)}\)
    is \(\Dirichlet(1,\dots,1)\).
    For this, it suffices to verify the following one-step invariance:
    for any fixed perfect matching \(\mathcal M\),
    let \(A\) be the random row-stochastic matrix defined as
    in \cref{eq:def-A} using independent \(\Unif[0,1]\) variables \(\{p_{i,j}\}_{\{i,j\}\in\mathcal M}\),
    \[
        u\cdot A \sim \Dirichlet(1,\dots,1)\enspace.
    \]
    To see this, we can write \(u\) as a normalized vector of i.i.d. \(\Gamma(1,1)\) random variables:
    \[
        u_i = \frac{G_i}{G_1+\cdots+G_d}\enspace,
    \]
    where \(G_1,\dots,G_d\) are i.i.d. random variables drawn from
    distribution \(\Gamma(1,1)\) .
    Let $u' = u\cdot A$ and we have
    \[
        u_i' = p_{i,j}\cdot (u_i+u_j) = \frac{p_{i,j}\cdot(G_i+G_j)}{G_1+\cdots+G_d}\enspace,
        \quad
        u_j' = (1-p_{i,j})\cdot (u_i+u_j) = \frac{(1-p_{i,j})\cdot(G_i+G_j)}{G_1+\cdots+G_d}\enspace.
    \]
    Now let us denote $G'_i = p_{i,j}\cdot(G_i+G_j)$ and $G'_j = (1-p_{i,j})\cdot(G_i+G_j)$.
    We have
    \[
        u'_i = \frac{G'_i}{G'_1+\cdots+G'_d} \enspace,
        \quad
        u'_j = \frac{G'_j}{G'_1+\cdots+G'_d} \enspace.
    \]
    One can check that \(\{G'_i\}_i\) are independent \(\Gamma(1,1)\) random variables as well.
    Hence, \(u\cdot A\) has the distribution \(\Dirichlet(1,\dots,1)\).

    To prove the lemma,
    let
    \(
        \nu_k \coloneqq (w^{(k)})^{\otimes t}.
    \)
    It suffices to compute \(\E[\nu_k(y)]\) for each
    \(y\in\Omega\).
    Fix \(y=(y_1,\dots,y_t)\in\Omega\), and 
    we have
    \[
        \nu_k(y)
        =
        \prod_{\ell=1}^t w^{(k)}(y_\ell)
        =
        \prod_{i=1}^d (w^{(k)}(i))^{\#_i(y)} \enspace,
    \]
    where \(\#_i(y)\) is the number of particles at site \(i\) in configuration \(y\).
    Then \(\E[\nu_k(y)]\) is exactly a moment of the Dirichlet distribution.
    Using the moment formula in \cref{eq:moment-diri},
    \[
        \E[\nu_k(y)]
        =
        \E\Br{\prod_{i=1}^d (w^{(k)}(i))^{\#_i(y)}}
        =
        \frac{(d-1)!\cdot\prod_{i=1}^d \#_i(y)!}{(t+d-1)!} \enspace.
    \]
    On the other hand, by definition of \(\labeleddist\) in \cref{lem:labeled-chain-stationary}, we have
    \[
        \labeleddist(y)
        =
        \frac{\prod_{i=1}^d \#_i(y)!}{t!\binom{t+d-1}{d-1}}
        =
        \frac{\prod_{i=1}^d \#_i(y)!}{t!}
        \cdot
        \frac{t!(d-1)!}{(t+d-1)!}
        =
        \frac{(d-1)!\cdot\prod_{i=1}^d \#_i(y)!}{(t+d-1)!} \enspace.
    \]
    Therefore, we have
    \[
    	\labeleddist
    =
    \E_{
        \substack{\{\mathcal M_s\},\,\{p_{i,j}^{(s)}\},\,u}
    }\Br{ \br{w^{(k)}}^{\otimes t}} \enspace.
    \]
    This completes the proof.
\end{proof}

The conditional product structure helps reduce
the analysis of the labeled chain
to the behavior of a single particle.
Indeed, once conditioned on the auxiliary randomness,
the distribution at time \(k\) is a product measure.
By \cref{fact:tv-tensor},
the total variation distance is then bounded
by the sum of the total variation distances
of the single-particle marginals.
Therefore,
it suffices to understand how the random matrices
$A$ in \cref{eq:def-A}
contract the distance between two single-particle distributions.
The next lemma gives exactly this contraction estimate.

\begin{lemma}\label{lem:contraction-under-A}
    Let \(v\in\R^d\) be a vector satisfying
    \(
        \sum_{j=1}^d v_j = 0.
    \)
    Let \(A\) be the random row-stochastic matrix
    defined from a uniformly random perfect matching \(\mathcal M\) of \([d]\)
    and i.i.d.\ $\Unif[0,1]$ random variables \(\{p_{i,j}\}\) as in \cref{eq:def-A}.
    We have
    \[
        \E_{A}\Br{\norm{vA}_2^2}
        =
        \frac{2(d-2)}{3(d-1)} \norm{v}_2^2 \enspace.
    \]
\end{lemma}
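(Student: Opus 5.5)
Compute $vA$ directly, conditioning first on the matching $\mathcal M$. By \cref{eq:def-A}, for each matched pair $\{i,j\}\in\mathcal M$ the row vector $vA$ has
\[
(vA)_i = p_{i,j}(v_i+v_j), \qquad (vA)_j=(1-p_{i,j})(v_i+v_j),
\]
so $vA$ only sees the pair sums $s_{i,j}=v_i+v_j$. With $p\sim\Unif[0,1]$ we have $\E[p^2+(1-p)^2]=2/3$. Using independence across pairs,
\[
\E\Br{\norm{vA}_2^2 \,\middle|\, \mathcal M} = \frac23\sum_{\{i,j\}\in\mathcal M}(v_i+v_j)^2 .
\]
The problem therefore reduces to averaging $\sum_{\{i,j\}\in\mathcal M}(v_i+v_j)^2$ over a uniformly random perfect matching.

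Expand $(v_i+v_j)^2=v_i^2+v_j^2+2v_iv_j$. Each vertex lies in exactly one pair, so $\sum_{\{i,j\}\in\mathcal M}(v_i^2+v_j^2)=\norm{v}_2^2$ deterministically. For the cross term, a fixed unordered pair $\{i,j\}$ with $i\neq j$ belongs to a uniform perfect matching with probability $1/(d-1)$, since $i$'s partner is uniform among the other $d-1$ vertices. Hence
\[
\E_{\mathcal M}\Br{\sum_{\{i,j\}\in\mathcal M}2v_iv_j}=\frac{2}{d-1}\sum_{i<j}v_iv_j=\frac{1}{d-1}\Bigl(\bigl(\textstyle\sum_i v_i\bigr)^2-\norm{v}_2^2\Bigr)=-\frac{\norm{v}_2^2}{d-1},
\]
where the last equality uses the hypothesis $\sum_j v_j=0$.

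Combining the two terms gives
\[
\E\Br{\norm{vA}_2^2}=\frac23\Bigl(1-\frac1{d-1}\Bigr)\norm{v}_2^2=\frac{2(d-2)}{3(d-1)}\norm{v}_2^2,
\]
which is the claim. No step is a real obstacle; the only point needing care is the matching probability $1/(d-1)$ together with the mean-zero condition, which is exactly what removes the cross term and produces the factor $(d-2)/(d-1)$.
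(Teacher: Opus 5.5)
Your proposal is correct and follows essentially the same route as the paper: condition on the matching to get $\frac23\sum_{\{i,j\}\in\mathcal M}(v_i+v_j)^2$, then average over the matching using that a fixed pair is matched with probability $1/(d-1)$, and use $\sum_j v_j=0$ to turn the cross term into $-\norm{v}_2^2/(d-1)$. One small point: independence across pairs is not needed for the conditional expectation, since linearity of expectation already suffices.
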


\begin{proof}
    Let \(v\in\R^d\) be a vector such that
    \(
        \sum_{j=1}^d v_j = 0.
    \)
    If \(\{i,j\}\in\mathcal M\), then
    \[
        (vA)_i = p_{i,j}\cdot (v_i+v_j)\enspace,
        \qquad
        (vA)_j = (1-p_{i,j})\cdot (v_i+v_j)\enspace.
    \]
    Hence, conditioning on the matching \(\mathcal M\), we have
    \[
        \E_{\{p_{i,j}\}}\Br{\norm{vA}_2^2}
        =
        \sum_{\{i,j\}\in\mathcal M}
        \E_{\{p_{i,j}\}}[(p_{i,j}^2+(1-p_{i,j})^2)\cdot (v_i+v_j)^2] 
        =
        \frac{2}{3}\sum_{\{i,j\}\in\mathcal M}(v_i+v_j)^2
        \enspace,
    \]
    where we used the fact that for \(p_{i,j}\sim\mathrm{Unif}[0,1]\),
    \(
        \E[p_{i,j}^2+(1-p_{i,j})^2]
        =
        \frac{2}{3} .
    \)
    Averaging over the uniformly random perfect matching \(\mathcal M\),
    we obtain
    \begin{align*}
        \E_{\mathcal{M}, \{p_{i,j}\}} \Br{\norm{vA}_2^2}
        =
        \frac{2}{3} \cdot \E_{\mathcal M}\Br{\sum_{\{i,j\}\in\mathcal M}(v_i+v_j)^2}
        &=
        \frac{2}{3}\cdot
        \br{\sum_{i=1}^d v_i^2+\frac{2}{d-1}\sum_{i<j}v_i\cdot v_j} \\
        &=
        \frac{2}{3}\cdot\norm{v}_2^2
        +\frac{4}{3(d-1)}\sum_{i<j} v_i\cdot v_j\\
        &=
        \frac{2}{3}\cdot\norm{v}_2^2
        +\frac{2}{3(d-1)}\br{ \br{\sum_{i=1}^d v_i }^2 - \sum_{i=1}^d v_i^2} \\
        &=
        \frac{2(d-2)}{3(d-1)}\norm{v}_2^2 \enspace,
    \end{align*}
    where in the last step we used \(\sum_{i=1}^d v_i=0\).
\end{proof}

\subsection{Mixing Time of the Labeled Parallel KMP Chain}
\label{sec:mixing-time-labeled-chain}
We now analyze the mixing time of the labeled parallel KMP chain. 

\begin{theorem}
\label{thm:labeled-chain-mixing}
For an even integer \(d\geq2\) and positive integers \(t\) and \(k\),
let $\{ x^{(s)} \in \Omega \}_{s\geq 0}$
be the labeled parallel KMP chain with transition matrix $\labeledchain$
in \cref{def:labeled-parallel-kmp}.
Then, for any initial state \(x^{(0)} \in \Omega\),
we have
\[
    \norm{\labeledchain^k(x^{(0)},\cdot)-\labeleddist}_{\mathrm{TV}}
    \le
    \frac{t\sqrt{d}}{2} \cdot \br{\frac23}^{k/2} \enspace ,
\]
where \(\labeleddist\) is the stationary distribution
given in \cref{lem:labeled-chain-stationary}.
\end{theorem}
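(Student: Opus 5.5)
We couple the $k$-step distribution and the stationary distribution through the same auxiliary randomness. By \cref{obs:conditional-product-structure},
\[
\labeledchain^k(x^{(0)},\cdot)=\E\Bigl[\bigotimes_{\ell=1}^t e_{x^{(0)}_\ell}B_k\Bigr].
\]
By \cref{lem:labeled-stationary-product-structure}, applied with the same matchings and the same $\{p^{(s)}_{i,j}\}$ together with an independent $u\sim\Dirichlet(1,\dots,1)$,
\[
\labeleddist=\E\bigl[(uB_k)^{\otimes t}\bigr].
\]
Convexity of total variation distance (triangle inequality inside the expectation) bounds the target quantity by an expected distance between two product measures. \cref{fact:tv-tensor}, extended to $t$ factors by induction, then gives
\[
\norm{\labeledchain^k(x^{(0)},\cdot)-\labeleddist}_{\mathrm{TV}}
\le \sum_{\ell=1}^t \E\bigl[\norm{(e_{x^{(0)}_\ell}-u)B_k}_{\mathrm{TV}}\bigr].
\]
Here we use that $e_iB_k$ and $uB_k$ are genuine probability vectors, since $B_k$ is row-stochastic.

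Each summand is handled by passing to $\ell_2$. Set $v=e_i-u$. Then $\sum_j v_j=0$, and
\[
\norm{vB_k}_{\mathrm{TV}}=\tfrac12\norm{vB_k}_1\le \tfrac{\sqrt d}{2}\norm{vB_k}_2 .
\]
Jensen's inequality gives $\E\norm{vB_k}_2\le(\E\norm{vB_k}_2^2)^{1/2}$. To bound the second moment, write $vB_k=vA_1A_2\cdots A_k$ and iterate \cref{lem:contraction-under-A}. The vector $vA_1\cdots A_{s-1}$ still has zero coordinate sum, because each $A_s$ is row-stochastic and so maps zero-sum row vectors to zero-sum row vectors. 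It is also independent of $A_s$. Conditioning step by step therefore yields
\[
\E\norm{vB_k}_2^2\le\Bigl(\tfrac{2(d-2)}{3(d-1)}\Bigr)^k\E\norm{v}_2^2\le(2/3)^k\,\E\norm{e_i-u}_2^2 .
\]
Note that $vB_k$ must be formed with the $A_s$ applied in the order $A_1$ first. This is consistent with $B_k=A_1\cdots A_k$ acting on row vectors, and each new factor is independent of the past, so the iteration is valid.

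It remains to show $\E\norm{e_i-u}_2^2\le1$. Expanding,
\[
\norm{e_i-u}_2^2=1-2u_i+\norm{u}_2^2 .
\]
Since $u$ is uniform on the simplex, $\E u_i=1/d$ and $\E u_j^2=2/(d(d+1))$, so $\E\norm{u}_2^2=2/(d+1)$. Hence
\[
\E\norm{e_i-u}_2^2=1-\tfrac2d+\tfrac2{d+1}\le1 .
\]
A cruder alternative also works: $\norm{e_i-u}_2^2\le\norm{e_i-u}_1\cdot\max_j|e_i-u|_j\le2$. However, that bound only yields the constant $t\sqrt{d}/\sqrt2$, so we use the exact moment computation. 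Combining the pieces, each summand is at most $\tfrac{\sqrt d}{2}(2/3)^{k/2}$, and summing over the $t$ particles gives the claimed bound $\tfrac{t\sqrt d}{2}(2/3)^{k/2}$.

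The main obstacle is the coupling step: we must justify that the same matrices $B_k$ may be used in both representations. \cref{lem:labeled-stationary-product-structure} holds for every $k$ and for randomness independent of $u$, so this is legitimate, and it is exactly what reduces the problem to single-particle contraction.
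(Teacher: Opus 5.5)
Your proposal is correct and follows the paper's proof essentially step for step. You couple $\labeledchain^k(x^{(0)},\cdot)$ and $\labeleddist$ through the same matchings and split parameters, reduce to single-particle distances via \cref{fact:tv-tensor}, pass to $\ell_2$ with Jensen, iterate \cref{lem:contraction-under-A}, and use $\E\norm{e_i-u}_2^2=1-\tfrac2d+\tfrac{2}{d+1}\le1$. The only cosmetic difference is that the paper treats $d=2$ separately, whereas your argument covers that case directly because the contraction factor $\tfrac{2(d-2)}{3(d-1)}$ vanishes there.
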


\begin{proof}[Proof of \cref{thm:labeled-chain-mixing}]
    The case \(d=2\) is immediate:
    there is only one perfect matching,
    and after one step the chain samples
    a uniform split of the \(t\) labeled particles
    between the two sites,
    which is exactly the stationary distribution \(\labeleddist\).

    Assume now that \(d\ge 4\), and fix an initial state
    \(x^{(0)}\in\Omega\).
    Recall the notations and discussions in
    \cref{obs:conditional-product-structure},
    conditional on $\{\mathcal M_s\}_{1\leq s\leq k}$ and
    \(\{p_{i,j}^{(s)}\}_{1\le s\le k,\ \{i,j\}\in\mathcal M_s}\),
    the distribution of the labeled configuration
    at time \(k\) is the product measure
    \[
        \mu_k
        =
        \bigotimes_{\ell=1}^t q_{x_\ell^{(0)}}^{(k)} \enspace,
    \]
    where
    \(
        q_i^{(k)} = e_i\cdot B_k .
    \)
    Moreover, for every \(y\in\Omega\),
    \begin{equation}
        \labeledchain^k(x^{(0)},y)
        =
        \E_{
            \substack{
                \{\mathcal M_s\},\ \{p_{i,j}^{(s)}\}
            }
        }
        [\mu_k(y)]
        \enspace.
        \label{eq:decomposition-distribution}
    \end{equation}
    On the other hand, by
    \cref{lem:labeled-stationary-product-structure},
    \begin{equation}
        \labeleddist
        =
        \E_{
            \substack{\{\mathcal M_s\},\,\{p_{i,j}^{(s)}\},\,u}
        }\Br{ \br{w^{(k)}}^{\otimes t}}
        \enspace,
        \label{eq:decomposition-stationary}
    \end{equation}
    where
    \(
        w^{(k)} = u\cdot B_k
    \)
    and
    \(
        u\sim \Dirichlet(1,\dots,1).
    \)
    
    Combining \cref{eq:decomposition-distribution} and \cref{eq:decomposition-stationary}, we have
    \[
        \norm{\labeledchain^k(x^{(0)},\cdot)-\labeleddist}_{\mathrm{TV}}
        \le
        \E_{
            \substack{\{\mathcal M_s\},\,\{p_{i,j}^{(s)}\},\,u}    
        }
        \Br{\norm{\mu_k-(w^{(k)})^{\otimes t}}_{\mathrm{TV}}} \enspace.
    \]
    Using the tensor product structure, by \cref{fact:tv-tensor} we can further bound the right-hand side by
    \begin{align}
        \norm{\labeledchain^k(x^{(0)},\cdot)-\labeleddist}_{\mathrm{TV}}
        &\le~
        \E_{
            \substack{\{\mathcal M_s\},\,\{p_{i,j}^{(s)}\},\,u}    
        }
        \Br{\sum_{i=1}^d \#_i(x^{(0)}) \cdot \Vert q_i^{(k)}-w^{(k)} \Vert_{\mathrm{TV}}} \nonumber\\
        &=~
        \sum_{i=1}^d \#_i(x^{(0)}) \cdot \E_{
            \substack{\{\mathcal M_s\},\,\{p_{i,j}^{(s)}\},\,u}    
        }
        \Br{\Vert q_i^{(k)}-w^{(k)} \Vert_{\mathrm{TV}}} \nonumber\\
        &\le
        \sum_{i=1}^d \#_i(x^{(0)}) \cdot \frac{\sqrt{d}}{2} \cdot \E_{    
            \substack{\{\mathcal M_s\},\,\{p_{i,j}^{(s)}\},\,u}    
        }
        \Br{\Vert q_i^{(k)}-w^{(k)} \Vert_{2}}  \enspace. \label{eq:tv-to-l2}
    \end{align}
    Since $\sum_{i=1}^d \#_i(x^{(0)}) = t$, to prove the theorem, it suffices to show that for every \(i\),
    \[
        \E_{
            \substack{\{\mathcal M_s\},\ \{p_{i,j}^{(s)}\},\ u}}
        \Br{\Vert q_i^{(k)}-w^{(k)} \Vert_{2}} \leq \br{\frac23}^{k/2} \enspace.
    \]
    For every \(i\in[d]\), we have
    \(
        q_i^{(k)}-w^{(k)}
        =
        (e_i-u)\cdot B_k.
    \)
    Since
    \(
        \sum_{j=1}^d (e_i-u)_j = 1-\sum_{j=1}^d u_j = 0,
    \)
    the vector \(e_i-u\) has sum zero.
    Therefore, by \cref{lem:contraction-under-A},
    \begin{align*}
        \E
        \Br{\Vert q_i^{(k)}-w^{(k)} \Vert_2^2}
        =
        \E\Br{\norm{(e_i-u)\cdot B_k}_2^2}
        &=
        \E_{u,A_1,\dots,A_k}\Br{\norm{(e_i-u)\cdot A_1\cdots A_k}_2^2}\\
        &=
        \br{\frac{2(d-2)}{3(d-1)}}^k
        \E\Br{\norm{e_i-u}_2^2} \enspace.
    \end{align*}
    Note that since \(u_i\sim\btdist{1,d-1}\),
    \[
        \E\Br{\norm{e_i-u}_2^2}
        =
        1-2\E[u_i]+\sum_{j=1}^d \E[u_j^2]
        =
        1-\frac{2}{d}+\frac{2}{d+1}
        \le 1 \enspace.
    \]
    Therefore, 
    \begin{equation}\label{eq:vectors_close}
        \E
        \Br{\Vert q_i^{(k)}-w^{(k)} \Vert_2^2}\leq\br{\frac23}^k.
    \end{equation}
    By Jensen's inequality,
    \[
        \E
        \Br{\Vert q_i^{(k)}-w^{(k)} \Vert_{2}}
        \le
        \br{
            \E\Br{\Vert q_i^{(k)}-w^{(k)} \Vert_2^2}
        }^{1/2}
        \leq
        \br{\frac23}^{k/2} \enspace.
    \]
    Substituting this bound into \cref{eq:tv-to-l2}, we get
    \begin{align*}
        \norm{\labeledchain^k(x^{(0)},\cdot)-\labeleddist}_{\mathrm{TV}}
        &\le~
        \sum_{i=1}^d \#_i(x^{(0)})\cdot \frac{\sqrt d}{2}
        \cdot
        \br{\frac23}^{k/2}  =~ \frac{t\sqrt d}{2} \cdot
        \br{\frac23}^{k/2} \enspace.
    \end{align*}
    This proves the theorem.
\end{proof}



\subsection{Exact Coupling for the Parallel KMP Chain}
\label{sec:parallel-exact-coupling}

    The mixing time of the labeled chain established in \cref{thm:labeled-chain-mixing}
    already yields an \(O(\log d)\) mixing time for the unlabeled chain
    when \(t\) is polynomially bounded in \(d\).
    To obtain a bound uniform in \(t\),
    this subsection develops a coupling method for larger particle numbers,
    which is inspired by the two-stage coupling approach in \cite{PS17,LQSY+24}.

    We will construct a coupling of two parallel KMP chains
    so that their final states agree with high probability.
    The two chains will use the same perfect matchings.
    We start with a single update on a shared pair of vertices,
    with \(B\) particles at that pair in the first chain and \(D\) in the second.
    The following lemma provides two ways to couple their updates.


\begin{lemma}
\label{lem:kmp-integer-split-couplings}
Let \(B,D\in\mathbb N\) be nonnegative integers.
\begin{enumerate}[label=(\roman*)]
    \item 
    Sample \(\xi\sim\Unif[0,1)\), and set
    \(
        Z=\left\lfloor(B+1)\xi\right\rfloor
    \)
    and
    \(
        \widetilde Z=\left\lfloor(D+1)\xi\right\rfloor.
    \)
    Then, we have that
    \(Z\sim\Unif\{0,\ldots,B\}\),
    \(\widetilde Z\sim\Unif\{0,\ldots,D\}\),
    and
    \begin{equation}
        |Z-\widetilde Z|
        ~+~|(B-Z)-(D-\widetilde Z)|
        ~=~|B-D| \enspace.
        \label{eq:kmp-quantile-integer-split}
    \end{equation}

    \item
    Let \(A,C\in\mathbb N\) and \(h,L\geq0\) satisfy
    \(|A-C|\leq h\), \(|B-D|\leq h\), and \(B,D\geq L\).
    There exists a coupling of
    \[
        Z\sim\Unif\{0,\ldots,B\}
        \quad\text{and}\quad
        \widetilde Z\sim\Unif\{0,\ldots,D\}
    \]
    such that \(\Pr[A+Z\neq C+\widetilde Z]\leq 3h/(L+1)\).
\end{enumerate}

\end{lemma}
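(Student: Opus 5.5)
For part (i), the marginals are immediate. If \(\xi\sim\Unif[0,1)\), then \((B+1)\xi\sim\Unif[0,B+1)\), so its floor is uniform on \(\{0,\dots,B\}\); the same holds for \(D\).

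For the identity \eqref{eq:kmp-quantile-integer-split}, assume by symmetry that \(B\le D\). I will show monotonicity: \(Z\le\widetilde Z\) and \(B-Z\le D-\widetilde Z\). Once both hold, the two absolute values sum to \((\widetilde Z-Z)+((D-\widetilde Z)-(B-Z))=D-B\).
\begin{itemize}
\item The first inequality holds because \(\lfloor\cdot\rfloor\) is monotone and \((B+1)\xi\le(D+1)\xi\).
\item The second is \(\widetilde Z-Z\le D-B\). Write \(x=(B+1)\xi\) and \(y=(D+1)\xi\). Then \(y-x=(D-B)\xi<D-B\), so \(\lfloor y\rfloor-\lfloor x\rfloor<(y-x)+1<D-B+1\). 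Since both sides are integers, \(\widetilde Z-Z\le D-B\).
\end{itemize}

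For part (ii), I want a coupling of \(Z\) and \(\widetilde Z\) under which \(A+Z=C+\widetilde Z\), that is, \(\widetilde Z=Z+(A-C)\), as often as possible. Set \(s=A-C\). I will use a maximal coupling of \(\widetilde Z\) with \(Z+s\). Here \(Z+s\) is uniform on \(\{s,\dots,s+B\}\) and \(\widetilde Z\) is uniform on \(\{0,\dots,D\}\). The failure probability of a maximal coupling equals the total variation distance between two uniform distributions on integer intervals. So it suffices to bound \(\|\Unif\{s,\dots,s+B\}-\Unif\{0,\dots,D\}\|_{\mathrm{TV}}\le 3h/(L+1)\).

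I will split this through an intermediate distribution with the triangle inequality:
\[
\|\Unif\{s,\dots,s+B\}-\Unif\{0,\dots,B\}\|_{\mathrm{TV}}+\|\Unif\{0,\dots,B\}-\Unif\{0,\dots,D\}\|_{\mathrm{TV}}.
\]
\begin{itemize}
\item The first term is a shift of a uniform distribution on \(B+1\) points by \(|s|\le h\), so it equals \(\min(1,|s|/(B+1))\le h/(L+1)\).
\item For the second term, take \(B\le D\). Nested uniforms have distance \(1-(B+1)/(D+1)=(D-B)/(D+1)\le h/(L+1)\).
\end{itemize}
Together these give \(2h/(L+1)\le 3h/(L+1)\). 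When \(h\ge L+1\) the bound is trivial anyway.

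The only delicate step is the integer-part bookkeeping in (i), specifically the strict inequality \(y-x<D-B\), which uses \(\xi<1\). Part (ii) is routine once it is phrased as a total variation computation between two uniform distributions.
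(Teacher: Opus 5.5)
Your proof is correct and follows essentially the paper's route. In (i) you use the same floor-monotonicity sandwich bounding $|Z-\widetilde Z|$ by $|B-D|$; in (ii) you use a maximal coupling of two uniforms on integer intervals, and your triangle inequality through $\Unif\{0,\ldots,B\}$ actually gives $2h/(L+1)$ rather than the paper's cruder overlap count $|I\cap J|\ge B+1-3h$. One small slip: $(D-B)\xi<D-B$ is false when $B=D$, but you only need $\le$ there, because $\lfloor y\rfloor-\lfloor x\rfloor<(y-x)+1$ is already strict.
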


\begin{proof}
For each \(z\in\{0,\ldots,B\}\), we have
\(Z=z\) if and only if
\(\xi\in[z/(B+1),(z+1)/(B+1))\), an interval of length
\(1/(B+1)\). Thus \(Z\) is uniform on \(\{0,\ldots,B\}\).
The same argument applies to \(\widetilde Z\).
Assume without loss of generality that \(B\geq D\).
Since \(B-D\) is a nonnegative integer and \(0\leq\xi<1\),
\(
    \lfloor(D+1)\xi\rfloor
    \leq \lfloor(B+1)\xi\rfloor
    \leq \lfloor(D+1)\xi+(B-D)\rfloor
    = \lfloor(D+1)\xi\rfloor+B-D.
\)
Thus \(0\leq Z-\widetilde Z\leq B-D\).
Thus, both \(Z-\widetilde Z\) and
\((B-Z)-(D-\widetilde Z)\) are nonnegative, and hence
\[
    |Z-\widetilde Z|+|(B-Z)-(D-\widetilde Z)|
    =(Z-\widetilde Z)+(B-D)-(Z-\widetilde Z)
    =B-D \enspace.
\]
This proves \cref{eq:kmp-quantile-integer-split}.

We now prove the second claim and assume \(B\geq D\) without loss of generality.
The shifted variables \(A+Z\) and \(C+\widetilde Z\) must be uniform on
\(I=\{A,A+1,\ldots,A+B\}\) and
\(J=\{C,C+1,\ldots,C+D\}\), respectively.
Since \(|A-C|\leq h\) and \(|B-D|\leq h\),
the left endpoints of \(I\) and \(J\) differ by at most \(h\),
and the right endpoints differ by at most \(2h\).
Hence \(|I\cap J|\geq B+1-3h\).
Since \(B\geq D\), the total variation distance between the
uniform distributions on \(I\) and \(J\) is
\[
    1-\frac{|I\cap J|}{B+1}
    \leq \frac{3h}{B+1}
    \leq \frac{3h}{L+1}\enspace.
\]
By maximal coupling, there exists a coupling of
\(A+Z\sim\Unif(I)\) and \(C+\widetilde Z\sim\Unif(J)\) such that
\(\Pr[A+Z\neq C+\widetilde Z] \leq 3h/(L+1)\).
Subtracting \(A\) and \(C\), respectively, gives the required
coupling of \(Z\sim\Unif\{0,\ldots,B\}\) and
\(\widetilde Z\sim\Unif\{0,\ldots,D\}\).
\end{proof}

We next combine these couplings along a fixed sequence of pair updates.
We regard the coordinate indices in \([d]\) as vertices and each updated
pair \(\{i,j\}\) as an undirected edge. A sequence of updates
\(e_1,\ldots,e_N\) thus defines the graph
\(([d],\{e_1,\ldots,e_N\})\), with repeated pairs contributing only one edge.
The following lemma shows that, if this graph is connected
and the initial multiplicity vectors are sufficiently close,
the two chains can be coupled so that their final states are equal with high probability.
We will then apply this lemma to the edge sequence generated
by the random perfect matchings of the parallel KMP chain.

\begin{lemma}
\label{lem:kmp-scheduled-coalescence}
Let \(d\geq3\) and \(t>d^{1000}\).
Fix a sequence of edges \(e_1,\ldots,e_N\) on the vertex set
\([d]\) such that the graph \(([d],\{e_1,\ldots,e_N\})\) is connected.
Let \(X^{(0)}\) and \(\widetilde X^{(0)}\) be random vectors in \(\Delta\) with an arbitrary joint distribution such that \(\widetilde X^{(0)}\sim\Unif(\Delta)\), and let
\(
    \eta
    =
    \Pr [
        \|X^{(0)}-\widetilde X^{(0)}\|_1>td^{-19}
    ].
\)

Then there exists a coupling of two processes
\((X^{(s)})_{s=0}^N\) and \((\widetilde X^{(s)})_{s=0}^N\)
with the given initial joint distribution, each applying the
redistribution rule in \cref{def:discrete-parallel-kmp} to
the edge \(e_s\) at step \(s\), such that
\[
    \Pr\!\left[X^{(N)}\neq\widetilde X^{(N)}\right]
    ~\leq~ \eta+\frac{1}{d^2}\enspace.
\]
\end{lemma}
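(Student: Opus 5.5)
The plan is to build the coupling edge by edge, using a sequence of vertex partitions constructed \emph{backward} from the singleton partition. Set \(\mathcal P_N=\{\{1\},\ldots,\{d\}\}\). For \(s=N,N-1,\ldots,1\), let \(\mathcal P_{s-1}\) be obtained from \(\mathcal P_s\) by merging the blocks containing the two endpoints of \(e_s\), when these blocks differ; otherwise \(\mathcal P_{s-1}=\mathcal P_s\). Going backward, this merges along every edge of the connected graph \(([d],\{e_1,\ldots,e_N\})\), so \(\mathcal P_0=\{[d]\}\). Call \(s\) a \emph{merge step} if \(\mathcal P_{s-1}\neq\mathcal P_s\); there are exactly \(d-1\) merge steps.

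The invariant to maintain in the forward direction is
\[
\mathsf{Inv}_s:\quad \sum_{v\in U}X^{(s)}_v=\sum_{v\in U}\widetilde X^{(s)}_v \quad\text{for every block } U\in\mathcal P_s .
\]
It holds at \(s=0\), since both totals equal \(t\), and \(\mathsf{Inv}_N\) is exactly \(X^{(N)}=\widetilde X^{(N)}\).

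Consider step \(s\) with \(e_s=\{i,j\}\), and suppose \(\mathsf{Inv}_{s-1}\) holds.
\begin{itemize}
\item \emph{Non-merge step.} Here \(i,j\) lie in one block of \(\mathcal P_s=\mathcal P_{s-1}\). The update only moves particles between \(i\) and \(j\), so every block sum is unchanged and \(\mathsf{Inv}_s\) holds automatically. I use the quantile coupling of \cref{lem:kmp-integer-split-couplings}(i). By \cref{eq:kmp-quantile-integer-split} it does not increase \(\|X-\widetilde X\|_1\).
\item \emph{Merge step.} Let \(i\in U\) and \(j\in V\), where \(U,V\in\mathcal P_s\) are the blocks merged into \(U\cup V\in\mathcal P_{s-1}\). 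Since \(U\cup V\) is a block of \(\mathcal P_{s-1}\), \(\mathsf{Inv}_{s-1}\) gives agreement of the sums over \(U\cup V\). All other block sums are untouched, so it suffices to make the sums over \(U\) agree. The new sum over \(U\) is \(A+Z\), where \(A=\sum_{v\in U\setminus\{i\}}X^{(s-1)}_v\), \(B=X^{(s-1)}_i+X^{(s-1)}_j\), and \(Z\sim\Unif\{0,\ldots,B\}\); define \(C,D,\widetilde Z\) in the same way for the second chain. \cref{lem:kmp-integer-split-couplings}(ii) then gives a coupling whose failure probability is at most \(3h/(L+1)\), where \(h\) bounds the current \(\ell_1\) distance and \(L\) lower-bounds \(B\) and \(D\).
\end{itemize}
Failure of the whole construction is thus bounded by \(\eta\), plus the probability that at some merge step \(B\) or \(D\) is small or \(h\) is large, plus the sum of the \(d-1\) terms \(3h/(L+1)\).

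To get \(L\), I use that \(\widetilde X^{(s)}\) stays uniform on \(\Delta\) at every step, because each single-edge update preserves the uniform distribution by the symmetry argument of \cref{lem:discrete-parallel-kmp-stationary}. A coordinate of a uniform element of \(\Delta\) has marginal close to \(t\cdot\btdist{1,d-1}\), so \(\Pr[\widetilde X^{(s-1)}_i<t d^{-5}]\lesssim d^{-4}\). A union bound over the \(d-1\) merge steps, not over all \(N\) steps, gives \(D\geq t d^{-5}\) simultaneously with probability at least \(1-O(d^{-3})\). Then \(B\geq D-h\) gives \(L\approx t d^{-5}\) for the first chain, provided \(h\ll t d^{-5}\). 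The assumption \(t>d^{1000}\) absorbs rounding, the \(+1\)'s, and the discreteness corrections in the Beta approximation.

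The main obstacle is controlling \(h\), the \(\ell_1\) distance, through the merge steps. Non-merge steps cost nothing. On a successful merge coupling, however, \(Z-\widetilde Z=C-A\), so the new \(\ell_1\) mass on \(\{i,j\}\) is at most \(2|A-C|+|B-D|\), and the distance may grow by up to \(2h\). Iterating naively over \(d-1\) merges gives a factor \(3^{d}\), which the initial \(t d^{-19}\) cannot absorb. I would try to avoid this in one of two ways. First, I would bound \(|A-C|\) more finely: since the sums over \(U\cup V\) already agree, \(|A-C|\) is at most the \(\ell_1\) mass on \(U\cup V\) outside \(\{i,j\}\), which suggests an amortized potential in which each vertex's discrepancy is charged only \(O(1)\) times, giving total growth polynomial in \(d\) (at most \(d^{10}\)). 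Second, failing that, I would choose the maximal coupling in (ii) to agree with the monotone quantile coupling off the failure event, so that \(\|X-\widetilde X\|_1\) stays at most \(2h_0\) throughout. In either case I aim for \(h\le t d^{-15}\) on the good event. Then each merge fails with probability \(O(d^{-10})\), and summing over \(d-1\) merges together with the \(O(d^{-3})\) small-\(B\) event yields the final bound \(\eta+d^{-2}\).
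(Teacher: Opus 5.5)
\textbf{Gap: the discrepancy bound through merge steps is missing.} Your architecture matches the paper's proof. It has the backward-built partitions and the block-sum invariant. It uses the quantile coupling of \cref{lem:kmp-integer-split-couplings}(i) on non-merge steps and the maximal coupling (ii) on merge steps. It uses stationarity of $\widetilde X^{(s)}$ to lower-bound $D$, and a union bound over only the $d-1$ merge steps. But the step you flag as the main obstacle, keeping $h$ small through the merges, is the heart of the proof, and you leave it unresolved.

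Your second option does not work. Write $\delta=X-\widetilde X$. On the success event of a merge that splits $S$ into $S_i\ni i$ and $S_j\ni j$, the new discrepancies are forced:
\[
\delta_i^{(s+1)}=-\sum_{z\in S_i\setminus\{i\}}\delta_z^{(s)},\qquad
\delta_j^{(s+1)}=-\sum_{z\in S_j\setminus\{j\}}\delta_z^{(s)}.
\]
This holds however $Z,\widetilde Z$ are coupled. Indeed, if $B=D$ but $A\neq C$, the quantile coupling gives $Z=\widetilde Z$ and can never succeed, so it cannot ``agree'' with a successful merge. Moreover, $\|X-\widetilde X\|_1$ really can exceed $2h_0$. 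Start from $\delta=a$ at $x_1$ and $-a$ at $x_2$. A split with $x_1\in S_i\setminus\{i\}$ and $x_2\in S_j\setminus\{j\}$ produces $\delta_i=-a$ and $\delta_j=a$, doubling the total. Splitting these blocks further in the same way adds $2a$ per split.

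Your first option is only a heuristic. The claim that each vertex's discrepancy is charged $O(1)$ times is also inaccurate: in the example, $x_1$'s discrepancy is copied again at each split of its block.

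\textbf{The missing idea is a per-block bound rather than a global one.} On the event $\mathcal E_s$ that all merges so far have succeeded, every block $S\in\mathcal Q_s$ satisfies $\sum_{z\in S}|\delta^{(s)}_z|\le\|\delta^{(0)}\|_1$. Non-merge steps do not increase a within-block sum, by \cref{eq:kmp-quantile-integer-split}. At a merge, $\mathcal E_s$ gives $\sum_{z\in S}\delta^{(s)}_z=0$, so
\[
\sum_{z\in S_i}\bigl|\delta_z^{(s+1)}\bigr|
=\sum_{z\in S_i\setminus\{i\}}\bigl|\delta_z^{(s)}\bigr|
+\Bigl|\sum_{z\in S\setminus(S_i\setminus\{i\})}\delta_z^{(s)}\Bigr|
\le\sum_{z\in S}\bigl|\delta_z^{(s)}\bigr|,
\]
and likewise for $S_j$. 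Summing over at most $d$ blocks gives $\|X^{(s)}-\widetilde X^{(s)}\|_1\le d\cdot td^{-19}=td^{-18}$ on the good event. This is exactly what the paper uses, with threshold $L=td^{-12}$. With this bound inserted, your own parameters ($L\approx td^{-5}$ and a union bound over the $d-1$ merges) go through. The merge failures then cost only $O(d^{-12})$ in total, the small-$D$ events cost about $2d^{-3}$, and the final bound is $\eta+d^{-2}$.
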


\begin{proof}

We begin by constructing a sequence of partitions
\(\{\mathcal Q_s\}_{s=0}^N\) of \([d]\)
backward in time, starting with the singleton partition
\(
    \mathcal Q_N = \{\{1\},\ldots,\{d\}\}
\).
For \(s\) decreasing from \(N-1\) to \(0\),
if the two vertices of \(e_{s+1}\) lie in different parts of \(\mathcal Q_{s+1}\),
we obtain \(\mathcal Q_s\) by merging those two parts.
Otherwise, set \(\mathcal Q_s=\mathcal Q_{s+1}\).

\vspace{-0.6em}
\paragraph{Constructing the coupling.}

We now construct the coupling forward in time.
Fix \(0\leq s<N\), write \(e_{s+1}=\{i,j\}\) and set
\begin{equation*}
    B_s=X_i^{(s)}+X_j^{(s)}
    \quad
    \text{and}
    \quad
    D_s=\widetilde X_i^{(s)}+\widetilde X_j^{(s)} \enspace.
\end{equation*}
We jointly sample \(Z_s\) and \(\widetilde Z_s\), the numbers
of particles at vertex \(i\) after the update in the two chains,
as follows:
\begin{itemize}
    \item If \(\mathcal Q_s=\mathcal Q_{s+1}\), sample
        \(\xi\sim\Unif[0,1)\) independently
        and set
        \[
            Z_s=\lfloor(B_s+1)\xi\rfloor
            \quad
            \text{and}
            \quad
            \widetilde Z_s=\lfloor(D_s+1)\xi\rfloor \enspace.
        \]
        This is the first coupling in \cref{lem:kmp-integer-split-couplings}.

    \item Otherwise, a set \(S\in\mathcal Q_s\) splits into two sets
        \(S_i,S_j\in\mathcal Q_{s+1}\) containing \(i\) and \(j\),
        respectively.
        Define
        \begin{equation*}
            A_s=\sum_{z\in S_i\setminus\{i\}}X_z^{(s)}
            \quad
            \text{and}
            \quad
            C_s=\sum_{z\in S_i\setminus\{i\}}\widetilde X_z^{(s)} \enspace,
        \end{equation*}
        which are the numbers of particles in \(S_i\setminus\{i\}\) in the two chains.
        Sample
        \[
            Z_s\sim\Unif\{0,\ldots,B_s\}
            \quad\text{and}\quad
            \widetilde Z_s\sim\Unif\{0,\ldots,D_s\}
        \]
        using the second coupling in
        \cref{lem:kmp-integer-split-couplings}, with
        \(A=A_s\), \(C=C_s\), \(B=B_s\), and \(D=D_s\).
\end{itemize}
In both cases, update the two coordinates by
\[
    (X_i^{(s+1)},X_j^{(s+1)}) \gets (Z_s,B_s-Z_s)
    \quad
    \text{and}
    \quad
    (\widetilde X_i^{(s+1)},\widetilde X_j^{(s+1)})
    \gets(\widetilde Z_s,D_s-\widetilde Z_s)\enspace,
\]
and leave all other coordinates unchanged.

\vspace{-0.6em}
\paragraph{Decomposing the event of failure.}
Let
\(
    \mathcal A
    =
    \{
        \|X^{(0)}-\widetilde X^{(0)}\|_1\leq td^{-19}
    \}.
\)
By assumption, \(\Pr[\mathcal A^{\mathsf c}]=\eta\).
Thus,
\begin{equation}
\begin{aligned}
    \Pr\!\left[X^{(N)}\neq\widetilde X^{(N)}\right]
    &~\leq~\eta~+~\Pr\!\left[
        X^{(N)}\neq\widetilde X^{(N)} \wedge \mathcal A
    \right] \enspace.
\end{aligned}
\label{eq:scheduled-initial-failure}
\end{equation}


By assumption, the graph with edges $e_1,\ldots,e_N$ is connected,
so $\mathcal Q_0=\{[d]\}.$
Let \(\mathcal E_s\) be the event that the two chains have
the same number of particles in each part of \(\mathcal Q_s\):
\[
    \mathcal E_s
    =
    \left\{
        \sum_{z\in S}X_z^{(s)}
        =
        \sum_{z\in S}\widetilde X_z^{(s)}
        \text{ for every }S\in\mathcal Q_s
    \right\} \enspace.
\]
Since both chains have \(t\) particles,
\(\mathcal E_0\) always holds.
Since \(\mathcal Q_N\) consists of singletons,
\(\mathcal E_N\) is precisely the event
\(X^{(N)}=\widetilde X^{(N)}\).
By construction, the two vertices of \(e_{s+1}\) belong to the same part of \(\mathcal Q_s\).
Thus, the update preserves the number of particles in each part of \(\mathcal Q_s\) unchanged.
Consequently,
we have
\begin{equation}\label{eq:backwardcontaining}
    \mathcal E_{s+1}\subseteq\mathcal E_s \enspace,
    \quad \text{and}\quad
    \mathcal E_{s+1}=\mathcal E_s
    \quad\text{if} \quad\mathcal Q_{s+1}=\mathcal Q_s \enspace.
\end{equation}


Let \(s_1<\cdots<s_{d-1}\) denote the split times \(s\)
at which \(\mathcal Q_s\neq\mathcal Q_{s+1}\).
There are exactly \(d-1\) such times, since each split increases
the number of parts by one, from one part to \(d\) singletons.
Since \(\mathcal E_0\) always holds, if \(\mathcal E_N\) fails,
there must be a first step at which \(\mathcal E_s\) holds
but \(\mathcal E_{s+1}\) fails.
This can occur only when
\(\mathcal Q_s\neq\mathcal Q_{s+1}\).
Therefore,
\[
\begin{aligned}
    \{X^{(N)}\neq\widetilde X^{(N)}\}
    ~=~\mathcal E_N^{\mathsf c}
    ~=~\bigvee_{s=0}^{N-1}
        (\mathcal E_s\wedge\mathcal E_{s+1}^{\mathsf c})
    ~=~\bigvee_{r=1}^{d-1}
        (\mathcal E_{s_r}\wedge\mathcal E_{s_r+1}^{\mathsf c}) \enspace.
\end{aligned}
\]
Since the events \(\mathcal E_{s_r}\wedge\mathcal E_{s_r+1}^{\mathsf c}\)
are pairwise disjoint, 
\[
    \{X^{(N)}\neq\widetilde X^{(N)}\}\wedge\mathcal A
    =\bigvee_{r=1}^{d-1}
        \left(\mathcal A\wedge\mathcal E_{s_r}
        \wedge\mathcal E_{s_r+1}^{\mathsf c}\right) \enspace.
\]
Let \(\mathcal B_r\) be the event that at least one of the
two vertices of \(e_{s_r+1}\) has fewer than \(td^{-12}\)
particles in \(\widetilde X^{(s_r)}\).
Writing $e_{s_r+1}=\{i_r,j_r\}$, this is
\[
    \mathcal B_r
    =\{\widetilde X_{i_r}^{(s_r)}<td^{-12}\}
     \vee\{\widetilde X_{j_r}^{(s_r)}<td^{-12}\} \enspace.
\]
Using union bound and applying
\cref{eq:scheduled-initial-failure}, we obtain
\begin{align}
    \Pr\!\left[X^{(N)}\neq\widetilde X^{(N)}\right]
    &~\leq~\eta+\sum_{r=1}^{d-1}\Pr\!\left[
        \mathcal A\wedge\mathcal E_{s_r}
        \wedge\mathcal E_{s_r+1}^{\mathsf c}\right] \nonumber\\
    &~=~\eta+\sum_{r=1}^{d-1}\Pr\!\left[
        \mathcal A\wedge\mathcal E_{s_r}
        \wedge\mathcal E_{s_r+1}^{\mathsf c}\wedge\mathcal B_r\right]+\sum_{r=1}^{d-1}\Pr\!\left[
        \mathcal A\wedge\mathcal E_{s_r}
        \wedge\mathcal E_{s_r+1}^{\mathsf c}\wedge\mathcal B_r^{\mathsf c}\right] \nonumber\\
    &~\leq~\eta+\sum_{r=1}^{d-1}\Pr[\mathcal B_r] +\sum_{r=1}^{d-1}\Pr\!\left[
        \mathcal A\wedge\mathcal E_{s_r}
        \wedge\mathcal E_{s_r+1}^{\mathsf c}\wedge\mathcal B_r^{\mathsf c}\right] \enspace.
\label{eq:scheduled-failure-decomposition}
\end{align}
It suffices to give a good bound on \(\Pr[\mathcal B_r]\) and
\(\Pr[\mathcal A\wedge\mathcal E_{s_r}
\wedge\mathcal E_{s_r+1}^{\mathsf c}\wedge\mathcal B_r^{\mathsf c}]\).

\vspace{-0.6em}
\paragraph{Bounding $\Pr[\mathcal B_r]$.}

We first bound the second term in \cref{eq:scheduled-failure-decomposition}. 
Each fixed-edge update preserves \(\Unif(\Delta)\).
Indeed, for a vector uniformly distributed on \(\Delta\),
fixing the coordinates outside \(\{i,j\}\) fixes the
remaining particle total \(D_s\), and each pair
\((0,D_s),(1,D_s-1),\ldots,(D_s,0)\) is equally likely.
The update samples exactly this distribution.
Since \(\widetilde X^{(0)}\) is uniform on \(\Delta\)
and the constructed process follows this update rule,
\(\widetilde X^{(s)}\) is uniform on \(\Delta\)
at every time \(s\).

For \(W\sim\Unif(\Delta)\) and \(z\in\{0,\ldots,t\}\),
we have that
\begin{align*}
    \Pr[W_i<td^{-12}]
    &~\leq~\sum_{z=0}^{\lfloor td^{-12}\rfloor}\Pr[W_i=z]
    ~=~\sum_{z=0}^{\lfloor td^{-12}\rfloor}
    \frac{\binom{t-z+d-2}{d-2}}{\binom{t+d-1}{d-1}}
    ~\leq~ \sum_{z=0}^{\lfloor td^{-12}\rfloor}
    \frac{\binom{t+d-2}{d-2}}{\binom{t+d-1}{d-1}}\\
    &~\leq~(td^{-12}+1) \cdot \frac{d-1}{t+d-1}
    ~\leq~ (td^{-12}+1) \cdot \frac{d}{t}
    ~\leq~ d^{-11}+\frac{d}{t}
    ~\leq~ 2d^{-11}\enspace.
\end{align*}
where the last inequality follows from \(t>d^{1000}\).
Applying a union bound over the two vertices of \(e_{s_r+1}\)
gives
\begin{equation*}
    \Pr[\mathcal B_r]\leq4d^{-11} \enspace.
\end{equation*}

\vspace{-0.6em}
\paragraph{Bounding failure when \(\mathcal B_r\) does not occur.}

We now bound the third term in \cref{eq:scheduled-failure-decomposition}.

To this end, we first bound
$\|X^{(s)}-\widetilde X^{(s)}\|_1$
on the event $\mathcal A\wedge\mathcal E_s$.
We first claim that on \(\mathcal E_s\), every \(S\in\mathcal Q_s\) satisfies
\begin{equation}
    \sum_{z\in S}
    |X_z^{(s)}-\widetilde X_z^{(s)}|
    ~\leq~
    \|X^{(0)}-\widetilde X^{(0)}\|_1 \enspace.
    \label{eq:parallel-coupling-part-discrepancy}
\end{equation}
We prove this by induction on \(s\).
For \(s=0\), the only part of \(\mathcal Q_0\) is \([d]\),
so the claimed inequality holds with equality.
Now suppose the claim holds at times \(0,\ldots,s\), and consider
the update from time \(s\) to \(s+1\).
Suppose that coordinates \(i\) and \(j\) are updated.
\begin{itemize}
    \item If \(\mathcal Q_s=\mathcal Q_{s+1}\),
        \cref{eq:kmp-quantile-integer-split} gives
        \[
            |X_i^{(s+1)}-\widetilde X_i^{(s+1)}|
            +|X_j^{(s+1)}-\widetilde X_j^{(s+1)}|
            =|B_s-D_s|
            \leq
            |X_i^{(s)}-\widetilde X_i^{(s)}|
            +|X_j^{(s)}-\widetilde X_j^{(s)}| \enspace.
        \]
        Since \(i\) and \(j\) lie in the same part,
        the above inequality shows that the discrepancy within this part cannot increase.
        All other parts remain unchanged, so the induction hypothesis gives
        \cref{eq:parallel-coupling-part-discrepancy} at time \(s+1\).

    \item If \(\mathcal Q_s\neq\mathcal Q_{s+1}\),
    then a set \(S\in\mathcal Q_s\) splits into two sets \(S_i,S_j\in\mathcal Q_{s+1}\).
    On \(\mathcal E_{s+1}\), the two chains have the same number of particles in \(S_i\) after the update.
    Since only coordinate \(i\) changes within \(S_i\), we have
    \[
        X_i^{(s+1)}-\widetilde X_i^{(s+1)}
        =-\sum_{z\in S_i\setminus\{i\}}
        \bigl(X_z^{(s)}-\widetilde X_z^{(s)}\bigr)\enspace.
    \]
    Moreover, \(\mathcal E_{s+1}\subseteq\mathcal E_s\) in \cref{eq:backwardcontaining} implies
    \(\sum_{z\in S}(X_z^{(s)}-\widetilde X_z^{(s)})=0\).
    Consequently,
    \begin{align*}
        \sum_{z\in S_i}
        |X_z^{(s+1)}-\widetilde X_z^{(s+1)}|
        &=\sum_{z\in S_i\setminus\{i\}}
        |X_z^{(s)}-\widetilde X_z^{(s)}|
        +\left|\sum_{z\in S_i\setminus\{i\}}
        \bigl(X_z^{(s)}-\widetilde X_z^{(s)}\bigr)\right|\\
        &=\sum_{z\in S_i\setminus\{i\}}
        |X_z^{(s)}-\widetilde X_z^{(s)}|
        +\left|\sum_{z\in S\setminus(S_i\setminus\{i\})}
        \bigl(X_z^{(s)}-\widetilde X_z^{(s)}\bigr)\right|\\
        &\leq\sum_{z\in S}|X_z^{(s)}-\widetilde X_z^{(s)}| \leq\|X^{(0)}-\widetilde X^{(0)}\|_1\enspace,
    \end{align*}
    where the last inequality follows from the induction hypothesis.
    The same argument applies to \(S_j\).
    Since all other parts remain unchanged, this completes the induction.
\end{itemize}
Since \(\mathcal Q_s\) has at most \(d\) parts, summing
\cref{eq:parallel-coupling-part-discrepancy} over all parts gives,
on \(\mathcal A\wedge\mathcal E_s\),
\begin{equation}
    \|X^{(s)}-\widetilde X^{(s)}\|_1
    \leq d\,\|X^{(0)}-\widetilde X^{(0)}\|_1
    \leq td^{-18}\enspace.
    \label{eq:parallel-coupling-global-discrepancy}
\end{equation}

Now fix some split time \(s_r\) in \(\{s_1,\ldots,s_{d-1}\}\)
and we bound
$\Pr [\mathcal A\wedge\mathcal E_{s_r}\wedge\mathcal E_{s_r+1}^{\mathsf c}\wedge\mathcal B_r^{\mathsf c}]$.
Write \(e_{s_r+1}=\{i,j\}\).
Recall the notation \(A_{s_r},B_{s_r},C_{s_r},D_{s_r}\) introduced in the
coupling construction.
On \(\mathcal A\wedge\mathcal E_{s_r}\wedge\mathcal B_r^{\mathsf c}\),
\cref{eq:parallel-coupling-global-discrepancy} gives
\(|A_{s_r}-C_{s_r}|\leq td^{-18}\) and
\(|B_{s_r}-D_{s_r}|\leq td^{-18}\).
Moreover, \(\mathcal B_r^{\mathsf c}\) ensures that both \(i\) and \(j\)
have at least \(td^{-12}\) particles in \(\widetilde X^{(s_r)}\),
so \(D_{s_r}\geq2td^{-12}\) and
\(B_{s_r}\geq D_{s_r}-td^{-18}\geq td^{-12}\).
On \(\mathcal E_{s_r}\), the event \(\mathcal E_{s_r+1}^{\mathsf c}\) is
equivalent to \(A_{s_r}+Z_{s_r}\neq C_{s_r}+\widetilde Z_{s_r}\). Thus
\begin{equation}
\begin{aligned}
    \mathcal A\wedge\mathcal E_{s_r}\wedge\mathcal E_{s_r+1}^{\mathsf c}
        \wedge\mathcal B_r^{\mathsf c} ~=~\mathcal A\wedge\mathcal E_{s_r}\wedge\mathcal B_r^{\mathsf c}
        \wedge\{A_{s_r}+Z_{s_r}\neq C_{s_r}+\widetilde Z_{s_r}\} \enspace.
\end{aligned}
\label{eq:scheduled-good-history-failure}
\end{equation}
%
By \cref{lem:kmp-integer-split-couplings} with \(h=td^{-18}\) and \(L=td^{-12}\),
the probability of \(A_{s_r}+Z_{s_r}\neq C_{s_r}+\widetilde Z_{s_r}\) at the next update is at most
$3d^{-6}$.
Therefore, \cref{eq:scheduled-good-history-failure} gives
\begin{align*}
    \Pr[\mathcal A\wedge\mathcal E_{s_r}
        \wedge\mathcal E_{s_r+1}^{\mathsf c}\wedge\mathcal B_r^{\mathsf c}]
    ~\leq~ 3d^{-6} \enspace.
\end{align*}

\vspace{-0.6em}
\paragraph{Combining the probability estimates.}

Substituting the two bounds into \cref{eq:scheduled-failure-decomposition} gives
\[
\begin{aligned}
    \Pr\!\left[X^{(N)}\neq\widetilde X^{(N)}\right]
    &\leq\eta+\sum_{r=1}^{d-1}\Pr[\mathcal B_r] +\sum_{r=1}^{d-1}\Pr[\mathcal A\wedge\mathcal E_{s_r}\wedge\mathcal E_{s_r+1}^{\mathsf c}\wedge\mathcal B_r^{\mathsf c}]\\
    &\leq\eta+4(d-1)d^{-11}+3(d-1)d^{-6}\\
    &\leq\eta+\frac{1}{d^2} \enspace.
\end{aligned}
\]
This proves the lemma.
\end{proof}

We now apply the preceding lemma to the random matchings of
the unlabeled parallel KMP chain to obtain the following coupling bound.

\begin{lemma}
\label{lem:parallel-kmp-coalescence}
Let \(d\geq4\) be even, \(t>d^{1000}\)
and set \(T=24\log d\).
Let \(R^{(0)}\) and \(\widetilde R^{(0)}\) be random vectors
in \(\Delta\) with an arbitrary joint distribution such that
\(\widetilde R^{(0)}\sim\Unif(\Delta)\), and define
\(\eta=\Pr[\|R^{(0)}-\widetilde R^{(0)}\|_1>td^{-19}]\).
Then there exists a coupling of two unlabeled parallel KMP chains
\((R^{(s)})_{s=0}^T\) and \((\widetilde R^{(s)})_{s=0}^T\)
with the given initial joint distribution, such that
\[
    \Pr\!\left[R^{(T)}\neq\widetilde R^{(T)}\right]
    ~\leq~\eta+\frac{3}{2d^2}\enspace.
\]
\end{lemma}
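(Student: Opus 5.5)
Our plan is to reduce to \cref{lem:kmp-scheduled-coalescence} by sampling the $T$ perfect matchings in advance and serializing them. Both chains use the same matchings $\mathcal M_1,\ldots,\mathcal M_T$, drawn independently of the initial pair $(R^{(0)},\widetilde R^{(0)})$. Within one step, the pairs of a matching are disjoint and are resampled independently. Hence applying the redistribution to the $d/2$ edges of $\mathcal M_s$ one at a time, in a fixed order, produces exactly the one-step law of \cref{def:discrete-parallel-kmp}. Concatenating these orders over $s=1,\ldots,T$ gives an edge sequence $e_1,\ldots,e_N$ with $N=Td/2$, and running the scheduled process on it reproduces $R^{(T)}$ in distribution. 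The intermediate states at times $s\cdot d/2$ are also distributed as the parallel chain, so the coupled processes are genuine coupled copies of the unlabeled parallel KMP chain.

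We condition on the matchings. Let $G$ be the event that the graph $([d],\bigcup_{s\le T}\mathcal M_s)$ is connected. On $G$ we apply \cref{lem:kmp-scheduled-coalescence} (valid since $d\ge4\ge3$ and $t>d^{1000}$) with this fixed connected edge sequence and the given initial joint law. Because the matchings are independent of the initial pair, $\eta$ is unchanged by the conditioning. This yields a coupling with failure probability at most $\eta+d^{-2}$. On $G^{\mathsf c}$ we use any coupling, for instance independent updates, and declare failure. Averaging over the matchings gives
\[
\Pr[R^{(T)}\ne\widetilde R^{(T)}]\le \eta+d^{-2}+\Pr[G^{\mathsf c}],
\]
so it remains to show $\Pr[G^{\mathsf c}]\le \tfrac{1}{2}d^{-2}$.

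The main step is this connectivity estimate for the union of $T=24\log d$ independent uniform perfect matchings. If the union is disconnected, some set $S$ with $1\le|S|=k\le d/2$ has no crossing edge in any matching. For a single uniform matching, the probability that $S$ is closed (matched within itself) is $(k-1)!!\,(d-k-1)!!/(d-1)!!$ when $k$ is even, and zero otherwise. This is at most $\binom{d/2}{k/2}/\binom{d}{k}$, which we will check is at most roughly $\binom{d}{k}^{-1/2}$, or more crudely bounded by $\bigl(\tfrac{k}{d}\bigr)^{k/2}\cdot c^k$ style estimates. A union bound over $S$ then gives
\[
\Pr[G^{\mathsf c}]\le\sum_{k\ \mathrm{even},\,2\le k\le d/2}\binom{d}{k}\Bigl(\binom{d/2}{k/2}\big/\binom{d}{k}\Bigr)^{T}.
\]
The crude bound we aim for is that each per-matching ratio is at most $\max(1/(d-1),\,c^{k})$ for some constant $c<1$, for example $c\le 2^{-k/2}$ for larger $k$, so that the factor raised to the power $T=24\log d$ overwhelms $\binom{d}{k}\le d^k$ and also the number of $k$ values. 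For $k=2$ the term is $\binom d2(d-1)^{-T}$, which is tiny. For larger $k$ we need each ratio to be at most $d^{-\Omega(1)}$ per unit of $k$, or at least some constant power, so that $T$ copies beat $d^k$. This is where care is needed: we will verify $\binom{d/2}{k/2}/\binom dk\le 2^{-k/2}$ for $k\le d/2$. Then $(2^{-k/2})^{24\log d}=d^{-12k\log 2}$, which beats $d^k$ by a margin. Summing over $k$ gives $\Pr[G^{\mathsf c}]\le \tfrac{1}{2}d^{-2}$ for $d\ge4$, with the small-$d$ constants checked directly if needed.
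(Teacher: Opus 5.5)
Your proposal is correct and follows the paper's proof essentially step for step: you serialize the shared matchings into an edge sequence, apply \cref{lem:kmp-scheduled-coalescence} on the connectivity event, and bound the disconnection probability by a union bound over closed sets $S$, with a factor $2^{-|S|T/2}$ per set. Your closure probability $\binom{d/2}{k/2}/\binom{d}{k}$ equals the paper's product $\prod_{j=0}^{k/2-1}\frac{k-2j-1}{d-2j-1}$, and in that form the bound $2^{-k/2}$ you still need to verify is immediate, since each factor is at most $k/d\le 1/2$.
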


\begin{proof}

We first sample \(T\) independent uniform perfect matchings
\(\mathcal M_1,\ldots,\mathcal M_T\) of \([d]\).
Following the order \(\mathcal M_1,\ldots,\mathcal M_T\),
we list the pairs within each matching in a fixed order
to obtain \(e_1,\ldots,e_N\), where \(N=dT/2\).

Let \(\mathcal C\) be the event that the graph
\(([d],\{e_1,\ldots,e_N\})\) is connected.
For each matching sequence satisfying \(\mathcal C\),
we apply \cref{lem:kmp-scheduled-coalescence} to the edge sequence
\(e_1,\ldots,e_N\), with \(X^{(0)}=R^{(0)}\) and
\(\widetilde X^{(0)}=\widetilde R^{(0)}\).
We therefore obtain two coupled
parallel KMP chains by setting
\[
    R^{(k)}=X^{(kd/2)}
    \quad\text{and}\quad
    \widetilde R^{(k)}=\widetilde X^{(kd/2)}
    \quad \text{for}\quad 0\leq k\leq T\enspace.
\]
For matching sequences outside \(\mathcal C\), we use independent
redistributions in the two chains.
%
Therefore, applying \cref{lem:kmp-scheduled-coalescence}, we have
\begin{align*}
    \Pr\!\left[R^{(T)}\neq\widetilde R^{(T)}\right]
    &~\leq~\Pr[\mathcal C^{\mathsf c}]
      ~+~ \Pr\!\left[
           R^{(T)}\neq\widetilde R^{(T)}
           \,\middle|\,\mathcal C
       \right]\notag\\
    &~\leq~\Pr[\mathcal C^{\mathsf c}] ~+~ \eta ~+~ \frac{1}{d^2}\enspace.
\end{align*}



It remains to show that \(\Pr[\mathcal C^{\mathsf c}]\leq1/(2d^2)\).
For a fixed set \(S\subseteq[d]\) with \(1\leq |S|\leq d/2\),
we first bound the probability that a uniform perfect matching
contains no pair joining \(S\) to \([d]\setminus S\).
This probability is zero when \(|S|\) is odd.
When \(|S|\) is even, the probability that all vertices in \(S\)
are matched within \(S\) is
\[
    \prod_{j=0}^{|S|/2-1}
    \frac{|S|-2j-1}{d-2j-1}
    \leq\left(\frac{|S|}{d}\right)^{|S|/2}
    \leq2^{-|S|/2}\enspace.
\]
Since the \(T\) matchings are independent, the probability
that none contains a pair joining \(S\) to its complement
is at most \(2^{-|S|T/2}\).
If the graph is disconnected, some set \(S\) with
\(1\leq|S|\leq d/2\) has no edge to its complement.
Since \(T=24\log d\), we have \(2^{-T/2}=d^{-12}\).
A union bound over all such sets therefore gives
\begin{align*}
    \Pr[\mathcal C^{\mathsf c}]
    ~\leq~\sum_{k=1}^{d/2} \binom{d}{k} \cdot  2^{-kT/2}
    =\sum_{k=1}^{d/2} \binom{d}{k} \cdot  d^{-12k}
    ~\leq~(1+d^{-12})^d-1
    ~\leq~\frac{1}{2d^2}\enspace,
\end{align*}
where the last inequality follows from
\((1+d^{-12})^d-1\leq \e^{d^{-11}}-1\leq2d^{-11}\leq1/(2d^2)\),
using \(1+x\leq \e^x\), \(\e^x-1\leq2x\) for \(0\leq x\leq1\),
and \(d\geq4\).
This completes the proof.
\end{proof}

\subsection{Mixing Time of the Parallel KMP Chain and the Partition Resampling Chain}
\label{sec:parallel-uniform-mixing}

We now use the coupling method developed in
\cref{sec:parallel-exact-coupling} to establish a mixing time bound
independent of \(t\) for both the unlabeled parallel KMP chain
and the partition resampling chain.

\begin{theorem}
\label{thm:partition-resampling-mixing}
There is an absolute constant \(c>0\) such that, for every even
\(d\geq2\), every positive integer \(t\) and \(k\),
\begin{align*}
    \sup_{r\in\Delta}
    \|\liftedchain^k(r,\cdot)-\lifteddist\|_{\mathrm{TV}}
    &~\leq~ d^2\cdot \e^{-ck} \enspace,\\
    \sup_{\lambda\in\partition{t}{d}}
    \|M^k(\lambda,\cdot)-\pi\|_{\mathrm{TV}}
    &~\leq~ d^2\cdot\e^{-ck} \enspace.
\end{align*}
Here, \(\liftedchain\) is the transition matrix of the parallel
KMP chain on \(\Delta\) defined in
\cref{def:discrete-parallel-kmp}, and \(M\) is the transition
matrix of the partition resampling chain on \(\partition{t}{d}\)
defined in \cref{def:partition-resampling-chain}.
Their stationary distributions \(\lifteddist\) and \(\pi\)
are given in \cref{lem:discrete-parallel-kmp-stationary}
and \cref{lem:property-partition-resampling-chain}, respectively.
\end{theorem}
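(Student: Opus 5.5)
The second inequality follows from the first via \cref{lem:partition-dominates-by-lifted}, so the plan is to bound $\delta(k):=\sup_{r}\|\liftedchain^k(r,\cdot)-\lifteddist\|_{\mathrm{TV}}$. We split into two regimes according to the size of $t$, and handle small $d$ (say $d=2$) separately: for $d=2$ one step already reaches stationarity.

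\textbf{Small particle numbers, $t\le d^{1000}$.} Combine \cref{lem:labeled-dominates-lifted} with \cref{thm:labeled-chain-mixing}, choosing any $x$ with multiplicity vector $r$. This gives
\[
\delta(k)\le \tfrac{t\sqrt d}{2}(2/3)^{k/2}\le d^{1001}(2/3)^{k/2}.
\]
This is not yet of the form $d^2\e^{-ck}$. So pick $\ell=C\log d$ with $C$ large, so that $\delta(\ell)\le 1/8$. Then \cref{fact:tv-block-decay} gives $\delta(k)\le 4^{-\lfloor k/\ell\rfloor}$, which decays like $\e^{-ck/\log d}$ — too weak.

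Better: apply the bound directly. When $k\ge C'\log d$, we get $d^{1001}(2/3)^{k/2}\le d^2\e^{-ck}$ with $c$ small, since $(2/3)^{k/4}\le d^{-999}$ absorbs the polynomial factor. When $k< C'\log d$, we have $d^2\e^{-ck}\ge d^{2-cC'}\ge1$ for $c\le 1/C'$, so the claim is trivial. This settles the regime with $c=\min(\tfrac14\log\tfrac32,\,1/C')$.

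\textbf{Large particle numbers, $t>d^{1000}$ (so $d\ge4$).} Fix the initial state $r$ and run a labeled copy $x^{(0)}$ with multiplicity $r$ for $k_0=C_0\log d$ steps. The goal is to couple $R^{(k_0)}$ with $\widetilde R^{(k_0)}\sim\Unif(\Delta)$ such that $\|R^{(k_0)}-\widetilde R^{(k_0)}\|_1\le td^{-19}$ with probability $1-\eta$, where $\eta\le d^{-2}/4$ say.

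Use the conditional product structure with shared randomness $\{\mathcal M_s\},\{p^{(s)}_{i,j}\}$ and $u\sim\Dirichlet(1,\dots,1)$. The real configuration is a product of $q^{(k_0)}_{x_\ell}$, and a stationary configuration is i.i.d.\ from $w^{(k_0)}=uB_{k_0}$. Its projection to counts is $\Unif(\Delta)$, by \cref{lem:labeled-stationary-product-structure} and \cref{lem:labeled-dominates-lifted}'s computation.

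\begin{itemize}
\item By \cref{eq:vectors_close} and Markov's inequality, $\max_i\|q_i^{(k_0)}-w^{(k_0)}\|_1\le \sqrt d\,\|\cdot\|_2$ is at most $d^{-20}$ except with probability $d\cdot d\cdot d^{40}(2/3)^{k_0}$, which is tiny for $C_0$ large.
\item On that event, couple each particle's location under $q_{x_\ell}$ and $w$ maximally, independently across particles. The number of mismatched particles is Binomial-dominated with mean $\le td^{-20}$. Since $t>d^{1000}$, Chernoff gives at most $2td^{-20}$ mismatches except with probability $\e^{-\Omega(td^{-20})}$.
\item The count vectors then differ in $\ell_1$ by at most $4td^{-20}\le td^{-19}$.
\end{itemize}

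Then apply \cref{lem:parallel-kmp-coalescence} over a further $T=24\log d$ steps. Using \cref{lem:coupling_lemma}, this gives $\delta(k_0+T)\le \eta+\tfrac{3}{2d^2}<\tfrac14$. From there, \cref{fact:tv-block-decay} with $\ell=k_0+T=O(\log d)$ yields $\delta(k)\le 2^{-\lfloor k/\ell\rfloor}$, which is again only $\e^{-ck/\log d}$.

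\textbf{Main obstacle: upgrading $\e^{-ck/\log d}$ to $d^2\e^{-ck}$.} To get the stated rate, I would instead make the failure probability itself decay in $k$. Take $k_0=k/2$, so that the Markov-inequality failure is $d^{42}(2/3)^{k/2}$. The coalescence lemma's $d^{-2}$ error also needs to shrink, and it comes from $\Pr[\mathcal B_r]$ and from disconnectivity.

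To improve it, rerun the argument of \cref{lem:kmp-scheduled-coalescence} with thresholds $td^{-12}$ and $td^{-18}$ replaced by $t\e^{-ak}$-type scales. This is permitted when $t$ is large relative to $\e^{O(k)}$; disconnectivity over $k/2$ matchings also decays as $d\,2^{-k/4}$. When $t$ is not that large, i.e.\ $t\le \e^{O(k)}$, fall back on the labeled bound $t\sqrt d(2/3)^{k/2}$, which is then $\le d^2\e^{-ck}$ for suitable constants.

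Balancing these two cases is the delicate bookkeeping step I expect to be hardest. Alternatively, if the final statement permits, one can accept the submultiplicativity argument applied in blocks of length $O(1)$ after an initial $O(\log d)$ burn-in; that is the first thing I would try to check against the constants.
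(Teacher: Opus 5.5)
\textbf{Genuine gap: the final step from one block to the stated rate.} Your small-$t$ regime is fine: absorbing $d^{1001}$ into $d^2\e^{-ck}$ directly works. Your first stage for $t>d^{1000}$ is also valid. You condition on the shared randomness and $u$, maximally couple each particle's law $q^{(k_0)}_{x_\ell}$ with $w^{(k_0)}$, and apply Chernoff. This replaces the paper's Markov-on-conditional-means plus Hoeffding argument, and it suffices because \cref{lem:parallel-kmp-coalescence} accepts an arbitrary initial joint law. The problem is the last step. You conclude that the block argument gives only $\e^{-ck/\log d}$. You then hand the large-$t$ case to a scheme you do not carry out: $k$-dependent thresholds, a case split on $t$ versus $\e^{O(k)}$, and a re-proof of the coalescence lemma. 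As written, this does not establish $d^2\e^{-ck}$. The fallback of blocks of length $O(1)$ after a burn-in also fails, since nothing guarantees $\delta(O(1))<1/2$.

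\textbf{The missing observation.} You discarded the size of your one-block error by rounding it up to ``$<\tfrac14$''. You actually proved
\[
\delta(\ell)\le\eta+\tfrac{3}{2d^2}\le\tfrac{2}{d^2},\qquad \ell=k_0+T=\Theta(\log d),
\]
which is polynomially small in $d$. Plugging $\varepsilon=2/d^2<1/2$ into \cref{fact:tv-block-decay} gives
\[
\delta(k)\le\Bigl(\frac{4}{d^2}\Bigr)^{\lfloor k/\ell\rfloor}\le\frac{d^2}{4}\exp\!\Bigl(-k\,\frac{\log(d^2/4)}{\ell}\Bigr).
\]
For $d\ge4$ we have $\log(d^2/4)\ge\log d$, while $\ell=O(\log d)$. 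So the ratio $\log(d^2/4)/\ell$ is bounded below by an absolute constant $c>0$, and $\delta(k)\le d^2\e^{-ck}$. Intuitively, each block of $\Theta(\log d)$ steps contracts by $d^{-\Theta(1)}$, which is a constant rate per step. This is exactly how the paper finishes, with $\varepsilon=4/d^2$ and $\ell\le6040\log d$. It treats $t\le d^{1000}$ the same way after checking $\delta(\ell)\le d^{-2}$ there. With this one-line fix, your proposal is complete.
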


\begin{proof}
By \cref{lem:partition-dominates-by-lifted},
it suffices to bound the mixing time of the parallel KMP chain.
The case \(d=2\) holds trivially. Assume that \(d\geq4\), and set
\(k_0=6000\log d\), \(T=24\log d\), and \(\ell=k_0+T\).
We first show that
\begin{equation}
    \delta_{\scriptscriptstyle \mathsf{PKMP}}(\ell)
    :=\sup_{r\in\Delta}
      \|\liftedchain^{\ell}(r,\cdot)-\lifteddist\|_{\mathrm{TV}}
    \leq\frac{4}{d^2} \enspace.
    \label{eq:parallel-one-block}
\end{equation}
If \(t\leq d^{1000}\), then
\cref{lem:labeled-dominates-lifted,thm:labeled-chain-mixing} give
\[
    \delta_{\scriptscriptstyle \mathsf{PKMP}}(\ell)
    ~\leq~\delta_{\scriptscriptstyle \mathsf{PKMP}}(k_0)
    ~\leq~\tfrac12 d^{1000+1/2}(2/3)^{k_0/2}
    ~\leq~ d^{-2} \enspace.
\]

Now suppose that \(t>d^{1000}\). Consider two copies of the parallel KMP chain,
starting from \(R^{(0)}=r\) and \(\widetilde R^{(0)}\sim\Unif(\Delta)\), respectively.

For the first \(k_0\) rounds, we construct the coupling by
assigning labels to the particles in each copy and using
the update rule in \cref{def:labeled-parallel-kmp-anc}.
Specifically, we use the same matchings
\(\mathcal M_s\) and split parameters \(p_{i,j}^{(s)}\) in both chains.
Recall from \cref{eq:def-A,eq:def-B} that
\[
    B_{k_0}=A_1\cdots A_{k_0} \enspace,
\]
where \(A_s\) is the single-particle transition matrix
determined by \(\mathcal M_s\) and
\(p_{i,j}^{(s)}\).
Thus, a particle initially at
\(i\) is at \(j\) after \(k_0\) rounds with probability
\(B_{k_0}(i,j)\). 
Therefore, for $j\in[d]$, the expected number of particles
at vertex \(j\) after \(k_0\) rounds in the first chain,
conditioned on the matchings and split parameters, is
\begin{equation}\label{eq:parallel-kmp-conditional-mean}
    \mathbb E[R_j^{(k_0)}]
    ~=~ \sum_{i=1}^d r_i \cdot B_{k_0}(i,j)
    ~=~ (rB_{k_0})_j \enspace,
\end{equation}
where the expectation is conditional on the matchings and split parameters.
By \cref{lem:labeled-stationary-product-structure}, the second
chain can be initialized by sampling
\(u\sim\Dirichlet(1,\ldots,1)\) and then placing each particle
independently according to \(u\).
Conditioned on \(u\), the matchings, and the split parameters,
each particle is at vertex \(j\) after \(k_0\) rounds with
probability
\[
    \sum_{i=1}^d u_i \cdot B_{k_0}(i,j)=(uB_{k_0})_j\enspace.
\]
Summing over its \(t\) particles gives
\begin{equation}\label{eq:parallel-kmp-stationary-conditional-mean}
    \mathbb E[\widetilde R_j^{(k_0)}]
    =t \cdot (uB_{k_0})_j \enspace.
\end{equation}
Define \(a^{(k_0)}:=t^{-1}(rB_{k_0})\) and \(b^{(k_0)}:=uB_{k_0}\).
Then, we have \(\mathbb E[R^{(k_0)}/t]=a^{(k_0)}\) and
\(\mathbb E[\widetilde R^{(k_0)}/t]=b^{(k_0)}\),
where the expectations are conditioned on \(u\), the matchings,
and the split parameters.
Since \(a^{(k_0)}-b^{(k_0)}
=\sum_{i=1}^d(r_i/t)(e_i-u)B_{k_0}\),
where \(r_i/t\geq0\) and \(\sum_{i=1}^d r_i/t=1\),
the convexity of the squared Euclidean norm gives
\begin{align}
    \mathop{\mathbb E}_{\substack{u,\{\mathcal M_s\},\{p_{i,j}^{(s)}\}}}
    \!\left[\|a^{(k_0)}-b^{(k_0)}\|_2^2\right]
    ~\leq~\sum_{i=1}^d\frac{r_i}{t}\cdot
    \mathop{\mathbb E}_{\substack{u,\{\mathcal M_s\},\{p_{i,j}^{(s)}\}}}
    \!\left[\|(e_i-u)B_{k_0}\|_2^2\right]
    ~\leq~\sum_{i=1}^d\frac{r_i}{t}
    \left(\frac{2}{3}\right)^{k_0}
    =\left(\frac{2}{3}\right)^{k_0}\,, \label{eq:parallel-kmp-mean-squared}
\end{align}
where the second inequality follows from \cref{eq:vectors_close}.
By Markov's inequality,
\begin{align*}
    \prob{u,\{\mathcal M_s\},\{p_{i,j}^{(s)}\}}
    {\|a^{(k_0)}-b^{(k_0)}\|_1>\frac{1}{3}d^{-19}}
    &~\leq~ 9d^{38}\cdot
    \expect{u,\{\mathcal M_s\},\{p_{i,j}^{(s)}\}}
    {\|a^{(k_0)}-b^{(k_0)}\|_1^2}\\
    &~\leq~ 9d^{39}\cdot
    \expect{u,\{\mathcal M_s\},\{p_{i,j}^{(s)}\}}
    {\|a^{(k_0)}-b^{(k_0)}\|_2^2}\\
    &~\leq~ 9d^{39}\left(\frac{2}{3}\right)^{k_0}
    ~\leq~\frac{1}{2d^2}\enspace.
\end{align*}
By Hoeffding's inequality, for $j\in[d]$,
\[
\begin{aligned}
    \Pr\!\left[
        \left|\frac{R_j^{(k_0)}}t-a_j^{(k_0)}\right|
        >\frac{1}{3d^{20}}
    \right]
    &=
    \Pr\!\left[
        |R_j^{(k_0)}-t a_j^{(k_0)}|
        >\frac{t}{3d^{20}}
    \right] \leq
    2\exp\!\left(-\frac{2t}{9d^{40}}\right)\leq \frac{1}{d^3} \enspace.
\end{aligned}
\]
If
\(\|R^{(k_0)}/t-a^{(k_0)}\|_1>\frac{1}{3d^{19}}\),
then at least one coordinate \(j\) satisfies
$ | R_j^{(k_0)}/t-a_j^{(k_0)} | > \frac{1}{3d^{20}}$.
A union bound over the \(d\) coordinates thus gives
\[
    \Pr\!\left[
        \norm{R^{(k_0)}/t-a^{(k_0)}}_1>\frac{1}{3d^{19}}
    \right]
    \leq\frac{1}{d^2} \enspace.
\]
The same bound holds for \(\widetilde R^{(k_0)}/t-b^{(k_0)}\).
The triangle inequality and a union bound then give
\begin{equation}
    \Pr\!\left[\|R^{(k_0)}-\widetilde R^{(k_0)}\|_1>t d^{-19}\right]
    \leq\frac{5}{2d^2} \enspace. \label{eq:parallel-kmp-diff}
\end{equation}

We now apply \cref{lem:parallel-kmp-coalescence} with
\((R^{(k_0)},\widetilde R^{(k_0)})\) as the initial pair.
The lemma gives a coupling
for the next \(T\) rounds such that
\[
    \Pr\!\left[
        R^{(k_0+T)}\neq\widetilde R^{(k_0+T)}
    \right]
    \leq\frac{5}{2d^2}+\frac{3}{2d^2}
    =\frac{4}{d^2}\enspace.
\]
By \cref{lem:coupling_lemma}, this proves
\cref{eq:parallel-one-block}.

Applying \cref{fact:tv-block-decay} with
\(\varepsilon=4/d^2<1/2\), we obtain
$\delta_{\scriptscriptstyle\mathsf{PKMP}}(k) \leq\left(\frac{8}{d^2}\right)^{\lfloor k/\ell\rfloor}$.
Using \(\lfloor k/\ell\rfloor\geq k/\ell-1\),
\(\ell\leq6040\log d\), and
\(\log(d^2/8)\geq\frac13\log d\) for \(d\geq4\), we obtain
\[
    \delta_{\scriptscriptstyle\mathsf{PKMP}}(k)
    \leq d^2\cdot\e^{-k\log(d^2/8)/\ell}
    \leq d^2\cdot\e^{-k/18120}\enspace.
\]
Taking \(c=1/18120\) proves the theorem.
\end{proof}

\begin{corollary}
\label{cor:patition-resampling-chain-mixing-time}
For every \(\varepsilon\in(0,1)\), the mixing times of the parallel KMP chain
and the partition resampling chain to total variation distance \(\varepsilon\)
are both
$O\!\left(\log d+\log(1/\varepsilon)\right)$
for all positive integers \(t\).
\end{corollary}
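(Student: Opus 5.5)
This is a direct consequence of \cref{thm:partition-resampling-mixing}: invert the exponential bound to get the step count. Let \(c>0\) be the absolute constant from \cref{thm:partition-resampling-mixing}. For every even \(d\geq2\), every positive integer \(t\), and every \(k\), the theorem gives
\[
    \sup_{r\in\Delta}\|\liftedchain^k(r,\cdot)-\lifteddist\|_{\mathrm{TV}}\leq d^2\e^{-ck}
    \quad\text{and}\quad
    \sup_{\lambda\in\partition{t}{d}}\|M^k(\lambda,\cdot)-\pi\|_{\mathrm{TV}}\leq d^2\e^{-ck}.
\]
The right-hand side does not depend on \(t\). Both chains are irreducible and aperiodic with the stated stationary distributions, by \cref{lem:discrete-parallel-kmp-stationary} and \cref{lem:property-partition-resampling-chain}, so the mixing times are well defined.

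Fix \(\varepsilon\in(0,1)\) and set \(k^\ast=\lceil c^{-1}(2\log d+\log(1/\varepsilon))\rceil\). For every \(k\geq k^\ast\) we have \(d^2\e^{-ck}\leq d^2\e^{-ck^\ast}\leq d^2\cdot d^{-2}\varepsilon=\varepsilon\). Hence \(\delta(k^\ast)\leq\varepsilon\) for both chains, and by definition \(T^{\mathrm{mix}}(\varepsilon)\leq k^\ast=O(\log d+\log(1/\varepsilon))\).

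The degenerate case \(d=2\) is already covered by the theorem, since a single step reaches stationarity there. There is no real obstacle: all of the work, namely the uniformity in \(t\) obtained by combining \cref{thm:labeled-chain-mixing} for \(t\leq d^{1000}\) with the exact coupling of \cref{lem:parallel-kmp-coalescence} for larger \(t\), sits inside \cref{thm:partition-resampling-mixing}. The only point to check is that the constant \(c\) is absolute, so the implied constant in the \(O(\cdot)\) is independent of \(t\), \(d\) and \(\varepsilon\).
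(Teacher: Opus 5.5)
Your proposal is correct and matches the paper. The paper gives no separate proof of this corollary: it states it as an immediate consequence of \cref{thm:partition-resampling-mixing}, obtained by inverting the $t$-independent bound $d^2\e^{-ck}$ with $k=\lceil c^{-1}(2\log d+\log(1/\varepsilon))\rceil$, exactly as you do and in the same way \cref{thm:twirling} is read off from \cref{thm:main}.
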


\subsection{Convergence of the Parallel Kac Channel}
\label{sec:proof-of-main}
In this section, we prove the main theorem stated in \cref{sec:construction}.
We restate the main theorem here for reader's convenience.

\maintheorem*

\begin{proof}
    Fix a state $\sigma$ such that $\tr{(\symsubspaceproj\otimes\id)\cdot\sigma} = 1$.
    We can decompose the state as
    \[
        \sigma =
        \sum_{ r, r' } \ketbratwo{D_r}{D_{r'}} \otimes \sigma_{r,r'} \enspace,
    \]
    where $\ket{D_r}$ is the generalized Dicke state with multiplicity vector $r$,
    and
    \[
    	\sigma_{r,r'} = (\bra{D_r}\otimes \id)\,\sigma\,(\ket{D_{r'}}\otimes \id)
    \]
    are operators on the corresponding system.
    Applying \(\kacchannel{t}\otimes \id\) to \(\sigma\), we obtain
    \[
        (\kacchannel{t}\otimes \id)(\sigma) =
        \sum_{ r, r' } \kacchannel{t}(\ketbratwo{D_r}{D_{r'}}) \otimes \sigma_{r,r'}\enspace.
    \]
    Recall the random phase unitary $F$ in the proof of \cref{lem:block-haar-kills-off-diagonal}.
    As observed in the proof of \cref{lem:block-haar-kills-off-diagonal},
    the channel $\blockhaarchannel{t}$ is invariant if we insert $F$ before or after the channel, and for distinct multiplicity vectors $r$ and $r'$,
    \[
        \blockhaarchannel{t}(\ketbratwo{D_r}{D_{r'}}) = \blockhaarchannel{t}\Br{\E_F[F^{\otimes t}\cdot \ketbratwo{D_r}{D_{r'}}\cdot F^{\otimes t,\dagger}]} = 0\enspace.
    \]
    Therefore, the only nonzero terms in the above sum are those with $r = r'$:
    \begin{equation*}
        (\kacchannel{t}\otimes \id)(\sigma) =
        \sum_{ r } \kacchannel{t}(\ketbratwo{D_r}{D_{r}}) \otimes \sigma_{r,r} \enspace.
    \end{equation*}
    Hence, after applying $\kacchannel{t}$ for another $k$ times,
    \begin{equation}\label{eq:decomp-after-kac-channel-anc}
        (\repkacchannel{t}{k+1}\otimes \id)(\sigma) =
        \sum_{ r } \repkacchannel{t}{k+1}(\ketbratwo{D_r}{D_{r}}) \otimes \sigma_{r,r} \enspace.
    \end{equation}
    As for the Haar twirling channel $\haarchannel{t}$,
    we have that by \cref{eq:haar-averaging-symmetric-subspace},
    \begin{equation}\label{eq:decomp-after-haar-channel-anc}
        (\haarchannel{t}\otimes \id)(\sigma) =
        \sum_{ r } \haarchannel{t}(\ketbratwo{D_r}{D_{r}}) \otimes \sigma_{r,r}
        = \sum_{ r } \rho_\sym \otimes \sigma_{r,r} \enspace,
    \end{equation}
    where $\rho_\sym$ is the maximally mixed state on the symmetric subspace $\symsubspace{t}{d}$.

    Combining \cref{eq:decomp-after-kac-channel-anc} and
    \cref{eq:decomp-after-haar-channel-anc}, we have
    \begin{align}\label{eq:final-bound-anc}
        \norm{(\repkacchannel{t}{k+1}\otimes \id)(\sigma) - (\haarchannel{t}\otimes \id)(\sigma)}_1
        &=~
        \norm{\sum_{ r } \left(\repkacchannel{t}{k+1}(\ketbratwo{D_r}{D_{r}}) - \rho_\sym\right) \otimes \sigma_{r,r}}_1 \nonumber \\
        &\leq~
        \sum_{ r } \norm{\repkacchannel{t}{k+1}(\ketbratwo{D_r}{D_{r}}) - \rho_\sym}_1 \cdot \norm{\sigma_{r,r}}_1 \enspace.
    \end{align}
    By \cref{prop:spectral-gap-implies-channel-mixing} and
    \cref{thm:partition-resampling-mixing}, we have
    \begin{equation} \label{eq:final-bound1-anc}
        \norm{\repkacchannel{t}{k+1}(\ketbratwo{D_r}{D_{r}}) - \rho_\sym}_1 \leq
        2d^2\e^{-ck} \enspace.
    \end{equation}
    Moreover, note that
    \begin{equation}\label{eq:final-bound2-anc}
         \sum_{ r } \norm{\sigma_{r,r}}_1 = \sum_{ r } \tr{\sigma_{r,r}} = \tr{(\symsubspaceproj\otimes \id)\cdot\sigma} = 1 \enspace.
    \end{equation}
    Combining \cref{eq:final-bound-anc,eq:final-bound1-anc,eq:final-bound2-anc} gives the desired bound.
\end{proof} 

\section{KMP on General Graphs}
\label{sec:kmp-general-graph}
In this section, we use the conditional product structure
from the previous sections
to analyze the mixing time of the standard KMP chain on a general graph.
We first give the definition of the KMP chain on a general graph.

\begin{definition}[KMP chain on a general graph]
\label{def:kmp-general}
For positive integers \(t\) and \(d\), let \(G=(V = [d],E)\) be a connected graph on \(d\) vertices.
The KMP chain on \(G\) is the Markov chain on \(\Delta_G\) where
\[
    \Delta_G
    :=
    \set{r\in\N^d:\sum_{x\in V} r_x=t} \enspace.
\]
Given the current state \(r\in \Delta_G\),
the next state is obtained by the following procedure:
\begin{enumerate}
    \item sample a uniformly random edge \(e=\{i,j\}\in E\);
    \item let \(m_{i,j}=r_i+r_j\) and resample \((r_i,r_j)\) uniformly from
    the set
    \[
        \set{(x,y)\in\N^2: x + y = m_{i,j} } \enspace,
    \]
    while keeping all other coordinates unchanged.
\end{enumerate}
We denote its transition matrix by \(\kmpchain^{(t,G)}\) or simply \(\kmpchain\) when the context is clear.
\end{definition}

As before, it is convenient to lift to a labeled version with auxiliary randomness.

\begin{definition}[Labeled KMP chain on a general graph with auxiliary randomness]
\label{def:lkmp-general}
For positive integers \(t\) and \(d\), let \(G=(V = [d],E)\) be a connected graph on \(d\) vertices.
The labeled sequential KMP chain on $G$ with
$t$ particles is a Markov chain on
\(\Omega_G:=[d]^t\).
Given the current configuration \(x=(x_1,\dots,x_t)\in\Omega_G\),
the chain evolves as follows:
\begin{enumerate}
    \item sample a uniformly random edge \(e=\{i,j\}\in E\);
    \item sample \(p\sim\Unif[0,1]\), and for each particle currently located at \(i\) or \(j\),
    independently move it to \(i\) with probability \(p\)
    and to \(j\) with probability \(1-p\);
    \item leave all other particles unchanged.
\end{enumerate}
We denote its transition matrix by \(\lkmpchain^{(t,G)}\) or simply \(\lkmpchain\) when the context is clear.
\end{definition}

It is easy to verify that both the labeled and unlabeled KMP chains are irreducible and aperiodic, and hence have unique stationary distributions.
Specifically, the stationary distribution of the unlabeled KMP chain is the uniform distribution on \(\Delta_G\), which we denote by \(\kmpdist\).
The stationary distribution of the labeled KMP chain is the
$\lkmpdist$, where for each \(x=(x_1,\dots,x_t)\in\Omega_G\),
\[
    \lkmpdist(x)
    =
    \frac{\prod_{i=1}^d \#_i(x)!}{t!\cdot\binom{t+d-1}{d-1}}
    \enspace,
\]
where $\#_i(x)$ denotes the number of $i$ in vector $x$ in
\cref{eq:counting-function}.
 
Let \(T^{\mathrm{mix},(t,G)}_{\scriptscriptstyle \mathsf{KMP}}(\varepsilon)\) denote the mixing time of the KMP chain on \(G\) with \(t\) particles.
We are now ready to state our main result on the mixing time of the KMP chain on general graphs.

\begin{restatable}{theorem}{kmpgeneralgraph}
\label{thm:kmp-mixing-time-general-graph}
    For every \(\varepsilon\in(0,1)\), every positive integer \(t\),
    and every integer \(d\geq 2\),
    let \(G=(V=[d],E)\) be a connected graph on \(d\) vertices.
    Let $\gamma^{(1,G)}_{\scriptscriptstyle \mathsf{KMP}}$
    denote the spectral gap of the single-particle random walk $\kmpchain^{(1,G)}$. Then, it holds that
    \[
    T^{\mathrm{mix},(t,G)}_{\scriptscriptstyle \mathsf{KMP}}(\varepsilon) = O\! \left(\frac{1}{\gamma^{(1,G)}_{\scriptscriptstyle \mathsf{KMP}}}\cdot\br{\log d + \log (1/\varepsilon)} \right).
    \]
\end{restatable}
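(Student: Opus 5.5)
I will mirror the route used for the parallel chain (\cref{thm:partition-resampling-mixing}), replacing the explicit contraction of \cref{lem:contraction-under-A} by a spectral argument. Write $\gamma=\gamma^{(1,G)}_{\scriptscriptstyle \mathsf{KMP}}$ and let $A$ denote the random single-step matrix of the labeled chain of \cref{def:lkmp-general}. Its rows are determined by a uniform edge $\{i,j\}$ and $p\sim\Unif[0,1]$: rows $i$ and $j$ both equal $pe_i+(1-p)e_j$, and every other row is the identity. As in \cref{rmk:transition-matrix-labeled-chain}, $\lkmpchain=\E[A^{\otimes t}]$. Set $B_k=A_1\cdots A_k$. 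The product structure (\cref{obs:conditional-product-structure}) holds verbatim. The Dirichlet representation of \cref{lem:labeled-stationary-product-structure} also holds, because $u\mapsto uA$ preserves $\Dirichlet(1,\dots,1)$ by the same Gamma-splitting argument applied to the single edge.

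\textbf{Key step: $\ell^2$ contraction.} The single-particle analogue of \cref{lem:contraction-under-A} no longer contracts for worst-case $v$, so I will track the second moment matrix instead. For a random mean-zero vector $v=e_i-u$, define $\Sigma_k=\E[(vB_k)^\top(vB_k)]$. Then $\E\|vB_k\|_2^2=\Tr\Sigma_k$. The map $\Sigma\mapsto\E[A^\top\Sigma A]$ is, entrywise, the generator-type action of the two-particle labeled KMP chain $\lkmpchain^{(2,G)}$ (transposed). I will identify $\Sigma_k$ with a signed measure on $[d]^2$ evolving under $\lkmpchain^{(2,G)}$ and orthogonal to constants in the appropriate sense. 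Hence $\Tr\Sigma_k\le \|\Sigma_0\|\,(1-\gamma_2)^k$ in a suitable weighted norm, where $\gamma_2$ is the spectral gap of the two-particle labeled KMP chain.

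\textbf{Main obstacle: comparing $\gamma_2$ with $\gamma$.} I need $\gamma_2\ge c\gamma$ for an absolute constant $c$. I would first try to decompose the two-particle space into symmetric and antisymmetric parts and off-diagonal and diagonal pieces. The relative-position process of two particles behaves like a random walk with an extra sticking rate on the diagonal, and I expect a Dirichlet-form comparison to give this. Concretely, I would bound the two-particle Dirichlet form below by a constant times the sum of the single-particle forms in each coordinate. The cross term arising when both particles sit on the same edge would be controlled by the $2/3$ variance of $p^2+(1-p)^2$. Ratios of $\pi_{\min}$ for two particles are only $\mathrm{poly}(d)$, so any logarithmic loss is absorbed. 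This gives $\E\|(e_i-u)B_k\|_2^2\le \mathrm{poly}(d)\,e^{-c\gamma k}$.

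\textbf{Assembly.} Exactly as in the proof of \cref{thm:labeled-chain-mixing}, the labeled chain has distance $\le t\,\mathrm{poly}(d)e^{-c\gamma k/2}$, and the unlabeled chain inherits this bound by \cref{lem:labeled-dominates-lifted}. That settles the case $t\le d^{1000}$ with $k=O(\gamma^{-1}\log d)$.

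For $t>d^{1000}$ I run the first phase with the same concentration argument, via \cref{eq:parallel-kmp-mean-squared} and Hoeffding. This gives $\|R-\widetilde R\|_1\le td^{-19}$ with probability $1-O(d^{-2})$. I then apply \cref{lem:kmp-scheduled-coalescence} to the random edge sequence of the next $T$ steps. The remaining requirement is that these edges form a connected spanning subgraph with probability $1-d^{-2}$ when $T=O(\gamma^{-1}\log d)$. For this I use that the single-particle walk mixes in $O(\gamma^{-1}\log d)$ steps, and that a cut $S$ is crossed at each step with probability $\ge$ conductance$\,\ge\gamma/2$ by Cheeger. A union bound over cuts needs care, so I would instead run blocks of length $O(\gamma^{-1}\log d)$ and show that each block reduces the number of components with constant probability. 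Finally, \cref{fact:tv-block-decay} boosts the $4/d^2$ bound to $\varepsilon$, giving the claimed $O(\gamma^{-1}(\log d+\log(1/\varepsilon)))$.
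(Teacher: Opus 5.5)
Your architecture matches the paper's. You lift to the labeled chain with its conditional product structure and the $\Dirichlet(1,\dots,1)$ representation of $\lkmpdist$. You reduce $\E\|(e_i-u)B_k\|_2^2$ to a mean-zero signed measure evolving under $\lkmpchain^{(2,G)}$. You bound $t\le d^{1000}$ directly, and for $t>d^{1000}$ you use the two-stage coupling through \cref{lem:kmp-scheduled-coalescence} followed by \cref{fact:tv-block-decay}. Two steps differ from the paper: one is a valid alternative, the other has a gap.

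\textbf{Comparing the two-particle gap with $\gamma$: a different, valid route.} The paper splits $\R^{V\times V}=\mathcal H_{\sym}\oplus\mathcal H_{\asym}$. It shows the antisymmetric block equals $C$, with eigenvalues $\theta_i+\theta_j-1$ where $\theta_i$ are the eigenvalues of $\kmpchain^{(1,G)}$ (\cref{lem:antisymmetric-subspace-rw}). It identifies the symmetric block with $\kmpchain^{(2,G)}$ and compares that chain with $\kmpchain^{(1,G)}$ through moment duality and the external bound \cref{lem:continuous-gap-comparison-kim}, which gives the constant $2/3$.

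Your Dirichlet-form comparison also works and is elementary, though your sketch does not name the mechanism. Here it is.
\begin{itemize}
\item For each edge, $\lkmpchain^{(2,G)}$ and $C=\frac1{|E|}\sum_e C_e\otimes C_e$ resample on the same blocks $\{(a,z),(b,z)\}$, $\{(z,a),(z,b)\}$, $\{a,b\}^2$. They differ only in the law on $\{a,b\}^2$, which is $(2,1,1,2)/6$ versus uniform.
\item $\pi^{(2)}$ is within a factor $2$ of the uniform law on $V^2$, and the two block laws are within a factor $3/2$. Hence $\mathcal E_{\lkmpchain^{(2,G)}}(f,f)\ge\frac{d}{d+1}\mathcal E_C(f,f)$ and $\mathrm{Var}_{\pi^{(2)}}(f)\le\frac{2d}{d+1}\mathrm{Var}_{\mathrm{unif}}(f)$.
\item The same-edge cross term is absorbed by the operator inequality $L_e\otimes L_e\preceq L_e\otimes I+I\otimes L_e$, not by the $2/3$ moment. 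Together with $C_e=I-\frac12 L_e$ this gives $I-C\succeq\frac12\bigl[(I-\kmpchain^{(1,G)})\otimes I+I\otimes(I-\kmpchain^{(1,G)})\bigr]$.
\item Therefore $\mathrm{gap}(C)\ge\gamma/2$ and $\mathrm{gap}(\lkmpchain^{(2,G)})\ge\gamma/4$.
\item Each edge operator is a conditional expectation in $L^2(\pi^{(2)})$, so $\lkmpchain^{(2,G)}\succeq 0$. Hence this gap controls the second largest eigenvalue in absolute value, which is what your $\ell^2$ contraction needs.
\end{itemize}
Your route is self-contained and removes the dependence on the external theorem, at the cost of a worse constant. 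The paper's route gives the exact antisymmetric spectrum and the sharper constant.

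\textbf{The connectivity step, as written, does not give $T=O(\gamma^{-1}\log d)$.} Suppose each block of length $O(\gamma^{-1}\log d)$ only reduces the number of components with constant probability. Going from $d$ components to one then takes $\Theta(d)$ blocks. Even a constant-factor reduction per block would need $\Theta(\log d)$ blocks, i.e.\ $\gamma^{-1}\log^2 d$ steps. In either case you need failure probability $d^{-2}$, not a constant.

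The missing fact is that the per-step probability of crossing a cut is $|\partial S|/|E|=2|S|\,\Phi(S)$, not $\Phi(S)$, so it grows linearly in $|S|$. Concretely, testing the Rayleigh quotient of $\glap_G$ on $\mathbf 1_S-|S|/d$ gives $|\partial S|\ge\lambda_2'(\glap_G)\,|S|(d-|S|)/d$. Hence $|\partial S|/|E|\ge\gamma|S|$ for $|S|\le d/2$. With this, the plain union bound you set aside works: for $T=6\gamma^{-1}\log d$, $\sum_{k\le d/2}\binom dk e^{-\gamma kT}\le(1+d^{-6})^d-1\le 2d^{-5}$. This is exactly the argument in the paper's proof of \cref{lem:kmp-large-particle-one-block}.
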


\begin{remark}
   From the standard relation between mixing time and spectral gap (see, e.g., \cite[Theorem 12.5]{LPW09}), \cref{thm:kmp-mixing-time-general-graph} implies that
   \[
	  T^{\mathrm{mix},(t,G)}_{\scriptscriptstyle \mathsf{KMP}}(\varepsilon)
    =
    O\!\left(
        T^{\mathrm{mix},(1,G)}_{\scriptscriptstyle \mathsf{KMP}}(\varepsilon)
        \cdot
        (\log d+\log (1/\varepsilon))
    \right)
    \enspace.
    \]
\end{remark}

To prove this, the starting point is to
lift the unlabeled KMP chain to a labeled one,
as in the parallel setting.
This allows us to exploit the conditional product structure of the labeled chain.
\begin{lemma}
\label{lem:labeled-dominates-lifted-non-parallel}
Fix a connected graph \(G=(V,E)\) on \(d\) vertices and a positive integer \(t\).
Let \(x\in\Omega_G\), and let
\(
    r=(\#_1(x),\dots,\#_d(x))\in\Delta_G.
\)
Then for every \(k\ge 0\),
\[
    \Vert \kmpchain^k(r,\cdot)-\kmpdist \Vert_{\mathrm{TV}}
    \le
    \Vert \lkmpchain^k(x,\cdot)-\lkmpdist \Vert_{\mathrm{TV}} \enspace.
\]
\end{lemma}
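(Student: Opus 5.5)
The plan is to copy the argument of \cref{lem:labeled-dominates-lifted}, with the general-graph chains in place of the parallel ones. Let \(f:\Omega_G\to\Delta_G\) be the forgetful map \(f(x)=(\#_1(x),\dots,\#_d(x))\). The inequality then follows from \cref{fact:tv-distance-function} once two identities hold:
\[
    \lkmpchain^k(x,\cdot)\,f^{-1}=\kmpchain^k(r,\cdot)
    \qquad\text{and}\qquad
    \lkmpdist\,f^{-1}=\kmpdist \enspace.
\]

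For the first identity, I will show that \(f\) is a lumping of the labeled chain. That is, for every \(x\in\Omega_G\) and \(r'\in\Delta_G\),
\[
    \sum_{y:\,f(y)=r'}\lkmpchain(x,y)=\kmpchain(f(x),r') \enspace,
\]
so the right-hand side depends on \(x\) only through \(f(x)\).

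To check this, condition on the sampled edge \(\{i,j\}\). There are \(m=\#_i(x)+\#_j(x)\) particles on \(\{i,j\}\). Given \(p\), each goes to \(i\) independently with probability \(p\). By the computation in \cref{obs:equiv-rule}, the number \(Q\) sent to \(i\) is uniform on \(\{0,\dots,m\}\). All other coordinates of the multiplicity vector are unchanged. This is exactly the update of \cref{def:kmp-general}, and averaging over the uniform edge gives the one-step identity.

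The \(k\)-step identity then follows by induction on \(k\), exactly as in the proof of \cref{lem:partition-dominates-by-lifted}. Write
\[
    \lkmpchain^{k+1}(x,\cdot)f^{-1}=\sum_y \lkmpchain^k(x,y)\,\lkmpchain(y,\cdot)f^{-1}
    =\sum_y\lkmpchain^k(x,y)\,\kmpchain(f(y),\cdot) \enspace,
\]
then group the \(y\) by \(f(y)\) and apply the induction hypothesis.

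For the stationary measures, the computation is the same as in \cref{lem:labeled-dominates-lifted}. There are \(t!/\prod_i r_i!\) configurations \(x\) with \(f(x)=r\), and each has weight \(\prod_i r_i!/(t!\binom{t+d-1}{d-1})\). So \(\lkmpdist f^{-1}(r)=1/\binom{t+d-1}{d-1}=\kmpdist(r)\), the uniform distribution on \(\Delta_G\).

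An alternative that avoids relying on the asserted formula for \(\lkmpdist\) runs as follows. The pushforward \(\lkmpdist f^{-1}\) is stationary for \(\kmpchain\) by the lumping identity. Since \(\kmpchain\) is irreducible, it must equal the unique stationary distribution \(\kmpdist\).

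I expect no real obstacle. The only point needing care is the lumping step: the one-step law of the projected chain must depend only on \(f(x)\). This holds because the edge choice does not depend on labels, and the split count \(Q\) is uniform whatever the labels are.
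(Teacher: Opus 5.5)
Your proof is correct and follows the paper's route. The paper notes, via \cref{obs:equiv-rule}, that the labeled chain is a lifting of the unlabeled one and reuses the projection argument of \cref{lem:labeled-dominates-lifted}; you carry that out explicitly through the one-step lumping identity, the induction on $k$, and the pushforward $\lkmpdist f^{-1}=\kmpdist$, and your uniqueness-based alternative for that last identity is a useful safeguard, since the paper states the formula for $\lkmpdist$ on general graphs without proof.
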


\begin{proof}
    By \cref{obs:equiv-rule}, we may define an equivalent labeled version.
    This labeled chain is a lifting of the unlabeled chain,
    so the same argument as in \cref{lem:labeled-dominates-lifted} shows that the unlabeled chain mixes at least as fast as the labeled one.
\end{proof}

The remainder of this section is organized as follows.
In \cref{sec:mixing-time-labeled-chain-graph}, we prove that the mixing time
of the labeled KMP chain is controlled by the second largest eigenvalue of the two-particle labeled chain \(\lkmpchain^{(2,G)}\).
In \cref{sec:lambda-2}, we relate this eigenvalue to the second largest eigenvalue of the single-particle unlabeled chain \(\kmpchain^{(1,G)}\).
Finally, \cref{sec:kmp-uniform-in-particles} proves the main theorem in this section.

\subsection{Labeled KMP Mixing via the Two-Particle Chain}
\label{sec:mixing-time-labeled-chain-graph}

The labeled KMP chain has the same conditional product structure
as in \cref{sec:labeled-chain-conditional-structure}.
Fix \(t\in\N\) and a connected graph \(G=(V,E)\) on \(d\) vertices.

\vspace{-0.75em}
\paragraph{Conditional Product Structure}
Let us fix an initial state \(x^{(0)}\in\Omega_G\),
the random edges $\{\{i_s, j_s\} \in E\}_{1\leq s\leq k}$
and the random variables $\{p_{s}\}_{1\leq s\leq k}$ used in the first \(k\) steps of the labeled chain.
Similarly as in \cref{eq:def-A} and \cref{eq:def-B},
for each \(1\le s\le k\), we define the matrix
$A_s$ and $B_s$ as follows:
\begin{equation}\label{eq:def-A-B-non-parallel}
    A_s(i,j) \coloneqq \begin{cases}
        p_s & \text{if } (i,j) = (i_s, i_s) \text{ or } (i,j) = (j_s, i_s) \\
        1-p_s & \text{if } (i,j) = (i_s, j_s) \text{ or } (i,j) = (j_s, j_s) \\
        1 & \text{if } i=j \notin \set{i_s,j_s} \\
        0 & \text{otherwise}
    \end{cases}
    \quad\text{and}\quad
    B_s \coloneqq A_1 \cdot A_2 \cdots A_s \enspace.
\end{equation}
Similar to \cref{obs:conditional-product-structure}
and \cref{lem:labeled-stationary-product-structure},
we have the conditional product structure:
\begin{itemize}
    \item Let $e_i$ be the row vector in $\R^d$ with a $1$ in the $i$-th coordinate and $0$ elsewhere.
    We have that for any $y\in\Omega_G$,
    \begin{equation}\label{eq:conditional-product-structure-non-parallel}
            \lkmpchain^k(x^{(0)},y)
            =
            \E_{\substack{\{\{i_s,j_s\}\},\,\{p_s\}}}
            [\mu_k(y)] \enspace,
    \end{equation}
    where $\mu_k
            \coloneq 
             \bigotimes_{\ell=1}^t q_{x_\ell^{(0)}}^{(k)}$
    and $q_i^{(k)} = e_i\cdot B_k$.
    \item Let $u$ be a random vector drawn from $\Dirichlet(1,\dots,1)$. Then for any $y\in\Omega_G$,
    \begin{equation}
        \label{eq:labeled-stationary-product-structure-non-parallel}
        \lkmpdist(y)
        =
        \E_{
            \{\{i_s,j_s\}\},\,\{p_s\},\,u
        }\Br{
            \prod_{i=1}^t w^{(k)}(y_i)
        } \enspace.
    \end{equation}
    where $w^{(k)} = u\cdot B_k$.
\end{itemize}

\begin{remark}\label{rmk:transition-matrix-labeled-chain-non-parallel}
    Let $A$ be the random row-stochastic matrix defined as in \cref{eq:def-A-B-non-parallel} from a uniformly random edge \(e=\{i,j\}\in E\) and a random variable \(p\sim\Unif[0,1]\).
    Similar to \cref{rmk:transition-matrix-labeled-chain},
    the transition matrix of the labeled KMP chain can be expressed as
    \[
        \lkmpchain^{(t,G)}
        =
        \E_{e,p}[A^{\otimes t}] \enspace.
    \]
\end{remark}

We now give the mixing bound for
the labeled KMP chain on general graphs,
which uses the second largest eigenvalue of the two-particle labeled chain \(\lkmpchain^{(2,G)}\)
as the contraction ratio.

\begin{theorem}
\label{thm:lkmp-mixing-time-general-graph}
For positive integers \(t\), \(d\) and $k$,
let \(G=(V=[d],E)\) be a connected graph on \(d\) vertices.
Let $\{x^{(s)}\}_{s\geq 0} $ be the labeled KMP chain on \(G\) with
transition matrix \(\lkmpchain\) as defined in
\cref{def:lkmp-general}.
Then, for any initial state \(x^{(0)}\in\Omega_G\), we have
\[
    \norm{\lkmpchain^k(x^{(0)},\cdot)-\lkmpdist}_{\mathrm{TV}}
    \le  2 t\cdot d^{5/4} \cdot \lambda_2^{k/2}\enspace,
\]
where \(\lkmpdist\) is the stationary distribution of \(\lkmpchain\),
and \(\lambda_2\) is the second largest eigenvalue in absolute value of the transition matrix \(\lkmpchain^{(2,G)}\).
\end{theorem}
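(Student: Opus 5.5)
The plan is to follow the proof of \cref{thm:labeled-chain-mixing} verbatim up to the point where the single-particle contraction is needed, and to replace \cref{lem:contraction-under-A} by a second-moment argument driven by the two-particle chain.

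\textbf{Step 1 (reduction to a single particle).} I would use \cref{eq:conditional-product-structure-non-parallel}, \cref{eq:labeled-stationary-product-structure-non-parallel}, convexity of total variation, \cref{fact:tv-tensor}, and $\|\cdot\|_{\mathrm{TV}}\le \frac{\sqrt d}{2}\|\cdot\|_2$. Exactly as in \cref{eq:tv-to-l2}, these give
\[
\norm{\lkmpchain^k(x^{(0)},\cdot)-\lkmpdist}_{\mathrm{TV}}\le \sum_i \#_i(x^{(0)})\cdot\frac{\sqrt d}{2}\,\Big(\E\big[\|(e_i-u)B_k\|_2^2\big]\Big)^{1/2}.
\]
It therefore suffices to show $\E\|(e_i-u)B_k\|_2^2\le 16\,d^{3/2}\lambda_2^{k}$ for every $i$, or any bound of order $d^{3/2}\lambda_2^k$.

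\textbf{Step 2 (the covariance evolves by $\lkmpchain^{(2,G)}$).} Put $v=e_i-u$ and $z_s:=\E[(vB_s)^{\otimes 2}]\in\R^{d\times d}$, viewed as a row vector on $\Omega_G^{(2)}=[d]^2$. Since $A_{s+1}$ is independent of $(u,A_1,\dots,A_s)$, we have $z_{s+1}=z_s\,\E[A\otimes A]$. By \cref{rmk:transition-matrix-labeled-chain-non-parallel}, $\E[A\otimes A]=\lkmpchain^{(2,G)}$, so $z_k=z_0(\lkmpchain^{(2,G)})^k$ with $z_0=\E[v^{\otimes 2}]$. Two facts then finish this step. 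First, $\|vB_k\|_2^2=\sum_a (vB_k)_a^2$ is the diagonal part of $(vB_k)^{\otimes2}$, so $\E\|vB_k\|_2^2=\sum_a z_k(a,a)$. Second, $\sum_{a,b}z_0(a,b)=\E[(\sum_a v_a)^2]=0$. Hence $z_0$ is orthogonal to the all-ones right eigenvector, that is, it has no component along the top eigenspace.

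\textbf{Step 3 (spectral contraction).} The chain $\lkmpchain^{(2,G)}$ is reversible with respect to $\pi_2:=\lkmpdist^{(2)}$; detailed balance is checked edge by edge as in \cref{lem:labeled-chain-stationary}. So I would work in the weighted space with norm $\|z\|_{1/\pi_2}^2=\sum_y z(y)^2/\pi_2(y)$. Since $z_0\perp \mathbf 1$, we get $\|z_k\|_{1/\pi_2}\le \lambda_2^k\|z_0\|_{1/\pi_2}$, with $\lambda_2$ the second largest eigenvalue in absolute value. For $t=2$ we have $\pi_2(y)\in[\frac{1}{d(d+1)},\frac{2}{d(d+1)}]$, so the weighted and unweighted norms differ by factors of order $d$. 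Using Cauchy--Schwarz on the diagonal gives
\[
\sum_a z_k(a,a)\le \sqrt d\,\|z_k\|_2\le \sqrt d\,\sqrt{\max_y\pi_2(y)}\,\lambda_2^k\,\|z_0\|_{1/\pi_2}\le \sqrt d\,\sqrt{\tfrac{\max_y\pi_2(y)}{\min_y\pi_2(y)}}\,\lambda_2^k\|z_0\|_2 .
\]
Since $\|z_0\|_2\le \E\|v\|_2^2\le 1$ by the computation in \cref{thm:labeled-chain-mixing}, this yields $\E\|vB_k\|_2^2\le \sqrt{2d}\,\lambda_2^k$. Substituting into Step 1 gives a bound of the form $t\,d^{3/4}\lambda_2^{k/2}$, which is within the stated $2t\,d^{5/4}\lambda_2^{k/2}$ with room to spare for looser constants.

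\textbf{Main obstacle.} The delicate point is Step 2. One must justify that averaging $(vA)^{\otimes 2}$ reproduces exactly the two-particle labeled kernel, including the diagonal entries where both particles sit on the same site and move together with the same $p$. This follows from \cref{rmk:transition-matrix-labeled-chain-non-parallel}, since $A\otimes A$ under a shared $(e,p)$ is exactly the conditional two-particle transition. One must also confirm that the zero-sum condition on $z_0$ kills precisely the eigenvalue-$1$ component in the reversible inner product. Everything else is bookkeeping of powers of $d$.
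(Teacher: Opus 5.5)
Your argument is correct, and it shares the paper's skeleton. In both, the conditional product structure reduces everything to $\E\norm{(e_i-u)B_k}_2^2$. Your step $\sum_a z_k(a,a)\le\sqrt d\,\norm{z_k}_2$, with $z_k=\E[(e_i-u)^{\otimes 2}]\,(\lkmpchain^{(2,G)})^k$, is exactly the paper's trace-versus-Frobenius step. Where you differ is the contraction step.

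The paper writes $(e_i-u)^{\otimes 2}=(e_i\otimes e_i-\pi^{(2)})-(e_i\otimes u-\pi^{(2)})-(u\otimes e_i-\pi^{(2)})+(u\otimes u-\pi^{(2)})$. It bounds each evolved term using $\norm{\cdot}_2\le 2\norm{\cdot}_{\mathrm{TV}}$, then invokes the black-box reversible bound $\delta(k)\le\lambda_2^k/(2\sqrt{\pi_{\min}})$ with $\pi^{(2)}_{\min}=1/(d(d+1))$. This gives $\E\norm{(e_i-u)B_k}_2^2\le 4d\sqrt{d+1}\,\lambda_2^k$, hence $t\,d(d+1)^{1/4}\lambda_2^{k/2}\le 2t\,d^{5/4}\lambda_2^{k/2}$.

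You instead contract the zero-sum signed measure $z_0$ directly in $\ell^2(1/\pi^{(2)})$. There, right multiplication by $\lkmpchain^{(2,G)}$ is self-adjoint, and $\sum_y z_0(y)=0$ is precisely orthogonality to $\pi^{(2)}$. You also use that $\pi^{(2)}$ is uniform up to a factor $2$, so moving between weighted and unweighted norms costs only $\sqrt 2$. Combined with $\norm{z_0}_2\le\E\norm{e_i-u}_2^2\le 1$, this gives $\E\norm{(e_i-u)B_k}_2^2\le\sqrt{2d}\,\lambda_2^k$. The final bound is $2^{-3/4}t\,d^{3/4}\lambda_2^{k/2}$, better by a factor of order $d^{1/2}$.

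The paper's loss comes from its detour through total variation. It pays $\pi_{\min}^{-1/2}\approx d$ in each of the four terms. Because it returns to $\ell^2$ via $\norm{\cdot}_2\le 2\norm{\cdot}_{\mathrm{TV}}$ rather than via $\sqrt{\max\pi^{(2)}}\approx 1/d$, it never recovers that factor.

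What the paper's route buys is that it only quotes the standard mixing fact, with no weighted inner product to set up. The polynomial prefactor is absorbed into the $O(\gamma^{-1}\log d)$ term in \cref{thm:kmp-mixing-time-general-graph}, so your improvement changes no downstream result.

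Both routes rest on the same inputs: $\E[A\otimes A]=\lkmpchain^{(2,G)}$ with $A_{s+1}$ independent of $(u,A_1,\dots,A_s)$, and reversibility and irreducibility of $\lkmpchain^{(2,G)}$ with respect to $\pi^{(2)}$.
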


\begin{proof}
    The case $d=2$ is trivial, so we assume $d\geq 3$.
    Fix an initial state $x^{(0)}\in\Omega_G$.
    Recall the conditional product structure of the labeled KMP chain in
    \cref{eq:conditional-product-structure-non-parallel,eq:labeled-stationary-product-structure-non-parallel}.
    Using the same argument as in the proof of
    \cref{thm:labeled-chain-mixing}, we have that
    \begin{align}\label{eq:tv-bound-kmp-general-graph}
        \norm{\lkmpchain^k(x^{(0)},\cdot)-\lkmpdist}_{\mathrm{TV}}
        &\le
        \sum_{i=1}^d \#_i(x^{(0)}) \cdot \E_{\{\{i_s,j_s\}\},\,\{p_s\},\,u}
        \Br{\Vert q_i^{(k)}-w^{(k)} \Vert_{\mathrm{TV}}} \nonumber\\
        &\le
        \sum_{i=1}^d \#_i(x^{(0)}) \cdot \frac{\sqrt{d}}{2} \cdot
        \E_{\{\{i_s,j_s\}\},\,\{p_s\},\,u}
        \Br{\Vert q_i^{(k)}-w^{(k)} \Vert_{2}} \nonumber\\
        &=
        \sum_{i=1}^d \#_i(x^{(0)}) \cdot \frac{\sqrt{d}}{2} \cdot
        \E_{\{\{i_s,j_s\}\},\,\{p_s\},\,u}
        \Br{\norm{(e_i-u)\cdot B_k}_2} \enspace.
    \end{align}
    We now introduce the following linear map from
    \(d\times d\) matrices to length-\(d^2\) row vectors:
    \[
        \mathrm{vec}(e_i^{\intercal}e_j)=e_i\otimes e_j
        \qquad\text{for all } i,j\in[d] \enspace.
    \]
    Then, for any $d\times d$ matrices $X$, $M$ and $N$, we have
    \begin{align}
        \mathrm{vec}( M X N )
        = \mathrm{vec}(X) (M^{\intercal} \otimes N) \enspace.
    \end{align}
    And it is evident that $ \norm{X}_2 = \norm{ \mathrm{vec}(X) }_2 $.
    Using this notation, we have
    \begin{align*}
        \E_{\{\{i_s,j_s\}\},\,\{p_s\},\,u}
        \Br{\norm{(e_i-u)\cdot B_k}_2^2}
        &=
        \tr{ \E_{\{\{i_s,j_s\}\},\,\{p_s\},\,u}
        \Br{B_k^{\intercal}(e_i-u)^{\intercal}(e_i-u)B_k} } \\
        &\leq
        \sqrt{d}\cdot
        \norm{ \E_{\{\{i_s,j_s\}\},\,\{p_s\},\,u}
        \Br{B_k^{\intercal}(e_i-u)^{\intercal}(e_i-u)B_k} }_2\\
        &=\sqrt{d}\cdot
        \norm{ \E_{\{\{i_s,j_s\}\},\,\{p_s\},\,u}
        \Br{\mathrm{vec}\br{B_k^{\intercal}(e_i-u)^{\intercal}(e_i-u)B_k}} }_2 \\
        &=\sqrt{d}\cdot
        \norm{ \E_{\{\{i_s,j_s\}\},\,\{p_s\},\,u}
        \Br{\br{(e_i-u)\otimes(e_i-u)}\br{B_k\otimes B_k}}}_2
    \end{align*} 
Let $\pi^{(2)}$ be the stationary distribution of $\lkmpchain^{(2,G)}=\E_{e,p}[A\otimes A]$ (see \cref{rmk:transition-matrix-labeled-chain-non-parallel}).
Expanding
\[
    (e_i-u)\otimes(e_i-u)
    =
    e_i\otimes e_i
    - e_i\otimes u
    - u\otimes e_i
    + u\otimes u \enspace,
\]
and inserting \(\pi^{(2)}\) into each term, we obtain
\begin{align*}
    &\E_{\{\{i_s,j_s\}\},\,\{p_s\},\,u}
    \Br{\br{(e_i-u)\otimes(e_i-u)}(B_k\otimes B_k)} \\
    ={}&
    \E_{\{\{i_s,j_s\}\},\,\{p_s\}}
    \Br{\br{e_i\otimes e_i-\pi^{(2)}}(B_k\otimes B_k)}
    - \E_{\{\{i_s,j_s\}\},\,\{p_s\},\,u}
    \Br{\br{e_i\otimes u-\pi^{(2)}}(B_k\otimes B_k)} \\
    &\quad
    - \E_{\{\{i_s,j_s\}\},\,\{p_s\},\,u}
    \Br{\br{u\otimes e_i-\pi^{(2)}}(B_k\otimes B_k)} 
    + \E_{\{\{i_s,j_s\}\},\,\{p_s\},\,u}
    \Br{\br{u\otimes u-\pi^{(2)}}(B_k\otimes B_k)}\,.
\end{align*}
Applying the triangle inequality and then the bound
\(\norm{\cdot}_2 \le 2\norm{\cdot}_{\mathrm{TV}}\), we get
        \begin{align}
            \E_{\{\{i_s,j_s\}\},\,\{p_s\},\,u}
        \Br{\norm{(e_i-u)\cdot B_k}_2^2}\leq~&2\sqrt{d}\cdot
        \norm{\br{e_i\otimes e_i-\pi^{(2)}}\cdot\E_{\{\{i_s,j_s\}\},\,\{p_s\}}
        \Br{B_k\otimes B_k}}_{\mathrm{TV}}\nonumber\\
        &+2\sqrt{d}\cdot \E_{u} \Br{
        \norm{\br{e_i\otimes u-\pi^{(2)}}\cdot \E_{\{\{i_s,j_s\}\},\,\{p_s\}}
        \Br{B_k\otimes B_k}}_{\mathrm{TV}}}\nonumber\\
        &+2\sqrt{d}\cdot \E_{u} \Br{
        \norm{ \br{u\otimes e_i-\pi^{(2)}}\cdot \E_{\{\{i_s,j_s\}\},\,\{p_s\}}
        \Br{B_k\otimes B_k}}_{\mathrm{TV}}}\nonumber\\
        &+2\sqrt{d}\cdot \E_{u} \Br{
        \norm{ \br{u\otimes u-\pi^{(2)}}\cdot \E_{\{\{i_s,j_s\}\},\,\{p_s\}}
        \Br{B_k\otimes B_k}}_{\mathrm{TV}}} \nonumber
        \enspace.
        \end{align}
        Since
        $
            \E[ B_k\otimes B_k ] = \E_{e,p}[A\otimes A]^k = (\lkmpchain^{(2,G)})^k,
        $
        each of the four terms on the right-hand side is the total variation distance at time \(k\) from the stationary distribution \(\pi^{(2)}\) for the two-particle labeled chain, started respectively from \(e_i\otimes e_i\), \(e_i\otimes u\), \(u\otimes e_i\), and \(u\otimes u\).
          By convexity of total variation distance, the bound in \cref{fact:mixing_time} for point-mass initial states extends to arbitrary initial distributions. 
Therefore, by \cref{fact:mixing_time} with $\pi^{(2)}_{\mathrm{min}} = \frac{1}{d(d+1)}$, we have
    \begin{align}
    \E_{\{\{i_s,j_s\}\},\,\{p_s\},\,u}
    \Br{\norm{(e_i-u)\cdot B_k}_2^2}
    \le
    4\cdot d\sqrt{d+1}\cdot \lambda_2^k \enspace.
    \label{eq:contraction-under-B-kmp-general-graph}
\end{align}
    Combining \cref{eq:tv-bound-kmp-general-graph,eq:contraction-under-B-kmp-general-graph} and using the Jensen's inequality, we have
    \[
    	\norm{\lkmpchain^k(x^{(0)},\cdot)-\lkmpdist}_{\mathrm{TV}}
        \leq  t\cdot d(d+1)^{1/4} \cdot \lambda_2^{k/2} 
        \leq 2 t\cdot d^{5/4} \cdot \lambda_2^{k/2}\enspace.
    \]
    This proves the theorem.
\end{proof}

\subsection{Spectral Reduction to the Single-Particle KMP Chain}
\label{sec:lambda-2}

In this subsection, we relate the spectrum of \(\lkmpchain^{(2,G)}\)
to that of two simpler chains.
The key observation is that the function space \(\R^{V\times V}\) admits a decomposition into symmetric and antisymmetric subspaces.
On the symmetric subspace, the labeled chain reduces to the unlabeled KMP chain with two particles.
On the antisymmetric subspace, it is governed by
a Markov chain induced by the single-particle random walk on graphs.
This decomposition allows us to compare the two-particle labeled spectrum with the unlabeled spectrum.
Let \(t=2\) and fix a graph $G$ throughout this subsection.

Let $\R^{V\times V} \coloneqq \{ f: V\times V\to\R \}$
be the space of real-valued functions on \(V\times V\).
This space admits a natural direct sum decomposition
into \emph{symmetric} and \emph{antisymmetric subspaces}.
The symmetric subspace consists of functions that remain
invariant after swapping the two variables, denoted by
\[
    \mathcal H_{\sym}:=\set{f \in \R^{V\times V}: f(y,z)=f(z,y)\text{ for all }y,z\in V} \enspace,
\]
while the antisymmetric subspace consists of functions that change
sign after swapping the two variables, denoted by
\[
    \mathcal H_{\asym}:=\set{f \in \R^{V\times V}: f(y,z)=-f(z,y)\text{ for all }y,z\in V} \enspace.
\]
Therefore,
\(
\R^{V\times V}=\mathcal H_{\sym}\oplus \mathcal H_{\asym}
\),
since every function \(f\) can be uniquely decomposed as
\[
    f(y,z)
    =
    \frac{f(y,z)+f(z,y)}{2}+\frac{f(y,z)-f(z,y)}{2}\enspace,
\]
where the first term lies in \(\mathcal H_{\sym}\), and the second lies in \(\mathcal H_{\asym}\).


To analyze the spectrum of \(\lkmpchain^{(2,G)}\), we begin by describing its action along a fixed edge.
Let \(e=\{a,b\}\in E\) be a fixed edge, and let \(A_{e,p}\)
denote the one-step transition matrix
conditioned on choosing the edge \(e\) and the parameter \(p\in[0,1]\), as in \cref{eq:def-A-B-non-parallel}:
\begin{equation}\label{eq:def-A-e-p-non-parallel}
    A_{e,p}(i,j) \coloneqq \begin{cases}
        p & \text{if } (i,j) = (a,a) \text{ or } (i,j) = (b,a) \\
        1-p & \text{if } (i,j) = (a,b) \text{ or } (i,j) = (b,b) \\
        1 & \text{if } i=j \notin \set{a,b} \\
        0 & \text{otherwise}
    \end{cases} \enspace.
\end{equation}
After identifying a function \(f:V\times V\to\R\) with the corresponding column vector
indexed by \(V\times V\), the conditional one-step action on two labeled particles
is given by \(A_{e,p}\otimes A_{e,p}\).
Thus,
from \cref{rmk:transition-matrix-labeled-chain-non-parallel}, the transition matrix of the two-particle labeled KMP chain is given by
\begin{equation}\label{eq:lkmpchain-two-particles}
    \lkmpchain^{(2,G)}
    =
    \E_{e,p}\Br{A_{e,p} \otimes A_{e,p} }
    =
    \E_e\Br{\E_p\Br{A_{e,p} \otimes A_{e,p}}} \enspace,
\end{equation}
where $e$ is a uniformly random edge and $p\sim\Unif[0,1]$ is independent of $e$.
We consider the action of \(A_e:=\E_p\Br{A_{e,p} \otimes A_{e,p}}\)
for a fixed edge \(e=\{a,b\}\).

The next two lemmas describe the action of \(A_e\) and its consequence for the symmetric and antisymmetric decomposition of \(\R^{V\times V}\).
We defer their proofs to \app{missing-proofs}.

\begin{lemma}\label{lem:action-of-Ae}
For every \(f:V\times V\to\R\) and every \((y,z)\in V\times V\), we have
\[
    (A_ef)(y,z)
    =
    \begin{cases}
        f(y,z), & \text{if } y,z\notin\set{a,b},\\[0.6em]
        \dfrac{f(a,z)+f(b,z)}{2}, & \text{if } y\in\set{a,b}\text{ and } z\notin\set{a,b},\\[1em]
        \dfrac{f(y,a)+f(y,b)}{2}, & \text{if } y\notin\set{a,b}\text{ and } z\in\set{a,b},\\[1em]
        \dfrac{1}{3}f(a,a)+\dfrac{1}{6}f(a,b)+\dfrac{1}{6}f(b,a)+\dfrac{1}{3}f(b,b),
        & \text{if } y,z\in\set{a,b}.
    \end{cases}
\]
\end{lemma}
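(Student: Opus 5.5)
The plan is a direct computation: write $A_e f$ entrywise, use the product form $A_{e,p}\otimes A_{e,p}$, and integrate over $p$.

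\textbf{Setting up the entrywise formula.} Identify $f$ with a column vector indexed by $V\times V$. Then
\[
\bigl((A_{e,p}\otimes A_{e,p})f\bigr)(y,z)=\sum_{y',z'\in V}A_{e,p}(y,y')\,A_{e,p}(z,z')\,f(y',z').
\]
From \cref{eq:def-A-e-p-non-parallel}, each row of $A_{e,p}$ has a simple form. For $y\notin\{a,b\}$, row $y$ is the point mass $e_y$. For $y\in\{a,b\}$, row $y$ equals $p\,e_a+(1-p)\,e_b$, whichever of $a,b$ the index $y$ is. So the tensor product factorizes, and $A_e f(y,z)$ is $\E_p$ of the product of these two row distributions applied to $f$.

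\textbf{The four cases.}
\begin{enumerate}
\item If $y,z\notin\{a,b\}$, both rows are point masses, so $(A_ef)(y,z)=f(y,z)$ for every $p$.
\item If $y\in\{a,b\}$ and $z\notin\{a,b\}$, we get $p\,f(a,z)+(1-p)\,f(b,z)$. Averaging with $\E[p]=1/2$ gives $\tfrac12\bigl(f(a,z)+f(b,z)\bigr)$.
\item The case $y\notin\{a,b\}$, $z\in\{a,b\}$ is symmetric to the previous one.
\item If $y,z\in\{a,b\}$, both rows equal $p\,e_a+(1-p)\,e_b$, and the two factors use the \emph{same} $p$. The coefficients are $p^2$ on $f(a,a)$, $p(1-p)$ on each of $f(a,b)$ and $f(b,a)$, and $(1-p)^2$ on $f(b,b)$. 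Using $\E[p^2]=\E[(1-p)^2]=1/3$ and $\E[p(1-p)]=1/2-1/3=1/6$ gives the stated weights $1/3,1/6,1/6,1/3$.
\end{enumerate}

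\textbf{Where care is needed.} There is no real obstacle; the only subtle point is the last case. Because the two particles share the same $p$, the coefficients come from moments of a single $\Unif[0,1]$ variable, not from a product of independent averages. That is why the weights are $1/3,1/6,1/6,1/3$ rather than $1/4$ each.

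We should also check the convention that $A_{e,p}\otimes A_{e,p}$ acts on column vectors through the row indices $(y,z)$. This is consistent with \cref{rmk:transition-matrix-labeled-chain-non-parallel}, where the transition matrix acts on functions by $Mf(x)=\sum_{x'}M(x,x')f(x')$.
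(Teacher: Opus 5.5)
Your proposal is correct and follows essentially the same route as the paper: you compute $(A_{e,p}\otimes A_{e,p})f(y,z)$ case by case for fixed $p$, then average using $\E[p]=1/2$, $\E[p^2]=\E[(1-p)^2]=1/3$ and $\E[p(1-p)]=1/6$. Your row description of $A_{e,p}$ and your remark that both particles share the same $p$ are exactly the points the paper's argument relies on.
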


\begin{lemma}
\label{lem:lkmp-commutes-swap}
Both \(\mathcal H_{\sym}\) and \(\mathcal H_{\asym}\)
are invariant subspaces of \(\lkmpchain^{(2,G)}\).
\end{lemma}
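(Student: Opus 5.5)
The plan is to show that the swap operator commutes with $\lkmpchain^{(2,G)}$; invariance of both subspaces then follows at once. Let $\Sigma$ denote the swap acting on functions, $(\Sigma f)(y,z) = f(z,y)$. Then $\mathcal H_{\sym}$ and $\mathcal H_{\asym}$ are exactly the $+1$ and $-1$ eigenspaces of $\Sigma$. Any linear operator commuting with $\Sigma$ preserves each eigenspace: if $\Sigma f = \pm f$, then $\Sigma(Lf) = L\Sigma f = \pm Lf$. So it suffices to prove $\lkmpchain^{(2,G)}\Sigma = \Sigma\lkmpchain^{(2,G)}$.

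By \cref{eq:lkmpchain-two-particles}, $\lkmpchain^{(2,G)} = \E_{e,p}[A_{e,p}\otimes A_{e,p}]$. I would show the commutation already holds for each fixed $(e,p)$. This is the standard identity that the swap commutes with any operator of the form $X\otimes X$. Concretely,
\[
\bigl((A_{e,p}\otimes A_{e,p})f\bigr)(y,z) = \sum_{y',z'} A_{e,p}(y,y')A_{e,p}(z,z')f(y',z').
\]
Applying $\Sigma$ afterwards swaps the roles of $y$ and $z$. Applying it beforehand replaces $f(y',z')$ by $f(z',y')$. Relabeling the summation variables $y'\leftrightarrow z'$ shows the two results coincide. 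Averaging over $e$ and $p$ then preserves the commutation by linearity of expectation.

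As a cross-check, or as an alternative route, \cref{lem:action-of-Ae} lets me verify directly that each $A_e$ maps symmetric functions to symmetric ones and antisymmetric functions to antisymmetric ones. The formula is visibly equivariant under $y\leftrightarrow z$: the second and third cases swap into each other, and the fourth case is symmetric in $f(a,b)$ and $f(b,a)$. A quick way to see that the fourth case has the right behaviour is that, for antisymmetric $f$, it evaluates to $0$, since $f(a,a)=f(b,b)=0$ and $f(a,b)+f(b,a)=0$. This is consistent with the value at $(y,y)$ being forced to vanish.

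I do not expect a real obstacle here. The only care needed is the convention of acting on functions (column vectors) rather than on measures. The commutation argument is insensitive to this choice, since the transpose of $X\otimes X$ is again of the same form.
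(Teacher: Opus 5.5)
Your proposal is correct and matches the paper's proof: the paper also identifies $\mathcal H_{\sym}$ and $\mathcal H_{\asym}$ as the $\pm1$ eigenspaces of the swap operator, checks that the swap commutes with $A_{e,p}\otimes A_{e,p}$ for every fixed edge $e$ and parameter $p$, and averages over $p$ and then over $e$. Your cross-check via \cref{lem:action-of-Ae} is also valid, though it is not needed.
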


This allows us to analyze the spectrum of \(\lkmpchain^{(2,G)}\) by analyzing its restrictions to \(\mathcal H_{\sym}\) and \(\mathcal H_{\asym}\) separately.
We first analyze its restriction to \(\mathcal H_{\sym}\), denoted by \(\lkmpchain^{(2,G)}|_{\mathcal H_{\sym}}\).
Recall that the unlabeled KMP chain \(\kmpchain^{(2,G)}\) is a Markov chain on \(\Delta_G\) defined in \cref{def:kmp-general}.

\begin{lemma}
\label{lem:symmetric-subspace-kmp}
The restricted matrix \(\lkmpchain^{(2,G)}|_{\mathcal H_{\sym}}\)
has the same eigenvalues as \(\kmpchain^{(2,G)}\).
\end{lemma}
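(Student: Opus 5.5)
The plan is to exhibit an explicit isomorphism between $\mathcal H_{\sym}$ and $\R^{\Delta_G}$ (with $t=2$) that intertwines $\lkmpchain^{(2,G)}$ with $\kmpchain^{(2,G)}$. Identify a configuration $(y,z)\in V\times V$ with its multiplicity vector $f(y,z):=e_y+e_z\in\Delta_G$; as in \cref{lem:labeled-dominates-lifted-non-parallel}, this is the lumping map sending the labeled chain to the unlabeled one. A symmetric function $g\in\mathcal H_{\sym}$ is exactly a function that depends on $(y,z)$ only through the unordered pair $\{y,z\}$, i.e.\ only through $f(y,z)$. Hence the map
\[
    \Phi:\R^{\Delta_G}\to\mathcal H_{\sym},\qquad (\Phi h)(y,z)=h(e_y+e_z),
\]
is a linear bijection. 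It is injective because $f$ is surjective onto $\Delta_G$. It is surjective onto $\mathcal H_{\sym}$ because every symmetric $g$ is constant on the fibers of $f$, and these fibers are $\{(y,z),(z,y)\}$.

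The key step is the intertwining relation $\lkmpchain^{(2,G)}\Phi=\Phi\,\kmpchain^{(2,G)}$, with both chains acting on functions (column vectors). This follows from the lumping identity, namely that for every $(y,z)$ and every $r'\in\Delta_G$,
\[
    \sum_{(y',z'):\,f(y',z')=r'}\lkmpchain^{(2,G)}\big((y,z),(y',z')\big)=\kmpchain^{(2,G)}\big(f(y,z),r'\big).
\]
That identity is the content of the lifting used in \cref{lem:labeled-dominates-lifted-non-parallel}, and it comes from \cref{obs:equiv-rule} applied to one edge update. Given it, $(\lkmpchain\Phi h)(y,z)=\sum_{r'}\kmpchain(f(y,z),r')h(r')=(\Phi\kmpchain h)(y,z)$. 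Alternatively, one can check the relation directly from \cref{lem:action-of-Ae} for each edge $e$ and then average over $e$. In the case $y,z\in\{a,b\}$, the weights $1/3,1/6,1/6,1/3$ combine on symmetric $g$ to $\tfrac13 g(a,a)+\tfrac13 g(a,b)+\tfrac13 g(b,b)$. This is exactly uniform resampling of two particles on $\{a,b\}$, which matches the KMP rule with $m_{a,b}=2$.

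Since $\mathcal H_{\sym}$ is invariant (\cref{lem:lkmp-commutes-swap}), the restriction $\lkmpchain^{(2,G)}|_{\mathcal H_{\sym}}=\Phi\,\kmpchain^{(2,G)}\Phi^{-1}$ is similar to $\kmpchain^{(2,G)}$, so the two have the same eigenvalues with multiplicity. The only point needing care is the bookkeeping between row and column conventions and the verification of the lumping identity. The diagonal case $y=z$ has a fiber of size $1$, while $y\ne z$ has a fiber of size $2$. The identity still holds because the labeled chain's transition from any $(y,z)$ to an unordered pair $\{y',z'\}$ with $y'\ne z'$ is split equally between its two orderings, so summing over the fiber recovers the unlabeled probability. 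I expect this to be the main, if routine, obstacle.
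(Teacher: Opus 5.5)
Your proposal is correct and is essentially the paper's argument: your $\Phi$ is exactly the inverse of the paper's isomorphism $T_{\sym}$, and your direct check via \cref{lem:action-of-Ae} is the same edge-by-edge case analysis the paper uses to obtain $T_{\sym}A_e=K_eT_{\sym}$, followed by averaging over $e$ and concluding by similarity. One correction to your closing remark: transitions into an unordered pair are \emph{not} in general split equally between its two orderings (from $(a,z)$ with $z\notin\{a,b\}$ the chain reaches $(b,z)$ with probability $\frac{1}{2|E|}$ but $(z,b)$ with probability $0$), but this is harmless, because the fiber-sum identity only requires that the multiplicity vector $e_{Y'}+e_{Z'}$ of the labeled chain evolves exactly as the unlabeled KMP chain, which is what the case analysis shows.
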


\begin{proof}
We define a linear map
\[
T_{\sym}:\mathcal H_{\sym}\to \mathbb{R}^{\Delta_G}
\]
by
\[
(T_{\sym}f)(e_i+e_j):=f(i,j)\qquad \forall\, i,j\in V.
\]
This is well defined because $f$ is symmetric.

Fix a state $x=(y,z)\in V\times V$, and let $(Y',Z')$ denote the next state of the labeled chain
when started from $x$. To see the relation with the unlabeled chain, condition on a fixed edge
$e=\{a,b\}\in E$. Then the distribution of the new multiplicity vector $e_{Y'}+e_{Z'}$ is as follows.

\begin{itemize}
    \item If $y,z\notin\{a,b\}$, then neither particle is affected, and
\[
e_{Y'}+e_{Z'} = e_y+e_z \enspace.
\]
    \item If $y\in\{a,b\}$ and $z\notin\{a,b\}$, then only the first particle is updated. After averaging over
$p\sim \mathrm{Unif}[0,1]$, the new multiplicity vector $e_{Y'}+e_{Z'}$ is
\[
e_a+e_z \quad \text{or} \quad e_b+e_z
\]
with probabilities $1/2$ and $1/2$, respectively. The case $y\notin\{a,b\}$ and $z\in\{a,b\}$ is
similar.
\item If $y,z\in\{a,b\}$, then both particles are redistributed inside the edge. After averaging over $p$,
the four labeled outcomes $(a,a)$, $(a,b)$, $(b,a)$, and $(b,b)$ occur with probabilities
$1/3$, $1/6$, $1/6$, and $1/3$, respectively. Hence the multiplicity vector $e_{Y'}+e_{Z'}$ is equal
to
\[
2e_a,\qquad e_a+e_b,\qquad 2e_b
\]
with probabilities $1/3$, $1/3$, and $1/3$, respectively.
\end{itemize}



These are exactly the transition rules of the unlabeled KMP chain started from $e_y+e_z$.
Therefore, for every $f\in \mathcal H_{\sym}$,
\[
(T_{\sym}A_e f)(e_y+e_z)
=
(A_e f)(y,z)
=
(K_e T_{\sym}f)(e_y+e_z),
\]
where $K_e$ denotes the edge-conditioned transition operator of the unlabeled two-particle KMP
chain. Since this holds for every $y,z\in V$, we obtain
\[
T_{\sym}A_e = K_e T_{\sym}.
\]
Averaging over $e$ gives
\[
T_{\sym}
\Bigl(\lkmpchain^{(2,G)}|_{\mathcal H_{\sym}}\Bigr)
=
\kmpchain^{(2,G)}\, T_{\sym}.
\]
It remains to note that $T_{\sym}$ is an isomorphism: indeed, both spaces have dimension
$\binom{d+1}{2}$, and if $T_{\sym}f=0$, then $f(i,j)=0$ for all $i,j$, so $f=0$. Hence the
two operators are similar, and therefore have the same eigenvalues.
\end{proof}

We next analyze the antisymmetric subspace.
Let us denote the restriction of \(\lkmpchain^{(2,G)}\) to \(\mathcal H_{\asym}\) by \(\lkmpchain^{(2,G)}|_{\mathcal H_{\asym}}\).
Unlike the symmetric subspace, the antisymmetric subspace does not collapse to an unlabeled multiplicity profile.
Instead, the two-particle update still remembers the two coordinates separately,
and the relevant comparison object is the two-particle chain obtained from the single-particle random walk on \(G\).
For each fixed edge \(e=\{a,b\}\in E\), let \(C_e\) denote the transition matrix
of the random walk conditioned on choosing \(e\), namely
\[
    C_e(y,z)
    =
    \begin{cases}
        \frac{1}{2}, & \text{if } y\in\set{a,b} \text{ and } z\in\set{a,b},\\[0.3em]
        1, & \text{if } y=z \notin\set{a,b},\\[0.3em]
        0, & \text{otherwise}.
    \end{cases} \enspace.
\]
Equivalently, for every \(h:V\to\R\),
\[
    (C_e h)(y)
    =
    \begin{cases}
        h(y), & \text{if } y\notin\set{a,b},\\[0.3em]
        \frac{h(a)+h(b)}{2}, & \text{if } y\in\set{a,b}.
    \end{cases}
\]
Now define
\[
    C
    :=
    \frac{1}{\abs{E}}\sum_{e\in E} C_e\otimes C_e \enspace.
\]
Intuitively, \(C\) describes the following two-particle process:
first choose a uniformly random edge \(e=\{a,b\}\),
and then let the two particles move independently according to the random walk rule on that edge,
so that each particle currently at \(a\) or \(b\) moves to \(a\) and \(b\) with probabilities \(1/2\) and \(1/2\), respectively,
while particles outside the edge remain fixed.

\begin{lemma}
\label{lem:antisymmetric-subspace-rw}
The restriction of \(\lkmpchain^{(2,G)}\) to \(\mathcal H_{\asym}\)
coincides with \(C\).
Namely,
\[
    \lkmpchain^{(2,G)}|_{\mathcal H_{\asym}}
    =
    C|_{\mathcal H_{\asym}} \enspace.
\]
Consequently, \(\lkmpchain^{(2,G)}|_{\mathcal H_{\asym}}\) and \(C|_{\mathcal H_{\asym}}\)
have the same eigenvalues.
\end{lemma}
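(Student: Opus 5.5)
The plan is to prove the identity edge by edge and then average over $e$. By \cref{eq:lkmpchain-two-particles}, $\lkmpchain^{(2,G)}=\frac{1}{|E|}\sum_{e\in E}A_e$ with $A_e=\E_p[A_{e,p}\otimes A_{e,p}]$, and $C=\frac{1}{|E|}\sum_e C_e\otimes C_e$. It therefore suffices to show that $A_e f=(C_e\otimes C_e)f$ for every fixed edge $e=\{a,b\}$ and every $f\in\mathcal H_{\asym}$. Summing over $e$ then gives $\lkmpchain^{(2,G)}f=Cf$ on $\mathcal H_{\asym}$. By \cref{lem:lkmp-commutes-swap}, $\mathcal H_{\asym}$ is invariant under $\lkmpchain^{(2,G)}$, so it is also invariant under $C$. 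The two restrictions are then literally the same operator on $\mathcal H_{\asym}$ and have the same eigenvalues.

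To check the edge-wise identity, first compute $(C_e\otimes C_e)f(y,z)$ from the definition of $C_e$; this is the routine part. When $y,z\notin\{a,b\}$ it equals $f(y,z)$. When exactly one coordinate lies in $\{a,b\}$, it averages that coordinate over $\{a,b\}$, giving $\frac{f(a,z)+f(b,z)}{2}$ or $\frac{f(y,a)+f(y,b)}{2}$. When both coordinates lie in $\{a,b\}$, it equals $\frac14\bigl(f(a,a)+f(a,b)+f(b,a)+f(b,b)\bigr)$.

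Comparing with \cref{lem:action-of-Ae}, the first three cases agree for every $f$; antisymmetry is not needed there. The only discrepancy is the case $y,z\in\{a,b\}$. There $A_e f=\frac13 f(a,a)+\frac16 f(a,b)+\frac16 f(b,a)+\frac13 f(b,b)$. For $f\in\mathcal H_{\asym}$ we have $f(a,a)=f(b,b)=0$ and $f(a,b)+f(b,a)=0$, so both expressions vanish.

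This last step is the only place the hypothesis is used, and it is the main point of the argument. The kernels of $A_e$ and $C_e\otimes C_e$ differ only on the block $\{a,b\}^2$. On that block the difference $\frac1{12}\bigl(f(a,a)+f(b,b)\bigr)-\frac1{12}\bigl(f(a,b)+f(b,a)\bigr)$ is a symmetric functional, so it annihilates antisymmetric functions. No further obstacle is expected beyond bookkeeping: invariance of $\mathcal H_{\asym}$ under $C$ either follows from the equality with $\lkmpchain^{(2,G)}$, or can be seen directly because $C$ commutes with the swap.
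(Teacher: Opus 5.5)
Your proposal is correct and follows essentially the same route as the paper. You reduce to the edgewise identity $A_e f=(C_e\otimes C_e)f$ for $f\in\mathcal H_{\asym}$, check the off-block cases directly against \cref{lem:action-of-Ae}, use $f(a,a)=f(b,b)=0$ and $f(b,a)=-f(a,b)$ to see that both sides vanish on $\{a,b\}^2$, and then average over $e$. Your observations that the kernel difference on that block is a swap-invariant functional, and that invariance of $\mathcal H_{\asym}$ under $C$ follows from the equality, are small clarifications the paper leaves implicit.
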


\begin{proof}
Recall that \(A_e=\E_p\Br{A_{e,p} \otimes A_{e,p}}\), where \(A_{e,p}\) is the single-particle update matrix on the fixed edge \(e\) as defined in \cref{eq:def-A-e-p-non-parallel}.
By \cref{eq:lkmpchain-two-particles} and the definition of \(C\), we have
\[
    \lkmpchain^{(2,G)}=\frac{1}{\abs{E}}\sum_{e\in E}A_e,
    \qquad
    C=\frac{1}{\abs{E}}\sum_{e\in E}C_e\otimes C_e \enspace.
\]
Therefore, it suffices to prove that
\(
A_e|_{\mathcal H_{\asym}}=(C_e\otimes C_e)|_{\mathcal H_{\asym}}
\)
for every edge \(e\).

Fix \(e=\{a,b\}\), and let \(f\in\mathcal H_{\asym}\).
We show that \(A_ef=(C_e\otimes C_e)f\) pointwise on \(V\times V\).

\begin{itemize}
    \item If $y,z\notin\{a,b\}$, then neither particle is affected by either dynamics. Hence
\[
(A_e f)(y,z)=f(y,z)=((C_e\otimes C_e)f)(y,z)\enspace.
\]
\item If $y\in\{a,b\}$ and $z\notin\{a,b\}$, then by Lemma~6.8,
\[
(A_e f)(y,z)=\frac{f(a,z)+f(b,z)}{2}.
\]
On the other hand, under $C_e\otimes C_e$, the first coordinate is averaged according to $C_e$,
while the second coordinate remains fixed, so
\[
((C_e\otimes C_e)f)(y,z)=\frac{f(a,z)+f(b,z)}{2}.
\]
The case $y\notin\{a,b\}$ and $z\in\{a,b\}$ is identical.
\item 
Finally, if $y,z\in\{a,b\}$, then by the antisymmetric property,
\[
f(a,a)=f(b,b)=0,\qquad f(b,a)=-f(a,b).
\]
Hence \cref{lem:action-of-Ae} gives
\[
(A_e f)(y,z)
=
\frac13 f(a,a)+\frac16 f(a,b)+\frac16 f(b,a)+\frac13 f(b,b)
=
0.
\]
On the other hand,
\[
((C_e\otimes C_e)f)(y,z)
=
\frac14\bigl(f(a,a)+f(a,b)+f(b,a)+f(b,b)\bigr)
=
0.
\]
\end{itemize}
Thus
\[
A_e|_{\mathcal H_{\asym}}
=
(C_e\otimes C_e)|_{\mathcal H_{\asym}}.
\]
Averaging over $e$ yields the claim.
\end{proof}




Combining the symmetric and antisymmetric analyses,
we obtain a corresponding decomposition of the spectrum of \(\lkmpchain^{(2,G)}\).
Let \(\lambda_1(M)\) and \(\lambda_2(M)\) denote the largest and second largest eigenvalue in absolute value of a matrix \(M\), respectively.

\begin{proposition}
\label{prop:lkmp-spectrum-decomposition}
We have
\[
    \lambda_2(\lkmpchain^{(2,G)})
    =
    \max\st{
        \lambda_2(\kmpchain^{(2,G)}),
        \lambda_1(C|_{\mathcal H_{\asym}})
    }
    \enspace.
\]
\end{proposition}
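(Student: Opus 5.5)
The plan is to use the direct-sum decomposition $\R^{V\times V}=\mathcal H_{\sym}\oplus\mathcal H_{\asym}$ together with \cref{lem:lkmp-commutes-swap}: $\lkmpchain^{(2,G)}$ leaves both summands invariant, so it is block diagonal in any basis adapted to this decomposition. Hence its spectrum, counted with multiplicities, is the union of the spectra of the two restrictions. By \cref{lem:symmetric-subspace-kmp}, the spectrum of $\lkmpchain^{(2,G)}|_{\mathcal H_{\sym}}$ coincides with that of $\kmpchain^{(2,G)}$. By \cref{lem:antisymmetric-subspace-rw}, the spectrum of $\lkmpchain^{(2,G)}|_{\mathcal H_{\asym}}$ coincides with that of $C|_{\mathcal H_{\asym}}$. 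Since $\lkmpchain^{(2,G)}$ is reversible with respect to $\lkmpdist$, all these eigenvalues are real, so ordering by absolute value makes sense.

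Next I would locate the trivial eigenvalue $1$. The constant function $\mathbf 1$ lies in $\mathcal H_{\sym}$, and under $T_{\sym}$ it corresponds to the constant eigenvector of $\kmpchain^{(2,G)}$. By irreducibility of the unlabeled two-particle chain, $1$ is a simple eigenvalue of $\kmpchain^{(2,G)}$, and by aperiodicity ($\lambda_1=1$ strictly dominates) every other eigenvalue has modulus strictly less than $1$. The labeled chain is also irreducible, so $1$ is simple for $\lkmpchain^{(2,G)}$ as a whole. Its unique eigenvector is the constant function, which is symmetric, so $1$ cannot be an eigenvalue of $C|_{\mathcal H_{\asym}}$. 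All eigenvalues of $C|_{\mathcal H_{\asym}}$ have modulus at most $1$, because $C$ is stochastic; hence all have modulus strictly less than $1$.

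Removing the simple eigenvalue $1$, which sits in the symmetric block, the largest remaining modulus is the maximum of two quantities. On the symmetric side it is the second-largest modulus, $\lambda_2(\kmpchain^{(2,G)})$. On the antisymmetric side it is the largest modulus, $\lambda_1(C|_{\mathcal H_{\asym}})$. This gives the claimed identity.

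The only delicate point is whether $-1$, or another unimodular value, could arise and change which eigenvalue counts as ``second.'' I would handle this using the positive self-loop probabilities, which give aperiodicity of the labeled chain, so the only unimodular eigenvalue is the simple eigenvalue $1$. A further bookkeeping point is that the similarity in \cref{lem:symmetric-subspace-kmp} preserves multiplicities, so the eigenvalue $1$ is counted once and not lost or duplicated. The degenerate case $\mathcal H_{\asym}$ of small dimension (for example $d=2$) needs no special treatment; if that block were empty, the corresponding term would simply be read as $0$.
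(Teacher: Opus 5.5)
Your proposal is correct and follows the paper's proof: invariance of $\mathcal H_{\sym}$ and $\mathcal H_{\asym}$, identification of the two blocks with $\kmpchain^{(2,G)}$ and $C|_{\mathcal H_{\asym}}$, and placing the eigenvalue $1$ in the symmetric block because its eigenvector is the constant function. Your irreducibility argument for the simplicity of $1$ makes rigorous the paper's brief ``orthogonal to the all-ones vector'' remark, while the aperiodicity discussion is harmless but not strictly needed, since the identity is a statement about moduli of a multiset union; note also that $\mathcal H_{\asym}$ has dimension $d(d-1)/2\geq 1$, so it is never empty, even for $d=2$.
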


\begin{proof}
By \cref{lem:lkmp-commutes-swap}, the spaces \(\mathcal H_{\sym}\) and \(\mathcal H_{\asym}\)
are both invariant under \(\lkmpchain^{(2,G)}\).
Therefore, the spectrum of \(\lkmpchain^{(2,G)}\) is the union of the spectra of its restrictions to these two subspaces.
By \cref{lem:symmetric-subspace-kmp}, the restriction
\(\lkmpchain^{(2,G)}|_{\mathcal H_{\sym}}\) has the same eigenvalues as
\(\kmpchain^{(2,G)}\).
By \cref{lem:antisymmetric-subspace-rw}, the restriction
\(\lkmpchain^{(2,G)}|_{\mathcal H_{\asym}}\) is exactly
\(C|_{\mathcal H_{\asym}}\).
Note that the eigenspace corresponding to the eigenvalue \(1\)
lies in the symmetric subspace,
since the constant function is symmetric.
Therefore,
every \(f\in\mathcal H_{\asym}\) is orthogonal to the all-ones vector
and the eigenvalue \(1\) does not appear in the restriction \(C|_{\mathcal H_{\asym}}\).

As a result, the second largest eigenvalue in absolute value of \(\lkmpchain^{(2,G)}\) is obtained by taking the maximum of the largest nontrivial eigenvalue from the symmetric block and the largest eigenvalue from the antisymmetric block:
\[
    \lambda_2(\lkmpchain^{(2,G)})
    =
    \max\st{
        \lambda_2(\kmpchain^{(2,G)}),
        \lambda_1(C|_{\mathcal H_{\asym}})
    }
    \enspace.
\]
This completes the proof.
\end{proof}

Both terms appearing in the spectral comparison above can be related to the single-particle KMP chain.
More precisely, the spectral gap of \(\kmpchain^{(2,G)}\) is bounded below by a constant multiple of the spectral gap of \(\kmpchain^{(1,G)}\),
while the largest eigenvalue of the antisymmetric block \(C|_{\mathcal H_{\asym}}\) coincides with \(\lambda_2(\kmpchain^{(1,G)})\).
We prove them in \app{two-particle-gap-compared-to-single-particle} and \app{antisymmetric-spectrum}, respectively.

\begin{restatable}{theorem}{sp}
\label{thm:two-particle-gap-compared-to-single-particle}
For every connected graph \(G=(V=[d],E)\),
\[
    1-\lambda_2\!\br{\kmpchain^{(2,G)}}
    \ge
    \frac{2(d+1)}{3d}\cdot
    \br{1-\lambda_2\!\br{\kmpchain^{(1,G)}}} \enspace.
\]
\end{restatable}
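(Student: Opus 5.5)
The plan is a Dirichlet-form comparison. By \cref{lem:symmetric-subspace-kmp}, $\kmpchain^{(2,G)}$ is similar to $\lkmpchain^{(2,G)}|_{\mathcal H_{\sym}}$, so it suffices to lower bound the gap of $\lkmpchain^{(2,G)}$ on symmetric functions orthogonal to constants. We compare it with the two-particle operator $C=\frac{1}{|E|}\sum_e C_e\otimes C_e$ from \cref{lem:antisymmetric-subspace-rw}, and then compare $C$ with the single-particle chain $\kmpchain^{(1,G)}=\frac1{|E|}\sum_e C_e$.

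\textbf{Step 1: transitions.} By \cref{lem:action-of-Ae}, for a fixed edge $e=\{a,b\}$ the operators $A_e$ and $C_e\otimes C_e$ agree at every $(y,z)$ with at most one coordinate in $\{a,b\}$. On the $2\times2$ block $\{a,b\}^2$ they differ only in the weights: $A_e$ uses $(\tfrac13,\tfrac16,\tfrac16,\tfrac13)$, while $C_e\otimes C_e$ uses $\tfrac14$ for each of the four states.

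\textbf{Step 2: stationary measures.} The stationary measure of $\lkmpchain^{(2,G)}$ is $\pi^{(2)}(y,z)=(1+\delta_{yz})/(d(d+1))$, and that of $C$ is uniform, $1/d^2$. Their ratio therefore lies in $[\tfrac{d}{d+1},\tfrac{2d}{d+1}]$. This is where the factor $d+1$ enters.

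\textbf{Step 3: Dirichlet forms.} Write $\mathcal E_L(f)=\frac{1}{|E|}\sum_e\mathcal E_{A_e}(f)$, where each $\mathcal E_{A_e}$ is computed with respect to $\pi^{(2)}$, and write $\mathcal E_C$ similarly with respect to the uniform measure. The off-block transitions contribute identically up to the density ratio. On the block $\{a,b\}^2$ with $f$ symmetric, the $A_e$-contribution is
\[
\frac{1}{d(d+1)}\Bigl[\tfrac{2}{3}\bigl(f(a,a)-f(b,b)\bigr)^2+\tfrac{4}{6}\bigl(f(a,a)-f(a,b)\bigr)^2+\tfrac46\bigl(f(b,b)-f(a,b)\bigr)^2\Bigr]\cdot(\text{const}).
\]
I expect this to be bounded below by $\tfrac23\cdot\frac{d+1}{d}$ times the corresponding $C_e\otimes C_e$ contribution, up to the density ratio, by a direct $3$-variable quadratic-form inequality. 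Meanwhile $\mathrm{Var}_{\pi^{(2)}}(f)\le\frac{2d}{d+1}\mathrm{Var}_{\mathrm{unif}}(f)$ by Step 2. Combining gives $\mathrm{gap}\bigl(\lkmpchain^{(2,G)}|_{\mathcal H_{\sym}}\bigr)\ge c_d\,\mathrm{gap}(C|_{\mathcal H_{\sym}})$ for some explicit $c_d$.

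\textbf{Step 4: gap of $C$.} Each $C_e$ is an orthogonal projection. Hence
\[
I-C_e\otimes C_e=(I-C_e)\otimes I+C_e\otimes(I-C_e),
\]
and on symmetric functions the two terms have equal quadratic forms after swapping the coordinates. So $\langle f,(I-C_e\otimes C_e)f\rangle\ge\frac12\langle f,[(I-C_e)\otimes I+I\otimes(I-C_e)]f\rangle$, again using symmetry. Averaging over $e$ gives a lower bound in terms of the product chain $\frac12\bigl(\kmpchain^{(1,G)}\otimes I+I\otimes\kmpchain^{(1,G)}\bigr)$, whose gap is exactly $\gamma^{(1,G)}_{\scriptscriptstyle\mathsf{KMP}}/2$. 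Note that $\mathrm{gap}(\kmpchain^{(2,G)})\le\gamma^{(1,G)}_{\scriptscriptstyle\mathsf{KMP}}$ also holds, as seen by testing on linear functions $h(y)+h(z)$, which confirms the scale.

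\textbf{Main obstacle: the constant.} The naive chain of inequalities above loses factors of $2$ in Steps 2 and 4, which is weaker than the claimed $\frac{2(d+1)}{3d}$. To recover the sharp constant, I would avoid the crude density-ratio bound and instead decompose every mean-zero symmetric $f$ as
\[
f(y,z)=h(y)+h(z)+g(y,z),\qquad h(y):=\E_{z}[f(y,z)\mid y]-\text{const},
\]
with $g$ in the "degree-two" part. The linear part is an exact eigen-direction carrying the single-particle eigenvalues; I expect this to follow since $A_e$ maps the span of such functions to itself by \cref{lem:action-of-Ae}. On $g$, the loss comes only from the on-edge $2\times2$ block, whose weights $\tfrac13$ versus $\tfrac14$ should yield exactly the factor $\tfrac23\cdot\frac{d+1}{d}$.

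What I would try first is to verify this orthogonal decomposition with respect to $\pi^{(2)}$: that it is invariant under $\lkmpchain^{(2,G)}$, or at least that the Dirichlet form splits without cross terms. If it does not split cleanly, I would fall back to the comparison constant from Steps 1–4 and try to tighten the variance comparison in Step 2 on the relevant subspace.
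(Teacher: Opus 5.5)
There is a genuine gap: nothing in the proposal produces the constant $\frac{2(d+1)}{3d}$, and that constant is the whole content of the statement. Write $\gamma_1=1-\lambda_2(\kmpchain^{(1,G)})$. Steps 1--4 are correct, but a comparison by global constants gives at best $\frac12\gamma_1$.

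The loss in Step 4 can in fact be removed. The linear functions $h(y)+h(z)$ are also $C$-invariant, and $C$ acts on them as $\kmpchain^{(1,G)}$ acts on $h$. Let $v_1,\dots,v_d$ be an orthonormal eigenbasis of $\kmpchain^{(1,G)}$ with $v_1$ constant. The uniform-orthogonal complement of the linear functions in $\mathcal H_{\sym}$ is spanned by $v_i\otimes v_j+v_j\otimes v_i$ with $i,j\ge 2$. On it, your inequality $I-C\succeq\frac12[(I-\kmpchain^{(1,G)})\otimes I+I\otimes(I-\kmpchain^{(1,G)})]$ already gives at least $\gamma_1$, so $\mathrm{gap}(C|_{\mathcal H_{\sym}})=\gamma_1$.

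The loss in Steps 2--3 cannot be removed this way. On the off-edge moves $(a,z)\leftrightarrow(b,z)$ the Dirichlet-form ratio is exactly $\frac{d}{d+1}$. The density ratio $\pi^{(2)}/\mathrm{unif}$ reaches $\frac{2d}{d+1}$ on the diagonal. Together these give only $\mathrm{gap}(\kmpchain^{(2,G)})\ge\frac12\gamma_1$.

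The proposed sharpening is unproved, and the mechanism you describe is not the right one. Splitting off the linear part is legitimate: that span is invariant by \cref{lem:action-of-Ae}, and its $\pi^{(2)}$-orthogonal complement is invariant by reversibility. But this only moves the whole difficulty to the degree-two part, and there the sharp factor does not come from the on-edge weights $\frac13$ versus $\frac14$ alone.

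On $K_d$ the bound is attained by $g(y,z)=\mathbf{1}[y=z]$. This $g$ (minus its mean) is an eigenfunction of $C$ with gap $\gamma_1=\frac1{d-1}$, and of $\lkmpchain^{(2,G)}$ with gap $\frac{2(d+1)}{3d(d-1)}$. For this $g$:
\begin{itemize}
\item the off-edge terms vanish;
\item the on-block Dirichlet ratio is $\frac{4d}{3(d+1)}$;
\item the variance ratio is $\frac{2d^2}{(d+1)^2}$;
\item $\frac{2(d+1)}{3d}$ is exactly the quotient of these two ratios.
\end{itemize}
So your route needs an inequality that trades the extra diagonal mass of $\pi^{(2)}$ against the on-block gain with no loss, on every graph. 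The proposal does not supply one, and this is essentially the $k=2$ case of the Kim--Quattropani--Sau theorem.

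The paper does not prove that step from scratch either. It uses the moment duality of \cref{lem:isomorphism} to identify $\kmpchain^{(1,G)}$ and $\kmpchain^{(2,G)}$ with the continuous KMP operator on polynomials of degree at most $1$ and $2$. It then imports $\mathrm{gap}_2\ge\frac{2(d+1)}{3d}\,\mathrm{gap}_1$ from \cref{lem:continuous-gap-comparison-kim}.
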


\begin{restatable}{lemma}{spc}
\label{lem:two-particle-gap-compared-to-single-particle-c}
For every connected graph \(G=(V=[d],E)\),
\[
    \lambda_1\!\br{C|_{\mathcal H_{\asym}}} = \lambda_2\!\br{\kmpchain^{(1,G)}} \enspace.
\]
\end{restatable}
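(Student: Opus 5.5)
The approach is to write both operators in terms of rank-one edge Laplacians, which makes \(C|_{\mathcal H_{\asym}}\) an explicit function of the single-particle operator \(\kmpchain^{(1,G)}\). Fix an edge \(e=\{a,b\}\) and let \(v_e:=\delta_a-\delta_b\) and \(L_e:=v_ev_e^{\intercal}\). From the formula for \(C_e\) one reads off
\[
    C_e=I-\tfrac12 L_e .
\]
Since \(\|v_e\|_2^2=2\), \(C_e\) is the orthogonal projection onto \(v_e^{\perp}\). The single-particle chain is \(\kmpchain^{(1,G)}=\frac1{|E|}\sum_e C_e\): a particle at \(a\) or \(b\) is sent to each endpoint with probability \(1/2\) after averaging over \(p\). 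Writing \(\bar L:=\frac1{|E|}\sum_e L_e\), this gives \(\kmpchain^{(1,G)}=I-\frac12\bar L\).

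Next I would expand the tensor square:
\[
    C_e\otimes C_e=I\otimes I-\tfrac12(L_e\otimes I+I\otimes L_e)+\tfrac14 L_e\otimes L_e .
\]
The last term has range spanned by \(v_e\otimes v_e\), which is symmetric under swapping the two coordinates. Because \(L_e\otimes L_e\) is self-adjoint and commutes with the swap, it annihilates \(\mathcal H_{\asym}\). Averaging over \(e\) then gives, on \(\mathcal H_{\asym}\),
\[
    C|_{\mathcal H_{\asym}}=\bigl(K\otimes I+I\otimes K-I\otimes I\bigr)\big|_{\mathcal H_{\asym}},\qquad K:=\kmpchain^{(1,G)}.
\]

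Now I would diagonalize. \(K\) is a symmetric average of orthogonal projections, so it is PSD with eigenvalues \(1=\mu_1>\mu_2\ge\cdots\ge\mu_d\ge0\), where \(\mu_1\) is simple by connectivity. Take orthonormal eigenvectors \(\varphi_i\). Then \(\mathcal H_{\asym}\) has the orthonormal basis \(\varphi_i\otimes\varphi_j-\varphi_j\otimes\varphi_i\) for \(i<j\), with eigenvalues \(\mu_i+\mu_j-1\). The largest of these is \(\mu_1+\mu_2-1=\mu_2=\lambda_2(K)\); the last equality holds because \(K\) is PSD, so its second largest eigenvalue in absolute value is \(\mu_2\).

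The main obstacle is ruling out a negative eigenvalue of larger modulus. For this I would use a trace identity. Each \(C_e\) is a projection of rank \(d-1\), so \(\operatorname{tr}K=d-1\), which means \(\sum_i(1-\mu_i)=1\). Every term in this sum is nonnegative, so any two of them satisfy \((1-\mu_i)+(1-\mu_j)\le1\), i.e.\ \(\mu_i+\mu_j-1\ge0\). Hence all antisymmetric eigenvalues lie in \([0,\mu_2]\), and \(\lambda_1(C|_{\mathcal H_{\asym}})=\mu_2=\lambda_2(\kmpchain^{(1,G)})\). For \(d=2\) the antisymmetric space is one-dimensional with eigenvalue \(\mu_1+\mu_2-1=\mu_2\), so that case is covered by the same argument.
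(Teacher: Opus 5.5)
Your proof is correct and follows the paper's route: write $C_e=I-\tfrac12 L_e$, note that $L_e\otimes L_e$ annihilates $\mathcal H_{\asym}$, and read off the antisymmetric eigenvalues $\theta_i+\theta_j-1$ from an eigenbasis of $\kmpchain^{(1,G)}$, the largest being $\theta_2$. The only difference is how large negative eigenvalues are excluded: the paper relies on its remark that $C$, an average of tensor squares of orthogonal projections, is positive semidefinite, whereas your trace identity $\mathrm{Tr}\,\kmpchain^{(1,G)}=d-1$ gives the same conclusion $\theta_i+\theta_j-1\ge 0$ directly.
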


As a corollary from \cref{prop:lkmp-spectrum-decomposition},
\cref{thm:two-particle-gap-compared-to-single-particle} and \cref{lem:two-particle-gap-compared-to-single-particle-c}, we have the following bound on the 
spectral gap of $\lkmpchain^{(2,G)}$.

\begin{corollary}
\label{cor:two-particle-labeled-gap-compared-to-single-particle}
For every connected graph \(G=(V=[d],E)\),
\[
    1-\lambda_2\!\br{\lkmpchain^{(2,G)}}
    \ge
    \frac{2}{3}\cdot
    \br{1-\lambda_2\!\br{\kmpchain^{(1,G)}}} \enspace.
\]
\end{corollary}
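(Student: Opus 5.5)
This follows by combining \cref{prop:lkmp-spectrum-decomposition} with the two spectral comparisons just stated. Write $\gamma_1 := 1-\lambda_2\br{\kmpchain^{(1,G)}}$ for the single-particle gap. By \cref{prop:lkmp-spectrum-decomposition},
\[
    1-\lambda_2\br{\lkmpchain^{(2,G)}}
    =
    \min\st{ 1-\lambda_2\br{\kmpchain^{(2,G)}},\ 1-\lambda_1\br{C|_{\mathcal H_{\asym}}} } .
\]
We lower bound each term separately by $\frac23\gamma_1$.

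For the symmetric block, \cref{thm:two-particle-gap-compared-to-single-particle} gives
\[
1-\lambda_2\br{\kmpchain^{(2,G)}} \ge \frac{2(d+1)}{3d}\,\gamma_1 .
\]
Since $\frac{d+1}{d}\ge 1$ and $\gamma_1\ge 0$, this is at least $\frac23\gamma_1$.

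For the antisymmetric block, \cref{lem:two-particle-gap-compared-to-single-particle-c} gives $1-\lambda_1\br{C|_{\mathcal H_{\asym}}} = \gamma_1 \ge \frac23\gamma_1$.

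Taking the minimum of the two bounds yields the corollary.

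The one point needing care is that all the quantities $\lambda_2$ and $\lambda_1$ refer to eigenvalues largest in absolute value. The spectral decomposition in \cref{prop:lkmp-spectrum-decomposition} is already stated in those terms, so the only subtlety is that $\gamma_1\ge 0$, which holds because $\kmpchain^{(1,G)}$ is stochastic and so $\lambda_2\br{\kmpchain^{(1,G)}}\le 1$. With that, no real obstacle remains: the substantive work lies in the deferred proofs of the two cited comparison results.
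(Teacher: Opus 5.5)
Your proposal is correct and follows the paper's proof: it combines the max-decomposition of \cref{prop:lkmp-spectrum-decomposition} with \cref{thm:two-particle-gap-compared-to-single-particle} (using $\frac{2(d+1)}{3d}\ge\frac23$) and \cref{lem:two-particle-gap-compared-to-single-particle-c}, then takes the minimum of the two gap bounds. Your added remark that $\gamma_1\ge 0$ is the only extra detail, and it is harmless.
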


\begin{proof}
    By \cref{prop:lkmp-spectrum-decomposition}, we have
    \[
        \lambda_2(\lkmpchain^{(2,G)})
        =
        \max\st{
            \lambda_2(\kmpchain^{(2,G)}),
            \lambda_1(C|_{\mathcal H_{\asym}})
        } \enspace.
    \]
    By \cref{thm:two-particle-gap-compared-to-single-particle} and \cref{lem:two-particle-gap-compared-to-single-particle-c}, we have
    \[
        1-\lambda_2\!\br{\lkmpchain^{(2,G)}}
        \ge
        \min\st{
            1-\lambda_2\!\br{\kmpchain^{(2,G)}},
            1-\lambda_1\!\br{C|_{\mathcal H_{\asym}}}
        }
        \ge
        \frac{2}{3}\cdot
        \br{1-\lambda_2\!\br{\kmpchain^{(1,G)}}} \enspace.
    \]
    This completes the proof.
\end{proof}

\subsection{Mixing Time of the KMP Chain}
\label{sec:kmp-uniform-in-particles}

This section proves \cref{thm:kmp-mixing-time-general-graph} by an argument similar to
the proof of \cref{thm:partition-resampling-mixing}.
When the number of particles is small,
the desired bound follows directly from the mixing time of the labeled KMP chain.
For larger particle numbers, we employ a coupling argument to obtain a bound independent of \(t\).

For the rest of this subsection, write
$
    \gamma
    \coloneqq
    \gamma^{(1,G)}_{\scriptscriptstyle \mathsf{KMP}}
$.
We first show that, for large particle numbers, the total variation
distance to stationarity is small after
\(O(\gamma^{-1}\log d)\) steps.

\begin{lemma}
\label{lem:kmp-large-particle-one-block}
Assume that \(d\geq3\) and \(t>d^{1000}\).
Let \(k_0=100\gamma^{-1}\log d\), \(T=6\gamma^{-1}\log d\) and \(k_\star=k_0+T\).
Then, we have
\[
    \sup_{r\in\Delta_G}
    \norm{\kmpchain^{k_\star}(r,\cdot)-\kmpdist}_{\mathrm{TV}}
    \leq \frac{4}{d^2} \enspace.
\]
\end{lemma}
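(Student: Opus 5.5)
The plan mirrors the large-$t$ case in the proof of \cref{thm:partition-resampling-mixing}: a first phase of $k_0$ steps under the conditional product coupling brings the normalized multiplicity vectors close, and a second phase of $T$ steps uses the exact coupling of \cref{lem:kmp-scheduled-coalescence} along the sampled edge sequence. The changes are that the contraction rate now comes from the two-particle labeled chain, and that connectivity of the sampled edges must be controlled through $\gamma$ instead of through random matchings.

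\emph{Phase one.} Run two copies from $R^{(0)}=r$ and $\widetilde R^{(0)}\sim\Unif(\Delta_G)$. Both copies use the same edges $\{i_s,j_s\}$ and split parameters $p_s$ in the labeled description of \cref{def:lkmp-general}. Realize the stationary copy via $u\sim\Dirichlet(1,\dots,1)$, as in \cref{eq:labeled-stationary-product-structure-non-parallel}. Conditionally on $u$ and on this randomness, the counts satisfy $\E[R^{(k_0)}]=rB_{k_0}$ and $\E[\widetilde R^{(k_0)}]=t\,uB_{k_0}$. Each count is a sum of independent indicators, so Hoeffding's inequality and a union bound over $d$ coordinates give concentration at scale $t d^{-20}/3$ except with probability at most $2/d^2$, exactly as before. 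For the means, convexity gives
\[
\E\|a^{(k_0)}-b^{(k_0)}\|_2^2\le \max_i \E\|(e_i-u)B_{k_0}\|_2^2\le 4d^{3/2}\sqrt{1+1/d}\,\lambda_2^{k_0},
\]
using \cref{eq:contraction-under-B-kmp-general-graph}. \cref{cor:two-particle-labeled-gap-compared-to-single-particle} gives $\lambda_2\le 1-\tfrac23\gamma$, so $\lambda_2^{k_0}\le e^{-\frac23\gamma k_0}=d^{-200/3}$. The absolute-value issue is harmless: all edge operators $A_e$ are PSD averages, so eigenvalues are nonnegative; if needed, pass to a lazy version. Markov's inequality, as in the parallel proof, then bounds $\Pr[\|a-b\|_1>\tfrac13 d^{-19}]$ by roughly $9d^{39}\cdot 4d^{2}\cdot d^{-66}\ll 1/(2d^2)$. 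Altogether $\Pr[\|R^{(k_0)}-\widetilde R^{(k_0)}\|_1>td^{-19}]\le 5/(2d^2)$.

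\emph{Phase two.} Sample the next $T$ edges in advance. On the event $\mathcal C$ that they form a connected spanning graph of $G$, apply \cref{lem:kmp-scheduled-coalescence} with $\eta\le 5/(2d^2)$, which yields failure probability at most $\eta+1/d^2$. Since $\widetilde R$ stays uniform, \cref{lem:coupling_lemma} then gives the bound $4/d^2$, provided $\Pr[\mathcal C^{\mathsf c}]\le 1/(2d^2)$.

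\emph{Main obstacle: connectivity.} Bounding $\Pr[\mathcal C^{\mathsf c}]$ by $1/(2d^2)$ with only $T=6\gamma^{-1}\log d$ uniform edges is the hard step. I would first try a union bound over cuts. For a set $S$ with $|S|\le d/2$, one uniform edge crosses $\partial S$ with probability $|\partial S|/|E|$. The single-particle walk has generator $\frac{1}{2|E|}L$, where $L$ is the graph Laplacian, so $\gamma=\lambda_2(L)/(2|E|)$. By Fiedler/Cheeger, $|\partial S|\ge \lambda_2(L)|S|(d-|S|)/d\ge \lambda_2(L)|S|/2$, so the crossing probability is at least $\gamma|S|$. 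A single set survives $T$ edges with probability at most $(1-\gamma|S|)^T\le e^{-\gamma|S|T}=d^{-6|S|}$. Summing over sets, $\sum_k\binom dk d^{-6k}\le (1+d^{-6})^d-1\le 2d^{-5}\le 1/(2d^2)$ for $d\ge3$.

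If the constants turn out tighter than expected, enlarging $T$ by a constant factor resolves it. The check that requires care is the Fiedler inequality $|\partial S|\ge \lambda_2(L)|S||S^{\mathsf c}|/d$, which follows from the Rayleigh quotient of the test vector $\mathbf 1_S-\tfrac{|S|}{d}\mathbf 1$.
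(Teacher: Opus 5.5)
Your proposal is correct and follows the paper's proof essentially step for step. The first phase is a shared-randomness labeled coupling for $k_0$ steps, controlled by \cref{eq:contraction-under-B-kmp-general-graph} and \cref{cor:two-particle-labeled-gap-compared-to-single-particle}. The second phase applies \cref{lem:kmp-scheduled-coalescence} on the event that the $T$ sampled edges connect $G$, with $\Pr[\mathcal C^{\mathsf c}]\le 1/(2d^2)$ from a union bound over cuts. Your test-vector derivation of $|\partial S|\ge \lambda_2(L)\,|S|(d-|S|)/d$ is exactly the bound the paper cites as ``Cheeger's inequality.'' Your remark that the edge operators $A_e$ are positive semidefinite (they are in fact orthogonal projections in $L^2(\pi^{(2)})$) correctly settles the absolute-value convention for $\lambda_2$, which the paper leaves implicit.
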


\begin{proof}
Fix \(r\in\Delta_G\). Consider two copies of the KMP chain, starting from \(R^{(0)}=r\) and
\(\widetilde R^{(0)}\sim\kmpdist\), respectively.
By \cref{lem:coupling_lemma}, it suffices to construct a coupling of $(R^{(s)})_{s=0}^{k_\star}$ and $(\widetilde R^{(s)})_{s=0}^{k_\star}$ such that
\[
    \Pr\!\left[R^{(k_\star)}\neq \widetilde R^{(k_\star)}\right] \leq \frac{4}{d^2} \enspace.
\]
We use a similar two-stage coupling construction as in the proof of \cref{thm:partition-resampling-mixing}.



\paragraph{The first stage.}
For the first \(k_0\) rounds, we construct the coupling by
assigning labels to the particles in each copy and using
the update rule in \cref{def:lkmp-general}.
Specifically, we use the same random edges $\{\{i_s,j_s\}\}_{1\leq s\leq k_0}$ and random parameters \(\{p_s\}_{1\leq s\leq k_0}\) for the two labeled chains.

Recall the definition of $B_s$ and $A_s$ in \cref{eq:def-A-B-non-parallel}.
Let \(u\sim\Dirichlet(1,\ldots,1)\).
As in \cref{eq:parallel-kmp-conditional-mean,eq:parallel-kmp-stationary-conditional-mean},
we have that for every \(j\in[d]\),
\[
    \mathbb E[R_j^{(k_0)}]
    ~=~ \sum_{i=1}^d r_i \cdot B_{k_0}(i,j)
    ~=~ (rB_{k_0})_j
    \quad
    \text{and}
    \quad
    \mathbb E[\widetilde R_j^{(k_0)}]
    ~=~ t \cdot (uB_{k_0})_j \enspace,
\] 
where the expectation is conditional on \(u\), the random edges and the random parameters.
Define \(a^{(k_0)}:=t^{-1}(rB_{k_0})\) and \(b^{(k_0)}:=uB_{k_0}\).
We therefore have
\(\mathbb E[R^{(k_0)}/t]=a^{(k_0)}\)
and \(\mathbb E[\widetilde R^{(k_0)}/t]=b^{(k_0)}\).
Similarly to \cref{eq:parallel-kmp-mean-squared},
we use \cref{eq:contraction-under-B-kmp-general-graph} to obtain
\begin{equation*} 
    \expect{u, \{\{i_s,j_s\}\},\{p_s\}}{\|a^{(k_0)}-b^{(k_0)}\|_2^2} 
    ~\leq~ 4d\sqrt{d+1}\cdot\lambda_2(\lkmpchain^{(2,G)})^{k_0} \enspace. 
\end{equation*}
By \cref{cor:two-particle-labeled-gap-compared-to-single-particle}
and the choice of \(k_0\),
\[
    \lambda_2(\lkmpchain^{(2,G)})^{k_0}
    \leq\exp\!\left(-\bigl(1-\lambda_2(\lkmpchain^{(2,G)})\bigr)k_0\right)
    \leq\exp\!\left(-\frac23\gamma k_0\right)
    \leq d^{-200/3}.
\]
Therefore, by Markov's inequality, we obtain
\begin{align*}
    \Pr\!\left[
        \|a^{(k_0)}-b^{(k_0)}\|_1>\frac{1}{3d^{19}}
    \right] 
    &\leq36d^{40}\sqrt{d+1}\cdot \lambda_2(\lkmpchain^{(2,G)})^{k_0}
    \leq\frac{1}{2d^2} \enspace.
\end{align*}
Similarly to \cref{eq:parallel-kmp-diff}, we therefore have
\begin{equation}
    \Pr\!\left[
        \|R^{(k_0)}-\widetilde R^{(k_0)}\|_1>td^{-19}
    \right]
    ~\leq~\frac{5}{2d^2} \enspace.
    \label{eq:kmp-large-particle-first-stage}
\end{equation}

\paragraph{The second stage.}
We now describe the second stage of the coupling.
In the next \(T\) rounds, we sample \(T\) uniform edges \(e_1,\ldots,e_T\) of \(G\) independently.

Let \(\mathcal C\) be the event that the graph \(([d],\{e_1,\ldots,e_T\})\) is connected.
For \(e_1,\ldots,e_T\) that satisfy \(\mathcal C\),
we apply the coupling in \cref{lem:kmp-scheduled-coalescence} to the two copies of the KMP chain.
Combining \cref{lem:kmp-scheduled-coalescence} with \cref{eq:kmp-large-particle-first-stage}, we have
\[
    \Pr\!\left[
        R^{(k_\star)}\neq\widetilde R^{(k_\star)}
        \,\middle|\,\mathcal C
    \right]
    ~\leq~\frac{7}{2d^2} \enspace.
\]
For the case that \(\mathcal C\) does not hold, we apply independent updates
to the two copies of the KMP chain.
Consequently,
\[
    \Pr\!\left[
        R^{(k_\star)}\neq\widetilde R^{(k_\star)}
    \right]
    ~\leq~\Pr[\mathcal C^{\mathsf c}]
      ~+~
       \Pr\!\left[
           R^{(k_\star)}\neq\widetilde R^{(k_\star)}
           \,\middle|\,\mathcal C
       \right] ~\leq~  \Pr[\mathcal C^{\mathsf c}] ~+~ \frac{7}{2d^2} \enspace.
\]
Therefore, it remains to show that \(\Pr[\mathcal C^{\mathsf c}] \leq 1/(2d^2)\).

Fix a nonempty set \(S\subseteq V\) with \(|S|\leq d/2\),
and let \(\partial S\) denote the set of edges crossing the cut \((S,S^{\mathsf c})\).
Since each edge is sampled uniformly,
the probability that none of the \(T\) sampled edges crosses
this cut is $\left(1-\frac{|\partial S|}{|E|}\right)^T$.
Therefore, a union bound gives
    \begin{align*}
        \Pr[\mathcal C^{\mathsf c}]
        &\leq
        \sum_{S\subseteq V,\,1\leq|S|\leq d/2}
        \left(1-\frac{|\partial S|}{|E|}\right)^T.
    \end{align*}
For each edge \(e=\{a,b\}\in E\), let
$L_e \coloneqq (e_a-e_b)^{\intercal}(e_a-e_b)$ and $\glap_G = \sum_{e\in E} L_e$.
By Cheeger's inequality,
    we have
    \[
        \frac{|\partial S|}{|E|}
        \geq
        \frac{\lambda'_2(\glap_G)}{2|E|}\cdot|S|
        = 
        \left(1-\lambda_2(\kmpchain^{(1,G)})\right)|S| = \gamma \cdot|S|\enspace,
    \]
where \(\lambda'_2(\glap_G)\) is the second smallest eigenvalue of \(\glap_G\),
and the first equality follows from the fact that $ \kmpchain^{(1,G)} = I-\frac{1}{2|E|}\glap_G $
as in \cref{eq:chain-1-average}.
%
%
Since \(T\geq6\gamma^{-1}\log d\), we obtain
\begin{align*}
    \Pr[\mathcal C^{\mathsf c}]
    &\leq\sum_{k=1}^{\lfloor d/2\rfloor}
        \binom dk\,\e^{-\gamma kT}
    \leq\sum_{k=1}^{d}\binom dk\,d^{-6k}
    =(1+d^{-6})^d-1 \leq\e^{d^{-5}}-1
    \leq2d^{-5}
    \leq1/(2d^2)\enspace,
\end{align*}
where we used \(1+x\leq\e^x\),
\(\e^x-1\leq2x\) for \(0\leq x\leq1\), and \(d\geq3\).
This proves the lemma.
\end{proof}

We now prove the main theorem of this section, which is restated below.

\kmpgeneralgraph*

\begin{proof}
    If \(d=2\), the chain reaches stationarity after a single step.
    So we assume \(d\geq3\) and write 
    \[
        \delta(k)=\sup_{r\in\Delta_G}
        \|\kmpchain^k(r,\cdot)-\kmpdist\|_{\mathrm{TV}} \enspace.
    \]
If \(t\leq d^{1000}\), applying
\cref{lem:labeled-dominates-lifted-non-parallel},
\cref{thm:lkmp-mixing-time-general-graph}, and
\cref{cor:two-particle-labeled-gap-compared-to-single-particle},
we have
\[
    \delta(k)
    \leq 2t \cdot d^{5/4} \cdot (\lambda_2(\lkmpchain^{(2,G)}))^{k/2}
    \leq 2d^{1002} \cdot \exp\!\left(-\bigl(1-\lambda_2(\lkmpchain^{(2,G)})\bigr)\cdot k/2\right)
    \leq 2d^{1002}\cdot \e^{-\gamma k/3}\enspace.
\]
Therefore, taking $k = O(\gamma^{-1}(\log d+\log(1/\varepsilon)))$ makes the right-hand side at most \(\varepsilon\).

It remains to consider the case where \(t>d^{1000}\).
Let \(k_0=100\gamma^{-1}\log d\), \(T=6\gamma^{-1}\log d\) and \(k_\star=k_0+T\)
as in \cref{lem:kmp-large-particle-one-block}.
Applying \cref{fact:tv-block-decay} with
\(\ell=k_\star\) and \(\varepsilon=4/d^2\)
and using \cref{lem:kmp-large-particle-one-block},
we have
\[
    \delta(mk_\star)
    \leq\left(\frac{8}{d^2}\right)^m.
\]
Taking
\(
    m=
    \left\lceil
        \frac{\log(1/\varepsilon)}{\log(d^2/8)}
    \right\rceil
    =
    O\!\left(1+\frac{\log(1/\varepsilon)}{\log d}\right)
\)
makes the right-hand side at most \(\varepsilon\).
Since \(k_\star=O(\gamma^{-1}\log d)\), we have the desired
$O(\gamma^{-1}(\log d+\log(1/\varepsilon)))$ bound on the mixing time.
\end{proof}

Kook and Vempala~\cite{KV26} established spectral-gap
and mixing-time bounds for coordinate hit-and-run
on general convex bodies.
As an application of
\cref{thm:kmp-mixing-time-general-graph}, we obtain
a sharper total variation mixing-time bound for
the standard simplex
\[
    S_n=\left\{x\in\mathbb R_{\geq0}^n:
        \sum_{i=1}^n x_i\leq1\right\},
\]
with proof in \app{CHAR}.
At each step, coordinate hit-and-run chooses
$i\in[n]$ uniformly and replaces $x_i$ by a uniform
sample from
\(
    [0,1-\sum_{j\neq i}x_j]
\),
leaving all other coordinates unchanged.

\begin{restatable}{corollary}{charmixingtime}
\label{cor:char-simplex-mixing}
For every integer $n\geq2$ and every
$\varepsilon\in(0,1)$, the total-variation mixing time for coordinate hit-and-run on \(S_n\) is at most
\(
    O\!\left(
        n\bigl(\log n+\log(1/\varepsilon)\bigr)
    \right).
\)
\end{restatable}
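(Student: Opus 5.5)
The plan is to realize coordinate hit-and-run on \(S_n\) as a continuous KMP chain on the star graph \(K_{1,n}\), and then transfer \cref{thm:kmp-mixing-time-general-graph} to the continuum by letting the number of particles \(t\to\infty\).

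\textbf{Step 1: the embedding.} Set \(x_0=1-\sum_{i=1}^n x_i\). Then \((x_0,x_1,\dots,x_n)\) lies in the standard \(n\)-simplex \(\Sigma\) on vertex set \(\{0,\dots,n\}\). Resampling \(x_i\) uniformly from \([0,1-\sum_{j\neq i}x_j]\) is exactly the same as taking the pair \((x_0,x_i)\), which has total mass \(x_0+x_i\), and splitting that mass uniformly between the two coordinates. So one step of the walk picks a uniform edge \(\{0,i\}\) of the star \(G=K_{1,n}\), which has \(d=n+1\) vertices and \(n\) edges, and performs a continuous KMP redistribution on it. The stationary law is uniform on \(\Sigma\), i.e.\ \(\Dirichlet(1,\dots,1)\).

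\textbf{Step 2: the single-particle gap.} The single-particle chain \(\kmpchain^{(1,G)}=I-\frac{1}{2|E|}\glap_G\), as in the proof of \cref{lem:kmp-large-particle-one-block}. The star Laplacian has second-smallest eigenvalue \(1\). Hence \(\gamma=1/(2n)\), and \cref{thm:kmp-mixing-time-general-graph} gives discrete mixing time \(O(n(\log n+\log(1/\varepsilon)))\) uniformly in \(t\).

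\textbf{Step 3: passing to the continuum.} Fix a start \(x\in\Sigma\) and an integer \(k\). The aim is to realize the continuous chain as the limit of the discrete KMP chains \(R^{(s)}/t\) started from \(\lfloor tx\rfloor\) (rounded so the entries sum to \(t\)).

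\emph{Coupling the dynamics.} Use the same edge sequence for both chains, and on each update use the quantile coupling of \cref{lem:kmp-integer-split-couplings}(i). Concretely, the continuous split is \(\xi\cdot m\) and the discrete split is \(\lfloor (B+1)\xi\rfloor\) for the same \(\xi\). Each step then perturbs \(\|R/t-x\|_1\) by at most an additive \(O(1/t)\). After \(k\) steps, \(\|R^{(k)}/t-X^{(k)}\|_1=O(k/t)\) almost surely, so \(R^{(k)}/t\) converges weakly to the law of \(X^{(k)}\) as \(t\to\infty\).

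\emph{Convergence of the stationary laws.} The uniform measure on \(\Delta\), rescaled by \(1/t\), converges weakly to \(\Dirichlet(1,\dots,1)\).

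\emph{Transferring the bound.} Weak convergence alone does not control total variation, so use \cref{fact:tv-continuous-functions}. For any continuous \(f:\Sigma\to[0,1]\), the discrete theorem gives
\[
\bigl|\E f(R^{(k)}/t)-\E f(\widetilde R/t)\bigr|\le \delta_t(k)\le\varepsilon,
\]
uniformly in \(t\). Letting \(t\to\infty\) on both sides, and taking the supremum over \(f\), yields
\[
\|\mathcal L(X^{(k)})-\Unif(\Sigma)\|_{\mathrm{TV}}\le\varepsilon.
\]
This holds for every starting point \(x\), so it gives a worst-case bound and removes any warm-start assumption. Projecting away \(x_0\) is a bijection onto \(S_n\) and preserves total variation, which gives the corollary.

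The main obstacle is Step 3. The discrete bound must be uniform in \(t\), which is exactly what \cref{thm:kmp-mixing-time-general-graph} supplies. The total variation limit also has to be taken through continuous test functions rather than through weak convergence directly. I would check that the rounding of the starting point and the \(O(k/t)\) coupling error both vanish for each fixed \(k\) before taking the limit.
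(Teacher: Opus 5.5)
Your proof is correct, but Step 3 (the transfer from the discrete chain to the continuum) takes a genuinely different route from the paper. The paper uses the same embedding into the continuous KMP chain on $K_{1,n}$ and obtains the same gap $\gamma=1/(2n)$, computed from the eigenvalues of $\kmpchain^{(1,G)}$. It then simply invokes \cref{lem:continuous-kmp-mixing}, whose proof is algebraic and rests on the KMP moment duality. For a continuous $f$ it forms the Bernstein-type polynomial $f_t(v)=\E[f(R/t)]$ with $R$ multinomial$(t,v)$. The intertwining $\contchain^{(G)}T_t=T_t\kmpchain^{(t,G)}$ of \cref{lem:isomorphism} then expresses $(\contchain^{(G)})^k f_t(v)$ exactly as an average of $(\kmpchain^{(t,G)})^k g_t$ over a multinomial starting point. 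The stationary averages are matched exactly via Dirichlet moments, and $t\to\infty$ enters only through the static approximation $\|f_t-f\|_\infty\to 0$.

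You instead couple the dynamics pathwise, which works. With a shared edge and shared $\xi$, write $R'_i=(B+1)\xi-\phi$ with $\phi\in[0,1)$. Bounding both updated coordinates gives $\|R'/t-X'\|_1\le\|R/t-X\|_1+2/t$ per step, so the error after $k$ steps is $O((d+k)/t)$ deterministically. Your route additionally needs weak convergence of the rescaled uniform law on $\Delta$ to $\Dirichlet(1,\dots,1)$. This is true, and you should justify it briefly: it follows from the Dirichlet--multinomial representation behind \cref{lem:labeled-stationary-product-structure}, since $R/t-u\to 0$ in probability.

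The duality route has no dynamic approximation error and reuses machinery the paper needs anyway for \cref{thm:two-particle-gap-compared-to-single-particle}. Your coupling route is more elementary and probabilistic, and it would extend to redistribution rules that admit a monotone coupling but no exact polynomial duality. Both arguments finish the same way, via \cref{fact:tv-continuous-functions} and the uniformity in $t$ of \cref{thm:kmp-mixing-time-general-graph}.
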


\begin{remark}
For comparison, applying
\cite[Corollary~1.2]{KV26} to the standard simplex
gives a total variation mixing-time bound of
\(
    O\!\left(
        n^3\log n\log\frac{\chi^2(\mu_0\|\pi_n)}{\varepsilon}
    \right)
\)
for an initial distribution \(\mu_0\).
For this simplex, \cref{cor:char-simplex-mixing} improves the bound to
\(
    O\!\left(n\bigl(\log n+\log(1/\varepsilon)\bigr)\right)
\)
and applies uniformly over all initial points.
\end{remark}

\section{Scalable Non-Adaptive \texorpdfstring{$\pru$}{PRU} in the Symmetric Subspace}
\label{sec:non-adap-pru}
In this section, we present an application of the parallel Kac's walk to the construction of non-adaptive scalable pseudorandom unitaries ($\pru$s) in the symmetric subspace.
We begin with the definition of an (adaptive) scalable \pru\ family, and then introduce the corresponding non-adaptive notion restricted to the symmetric subspace. Let $\secpar$ denote the security parameter.

\begin{definition}[Scalable $\pru$]
    \label{def:adap-pru-sym}
    Let $n,\secpar\in\mathbb{N}$,
    let $\mathcal{K}_{n,\secpar}$ be a key space,
    and let
    \[
    \mathcal{U}_{n,\secpar}
    \;:=\;
    \{\,U^{n,\secpar}_{k}\in \ugroup{2^n}\,\}_{k\in\mathcal{K}_{n,\secpar}}
    \]
    be a family of $n$-qubit unitaries.
    We say that the ensemble
    $ \mathcal{U}_n\coloneq \{\mathcal{U}_{n,\secpar}\}_{\secpar}$
    is an (adaptive) scalable pseudorandom unitary
    (\pru)
    if the following hold:
    \begin{itemize}
        \item \textbf{Efficient computation.}
        There exists a $\poly{n,\secpar}$-time quantum algorithm
        such that, on input $(n,\secpar,k)$,
        it outputs a classical description of a quantum circuit
        implementing $U^{n,\secpar}_k$.
        \item \textbf{Pseudorandomness.}
        For any $\poly{n,\secpar}$-time quantum oracle adversary $\A$ without knowing the description of the sampled unitary, we have
        \begin{align*}
            &\left| \Pr_{ k\gets\mathcal{K}_{n,\secpar} }
            \Br{\A^{U^{n,\secpar}_{k}}\!(1^{n}, 1^{\secpar}) = 1}
            -
            \Pr_{V\gets\haardist}
            \Br{\A^{V}\!(1^{n}, 1^{\secpar})= 1} \right| = \negl{\secpar} \enspace,
        \end{align*}
        where $d=2^n$ and $\haardist$ is the Haar distribution over $\ugroup{d}$.
    \end{itemize}
\end{definition}

The  notion of scalable \pru s\ extends the original notion of
\pru s\ \cite{JLS18}
by treating the security parameter and the number of qubits
as independent parameters, and requiring the pseudorandomness error to be negligible in the security parameter alone, independent of the system size. Constructing a scalable $\pru$ appears to be challenging. While a number of constructions of $\pru$s are now known~\cite{MPSY24,MH25,LQSY+25,science.adv8590,SMLBH25,MLSH25,CSMHB25},
to the best of our knowledge, none of them are scalable in this sense.
In this work, we consider a restricted variant, namely
\emph{non-adaptive scalable \pru s on the symmetric subspace}.

\begin{definition}[Non-adaptive scalable $\pru$ on the symmetric subspace]
    \label{def:non-adap-pru-sym-subspace}
    Let $n,\secpar\in\mathbb{N}$,
    let $\mathcal{K}_{n,\secpar}$ be a key space,
    and let
    \(
    \mathcal{U}_{n,\secpar}
    \;:=\;
    \{\,U^{n,\secpar}_{k}\in \ugroup{2^n}\,\}_{k\in\mathcal{K}_{n,\secpar}}
    \)
    be a family of $n$-qubit unitaries.
    We say that the ensemble
    $ \mathcal{U}_n\coloneq \{\mathcal{U}_{n,\secpar}\}_{\secpar}$
    is a non-adaptive scalable pseudorandom unitary
    (\pru) on the symmetric subspace
    if the following hold:
    \begin{itemize}
        \item \textbf{Efficient computation.}
        There exists a $\poly{n,\secpar}$-time quantum algorithm
        such that, on input $(n,\secpar,k)$,
        it outputs a classical description of a quantum circuit
        implementing $U^{n,\secpar}_k$.
        \item \textbf{Pseudorandomness.}
        For any $t = \poly{n,\secpar}$,
        let $\sigma$ be any efficiently preparable quantum state
        whose first $tn$ qubits are supported
        on the symmetric subspace $\symsubspace{t}{d}$ where $d = 2^n$, i.e.,
        \[
        	\tr{(\symsubspaceproj^{t,d}\otimes\id)\cdot\sigma} = 1 \enspace.
        \]
        Then, for any $\poly{n,\secpar}$-time quantum adversary $\A$,
        \begin{align*}
            &\left| \Pr_{ k\gets\mathcal{K}_{n,\secpar} }\Br{\A\br{\br{(U^{n,\secpar}_{k})^{\otimes t}\otimes I}\cdot \sigma \cdot \br{(U^{n,\secpar}_{k})^{\otimes t}\otimes I}^{\dagger} } = 1} \right. \\
            &\qquad\qquad\qquad\qquad\qquad
            \left.  - \Pr_{V\gets\haardist}\Br{\A\br{(V^{\otimes t}\otimes I)\cdot \sigma \cdot(V^{\otimes t}\otimes I)^{\dagger} } = 1} \right| = \negl{\secpar} \enspace,
        \end{align*}
        where $d=2^n$ and $\haardist$ is the Haar distribution over $\ugroup{d}$.
    \end{itemize}
\end{definition}

With these definitions, we are now ready to state the quantum pseudorandomness consequence of our convergence result.

\begin{theorem}\label{thm:pru}
    Assuming quantum-secure one-way functions exist,
    for every $n\in\N$,
    there exists a non-adaptive scalable \pru\ family
    $\mathcal{U}_{n} = \{\mathcal{U}_{n,\secpar}\}_{\secpar\in \N}$
    on the symmetric subspace.
\end{theorem}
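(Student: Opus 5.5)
The construction is the parallel Kac's walk of \cref{eq:kac-random-unitary}, repeated $k+1=\Theta(\secpar)$ times, with its ideal randomness replaced by quantum-secure pseudorandom objects. For a key $k$, the unitary $U_k$ is a product of $\Theta(\secpar)$ steps $K_s=P_{\tau_s}W_s$. Each $\tau_s$ is computed by a quantum-secure pseudorandom permutation on $\{0,1\}^n$. Each $2\times 2$ block $W_{s,i}$ is specified by angles that a quantum-secure pseudorandom function outputs on input $(s,i)$, discretized to $\poly{n,\secpar}$ bits of precision. Both primitives follow from quantum-secure one-way functions. As in \cite{LQSY+24}, each step is implemented by a $\poly{n,\secpar}$-size circuit: compute the pseudorandom parameters into an ancilla, apply a controlled single-qubit rotation on the last qubit, uncompute, and apply the permutation via in-place evaluation of the PRP and its inverse. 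This gives the efficient-computation condition.

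For pseudorandomness, I will use a three-hybrid argument. \emph{Hybrid 0} applies $U_k^{\otimes t}\otimes I$ to $\sigma$. \emph{Hybrid 1} replaces the PRF/PRP outputs by truly random functions and permutations, while keeping the discretized angles. \emph{Hybrid 2} uses exact Haar $2\times2$ blocks, which yields $(\repkacchannel{t}{k+1}\otimes\id)(\sigma)$. \emph{Hybrid 3} is $(\haarchannel{t}\otimes\id)(\sigma)$.

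For Hybrid 0 to Hybrid 1, I build a reduction that, given oracle access to the PRF/PRP or to truly random ones, prepares $\sigma$ (which is efficiently preparable), applies $t=\poly{n,\secpar}$ copies of the circuit by querying the oracle, and runs $\A$. The advantage is therefore negligible in $\secpar$; this needs the security of the primitives to be measured in $\secpar$, with input length $n$ as a separate parameter. For Hybrid 1 to Hybrid 2, discretization error per gate is $2^{-\poly{}}$. Over $t\cdot(k+1)\cdot d/2$ blocks, this is $d\cdot\poly{}\cdot 2^{-\poly{}}$, which is still negligible if the precision is taken to be $n+\secpar$ bits. For Hybrid 2 to Hybrid 3, \cref{thm:main} gives trace distance at most $2d^2\e^{-ck}$. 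Choosing $k=c^{-1}(2n\ln 2+\secpar)$ makes this $2\e^{-\secpar}$, uniformly in $t$, as \cref{thm:twirling} guarantees, so the bound is independent of the system size.

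The main obstacle is making the Hybrid 1 to Hybrid 2 step and the implementation scalable in the required sense: all error terms must be negligible in $\secpar$ alone, even though there are $d/2=2^{n-1}$ blocks. The plan handles this because the precision is $\poly{n,\secpar}$ and the number of steps is $O(n+\secpar)$. The second point to check is that the Kac bound from \cref{thm:main} is what enters; it has a $d^2$ prefactor and holds for every $t$, so the $\log d$ term in $k$ absorbs it.
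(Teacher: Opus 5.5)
Your proposal is correct and follows essentially the paper's proof. Realize each Kac step with random functions and permutations (the $2\times 2$ blocks approximated to spectral-norm error about $2^{-(n+\secpar)}$), invoke \cref{thm:main} with a number of steps linear in $n$ to absorb the $d^2$ prefactor, and then replace the ideal oracles by quantum-secure PRFs/PRPs via a standard hybrid argument.

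The differences are minor. The paper uses $\lceil C(n+\log^2(\secpar+2))\rceil$ steps instead of your $\Theta(n+\secpar)$. It bounds the implementation error by $2(r+1)t\norm{\widetilde K-K}_\infty$, relying on the fact that the error of a block-diagonal unitary is the maximum over blocks, not the sum over $d/2$ blocks you use. That keeps a spare $2^{-n}$ factor to absorb the $n$-dependence of $t$; with your sum and only $n+\secpar$ bits of precision, you would need a few more bits, e.g. $2n+\secpar$.
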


This is admittedly a restricted notion of pseudorandom unitaries. Nevertheless, we view it as positive evidence that the parallel Kac's walk
may ultimately lead to a scalable \pru\ construction in greater generality.

\begin{proof}
    Fix $n,\secpar\in\mathbb{N}$, and let $d=2^n$.
    We first show that the random unitary $K$ defined in
    \cref{eq:kac-random-unitary} can be implemented by a
    polynomial-size quantum circuit using random functions
    and random permutations as oracles.
    Indeed, by \cite[Lemma 19]{LQSY+24},
    the block-diagonal $W$ can be approximately implemented by
    a $\poly{n,\secpar}$-size quantum circuit using random functions
    with spectral-norm error at most $2^{-(n+\secpar)}$, by choosing
    the implementation precision polynomially in $n+\secpar$.\footnote{In the notation of \cite{LQSY+24}, $W$ is the matrix $\widetilde{Q}$, and its approximation is denoted by $Q$. Also, each block is a $\su(2)$ matrix there, which can be extended to $\ugroup{2}$ simply by adding two random phases using the random function.}
    The random permutation $P_\tau$ can be realized by querying the random permutation $\tau$ and $\tau^{-1}$:
    \[
    	\ket{x}\ket{0} \xrightarrow{O_\tau} \ket{x}\ket{\tau(x)}\xrightarrow{\mathrm{SWAP}}\ket{\tau(x)}\ket{x} \xrightarrow{O_{\tau^{-1}}} \ket{\tau(x)}\ket{0} \enspace.
    \]

    Let $r=\lceil C(n+\log^2(\secpar+2))\rceil$ for a sufficiently
    large absolute constant $C$.
    Denote by $\widetilde{K}$ the resulting approximate implementation of $K$,
    and let $\appkacchannel{t}$ and $\repappkacchannel{t}{r}$ be the corresponding channels defined analogously to $\kacchannel{t}$ and $\repkacchannel{t}{r}$.
    Since
    \(
        \|\widetilde{K}-K\|_{\infty}\le 2^{-(n+\secpar)},
    \)
    we have that for any $t=\poly{n,\secpar}$ and any state $\sigma$,
    \[
        \norm{
            (\repappkacchannel{t}{r+1}\otimes \id)(\sigma)
            -
            (\repkacchannel{t}{r+1}\otimes \id)(\sigma)
        }_1
        \le 2(r+1)\cdot t\cdot \|\widetilde{K}-K\|_{\infty}
        = \negl{\secpar} \enspace.
    \]
    Combining this approximation bound with \cref{thm:main} and the triangle inequality, we obtain that for any $t=\poly{n,\secpar}$ and any state $\sigma$ such that
    \(
        \tr{(\symsubspaceproj^{t,d}\otimes\id)\cdot \sigma}=1,
    \)
    \begin{equation}\label{eq:final-bound-anc-app}
        \norm{
            (\repappkacchannel{t}{r+1}\otimes \id)(\sigma)
            -
            (\haarchannel{t}\otimes \id)(\sigma)
        }_1
        \le
        \negl{\secpar}
        +
        2d^2\e^{-cr}
        = \negl{\secpar} \enspace.
    \end{equation}
    Let $\widetilde{U}$ denote the random unitary obtained by composing
    $r+1$ independent copies of $\widetilde K$,
    where each copy uses fresh random functions and random permutations.
    By \cref{eq:final-bound-anc-app}, for any $t=\poly{n,\secpar}$ and any state $\sigma$ such that
    \(
        \tr{(\symsubspaceproj^{t,d}\otimes\id)\cdot\sigma}=1,
    \)
    we have
    \begin{align}\label{eq:realistic}
        \norm{
            \E_{\widetilde U}\Br{
                (\widetilde U^{\otimes t}\otimes I)\cdot \sigma \cdot (\widetilde U^{\otimes t}\otimes I)^\dagger
            }
            -
            \E_{V\gets\haardist}\Br{
                (V^{\otimes t}\otimes I)\cdot \sigma \cdot (V^{\otimes t}\otimes I)^\dagger
            }
        }_1
        = \negl{\secpar}\enspace.
    \end{align}

    We now define the keyed family $\mathcal U_{n,\secpar}$
    by replacing all random functions and random permutations
    appearing in the above construction
    with quantum-secure pseudorandom functions
    and pseudorandom permutations, using internal security parameter
    $n+\secpar$.
    Let $\mathcal K_{n,\secpar}$ be the set of all keys specifying
    all $\prf$s and $\prp$s instances used across
    the $r+1$ layers of the construction.
    For each $k\in\mathcal K_{n,\secpar}$, let $U_k^{n,\secpar}$ be the resulting $n$-qubit unitary, and define
    \[
        \mathcal U_{n,\secpar}
        :=
        \{\,U_k^{n,\secpar}\in \ugroup{d}\,\}_{k\in\mathcal K_{n,\secpar}} \enspace.
    \]
    By construction, since $\prf$s and $\prp$s are efficiently computable,
    each $U_k^{n,\secpar}$ is efficiently computable.
    Moreover, by the security of the underlying
    pseudorandom primitives and a standard hybrid argument,
    no $\poly{n,\secpar}$-time quantum adversary can distinguish
    the state generated using $(U_k^{n,\secpar})^{\otimes t}$
    for uniformly random $k\gets\mathcal K_{n,\secpar}$ from
    the corresponding state generated using $\widetilde U^{\otimes t}$,
    except with negligible probability.
    Combining this with the bound in \cref{eq:realistic}
    proves the pseudorandomness condition in \cref{def:non-adap-pru-sym-subspace}.
\end{proof} 

\endgroup

\newpage
\bibliographystyle{alpha}
\bibliography{ref}

@article{HMH+23,
  title={Efficient Unitary Designs with a System-Size Independent Number of Non-{Clifford} Gates},
  author={Jonas Haferkamp and Felipe Montealegre-Mora and Markus Heinrich and Jens Eisert and David Gross and Ingo Roth},
  journal={Communications in Mathematical Physics},
  volume={397},
  number={3},
  pages={995--1041},
  year={2023},
  publisher={Springer},
  doi={10.1007/s00220-022-04507-6},
  url={https://doi.org/10.1007/s00220-022-04507-6},
}

@inproceedings{CHHLMT25,
  title={Incompressibility and spectral gaps of random circuits},
  author={Chen, Chi-Fang and Haah, Jeongwan and Haferkamp, Jonas and Liu, Yunchao and Metger, Tony and Tan, Xinyu},
  booktitle={IEEE 66th Annual Symposium on Foundations of Computer Science -- FOCS 2025},
  pages={1304--1312},
  year={2025},
  organization={IEEE},
  doi={10.1109/FOCS63196.2025.00069},
  url={https://doi.org/10.1109/FOCS63196.2025.00069},
}

@article{bramson1981williams,
  title={On the Williams-Bjerknes Tumour Growth Model I},
  author={Bramson, Maury and Griffeath, David},
  journal={Ann. Probab.},
  volume={9},
  number={6},
  pages={173--185},
  year={1981}
}

@article{harris1974contact,
author = {T. E. Harris},
title = {{Contact Interactions on a Lattice}},
volume = {2},
journal = {The Annals of Probability},
number = {6},
publisher = {Institute of Mathematical Statistics},
pages = {969 -- 988},
year = {1974},
doi = {10.1214/aop/1176996493},
URL = {https://doi.org/10.1214/aop/1176996493}
}

@article{aldous2013interacting,
author = {David Aldous},
title = {{Interacting particle systems as stochastic social dynamics}},
volume = {19},
journal = {Bernoulli},
number = {4},
publisher = {Bernoulli Society for Mathematical Statistics and Probability},
pages = {1122 -- 1149},
year = {2013},
doi = {10.3150/12-BEJSP04},
URL = {https://doi.org/10.3150/12-BEJSP04}
}

@Article{Caputo2008,
author={Pietro Caputo},
title={On the spectral gap of the Kac walk and other binary collision processes},
journal={ALEA. Latin American Journal of Probability and Mathematical Statistics},
year={2008},
volume={4},
pages={205–222}
}

@article{morris2006mixing,
  title={The Mixing Time for Simple Exclusion},
  author={Morris, Ben},
  journal={The Annals of Applied Probability},
  pages={615--635},
  year={2006},
  publisher={JSTOR}
}

@article{caputo2010proof,
  title={Proof of {Aldous'} spectral gap conjecture},
  author={Caputo, Pietro and Liggett, Thomas and Richthammer, Thomas},
  journal={Journal of the American Mathematical Society},
  volume={23},
  number={3},
  pages={831--851},
  year={2010}
}

@inproceedings{anari2022entropic,
  title={Entropic independence: optimal mixing of down-up random walks},
  author={Anari, Nima and Jain, Vishesh and Koehler, Frederic and Pham, Huy Tuan and Vuong, Thuy-Duong},
  booktitle={Proceedings of the 54th Annual ACM SIGACT Symposium on Theory of Computing -- STOC 2022},
  pages={1418--1430},
  year={2022}
}

@inproceedings{anari2018log,
  title={Log-concave polynomials, entropy, and a deterministic approximation algorithm for counting bases of matroids},
  author={Anari, Nima and Gharan, Shayan Oveis and Vinzant, Cynthia},
  booktitle={IEEE 59th Annual Symposium on Foundations of Computer Science -- FOCS 2018},
  pages={35--46},
  year={2018},
  organization={IEEE}
}

@inproceedings{anari2019log,
  title={Log-concave polynomials II: high-dimensional walks and an FPRAS for counting bases of a matroid},
  author={Anari, Nima and Liu, Kuikui and Gharan, Shayan Oveis and Vinzant, Cynthia},
  booktitle={Proceedings of the 51st Annual ACM SIGACT Symposium on Theory of Computing},
  pages={1--12},
  year={2019}
}

@article{anari2024log,
  title={Log-concave polynomials III: Mason’s ultra-log-concavity conjecture for independent sets of matroids},
  author={Anari, Nima and Liu, Kuikui and Oveis Gharan, Shayan and Vinzant, Cynthia},
  journal={Proceedings of the American Mathematical Society},
  volume={152},
  number={05},
  pages={1969--1981},
  year={2024}
}

@inproceedings{anari2021log,
  title={Log-concave polynomials IV: approximate exchange, tight mixing times, and near-optimal sampling of forests},
  author={Anari, Nima and Liu, Kuikui and Gharan, Shayan Oveis and Vinzant, Cynthia and Vuong, Thuy-Duong},
  booktitle={Proceedings of the 53rd Annual ACM SIGACT Symposium on Theory of Computing},
  pages={408--420},
  year={2021}
}

@article{branden2020lorentzian,
  title={Lorentzian polynomials},
  author={Br{\"a}nd{\'e}n, Petter and Huh, June},
  journal={Annals of Mathematics},
  volume={192},
  number={3},
  pages={821--891},
  year={2020},
  publisher={Department of Mathematics, Princeton University Princeton, New Jersey, USA}
}

@inproceedings{narayanan2023sampling,
  title={Sampling from convex sets with a cold start using multiscale decompositions},
  author={Narayanan, Hariharan and Rajaraman, Amit and Srivastava, Piyush},
  booktitle={Proceedings of the 55th Annual ACM Symposium on Theory of Computing},
  pages={117--130},
  year={2023}
}

@article{laddha2023convergence,
  title={Convergence of {Gibbs} sampling: Coordinate Hit-and-Run mixes fast},
  author={Laddha, Aditi and Vempala, Santosh S},
  journal={Discrete \& Computational Geometry},
  volume={70},
  number={2},
  pages={406--425},
  year={2023},
  publisher={Springer}
}

@article{narayanan2022mixing,
  title={On the mixing time of coordinate hit-and-run},
  author={Narayanan, Hariharan and Srivastava, Piyush},
  journal={Combinatorics, Probability and Computing},
  volume={31},
  number={2},
  pages={320--332},
  year={2022},
  publisher={Cambridge University Press}
}

@inproceedings{alev2020improved,
  title={Improved analysis of higher order random walks and applications},
  author={Alev, Vedat Levi and Lau, Lap Chi},
  booktitle={Proceedings of the 52nd annual ACM SIGACT symposium on theory of computing -- STOC2020},
  pages={1198--1211},
  year={2020}
}

@article{anari2021spectral,
  title={Spectral independence in high-dimensional expanders and applications to the hardcore model},
  author={Anari, Nima and Liu, Kuikui and Gharan, Shayan Oveis},
  journal={SIAM Journal on Computing},
  volume={53},
  number={6},
  pages={FOCS20--1},
  year={2021},
  publisher={SIAM}
}

@inproceedings{chen2021optimal,
  title={Optimal mixing of Glauber dynamics: Entropy factorization via high-dimensional expansion},
  author={Chen, Zongchen and Liu, Kuikui and Vigoda, Eric},
  booktitle={Proceedings of the 53rd Annual ACM SIGACT Symposium on Theory of Computing},
  pages={1537--1550},
  year={2021}
}

@inproceedings{chen2022optimal,
  title={Optimal mixing for two-state anti-ferromagnetic spin systems},
  author={Chen, Xiaoyu and Feng, Weiming and Yin, Yitong and Zhang, Xinyuan},
  booktitle={2022 IEEE 63rd Annual Symposium on Foundations of Computer Science (FOCS)},
  pages={588--599},
  year={2022},
  organization={IEEE}
}

@article{chen2024rapid,
  title={Rapid mixing of Glauber dynamics via spectral independence for all degrees},
  author={Chen, Xiaoyu and Feng, Weiming and Yin, Yitong and Zhang, Xinyuan},
  journal={SIAM Journal on Computing},
  pages={FOCS21--224},
  year={2024},
  publisher={SIAM}
}

@article{chen2024spectral,
  title={Spectral independence via stability and applications to {Holant}-type problems},
  author={Chen, Zongchen and Liu, Kuikui and Vigoda, Eric},
  journal={TheoretiCS},
  volume={3},
  year={2024},
  publisher={Episciences. org}
}

@article{feng2022rapid,
  title={Rapid mixing from spectral independence beyond the Boolean domain},
  author={Feng, Weiming and Guo, Heng and Yin, Yitong and Zhang, Chihao},
  journal={ACM Transactions on Algorithms (TALG)},
  volume={18},
  number={3},
  pages={1--32},
  year={2022},
  publisher={ACM New York, NY}
}

@inproceedings{cryan2019modified,
  title={Modified log-Sobolev inequalities for strongly log-concave distributions},
  author={Cryan, Mary and Guo, Heng and Mousa, Giorgos},
  booktitle={IEEE 60th Annual Symposium on Foundations of Computer Science -- FOCS 2019},
  pages={1358--1370},
  year={2019},
  organization={IEEE}
}

@article{chen2023rapid,
  title={Rapid mixing of Glauber dynamics up to uniqueness via contraction},
  author={Chen, Zongchen and Liu, Kuikui and Vigoda, Eric},
  journal={SIAM Journal on Computing},
  volume={52},
  number={1},
  pages={196--237},
  year={2023},
  publisher={SIAM}
}

@inproceedings{chen2022localization,
  title={Localization schemes: A framework for proving mixing bounds for Markov chains},
  author={Chen, Yuansi and Eldan, Ronen},
  booktitle={2022 IEEE 63rd Annual symposium on foundations of computer science (FOCS)},
  pages={110--122},
  year={2022},
  organization={IEEE}
}

@article{kim2025spectral,
  title={Spectral gap of the {KMP} and other stochastic exchange models on arbitrary graphs},
  author={Kim, Seonwoo and Quattropani, Matteo and Sau, Federico},
  journal={arXiv preprint arXiv:2505.02400},
  year={2025}
}

@article{oliveira2013mixing,
  title={Mixing of the symmetric exclusion processes in terms of the corresponding single-particle random walk},
  author={Oliveira, Roberto Imbuzeiro},
  journal={Annals of probability: An official journal of the Institute of Mathematical Statistics},
  volume={41},
  number={2},
  pages={871--913},
  year={2013},
  publisher={Institute of Mathematical Statistics}
}

@article{dragulescu2000statistical,
  title={Statistical mechanics of money},
  author={Dragulescu, Adrian and Yakovenko, Victor M},
  journal={The European Physical Journal B-Condensed Matter and Complex Systems},
  volume={17},
  number={4},
  pages={723--729},
  year={2000},
  publisher={Springer}
}

@article{angle1986surplus,
  title={The surplus theory of social stratification and the size distribution of personal wealth},
  author={Angle, John},
  journal={Social Forces},
  volume={65},
  number={2},
  pages={293--326},
  year={1986},
  publisher={The University of North Carolina Press}
}

@InProceedings{Sasada2013,
author="Sasada, Makiko",
title="On the Spectral Gap of the {Kac} Walk and Other Binary Collision Processes on d-Dimensional Lattice",
booktitle="Symmetries, Integrable Systems and Representations",
year="2013",
publisher="Springer",
pages="543--560",
}

@Article{Maslen2003,
author={Maslen, David K.},
title={The eigenvalues of {Kac's} master equation},
journal={Mathematische Zeitschrift},
year={2003},
month={Feb},
day={01},
volume={243},
number={2},
pages={291-331},
issn={1432-1823},
doi={10.1007/s00209-002-0466-y},
url={https://doi.org/10.1007/s00209-002-0466-y}
}

@article{Jan03,
author = {Janvresse, Elise},
title = {Bounds on Semigroups of Random Rotations on {$SO(n)$}},
journal = {Theory of Probability \& Its Applications},
volume = {47},
number = {3},
pages = {526-532},
year = {2003},
doi = {10.1137/S0040585X97979950},

URL = { 
    
        https://doi.org/10.1137/S0040585X97979950
    
    

},
eprint = { 
    
        https://doi.org/10.1137/S0040585X97979950
    
    

}
}

@Article{Carlen2003,
author={Eric A. Carlen
and Mario C. Carvalho
and Michael Loss},
title={Determination of the spectral gap for {Kac's} master equation and related stochastic evolution},
journal={Acta Mathematica},
year={2003},
month={Mar},
day={01},
volume={191},
number={1},
pages={1-54},
doi={10.1007/BF02392695},
url={https://doi.org/10.1007/BF02392695}
}

@article{Jan2001,
 ISSN = {00911798, 2168894X},
 URL = {http://www.jstor.org/stable/2652922},
 author = {Elise Janvresse},
 journal = {The Annals of Probability},
 number = {1},
 pages = {288--304},
 publisher = {Institute of Mathematical Statistics},
 title = {Spectral Gap for Kac's Model of Boltzmann Equation},
 urldate = {2026-03-31},
 volume = {29},
 year = {2001}
}

@article{Haferkamp2022randomquantum,
  doi = {10.22331/q-2022-09-08-795},
  url = {https://doi.org/10.22331/q-2022-09-08-795},
  title = {Random quantum circuits are approximate unitary {$t$}-designs in depth {$O(nt^{5+o(1)})$}},
  author = {Haferkamp, Jonas},
  journal = {{Quantum}},
  issn = {2521-327X},
  publisher = {{Verein zur F{\"{o}}rderung des Open Access Publizierens in den Quantenwissenschaften}},
  volume = {6},
  pages = {795},
  month = sep,
  year = {2022}
}

@article{PhysRevA.104.022417,
  title = {Improved spectral gaps for random quantum circuits: Large local dimensions and all-to-all interactions},
  author = {Haferkamp, Jonas and Hunter-Jones, Nicholas},
  journal = {Phys. Rev. A},
  volume = {104},
  issue = {2},
  pages = {022417},
  numpages = {18},
  year = {2021},
  month = {Aug},
  publisher = {American Physical Society},
  doi = {10.1103/PhysRevA.104.022417},
  url = {https://link.aps.org/doi/10.1103/PhysRevA.104.022417}
}

@INPROCEEDINGS{OSRP:2023,
  author={O’Donnell, Ryan and Servedio, Rocco A. and Paredes, Pedro},
  booktitle={IEEE 64th Annual Symposium on Foundations of Computer Science -- FOCS 2023}, 
  title={Explicit orthogonal and unitary designs}, 
  year={2023},
  pages={1240-1260},
  doi={10.1109/FOCS57990.2023.00073}}

@article{
science.adv8590,
author = {Thomas Schuster  and Jonas Haferkamp  and Hsin-Yuan Huang },
title = {Random unitaries in extremely low depth},
journal = {Science},
volume = {389},
number = {6755},
pages = {92-96},
year = {2025},
doi = {10.1126/science.adv8590},
URL = {https://www.science.org/doi/abs/10.1126/science.adv8590},
eprint = {https://www.science.org/doi/pdf/10.1126/science.adv8590}}

@article{LQSY+24,
author = {Lu, Chuhan and Qin, Minglong and Song, Fang and Yao, Penghui and Zhao, Mingnan},
title = {Quantum Pseudorandom Scramblers},
journal = {SIAM Journal on Computing},
volume = {55},
number = {4},
pages = {787-850},
year = {2026},
doi = {10.1137/24M170990X}
}

@INPROCEEDINGS{MPSY24,
  author={Metger, Tony and Poremba, Alexander and Sinha, Makrand and Yuen, Henry},
  booktitle={IEEE 65th Annual Symposium on Foundations of Computer Science -- FOCS 2024}, 
  title={Simple Constructions of Linear-Depth t-Designs and Pseudorandom Unitaries}, 
  year={2024},
  pages={485-492},
  doi={10.1109/FOCS61266.2024.00038}
}

@book{LPW09,
  title     = {Markov chains and mixing times},
  author    = {Levin, David A and Peres, Yuval},
  year      = {2017},
  publisher = {American Mathematical Society},
}

@article{PS17,
  title={Kac's walk on $ n $-sphere mixes in $ n\log n $ steps},
  author={Pillai, Natesh S and Smith, Aaron},
  journal={The Annals of Applied Probability},
  volume={27},
  number={1},
  pages={631--650},
  year={2017},
  publisher={Institute of Mathematical Statistics},
  doi={10.1214/16-AAP1214}
}

@article{BHH16,
  title={Local random quantum circuits are approximate polynomial-designs},
  author={Brand{\~a}o, Fernando G.S.L. and Harrow, Aram W. and
                  Horodecki, Micha{\l}},
  journal={Communications in Mathematical Physics},
  volume={346},
  pages={397--434},
  year={2016},
  publisher={Springer},
  doi={10.1007/s00220-016-2706-8}
}

@article{Oliveira09,
  title={On The Convergence to Equilibrium of {K}ac's Random Walk on Matrices},
  author={Oliveira, Roberto Imbuzeiro},
  journal={The Annals of Applied Probability},
  volume = {19},
  number={3},
  pages={1200--1231},
  year={2009},
  publisher={JSTOR},
  url={https://www.jstor.org/stable/30243617}
}

@InProceedings{JLS18,
  author={Ji, Zhengfeng and Liu, Yi-Kai and Song, Fang},
  title={Pseudorandom Quantum States},
  booktitle={Advances in Cryptology -- CRYPTO 2018},
  year={2018},
  publisher={Springer},
  pages={126--152},
  doi={10.1007/978-3-319-96878-0_5}
}

@article{SMLBH25,
  title={Strong random unitaries and fast scrambling},
  author={Schuster, Thomas and Ma, Fermi and Lombardi, Alex and Brandao, Fernando and Huang, Hsin-Yuan},
  journal={arXiv preprint arXiv:2509.26310},
  year={2025}
}

@article{MLSH25,
  title={Random unitaries that conserve energy},
  author={Mao, Liang and Cui, Laura and Schuster, Thomas and Huang, Hsin-Yuan},
  journal={arXiv preprint arXiv:2510.08448},
  year={2025}
}

@article{CSMHB25,
  title={Random unitaries from Hamiltonian dynamics},
  author={Cui, Laura and Schuster, Thomas and Mao, Liang and Huang, Hsin-Yuan and Brandao, Fernando},
  journal={arXiv preprint arXiv:2510.08434},
  year={2025}
}

@Article{Diaconis2000,
author={Diaconis, Persi
and Saloff-Coste, Laurent},
title={Bounds for Kac's Master Equation},
journal={Communications in Mathematical Physics},
year={2000},
month={Feb},
day={01},
volume={209},
number={3},
pages={729-755},
issn={1432-0916},
doi={10.1007/s002200050036},
url={https://doi.org/10.1007/s002200050036}
}

@article{hasting:1970,
 ISSN = {00063444, 14643510},
 URL = {http://www.jstor.org/stable/2334940},
 author = {W. K. Hastings},
 journal = {Biometrika},
 number = {1},
 pages = {97--109},
 publisher = {[Oxford University Press, Biometrika Trust]},
 title = {Monte Carlo Sampling Methods Using Markov Chains and Their Applications},
 urldate = {2026-03-30},
 volume = {57},
 year = {1970}
}

@inproceedings{Kac56,
  title={Foundations of kinetic theory},
  author={Kac, Mark},
  booktitle={Third Berkeley symposium on mathematical statistics and
                  probability},
  volume={3},
  pages={171--197},
  year={1956},
}

@article{PS18,
  title={On the mixing time of {Kac’s} walk and other high-dimensional
                  {Gibbs} samplers with constraints},
  author={Pillai, Natesh S and Smith, Aaron},
  journal={The Annals of Probability},
  volume={46},
  number={4},
  pages={2345--2399},
  year={2018},
  publisher={JSTOR},
  url={https://www.jstor.org/stable/26506605}
}

@inproceedings{MH25,
  title={How to construct random unitaries},
  author={Ma, Fermi and Huang, Hsin-Yuan},
  booktitle={Proceedings of the 57th Annual ACM Symposium on Theory of Computing -- STOC 2025},
  pages={806--809},
  year={2025},
  url={https://doi.org/10.1145/3717823.3718254},
  doi={10.1145/3717823.3718254}
}

@article{KMP82,
  title={Heat flow in an exactly solvable model},
  author={Kipnis, Claude and Marchioro, Carlo and Presutti, Errico},
  journal={Journal of Statistical Physics},
  volume={27},
  number={1},
  pages={65--74},
  year={1982},
  publisher={Springer},
  doi={10.1007/BF01011791}
}

@book{BN04,
  title={A primer on statistical distributions},
  author={Balakrishnan, Narayanaswamy and Nevzorov, Valery B},
  year={2004},
  publisher={John Wiley \& Sons}
}

@article{CCE2018,
	title = {Entropy production inequalities for the {Kac} {Walk}},
	volume = {11},
	issn = {1937-5093},
	doi = {10.3934/krm.2018012},
	language = {en},
	number = {2},
	urldate = {2026-03-31},
	journal = {Kinetic and Related Models},
	publisher = {Kinetic and Related Models},
	author = {Carlen, Eric A. and Carvalho, Maria C. and Einav, Amit},
	month = apr,
	year = {2018},
	pages = {219--238},
}

@article{CCL+10,
	title = {Entropy and chaos in the {Kac} model},
	volume = {3},
	copyright = {http://creativecommons.org/licenses/by/3.0/},
	issn = {1937-5093},
	doi = {10.3934/krm.2010.3.85},
	language = {en},
	number = {1},
	urldate = {2026-03-31},
	journal = {Kinetic and Related Models},
	publisher = {Kinetic and Related Models},
	author = {Carlen, Eric A. and Carvalho, Maria C. and Roux, Jonathan Le and Loss, Michael and Villani, Cédric},
	month = jan,
	year = {2010},
	pages = {85--122},
}

@misc{LQSY+25,
      title={Parallel {Kac's} Walk Generates {PRU}}, 
      author={Chuhan Lu and Minglong Qin and Fang Song and Penghui Yao and Mingnan Zhao},
      year={2025},
      eprint={2504.14957},
      archivePrefix={arXiv},
      primaryClass={quant-ph},
      url={https://arxiv.org/abs/2504.14957}, 
}

@misc{KV26,
  author        = {Yunbum Kook and Santosh S. Vempala},
  title         = {Spectral Gaps of {Hit-and-Run} and
                   {Coordinate Hit-and-Run}},
  year          = {2026},
  eprint        = {2608.16878},
  archivePrefix = {arXiv}
}

@book{Meckes2019,
  author    = {Meckes, Elizabeth S.},
  title     = {The Random Matrix Theory of the Classical Compact Groups},
  publisher = {Cambridge University Press},
  year      = {2019},
  doi       = {10.1017/9781108318220}
}

@misc{PS26,
      title={Kac's walk on rotation matrices mixes in $n^2 \log n$ steps}, 
      author={Natesh S. Pillai and Aaron Smith},
      year={2026},
      eprint={2604.23828},
      archivePrefix={arXiv},
      primaryClass={math.PR},
      url={https://arxiv.org/abs/2604.23828}, 
}

\begingroup
\setstretch{1.1}
\newpage
\appendix

\section{Continuous KMP Chain}
In this appendix, we introduce the continuous KMP chain
on a general connected graph and
analyze its relationship with the discrete KMP chain
defined in \cref{def:kmp-general}.
Using the standard moment duality between the two chains,
we derive an upper bound on the mixing time of the continuous chain.
We then use this connection to prove \cref{thm:two-particle-gap-compared-to-single-particle,cor:char-simplex-mixing}.

\subsection{Moment Duality} 
\label{app:continuous-kmp}
We begin by defining the continuous KMP chain.

\begin{definition}[Continuous KMP chain on a general graph]
    \label{def:continuous-kmp-general}
    Fix a positive integer \(d\), and let \(G=(V=[d],E)\) be a connected graph on \(d\) vertices.
    The continuous KMP chain on \(G\) is the Markov chain on
    \[
        \Delta_G^{\mathrm{cont}}
        \coloneqq
        \set{v\in \R_{\ge 0}^d : \sum_{x\in V} v_x = 1} \enspace.
    \]
    Given a current state \(v\in \Delta_G^{\mathrm{cont}}\), the next state is obtained by the following procedure:
    \begin{enumerate}
        \item sample a uniformly random edge \(e=\{i,j\}\in E\);
        \item sample \(\xi\sim\Unif[0,1]\), and update
        \[
            v_i \leftarrow (v_i+v_j)\xi
            \qquad\text{and}\qquad
            v_j \leftarrow (v_i+v_j)(1-\xi),
        \]
        while keeping all other coordinates unchanged.
    \end{enumerate}
    We denote its transition kernel by \(\kernel^{(G)}\) and the corresponding Markov operator by \(\contchain^{(G)}\). We omit the superscripts when they are clear from the context.
\end{definition}

We now introduce the low-degree polynomial spaces on which its spectrum is analyzed. Let \(\mu_G\) denote the stationary distribution of the continuous KMP chain \(\contchain^{(G)}\) on \(\Delta_G^{\mathrm{cont}}\).
For each integer \(k\ge 0\), define
\begin{equation}
\label{eq:def-Pk-continuous-kmp}
    \PP_k
    \coloneqq
    \set{
        H:\Delta_G^{\mathrm{cont}}\to\R :
        H \text{ is a polynomial of total degree at most } k
    } \enspace.
\end{equation}
Since the update rule of the continuous KMP chain preserves polynomial degree, each space \(\PP_k\) is invariant under \(\contchain^{(G)}\). We therefore write
\begin{equation}
\label{eq:def-gap-k-continuous-kmp}
    \mathrm{gap}_k\!\br{\contchain^{(G)}}
    \coloneqq
    \text{the spectral gap of }
    \contchain^{(G)}\big|_{\PP_k\cap L^2(\mu_G)}
\end{equation}
for the spectral gap of the restriction of \(\contchain^{(G)}\) to
\(\PP_k\cap L^2(\mu_G)\).

Our first ingredient is the following comparison between the low-degree restricted spectral gaps, which is a consequence of \cite[Theorem 1.1]{kim2025spectral} by
letting all $\alpha_x = 1$.

\begin{lemma}
\label{lem:continuous-gap-comparison-kim}
    For a graph \(G = (V = [d],E)\) and any integer \(k\geq 1\), we have
    \[
    	\frac{2(d+1)}{3d} \cdot \mathrm{gap}_1\!\br{\contchain^{(G)}}
        \leq \mathrm{gap}_k\!\br{\contchain^{(G)}} \enspace.
    \]
\end{lemma}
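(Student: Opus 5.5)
The lemma is stated as a direct specialization of \cite[Theorem 1.1]{kim2025spectral}, so I will not reprove the comparison from scratch. Instead I will check that our continuous KMP chain and the restricted gaps $\mathrm{gap}_k$ match the objects in that theorem when all weights are $\alpha_x=1$.

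\textbf{Step 1: identify the chain.} In the framework of \cite{kim2025spectral}, the stochastic exchange model with parameters $\alpha_x$ redistributes the energy on an edge $\{i,j\}$ according to a $\mathrm{Beta}(\alpha_i,\alpha_j)$ split. For $\alpha_i=\alpha_j=1$ this split is $\Unif[0,1]$, which is exactly the update in \cref{def:continuous-kmp-general}. The reversible measure is $\Dirichlet(\alpha_1,\dots,\alpha_d)=\Dirichlet(1,\dots,1)=\mu_G$, consistent with the invariance shown in \cref{lem:labeled-stationary-product-structure}. Time normalization (discrete uniform-edge steps versus continuous rates) only rescales all gaps by the common factor $1/|E|$, so a ratio inequality between gaps is unaffected.

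\textbf{Step 2: identify the gaps.} The generator preserves $\PP_k$, and $\PP_{k-1}\subseteq\PP_k$. Hence the spectrum on $\PP_k\cap L^2(\mu_G)$, with constants removed, is the union of the spectra on the successive orthogonal layers of degree exactly $1,\dots,k$. It follows that $\mathrm{gap}_k$ is the minimum over degrees $\le k$. Theorem 1.1 of \cite{kim2025spectral} bounds the degree-$k$ spectral contribution below by a constant times the degree-one gap, and the degree-one gap is $\mathrm{gap}_1$.

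\textbf{Step 3: specialize the constant.} The constant in that theorem depends on $\sum_x\alpha_x$, which equals $d$ here. I expect the specialization to give the factor $\frac{2(d+1)}{3d}$. As a sanity check, the paper's own two-particle computation (\cref{lem:action-of-Ae}) shows the degree-two quadratic contraction factor $\tfrac13,\tfrac16$ against the linear factor $\tfrac12$, which is consistent with a $2/3$-type constant.

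The main obstacle is Step 3: matching normalizations exactly, namely which Beta parameters appear, how time is scaled, and whether their gap is defined on $\PP_k$ or on its degree-$k$ orthogonal layer. I would resolve this by recomputing both sides on the complete graph with $d=2$, where everything is explicit, and comparing the results.
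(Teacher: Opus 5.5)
Your proposal follows the paper's proof: the paper gives no argument beyond citing Theorem~1.1 of \cite{kim2025spectral} with $\alpha_x\equiv 1$, and your Steps~1--2 are the bookkeeping that specialization needs (a uniform $1/|E|$ time rescaling that cancels in the ratio, the orthogonal degree-layer decomposition coming from reversibility, and $\mathrm{gap}_1$ coinciding with the single-particle random-walk gap). The one thing to change is your calibration test for Step~3, because at $d=2$ the inequality degenerates to $1\le 1$ (one step of $\contchain^{(G)}$ already samples $\mu_G$) and so cannot confirm the $d$-dependence of the constant; the informative test is $G=K_d$, where $\mathrm{gap}_1=\frac{d}{2|E|}$ and $\contchain^{(G)}\bigl(Q-\tfrac{2}{d+1}\bigr)=\bigl(1-\tfrac{d+1}{3|E|}\bigr)\bigl(Q-\tfrac{2}{d+1}\bigr)$ for $Q(v)=\sum_i v_i^2$, so the lemma holds with equality at $k=2$ with exactly the factor $\frac{2(d+1)}{3d}$, and this is the value your specialization of the cited constant must reproduce.
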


We relate the discrete KMP chain \(\kmpchain^{(t,G)}\) and the continuous KMP chain \(\contchain^{(G)}\) through the classical KMP moment duality \cite{KMP82}. 
For the rest of this appendix, fix positive integers \(t\) and \(d\), and a connected graph \(G=(V=[d],E)\). Recall that
\begin{equation}\label{eq:DiscreteStateSpace}
    \Delta_G
    =
    \set{r\in\N^d:\sum_{x\in V} r_x=t}
\end{equation}
is the state space of the discrete KMP chain, and
\begin{equation}\label{eq:ContStateSpace}
     \Delta_G^{\mathrm{cont}}
    =
    \set{v\in \R_{\ge 0}^d:\sum_{x\in V} v_x=1}
\end{equation}
is the state space of the continuous KMP chain.
For \(r\in\Delta_G\) and \(v\in\Delta_G^{\mathrm{cont}}\), define
\[
    F(r,v)
    \coloneqq
    \prod_{i=1}^d \frac{v_i^{r_i}}{r_i!} \enspace.
\]
For each fixed \(r\), the function \(F(r,\cdot)\) is a polynomial of total degree \(t\).

\begin{lemma}
\label{lem:kmp-general-symmetric}
The transition matrix \(\kmpchain^{(t,G)}\) is symmetric. In particular, \(\kmpchain^{(t,G)}\) is reversible with respect to the uniform distribution \(\kmpdist\) on \(\Delta_G\).
\end{lemma}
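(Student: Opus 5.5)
The plan is to copy the argument used for the parallel chain in \cref{lem:discrete-parallel-kmp-stationary}, replacing the random perfect matching by a single uniformly random edge. Write
\[
    \kmpchain^{(t,G)}(r,r')
    =
    \frac{1}{|E|}\sum_{e\in E} K_e(r,r') \enspace,
\]
where \(K_e\) is the transition matrix of the resampling move on the fixed edge \(e=\{i,j\}\). Since an average of symmetric matrices is symmetric, it suffices to show that each \(K_e\) is symmetric.

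For the fixed edge \(e=\{i,j\}\), the entry \(K_e(r,r')\) is nonzero only when two conditions hold. First, \(r'_x=r_x\) for every \(x\notin\{i,j\}\). Second, \(r'_i+r'_j=r_i+r_j\); this already follows from the first condition, since both vectors lie in \(\Delta_G\). Both conditions are symmetric in \(r\) and \(r'\). When they hold, the pair \((r'_i,r'_j)\) is one of \(m_{i,j}+1\) equally likely outcomes, so
\[
    K_e(r,r')=\frac{1}{r_i+r_j+1}=\frac{1}{r'_i+r'_j+1}=K_e(r',r) \enspace.
\]
When the conditions fail, both \(K_e(r,r')\) and \(K_e(r',r)\) are zero. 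Hence \(K_e\) is symmetric, and so is \(\kmpchain^{(t,G)}\).

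Reversibility then follows directly. A symmetric stochastic matrix satisfies detailed balance with respect to the uniform distribution, because \(\frac{1}{|\Delta_G|}\kmpchain(r,r')=\frac{1}{|\Delta_G|}\kmpchain(r',r)\). So \(\kmpdist\) is reversible for the chain, and in particular stationary.

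I expect no real obstacle. The only point that needs care is the case where \(r\) and \(r'\) differ outside the chosen edge, and that case has been handled above.
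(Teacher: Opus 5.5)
Your proof is correct and follows essentially the same route as the paper, which fixes \(r,r'\) and notes that an edge \(e=\{i,j\}\) contributes only when \(r,r'\) agree off \(\{i,j\}\) with equal pair sums. In that case the edge contributes \(\frac{1}{|E|}\cdot\frac{1}{r_i+r_j+1}\) to both \(\kmpchain(r,r')\) and \(\kmpchain(r',r)\), and the paper then sums over edges; your observation that the pair-sum condition already follows from agreement outside the edge is a correct minor simplification.
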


\begin{proof}
    Fix \(r,r'\in \Delta_G\). For an edge \(e=\{i,j\}\in E\), the contribution of \(e\) to the one-step transition probability from \(r\) to \(r'\) is nonzero if and only if
    \[
        r_x=r'_x \quad \forall x\notin\set{i,j}
        \qquad\text{and}\qquad
        r_i+r_j=r'_i+r'_j \enspace.
    \]
    In that case, the edge \(e\) contributes
    \[
        \frac{1}{|E|}\cdot \frac{1}{r_i+r_j+1}
        =
        \frac{1}{|E|}\cdot \frac{1}{r'_i+r'_j+1}
    \]
    to both \(\kmpchain(r,r')\) and \(\kmpchain(r',r)\). Summing over all \(e\in E\), we obtain
    \[
        \kmpchain(r,r')=\kmpchain(r',r) \enspace.
    \]
    This proves the lemma.
\end{proof}

\begin{lemma}
\label{lem:kmp-duality-general-graph}
For every \(r\in\Delta_G\) and \(v\in\Delta_G^{\mathrm{cont}}\),
\[
    (\contchain F(r,\cdot))(v)
    =
    (\kmpchain F(\cdot,v))(r) \enspace.
\]
\end{lemma}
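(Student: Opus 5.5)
Both operators are averages over a uniformly random edge $e\in E$, and in each case the edge-$e$ update touches only the coordinates $i,j$ of $e$. It therefore suffices to prove, for each fixed edge $e=\{i,j\}$, the edge-conditioned identity
\[
    (\contchain_e F(r,\cdot))(v) = (K_e F(\cdot,v))(r),
\]
where $\contchain_e$ and $K_e$ are the continuous and discrete updates on $e$. Averaging over $e$ with weight $1/|E|$ then gives the lemma.

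Fix $e=\{i,j\}$ and write $F(r,v)=\Phi(r,v)\cdot \frac{v_i^{r_i}v_j^{r_j}}{r_i!\,r_j!}$, where $\Phi$ collects the factors for $x\notin\{i,j\}$. Neither update changes these coordinates, so $\Phi$ factors out of both sides, and it remains to compare the $\{i,j\}$ factors. Set $m=r_i+r_j$ and $s=v_i+v_j$.

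On the continuous side, with $\xi\sim\Unif[0,1]$, a Beta integral gives
\[
    \E_\xi\!\left[\frac{(s\xi)^{r_i}(s(1-\xi))^{r_j}}{r_i!\,r_j!}\right]
    = \frac{s^{m}}{r_i!\,r_j!}\cdot\frac{r_i!\,r_j!}{(m+1)!}
    = \frac{s^m}{(m+1)!}.
\]
On the discrete side, $(r_i',r_j')$ is uniform on the $m+1$ splits of $m$, so
\[
    \frac{1}{m+1}\sum_{a=0}^{m}\frac{v_i^{a}v_j^{m-a}}{a!\,(m-a)!}
    = \frac{1}{(m+1)\,m!}\,(v_i+v_j)^m
    = \frac{s^m}{(m+1)!},
\]
by the binomial theorem. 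The two sides agree, which proves the edge-conditioned identity and hence the lemma.

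The only step that requires care is the normalization: the factorials $1/r_i!$ in $F$ are exactly what cancels the Beta-function constant on one side and turns the uniform sum into a binomial expansion on the other. Everything else is bookkeeping.
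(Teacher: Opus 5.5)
Your proof is correct and follows essentially the same route as the paper. Both arguments fix an edge, evaluate the continuous side with the Beta integral $\int_0^1 x^a(1-x)^b\,\mathrm{d}x = a!\,b!/(a+b+1)!$ and the discrete side with the binomial theorem, and get $(v_i+v_j)^{r_i+r_j}/(r_i+r_j+1)!$ times the untouched factors on both sides; the only difference is that you isolate the edge-conditioned identity before averaging, while the paper carries the $1/|E|$ sum through both computations.
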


\begin{proof}
    We first compute the left-hand side. Fix \(r\in\Delta_G\) and \(v\in\Delta_G^{\mathrm{cont}}\). By definition of the continuous KMP chain,
    \begin{align*}
        (\contchain F(r,\cdot))(v)
        =&~
        \frac{1}{|E|}
        \sum_{\{i,j\}\in E}
        \expect{\xi}{
            \frac{((v_i+v_j)\xi)^{r_i}\cdot ((v_i+v_j)(1-\xi))^{r_j}}{r_i!r_j!}
        }
        \cdot
        \prod_{x\notin\set{i,j}} \frac{v_x^{r_x}}{r_x!} \\
        =&~
        \frac{1}{|E|}
        \sum_{\{i,j\}\in E}
        \frac{(v_i+v_j)^{r_i+r_j}}{(r_i+r_j+1)!}
        \cdot
        \prod_{x\notin\set{i,j}} \frac{v_x^{r_x}}{r_x!} \enspace,
    \end{align*}
    where the second equality follows from the Beta integral identity
    \[
        \int_0^1 x^a(1-x)^b\,\mathrm{d}x
        =
        \frac{a!\,b!}{(a+b+1)!}
        \qquad \forall a,b\in\N \enspace.
    \]
    On the other hand, by definition of the discrete KMP chain,
    \begin{align*}
        (\kmpchain F(\cdot,v))(r)
        =&~
        \frac{1}{|E|}
        \sum_{\{i,j\}\in E}
        \frac{1}{r_i+r_j+1}
        \sum_{a=0}^{r_i+r_j}
        \frac{v_i^a v_j^{r_i+r_j-a}}{a!(r_i+r_j-a)!}
        \cdot
        \prod_{x\notin\set{i,j}} \frac{v_x^{r_x}}{r_x!} \\
        =&~
        \frac{1}{|E|}
        \sum_{\{i,j\}\in E}
        \frac{1}{(r_i+r_j+1)!}
        \sum_{a=0}^{r_i+r_j}
        \binom{r_i+r_j}{a}
        v_i^a v_j^{r_i+r_j-a}
        \cdot
        \prod_{x\notin\set{i,j}} \frac{v_x^{r_x}}{r_x!} \\
        =&~
        \frac{1}{|E|}
        \sum_{\{i,j\}\in E}
        \frac{(v_i+v_j)^{r_i+r_j}}{(r_i+r_j+1)!}
        \cdot
        \prod_{x\notin\set{i,j}} \frac{v_x^{r_x}}{r_x!} \enspace,
    \end{align*}
    where the last equality follows from the binomial theorem. This proves the lemma.
\end{proof}

We next show that $N^{(G)}|_{\PP_t}$ and $\kmpchain^{(t,G)}$ have the same eigenvalues.

\begin{lemma}\label{lem:isomorphism}
For each $t\ge 1$, define the linear map
\[
T_t : \mathbb{R}^{\Delta_G} \to \PP_t
\]
by
\[
(T_t g)(v)
:=
\sum_{r\in \Delta_G} g(r)\, F(r,v) \enspace,
\qquad
F(r,v):=\prod_{i=1}^d \frac{v_i^{r_i}}{r_i!}\enspace.
\]
Then $T_t$ is a linear isomorphism, and it intertwines the discrete and continuous KMP
operators:
\[
N^{(G)}|_{\PP_t} T_t = T_t \kmpchain^{(t,G)}\enspace.
\]
Consequently, $N^{(G)}|_{\PP_t}$ and $\kmpchain^{(t,G)}$ have the same eigenvalues.
\end{lemma}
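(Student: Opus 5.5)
We need to prove three things about $T_t$: it is injective, it maps onto $\PP_t$, and it intertwines the two operators. The similarity of spectra then follows.

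\textbf{Intertwining.} This comes straight from \cref{lem:kmp-duality-general-graph}. For $g\in\R^{\Delta_G}$ and $v\in\Delta_G^{\mathrm{cont}}$, linearity of $\contchain$ gives
\[
(\contchain T_t g)(v)=\sum_{r} g(r)\,(\contchain F(r,\cdot))(v)=\sum_r g(r)\,(\kmpchain F(\cdot,v))(r)=\sum_r\sum_{r'} g(r)\,\kmpchain(r,r')F(r',v).
\]
On the other hand,
\[
(T_t\kmpchain g)(v)=\sum_{r'}\sum_r \kmpchain(r',r)g(r)F(r',v).
\]
These two double sums agree because $\kmpchain$ is symmetric by \cref{lem:kmp-general-symmetric}. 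We also need $\contchain$ to map $\PP_t$ into itself, so that the restriction $\contchain|_{\PP_t}$ makes sense. This holds because the update sends degree-$t$ polynomials to degree-$t$ polynomials, as noted before \cref{eq:def-gap-k-continuous-kmp}.

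\textbf{Isomorphism.} This is the main subtlety. On the simplex, the space $\PP_t$ of polynomials of degree at most $t$ restricted to $\Delta_G^{\mathrm{cont}}$ is spanned by the homogeneous degree-$t$ monomials. The reason is that any lower-degree monomial $m$ of degree $s$ can be homogenized as $m\cdot(\sum_i v_i)^{t-s}$, which equals $m$ on the simplex. Hence $\{F(r,\cdot)\}_{r\in\Delta_G}$ spans $\PP_t$, so $T_t$ is surjective. For injectivity, suppose a homogeneous degree-$t$ polynomial vanishes on the simplex. By homogeneity it then vanishes on the whole cone $\R_{\ge0}^d$, which has nonempty interior, so it is the zero polynomial. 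Since the monomials are linearly independent as polynomials, all coefficients $g(r)/\prod_i r_i!$ vanish, and therefore $g=0$. Thus $T_t$ is bijective, and $\dim\PP_t=|\Delta_G|=\binom{t+d-1}{d-1}$.

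\textbf{Conclusion.} Since $T_t$ is invertible, $\contchain|_{\PP_t}=T_t\kmpchain T_t^{-1}$. The two operators are similar and therefore have the same eigenvalues.
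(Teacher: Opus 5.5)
Your proof is correct. The intertwining step and the spanning step are the same as the paper's. Both use the duality identity of \cref{lem:kmp-duality-general-graph} and the homogenization $m_\alpha=m_\alpha\bigl(\sum_i v_i\bigr)^{t-|\alpha|}$ on the simplex. You do well to cite the symmetry of $\kmpchain$ (\cref{lem:kmp-general-symmetric}) when exchanging the double sum, since the paper leaves that step implicit.

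The genuine difference is how you obtain bijectivity. The paper combines spanning with $\dim\PP_t=|\Delta_G|$, and justifies the dimension by counting multi-indices $\alpha\in\N^d$ with $|\alpha|\le t$. As written, that count is $\binom{t+d}{d}$, not $\binom{t+d-1}{d-1}$. Moreover, those monomials are linearly dependent on the simplex. So the stated number is right only after a further argument, e.g.\ eliminating $v_d=1-\sum_{i<d}v_i$ and counting monomials in $d-1$ variables.

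You instead prove injectivity directly: a homogeneous degree-$t$ polynomial vanishing on the simplex vanishes on the cone $\R_{\ge0}^d$, which has nonempty interior, so it is the zero polynomial. You then get $\dim\PP_t=\binom{t+d-1}{d-1}$ as a consequence rather than needing it as an input. This makes your argument self-contained and avoids the slip. The paper's route is marginally shorter once the dimension of $\PP_t$ is correctly established.

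One cosmetic point: the update preserves polynomials of degree \emph{at most} $t$, not degree exactly $t$. In fact, invariance of $\PP_t$ already follows from your intertwining identity together with surjectivity of $T_t$.
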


\begin{proof}
The intertwining relation follows directly from \cref{lem:kmp-duality-general-graph}. Indeed, for every $g\in \mathbb{R}^{\Delta_G}$
and every $v\in \Delta_G^{\mathrm{cont}}$,
\[
(N^{(G)} T_t g)(v)
=
\sum_{r\in \Delta_G} g(r)\, (N^{(G)}F(r,\cdot))(v)
=
\sum_{r\in \Delta_G} g(r)\, (\kmpchain^{(t,G)} F(\cdot,v))(r)
=
(T_t \kmpchain^{(t,G)} g)(v)\enspace.
\]

It remains to show that $T_t$ is an isomorphism onto $\PP_t$. First, note that 
\[
|\Delta_G| = \binom{t+d-1}{d-1}\enspace.
\]
On the other hand, $\PP_t$ has the same dimension, since its standard monomial basis is indexed by
multi-indices $\alpha\in \N^d$ with $\sum_i \alpha_i \le t$, whose number is also
$\binom{t+d-1}{d-1}$.

We next show that the family $\{F(r,\cdot): r\in \Delta_G\}$ spans $\PP_t$. Let
\[
m_\alpha(v):=\prod_{i=1}^d v_i^{\alpha_i}
\]
be a monomial of total degree $|\alpha|:=\sum_i \alpha_i \le t$. Since $\sum_{i=1}^d v_i = 1$ on
$\Delta_G^{\mathrm{cont}}$, we have
\[
m_\alpha(v)
=
m_\alpha(v)\Bigl(\sum_{i=1}^d v_i\Bigr)^{t-|\alpha|}
\]
as functions on $\Delta_G^{\mathrm{cont}}$. Expanding the right-hand side by the multinomial
theorem shows that $m_\alpha$ lies in the span of $\{F(r,\cdot): r\in \Delta_G\}$. Hence
$\{F(r,\cdot): r\in \Delta_G\}$ spans $\PP_t$.

Since the number of these functions equals $\dim \PP_t$, they form a basis of $\PP_t$. Therefore
$T_t$ is a linear isomorphism. The spectral equality follows immediately from the intertwining
relation.
\end{proof}
\subsection{Mixing Time of the Continuous KMP Chain on a General Graph}
\begin{lemma}
\label{lem:continuous-kmp-mixing}
Let \(G=(V,E)\) be a connected graph on \(d\geq2\)
vertices.
Let \(\kernel^{(G)}\) be the transition kernel of
the continuous KMP chain
and let \(\mu_G\) be its uniform stationary distribution.
There is an absolute constant \(C>0\) such that,
for every \(\varepsilon\in(0,1)\), the total-variation mixing time is at most
\[
        \frac{C}{
            \gamma_{\scriptscriptstyle\mathsf{KMP}}^{(1,G)}
        }
        \bigl(\log d+\log(1/\varepsilon)\bigr),
\]
where
\(\gamma_{\scriptscriptstyle\mathsf{KMP}}^{(1,G)}\)
is the spectral gap of the single-particle KMP chain
on \(G\).
\end{lemma}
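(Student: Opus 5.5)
The plan is to obtain the continuous bound as a $t\to\infty$ limit of the discrete bound in \cref{thm:kmp-mixing-time-general-graph}. The essential input is that this discrete bound is \emph{uniform in $t$}. We combine it with the continuous-test-function characterization of total variation in \cref{fact:tv-continuous-functions}.

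Fix $v\in\Delta_G^{\mathrm{cont}}$ and $k\ge 0$. First observe that, given the random edges $\{i_s,j_s\}$ and parameters $p_s=\xi_s$, the continuous chain is deterministic: the update $v\mapsto vA_s$, with $A_s$ as in \cref{eq:def-A-B-non-parallel}, is exactly the continuous KMP move. Hence $\kernel^k(v,\cdot)$ is the law of $vB_k$ under the edge and parameter randomness. Also, $\mu_G=\Dirichlet(1,\dots,1)$; the invariance of the Dirichlet law under $u\mapsto uA$ was checked in the proof of \cref{lem:labeled-stationary-product-structure}.

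Next, choose $r^{(t)}\in\Delta_G$ with $\|r^{(t)}/t-v\|_1\le d/t$, and run the labeled discrete chain with the same randomness. By \cref{eq:conditional-product-structure-non-parallel}, conditioned on the randomness, $R^{(k)}$ is a sum of independent particle placements with mean $r^{(t)}B_k$. Since each particle is an independent indicator, the conditional variance of each $R^{(k)}_j/t$ is at most $1/(4t)$. Because $B_k$ is stochastic, $\|(r^{(t)}/t)B_k-vB_k\|_1\le d/t$. Together these give $R^{(k)}/t\to vB_k$ in probability, jointly over the randomness, so $\kmpchain^k(r^{(t)},\cdot)$ pushed forward by $r\mapsto r/t$ converges weakly to $\kernel^k(v,\cdot)$. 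Similarly, $\kmpdist=\Unif(\Delta_G)$ rescaled by $1/t$ converges weakly to the Dirichlet law $\mu_G$. This can be seen either from \cref{eq:labeled-stationary-product-structure-non-parallel}, which gives particles i.i.d.\ from $u$ and an LLN given $u$, or by a direct volume computation.

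Now take a continuous $f:\Delta_G^{\mathrm{cont}}\to[0,1]$. Since $r\mapsto f(r/t)$ takes values in $[0,1]$ on $\Delta_G$,
\[
\Bigl|\E f\bigl(R^{(k)}/t\bigr)-\E_{\widetilde R\sim\kmpdist} f\bigl(\widetilde R/t\bigr)\Bigr|\le \|\kmpchain^k(r^{(t)},\cdot)-\kmpdist\|_{\mathrm{TV}}.
\]
By \cref{thm:kmp-mixing-time-general-graph}, the right side is at most $\varepsilon$ once $k\ge C\gamma^{-1}(\log d+\log(1/\varepsilon))$, with $C$ independent of $t$. Letting $t\to\infty$ and using the weak convergence gives $|\int f\,\mathrm d\kernel^k(v,\cdot)-\int f\,\mathrm d\mu_G|\le\varepsilon$. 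Taking the supremum over $f$ via \cref{fact:tv-continuous-functions} (the simplex is compact) yields the claim for every starting point $v$.

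The main obstacle is making the two weak-convergence statements rigorous and ensuring the constant in \cref{thm:kmp-mixing-time-general-graph} is genuinely uniform in $t$ for fixed $\varepsilon$. That theorem is stated with $O(\cdot)$, and its proof gives explicit constants in both the $t\le d^{1000}$ and $t>d^{1000}$ regimes, so uniformity holds. For the convergence step, I would first try the conditional-LLN argument above, since it uses only bounded continuous $f$ and dominated convergence.
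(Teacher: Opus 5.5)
Your argument is correct, but it moves from the discrete chain to the continuous one by a different mechanism than the paper. The paper uses the KMP moment duality. It sets $g_t(r)=f(r/t)$ and takes the Bernstein-type polynomial $f_t(v)=\E[f(R/t)]$ with $R\sim\mathrm{Mult}(t,v)$. The intertwining $\contchain^{(G)}T_t=T_t\kmpchain^{(t,G)}$ from \cref{lem:isomorphism} then gives two \emph{exact} identities:
\[
\bigl((\contchain^{(G)})^k f_t\bigr)(v)=\E_{R\sim\mathrm{Mult}(t,v)}\bigl[\bigl((\kmpchain^{(t,G)})^k g_t\bigr)(R)\bigr],
\qquad
\int f_t\,\mathrm d\mu_G=\E_{R\sim\kmpdist}[g_t(R)].
\]
So the uniform-in-$t$ discrete bound transfers verbatim to $f_t$. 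The only limit taken is $\|f_t-f\|_\infty\to 0$, which passes through $(\contchain^{(G)})^k$ by sup-norm contraction. You instead couple pathwise: the continuous chain and the labeled discrete chain share the same edges and $p_s=\xi_s$, and the discrete chain starts from a deterministic $r^{(t)}\approx tv$. A conditional law of large numbers then gives weak convergence of the $k$-step laws to $\kernel^{(G)}(v,\cdot)^k$, and of $\Unif(\Delta_G)$ rescaled by $1/t$ to $\mu_G$. The paper's duality identity is in fact the annealed form of your coupling, obtained by averaging it over a multinomial starting configuration. The paper's route buys exactness and reuses a duality it already needs for \cref{thm:two-particle-gap-compared-to-single-particle}. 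Yours avoids the duality and the spanning argument for $\{F(r,\cdot)\}$, relying only on the conditional product structure. It also shows directly that the continuous chain is the law-of-large-numbers limit of the discrete one, and is even quantitative: $\E\|R^{(k)}/t-vB_k\|_1=O(d/\sqrt t)$. The cost is handling two separate weak-convergence limits instead of one uniform approximation.
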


\begin{proof}
We transfer the bound in
\cref{thm:kmp-mixing-time-general-graph}
to the continuous KMP chain.
We will approximate functions on the continuous state
space by functions obtained from randomly placing
\(t\) particles, and
then let \(t\to\infty\).

Let \(\pi_t\) be the uniform stationary distribution on $\Delta_G$ defined in \cref{eq:DiscreteStateSpace}.
As in \cref{def:continuous-kmp-general},
let \(\contchain^{(G)}\)
denote the Markov operator of the continuous chain:
\[
    \bigl(\contchain^{(G)}h\bigr)(v)
    =
    \int h(w)\,\kernel^{(G)}(v,\mathrm dw) \enspace.
\]
By \cref{thm:kmp-mixing-time-general-graph}, we can choose
\(
    k=\left\lceil
        C/
            \gamma_{\scriptscriptstyle\mathsf{KMP}}^{(1,G)}
        \bigl(\log d+\log(1/\varepsilon)\bigr)
    \right\rceil
\)
with a sufficiently large absolute constant \(C\),
such that for every positive integer \(t\)
and every \(g:\Delta_G\to[0,1]\),
\begin{equation}\label{eq:discreteclose}
     \left|
        \left(
            \bigl(\kmpchain^{(t,G)}\bigr)^k g
        \right)\!(r)-\E_{R\sim\pi_t}[g(R)]
    \right|
    \leq\varepsilon
    \qquad\text{for every }r\in\Delta_G \enspace.
\end{equation}

Our goal is to obtain the corresponding estimate
for every continuous
\(f:\Delta_G^{\mathrm{cont}}\to[0,1]\).
Fix such a function \(f\), and define
\(g_t:\Delta_G\to[0,1]\) by \(g_t(r)=f(r/t)\).
To relate this function to the continuous chain,
fix \(v\in\Delta_G^{\mathrm{cont}}\) and place
\(t\) particles independently on the vertices,
each at vertex \(i\) with probability \(v_i\).
Let \(R=(R_i)_{i\in V}\) be the resulting particle counts.
Then
\[
    \Pr[R=r]
    =
    \frac{t!}{\prod_{i\in V}r_i!}
    \prod_{i\in V}v_i^{r_i} \enspace.
\]
Define \(f_t:\Delta_G^{\mathrm{cont}}\to[0,1]\) by
\[
    f_t(v)
    =\mathbb E[f(R/t)]
    =
    \sum_{r\in\Delta_G}
        g_t(r)\frac{t!}{\prod_{i\in V}r_i!}
        \prod_{i\in V}v_i^{r_i} \enspace.
\]
As \(R/t\) approaches \(v\),
we expect \(f_t(v)\) to approach \(f(v)\).
We will verify this below.
First, we show that the discrete mixing estimate
gives the desired continuous estimate for \(f_t\).

For a real-valued function \(g\) on \(\Delta_G\),
the map \(T_t\) from \cref{lem:isomorphism}
produces the polynomial function
\[
    (T_tg)(v)
    =
    \sum_{r\in\Delta_G}
        g(r)\prod_{i\in V}\frac{v_i^{r_i}}{r_i!},
    \qquad v\in\Delta_G^{\mathrm{cont}} \enspace.
\]
In particular, \(f_t=t!T_tg_t\).
The intertwining identity in that lemma gives
\[
    \contchain^{(G)}T_t
    =T_t\kmpchain^{(t,G)},
\]
and hence, by induction,
\[
    \bigl(\contchain^{(G)}\bigr)^k f_t
    =t!T_t\bigl(\kmpchain^{(t,G)}\bigr)^k g_t \enspace.
\]
Thus
\[
    \left(
        \bigl(\contchain^{(G)}\bigr)^k f_t
    \right)(v)
    =
    \sum_{r\in\Delta_G}
        \frac{t!}{\prod_{i\in V}r_i!}
        \prod_{i\in V}v_i^{r_i}
        \left(
            \bigl(\kmpchain^{(t,G)}\bigr)^k g_t
        \right)(r) \enspace.
\]
This identity relates two expectations:
the left-hand side evolves the continuous chain
from \(v\) and evaluates \(f_t\);
the right-hand side first samples the particle
counts from \(v\), then evolves the discrete chain
and evaluates \(g_t\).
Since
\[
    \sum_{r\in\Delta_G}
        \frac{t!}{\prod_{i\in V}r_i!}
        \prod_{i\in V}v_i^{r_i}
    =
    \left(\sum_{i\in V}v_i\right)^t
    =1 \enspace,
\]
applying \cref{eq:discreteclose} to each term,
we obtain
\[
    \left|
        \left(
            \bigl(\contchain^{(G)}\bigr)^k f_t
        \right)(v)-\E_{R\sim\pi_t}[g_t(R)]
    \right|
    \leq\varepsilon \enspace.
\]
Recall that
\(\mu_G=\Dirichlet(1,\ldots,1)\).
For every \(r\in\Delta_G\),
\[
    \int
        \prod_{i\in V}v_i^{r_i}\,\mathrm d\mu_G(v)
    =
    \frac{(d-1)!\prod_{i\in V}r_i!}{(t+d-1)!} \enspace.
\]
Since
\[
    |\Delta_G|=\binom{t+d-1}{d-1} \enspace,
\]
it follows that
\[
    \int
        \frac{t!}{\prod_{i\in V}r_i!}
        \prod_{i\in V}v_i^{r_i}\,\mathrm d\mu_G(v)
    =
    \frac{t!(d-1)!}{(t+d-1)!}
    =
    \frac1{|\Delta_G|} \enspace.
\]
Substituting into the definition of \(f_t\), we get
\[
    \int f_t\,\mathrm d\mu_G
    =
    \frac1{|\Delta_G|}
    \sum_{r\in\Delta_G}g_t(r)
    =
    \E_{R\sim\pi_t}[g_t(R)] \enspace.
\]
Therefore,
\[
    \left|
        \left(
            \bigl(\contchain^{(G)}\bigr)^k f_t
        \right)(v)
        -\int f_t\,\mathrm d\mu_G
    \right|
    \leq\varepsilon \enspace.
\]

It remains to replace \(f_t\) by \(f\).
We prove that \(f_t\to f\) uniformly on
\(\Delta_G^{\mathrm{cont}}\).
For the particle counts sampled from \(v\), we have
\[
    \mathbb E[\|R/t-v\|_2^2]
    =
    \sum_{i\in V}\frac{v_i(1-v_i)}{t}
    \leq\frac1t \enspace.
\]
Since \(f\) is continuous on the compact simplex,
it is uniformly continuous.
Given \(\delta>0\), choose \(a>0\) such that
\[
    |f(x)-f(y)|\leq\delta
    \qquad\text{whenever }\|x-y\|_2\leq a \enspace.
\]
Using \(0\leq f\leq1\) and Markov's inequality,
we obtain
\[
\begin{aligned}
    |f_t(v)-f(v)|
    &\leq\mathbb E|f(R/t)-f(v)|\leq\delta+\Pr[R/t-v\|_2>a]\leq\delta+\frac1{ta^2} \enspace.
\end{aligned}
\]
The bound is independent of \(v\), so
\[
    \|f_t-f\|_\infty
    :=
    \sup_{v\in\Delta_G^{\mathrm{cont}}}
        |f_t(v)-f(v)|
    \longrightarrow0 \enspace.
\]
A Markov operator does not increase the supremum norm.
Consequently,
\[
\begin{aligned}
    \left|
        \left(
            \bigl(\contchain^{(G)}\bigr)^k f
        \right)(v)
        -\int f\,\mathrm d\mu_G
    \right|
    &\leq
    \left|
        \left(
            \bigl(\contchain^{(G)}\bigr)^k f_t
        \right)(v)
        -\int f_t\,\mathrm d\mu_G
    \right|
    +2\|f_t-f\|_\infty\\
    &\leq\varepsilon+2\|f_t-f\|_\infty \enspace.
\end{aligned}
\]
The choice of \(k\) is independent of \(t\).
Letting \(t\to\infty\) therefore gives
\[
    \left|
        \left(
            \bigl(\contchain^{(G)}\bigr)^k f
        \right)(v)
        -\int f\,\mathrm d\mu_G
    \right|
    \leq\varepsilon \enspace.
\]
By \cref{fact:tv-continuous-functions},
total variation distance can be computed by taking
the supremum over continuous functions with values
in \([0,1]\).
Using
\[
    \left(
        \bigl(\contchain^{(G)}\bigr)^k f
    \right)(v)
    =
    \int f(w)\,
        \bigl(\kernel^{(G)}\bigr)^k(v,\mathrm dw) \enspace,
\]
we conclude that
\[
    \sup_{v\in\Delta_G^{\mathrm{cont}}}
    \left\|
        \bigl(\kernel^{(G)}\bigr)^k(v,\cdot)-\mu_G
    \right\|_{\mathrm{TV}}
    \leq\varepsilon \enspace. \qedhere
\]
\end{proof}

\subsection{Proof of \texorpdfstring{\cref{thm:two-particle-gap-compared-to-single-particle}}{Theorem \ref*{thm:two-particle-gap-compared-to-single-particle}}}
\label{app:two-particle-gap-compared-to-single-particle}
Now we are ready to complete the spectral comparison between $1$ and $2$-particle KMP chain.

\sp*

\begin{proof}
Let
\[
 \mathrm{gap}\bigl(\kmpchain^{(1,G)}\bigr)
=
1-\lambda_2\bigl(\kmpchain^{(1,G)}\bigr),
\qquad
 \mathrm{gap}\bigl(\kmpchain^{(2,G)}\bigr)
=
1-\lambda_2\bigl(\kmpchain^{(2,G)}\bigr)\enspace.
\]
By \cref{lem:continuous-gap-comparison-kim} with $k=2$, we have
\[
 \mathrm{gap}_2\bigl(N^{(G)}\bigr)
\ge
\frac{2(d+1)}{3d}\,
 \mathrm{gap}_1\bigl(N^{(G)}\bigr)\enspace.
\]
By \cref{lem:isomorphism} with $t=1$ and $t=2$, the restrictions
$N^{(G)}|_{\PP_1}$ and $N^{(G)}|_{\PP_2}$ have the same eigenvalues as
$\kmpchain^{(1,G)}$ and $\kmpchain^{(2,G)}$, respectively. Hence
\[
 \mathrm{gap}_1\bigl(N^{(G)}\bigr)
=
 \mathrm{gap}\bigl(\kmpchain^{(1,G)}\bigr),
\qquad
 \mathrm{gap}_2\bigl(N^{(G)}\bigr)
=
 \mathrm{gap}\bigl(\kmpchain^{(2,G)}\bigr)\enspace.
\]
Therefore,
\[
 \mathrm{gap}\bigl(\kmpchain^{(2,G)}\bigr)
=
 \mathrm{gap}_2\bigl(N^{(G)}\bigr)
\ge
\frac{2(d+1)}{3d}\,
 \mathrm{gap}_1\bigl(N^{(G)}\bigr)
=
\frac{2(d+1)}{3d}\,
 \mathrm{gap}\bigl(\kmpchain^{(1,G)}\bigr)\enspace.
\]
Equivalently,
\[
1-\lambda_2\bigl(\kmpchain^{(2,G)}\bigr)
\ge
\frac{2(d+1)}{3d}\,
\Bigl(1-\lambda_2\bigl(\kmpchain^{(1,G)}\bigr)\Bigr)\enspace.
\]
This completes the proof.
\end{proof}

\subsection{Proof of \texorpdfstring{\cref{cor:char-simplex-mixing}}{Corollary \ref*{cor:char-simplex-mixing}}}
\label{app:CHAR}
We now use \cref{lem:continuous-kmp-mixing} to prove \cref{cor:char-simplex-mixing}, which is restated here for convenience.

\charmixingtime*

\begin{proof}
For \(x\in S_n\), introduce the additional coordinate
\(
    x_0=1-\sum_{i=1}^n x_i.
\)
When coordinate \(i\) is selected, the update can be
written as
\[
    x_i'=U(x_0+x_i) \enspace,
    \qquad
    x_0'=(1-U)(x_0+x_i) \enspace,
\]
where \(U\) is uniform on \([0,1]\).
Thus, coordinate hit-and-run on \(S_n\) is equivalent
to the continuous KMP chain on the star graph
\(G=K_{1,n}\), with center \(0\) and leaves
\(1,\ldots,n\).
Each step selects one of its \(n\) edges uniformly.

The single-particle KMP chain on \(G\) has
transition matrix
\[
    M^{(1,G)}
    =
    \begin{pmatrix}
        \frac12 & \frac{1}{2n} & \frac{1}{2n}
            & \cdots & \frac{1}{2n}\\
        \frac{1}{2n} & 1-\frac{1}{2n} & 0
            & \cdots & 0\\
        \frac{1}{2n} & 0 & 1-\frac{1}{2n}
            & \cdots & 0\\
        \vdots & \vdots & \vdots & \ddots & \vdots\\
        \frac{1}{2n} & 0 & 0
            & \cdots & 1-\frac{1}{2n}
    \end{pmatrix} \enspace.
\]
Its eigenvalues are \(1\),
\(1-\frac{1}{2n}\) with multiplicity \(n-1\), and
\(\frac{n-1}{2n}\).
Since \(n\geq2\), its spectral gap is
\(
    \gamma_{\scriptscriptstyle\mathsf{KMP}}^{(1,G)}
    =\frac1{2n} .
\)
Applying \cref{lem:continuous-kmp-mixing}
with \(d=n+1\) gives a mixing-time bound of
\[
    O\!\left(
        n\bigl(\log n+\log(1/\varepsilon)\bigr)
    \right)
\]
for the continuous KMP chain on \(G\).
The bijection
\[
    (x_1,\ldots,x_n)
    \longmapsto
    \left(1-\sum_{i=1}^n x_i,x_1,\ldots,x_n\right)
\]
maps the uniform distribution on \(S_n\) to
\(\mu_G\) and preserves total variation distance.
Hence the same mixing-time bound holds for
coordinate hit-and-run on \(S_n\).
\end{proof}

\section{Spectrum of \texorpdfstring{$C$}{C} on the Antisymmetric Subspace}
\label{app:antisymmetric-spectrum}

Recall the definition of the random walk \(C\) on \(\R^{d\times d}\).
For each fixed edge \(e=\{a,b\}\in E\), let \(C_e\) denote the transition matrix
of the random walk conditioned on choosing \(e\), namely
\[
    C_e(y,z)
    =
    \begin{cases}
        \frac{1}{2}, & \text{if } y\in\set{a,b} \text{ and } z\in\set{a,b},\\[0.3em]
        1, & \text{if } y=z \notin\set{a,b},\\[0.3em]
        0, & \text{otherwise}.
    \end{cases} \enspace,
\]
and 
\[
    C
    :=
    \frac{1}{\abs{E}}\sum_{e\in E} C_e\otimes C_e \enspace.
\]
Note that \(C\) and \(\kmpchain^{(1,G)}\) are positive semidefinite.

\spc*

\begin{proof}
    Let $e_i$ be the row vector whose $i$-th entry is $1$ and all other entries are $0$.
    For each edge \(e=\{a,b\}\in E\), let
    \[
        L_e \coloneqq (e_a-e_b)^{\intercal}(e_a-e_b) \enspace.
    \]
    Then, we can write \(C_e\) as
    \begin{align} \label{eq:Ce-Laplacian}
        C_e = I - \frac{1}{2}L_e \enspace.
    \end{align}
    and hence we have
    \begin{align} \label{eq:chain-1-average}
        \kmpchain^{(1,G)} = \frac{1}{|E|}\sum_{e\in E} C_e
        =
        I - \frac{1}{2|E|}\sum_{e\in E} L_e \enspace.
    \end{align}

    Now we choose an orthonormal eigenbasis \(\{v_1,\dots,v_d\}\) of \(\kmpchain^{(1,G)}\), with
    \[
        \kmpchain^{(1,G)}\cdot v_i=\theta_i \cdot v_i
        \qquad\text{for each } i\in[d],
    \]
    where
    \(
        1=\theta_1\ge \theta_2\ge \cdots \ge \theta_d \ge 0.
    \)
    For each \(1\le i<j\le d\), define
    \[
         u_{i,j}\coloneqq v_i\otimes v_j-v_j\otimes v_i \in \R^{V\times V} \enspace.
    \]
    The vectors \(\{ u_{i,j} : 1\le i<j\le d\}\) form a basis of \(\mathcal H_{\asym}\).
    We next show that each \( u_{i,j}\) is an eigenvector of \(C\).
    For each edge \(e=\{a,b\}\in E\), we have
    \begin{align*}
        (L_e\otimes L_e)\cdot u_{i,j}
        &=
        (L_ev_i)\otimes(L_ev_j)-(L_ev_j)\otimes(L_ev_i) \\
        &=
        \br{(e_a-e_b)v_i}\br{(e_a-e_b)v_j}
        \Bigl(
            (e_a-e_b)^{\intercal}\otimes(e_a-e_b)^{\intercal}
            -
            (e_a-e_b)^{\intercal}\otimes(e_a-e_b)^{\intercal}
        \Bigr) \\
        &= 0 \enspace.
    \end{align*}
    Hence, because \(C_e = I - \frac{1}{2}L_e\) in \cref{eq:Ce-Laplacian}, we have
    \[
        (C_e\otimes C_e)\cdot u_{i,j}
        =
         u_{i,j}
        -\frac12 (L_e\otimes I)\cdot u_{i,j}
        -\frac12 (I\otimes L_e)\cdot u_{i,j} \enspace.
    \]
    Averaging over \(e\in E\), and using
    \(
        \frac{1}{2|E|}\sum_{e\in E}L_e = I-\kmpchain^{(1,G)}
    \)
    by \cref{eq:chain-1-average},
    we obtain
    \begin{align*}
        C\cdot u_{i,j}
        &=
        \frac{1}{|E|}\sum_{e\in E}(C_e\otimes C_e)\cdot u_{i,j} \\
        &=
        u_{i,j}
        -
        \br{(I-\kmpchain^{(1,G)})\otimes I}\cdot u_{i,j}
        -
        \br{I\otimes (I-\kmpchain^{(1,G)})}\cdot u_{i,j} \\
        &=
        \br{\kmpchain^{(1,G)}\otimes I}\cdot u_{i,j}
        +
        \br{I\otimes \kmpchain^{(1,G)}}\cdot u_{i,j}
        -
        u_{i,j} \\
        &=
        (\theta_i+\theta_j-1)\cdot u_{i,j} \enspace.
    \end{align*}
    Therefore, the eigenvalues of \(C|_{\mathcal H_{\asym}}\) are exactly
    \[
        \set{
            \theta_i+\theta_j-1 : 1\le i<j\le d
        } \enspace.
    \]
    Since
    \(
        1=\theta_1\ge \theta_2\ge \cdots \ge \theta_d,
    \)
    the largest of these values is attained at \((i,j)=(1,2)\). Thus
    \[
        \lambda_1\!\br{C|_{\mathcal H_{\asym}}}
        =
        \theta_2
        =
        \lambda_2\!\br{\kmpchain^{(1,G)}} \enspace.
    \]
    This proves the lemma.
\end{proof}

\section{Missing Proofs}
\label{app:missing-proofs}

\subsection{Proof of \texorpdfstring{\cref{lem:action-of-Ae}}{Lemma 6.8}}
\begin{proof}[Proof of \cref{lem:action-of-Ae}]
    For every \(f:V\times V\to\R\), the value of \(A_ef\) at \((y,z)\)
    is given as follows.
    \begin{itemize}
        \item If \(y,z\notin\set{a,b}\), then neither particle is affected, so
        \[
            (A_ef)(y,z)=f(y,z) \enspace.
        \]
        \item If \(y\in\set{a,b}\) and \(z\notin\set{a,b}\), then only the first particle is updated.
        Conditioned on \(p\), this particle moves to \(a\) with probability \(p\) and to \(b\) with probability \(1-p\), while the second particle stays at \(z\).
        Therefore
        \[
            (A_{e,p}\otimes A_{e,p})f(y,z)
            =
            p\,f(a,z)+(1-p)\,f(b,z) \enspace,
        \]
        and averaging over \(p\sim\Unif[0,1]\) gives
        \[
            (A_ef)(y,z)
            =
            \frac{f(a,z)+f(b,z)}{2} \enspace.
        \]
        \item Similarly, if \(y\notin\set{a,b}\) and \(z\in\set{a,b}\), then
        \[
            (A_ef)(y,z)
            =
            \frac{f(y,a)+f(y,b)}{2} \enspace.
        \]
        \item If \(y,z\in\set{a,b}\), then both particles are updated.
        Conditioned on \(p\), the two particles are reassigned independently inside the edge,
        so the four outcomes \((a,a)\), \((a,b)\), \((b,a)\), and \((b,b)\)
        occur with probabilities \(p^2\), \(p(1-p)\), \(p(1-p)\), and \((1-p)^2\), respectively.
        Thus
        \[
            (A_{e,p}\otimes A_{e,p})f(y,z)
            =
            p^2 f(a,a)+p(1-p)f(a,b)+p(1-p)f(b,a)+(1-p)^2f(b,b) \enspace.
        \]
        Since \(\E_p[p^2]=\E_p[(1-p)^2]=1/3\) and \(\E_p[p(1-p)]=1/6\),
        averaging over \(p\) yields
        \[
            (A_ef)(y,z)
            =
            \frac{1}{3}f(a,a)+\frac{1}{6}f(a,b)+\frac{1}{6}f(b,a)+\frac{1}{3}f(b,b)
            \qquad \text{for all }\, y,z\in\set{a,b} \enspace.
        \]
    \end{itemize}
    This proves the lemma.
\end{proof}

\subsection{Proof of \texorpdfstring{\cref{lem:lkmp-commutes-swap}}{Lemma 6.9}}
\begin{proof}[Proof of \cref{lem:lkmp-commutes-swap}]
We identify $\mathbb{R}^{V\times V}$ with $\mathbb{R}^V \otimes \mathbb{R}^V$ via
 $f(y,z) \leftrightarrow \sum_{y,z} f(y,z)\, e_y \otimes e_z$.
Let $S:\mathbb{R}^{V\times V}\to \mathbb{R}^{V\times V}$ be the swap operator defined by
\[
S(e_y\otimes e_z)=e_z\otimes e_y \qquad \forall\, y,z\in V \enspace.
\]
Then $\mathcal{H}_{\sym}$ and $\mathcal{H}_{\asym}$ are exactly the $+1$ and $-1$
eigenspaces of $S$, respectively. For every edge $e$ and every $p\in[0,1]$, we have
\[
S\cdot\bigl(A_{e,p}\otimes A_{e,p}\bigr)=\bigl(A_{e,p}\otimes A_{e,p}\bigr)\cdot S\enspace.
\]
 Averaging over $p$ gives
$S\cdot A_e=A_e \cdot S$, and averaging further over $e$ gives
\[
S\cdot \lkmpchain^{(2,G)} = \lkmpchain^{(2,G)} \cdot S\enspace.
\]
Therefore both eigenspaces of $S$ are invariant under $\lkmpchain^{(2,G)}$.
 \end{proof}

 
\endgroup

\end{document}